\providecommand{\SetAlgoLined}{\SetLine}
\newtheorem{theorem}{Theorem}
\newtheorem{definition}{Definition}
\newtheorem{lemma}{Lemma}
\newtheorem{corollary}{Corollary}
\newtheorem{proposition}{Proposition}
\newtheorem{remark}{Remark}
\newcommand{\tran}[1]{{#1}^{\top}}
\newcommand{\itran}[1]{{#1}^{-\top}}
\newcommand{\bra}[1]{\left\langle{#1}\right|}
\newcommand{\ket}[1]{\left|{#1}\right\rangle}
\newcommand{\svec}{\ensuremath{\textbf{svec}}}
\newcommand{\smat}{\ensuremath{\textbf{smat}}}
\renewcommand{\vec}{\ensuremath{\textbf{vec}}}
\newcommand{\tr}{\ensuremath{\widetilde{R}}}
\newcommand{\wtR}{\ensuremath{\widetilde{R}}}
\newcommand{\wtX}{\ensuremath{\widetilde{X}}}
\newcommand{\wtS}{\ensuremath{\widetilde{S}}}
\newcommand{\wtDX}{\ensuremath{\widetilde{\Delta X}}}
\newcommand{\wtDS}{\ensuremath{\widetilde{\Delta S}}}
\newcommand{\overbar}[1]{\mkern 1.5mu\overline{\mkern-1.5mu#1\mkern-1.5mu}\mkern 1.5mu}
\author[1]{Brandon Augustino}\email{bra216@lehigh.edu}
\author[2]{Giacomo Nannicini}\email{g.nannicini@usc.edu}
\author[1]{Tam\'as~Terlaky}\email{terlaky@lehigh.edu}
\author[1]{Luis Zuluaga}\email{luis.zuluaga@lehigh.edu}
\affil[1]{\textit{Department of Industrial and Systems Engineering, Quantum Computing and Optimization Lab, Lehigh University}}
\affil[2]{\textit{Department of Industrial and Systems Engineering, University of Southern California}}
\begin{document}
\title{Quantum Interior Point Methods for Semidefinite Optimization}
\maketitle

\begin{abstract}
    We present two quantum interior point methods for semidefinite optimization problems, building on recent advances in quantum linear system algorithms. The first scheme, more similar to a classical solution algorithm, computes an inexact search direction and is not guaranteed to explore only feasible points; the second scheme uses a nullspace representation of the Newton linear system to ensure feasibility even with inexact search directions. The second is a novel scheme that might seem impractical in the classical world, but it is well-suited for a hybrid quantum-classical setting. We show that both schemes converge to an optimal solution of the semidefinite optimization problem under standard assumptions. By comparing the theoretical performance of classical and quantum interior point methods with respect to various input parameters, we show that our second scheme obtains a speedup over classical algorithms in terms of the dimension of the problem $n$, but has worse dependence on other numerical parameters.
\end{abstract}

	\clearpage
\tableofcontents

	\clearpage

\section{Introduction}
We develop quantum interior point methods (QIPMs) for
the solution of semidefinite optimization problems (SDOPs). Letting $b
\in \R{m}$, matrices $A^{(1)} , \dots, A^{(m)}, C \in \mathcal{S}^n$,
where $\mathcal{S}^n$ is the space of $n \times n$ symmetric
matrices, we consider the primal SDOP given as
\begin{align}\label{e:SDO}
     z_P &= \inf_X \left\{ C \bullet X : A^{(i)} \bullet X  = b_i,~\forall i \in [m], X \succeq 0 \right\},
\end{align}
where $[m] = \{1, \dots, m \}$, $U \bullet V = \trace (UV)$ defines the trace inner product of two matrices $U$ and $V$, and $U \succeq V$ indicates that $U - V$
is a symmetric positive semidefinite matrix. We assume that
the matrices $A^{(1)} , \dots, A^{(m)}$ are linearly independent. The corresponding dual problem is given by
\begin{align}\label{e:SDD}
    z_D &=  \sup_{y, S} \left\{ b^{\top} y:\sum_{i=1}^m y_i A^{(i)} + S = C,~S\succeq 0, y \in \R{m} \right\}. 
\end{align}
 The so called Interior Point Condition (IPC) holds if there is a primal feasible $X$ with $X \succ 0$, and there exists a dual feasible $(y,S)$ with $S \succ 0$. Although duality results are generally weaker for SDO when compared to linear optimization (LO), under the IPC strong duality holds; optimal solutions exist for both the primal and dual SDOPs, and their optimal objective values are equal, i.e., $z_P = z_D$. 

Interior point methods (IPMs) are algorithms to solve convex optimization problems, including semidefinite optimization problems, in polynomial time. The first polynomial-time algorithm for linear optimization problems (LOPs), a special class of convex optimization problems, is the ellipsoid method \cite{khachiyan1980polynomial}. Karmarkar's IPM improved on the complexity of the ellipsoid method \cite{karmarkar1984new}. The first polynomial-time IPM for SDOPs is attributed to Nesterov and Nemirovskii \cite{nesterov1988general, nesterov1994interior}. Their framework employs efficiently-computable self-concordant barrier functions.  

The interest in SDO originates from the fact that SDO provides a framework that allows one to formulate fundamental problems in control \cite{boyd1994linear}, statistics, information theory \cite{rains2001semidefinite}, machine learning \cite{lanckriet2004learning, weinberger2006unsupervised}, finance \cite{d2007direct, wolkowicz2012handbook}, and quantum information science \cite{eldar2003semidefinite, harrow2017improved, watrous2009semidefinite}, to name a few. Additionally, SDO can provide tight approximations for various combinatorial optimization problems \cite{goemans1995improved, lovasz1979shannon}, and it can be used to study the properties of convex optimization problems \cite{boyd1994linear}. SDOPs are attractive since they exhibit practical efficiency \cite{andersen2000mosek, sturm1999using, toh1999sdpt3} and
tractability \cite{alizadeh1998primal, nesterov1988general}.  LOPs are a special case of SDOPs in which each of the input matrices $C, A^{(1)}, \dots, A^{(m)}$ are diagonal matrices. Considering the vast number of applications of SDO, it is natural to ask if quantum computers can accelerate their solution. Several works have investigated this possibility, which we discuss next. 

\subsection{Literature Review}
When $m \leq \sqrt{n}$, the Cutting Plane Methods of \cite{jiang2020improved, lee2015faster} are the best performing classical algorithms for solving SDOPs, and can do so in time
$$\Ocal \left( m ( mn^2 + m^2 + n^{\omega}) \cdot \textup{polylog} \left(m,n, R, \frac{1}{\epsilon} \right) \right),$$
where $\omega \in [2, 2.38]$ is the matrix multiplication exponent, $R$ is an upper bound on the trace of primal optimal solutions, $\epsilon$ is the precision parameter, and $mn^2$ is the input size of the SDO problem. However, in practical situations we typically have that $m$ is roughly between $n$ and $n^2$, in which case the Cutting Plane Methods given in \cite{jiang2020improved, lee2015faster} are outperformed by the IPM for SDO from Jiang~et~al.~\cite{jiang2020faster}. Their IPM exhibits a worst case running time of 
$$\Ocal \left( \sqrt{n} (mn^2 + m^{\omega} +  n^{\omega}) \cdot \textup{polylog} \left(m,n, \frac{1}{\epsilon} \right) \right),$$
where the term $m^{\omega} + n^{\omega}$ represents the per-iteration cost of inverting the Hessian and dual slack matrices. For IPMs, $\epsilon$-optimality implies that the primal and dual feasible solutions exhibit a \textit{normalized} duality gap bounded by $\epsilon$, i.e.:
    $$\frac{X \bullet S}{n} \leq \epsilon.$$ One can also note that when $m = \Ocal (n^2)$, the running time of the IPM in \cite{jiang2020faster} is shown to be $\widetilde{\Ocal}_{\frac{1}{\epsilon}} \left( n^{5.246} \right)$, providing an enhancement over the running time bound of $\widetilde{\Ocal}_{\frac{1}{\epsilon}} \left( n^{6.5} \right)$ found in \cite{monteiro1998polynomial, nesterov1997self, nesterov1998primal}. Here, the notation $\widetilde{\Ocal}_{\alpha, \beta} (f(x))$ suppresses polylogarithmic factors in $\alpha$ and $\beta$ in the running time, i.e., $\widetilde{\Ocal}_{\alpha, \beta} (f(x)) =\Ocal(f(x) \cdot \textup{polylog}(\alpha, \beta, f(x)))$.

Although IPMs are prevalent in practice, they are not the only
approach to solve SDOPs. The classical matrix multiplicative weights
update (MWU) algorithm of Arora and Kale
\cite{arora2006multiplicative} is one such algorithm that has been
quantized with notable gains. The classical MWU method has a running
time of
$$ \mathcal{O} \left(mn^2 \cdot \textup{poly} \left( \frac{Rr}{\epsilon_{\text{abs}}}
\right) \right),$$
where $r$ is an upper bound on the $\ell_1$-norm of dual optimal solutions, and $\epsilon_{\text{abs}}$ is an additive error to which the optimal objective value is approximated. That is, the objective value attained by the acquired solution is $\text{OPT} \in [\varsigma- {\cal O}(\epsilon_{\text{abs}}), \varsigma + {\cal O}(\epsilon_{\text{abs}})]$, where $\varsigma$ is a bound on the optimal objective value determined using binary search. The MWU algorithm alternates
between candidate solutions to the primal and dual SDO problems, and
does not involve the solution of linear systems.
A quantum SDO solver based on the Arora-Kale framework was constructed
by Brand{\~a}o and Svore \cite{brandao2017quantum}, and van Apeldoorn
et al.\ \cite{van2020quantum}. With successive improvements
\cite{gribling2019applications,van2018improvements}, the running time
of the quantum MWU algorithm for SDOPs has been brought down to:
$$ \mathcal{O} \left( \left(\sqrt{m} + \sqrt{n}
\frac{Rr}{\epsilon_{\text{abs}}}\right)s \left(\frac{Rr}{\epsilon_{\text{abs}}}\right)^4  \cdot \textup{polylog} \left(m,n, R, \frac{1}{\epsilon_{\text{abs}}} \right)
\right),$$
 where $s$ is a sparsity parameter (maximum number of non-zero
entries per row). While these methods achieve running times with a more
favorable dependence on $m$ and $n$ as compared to classical IPM
solvers, their polynomial dependence on the parameters $R$, $r$ and
$1/\epsilon_{\text{abs}}$ is exponentially slower than classical IPMs, which exhibit logarithmic dependence on the inverse precision. (If $\epsilon_{\text{abs}}$ is
considered to be part of the input, it is expressed in binary format and a running time of $1/\epsilon_{\text{abs}}$ is not polynomial in the usual sense.)

The QIPM presented by Kerenidis and Prakash
\cite{kerenidis2020quantum} (rather, the earlier arXiv
version of that paper) initiated the other major research direction
regarding the solution of SDOPs with quantum algorithms. By
constructing block-encodings of the Newton linear system matrix, which
can be efficiently constructed at runtime using quantum random access
memory (QRAM), the authors in \cite{kerenidis2020quantum} solve the
Newton linear system in time polylogarithmic in $n$ (but note that this does not include the time to obtain a classical description of the solution). As the solution
of the Newton linear system is the most expensive step in classical
IPMs, the hope is to obtain a quantum
advantage. The paper \cite{kerenidis2020quantum} posits a worst case running time
of
$$ \Ocal \left( \frac{n^{2.5}}{\xi^2} \mu \kappa^3 \cdot \textup{polylog} \left(n, \kappa, \frac{1}{\epsilon} \right) \right)$$ for SDOPs, where $\epsilon$
is the optimality gap tolerance to which we solve the SDOP, $\xi$ is a
feasibility tolerance, $\kappa$ is an upper bound for the condition
number of the Newton systems, and $\mu$ is upper bounded by the
largest Frobenius norm of the coefficient matrix of the Newton systems. Note, however, that the algorithm of \cite{kerenidis2020quantum} does not converge to an $\epsilon$-optimal solution in the usual sense, for reasons that will be discussed below; thus, this running time expression does not settle the complexity of QIPMs for SDOPs. The term $ \Ocal
\left( \frac{n^2 \kappa^2}{\xi^2}\right)$ comes from a tomography
subroutine that is used to map the quantum-mechanical encoding of the
solution of the Newton linear system, to a classical description used to
construct the subsequent iterate. This reconstruction is performed in
each step to $\ell_2$-norm error $\xi/\kappa$. When this algorithm is
applied to solve LOPs, the running time is
$$ \Ocal \left( \frac{n^{1.5}}{\xi^2} \mu \kappa^3 \cdot \textup{polylog} \left(n, \kappa, \frac{1}{\epsilon} \right)\right).$$ Similar ideas have been
used in other recent papers
\cite{casares2020quantum, kerenidis2019quantum}, all of which follow
the same structure: they apply a classical interior point method, and
substitute the classical solution of the Newton linear system (or
equivalent) with a quantum linear system algorithm, with the goal of
improving the running time. The solution of the linear system is
extracted from the quantum computer via tomography. The analysis of
the algorithms requires nontrivial machinery and has led to
interesting developments. One limitation of some of these existing QIPM works is the use of the feasibility tolerance $\xi$, that does not translate to the usual concept of primal feasibility. Another limitation is that the analysis of
\cite{kerenidis2020quantum} is heavily based on classical exact IPMs,
and does not account for some challenges that arise when translating
those algorithms to the quantum setting. Similar challenges are found
in \cite{casares2020quantum, kerenidis2019quantum}. These challenges
are discussed below, and addressed in this paper for the specific case
of semidefinite optimization. 

The QIPM of \cite{kerenidis2020quantum} directly quantizes the
classical IPM described in \cite{ben2001lectures}. This leads to two
major issues. The first is that this algorithm does not guarantee that
the solutions to the Newton linear system remain symmetric
\cite{mj1999study}. As a consequence, the algorithm described in
\cite{kerenidis2020quantum} may not converge when applied to SDOPs
(this is not an issue for linear optimization, because all the
involved matrices are diagonal). Additional safeguards must be
employed to guarantee that the resulting solution will be symmetric;
numerous techniques have been proposed for this purpose. These methods
are generally grouped into families of search directions such as those
introduced by Kojima, Shindoh, and Hara \cite{kojima1997interior}, Monteiro and Zhang \cite{monteiro1997primal, monteiro1998unified, monteiro1998polynomial, zhang1998extending}, Monteiro and Tsuchiya
\cite{monteiro1999polynomial}, and Tseng \cite{tseng1998search}. We
refer the reader to an overview of these (as well as many other) search directions by Todd
\cite{mj1999study}.

The second major issue is that the analysis of the IPM in
\cite{ben2001lectures} assumes that the Newton linear systems are
solved exactly. This cannot be done in the quantum setting, because
nonnegligible errors are introduced in the tomography step, even if
the quantum linear systems algorithm can solve the problem with high
precision. Notably, the primal and dual search directions resulting from
tomography are not orthogonal, therefore the convergence proof of
\cite{ben2001lectures} no longer applies\footnote{Lemma 6.3 in
\cite{kerenidis2020quantum} requires $dS \bullet dY = 0$, but this property does not hold when the Newton linear system is solved inexactly; it is not clear how to amend
the subsequent analysis using the same framework, since even convergence of purely classical algorithms requires rigorous assumptions on the linear system errors.}. To ensure convergence when the search directions are not exact, we must rely on
IPM frameworks that can deal with the errors introduced from tomography.

Infeasible IPMs were originally motivated by the need of having an initial point (and all subsequent iterates) in the interior of the semidefinite cone. It is easy to have an initial interior point which is infeasible \cite{potra1998superlinearly, zhang1994convergence}, hence
the use of infeasible IPMs. Further, excessive CPU time and memory requirements may prevent solving large linear systems directly. In such a situation, one can employ an iterative method to determine the solutions to the linear systems to compute search directions, and such solutions are not guaranteed to be exact (e.g., due to early stopping of the iterative method). The resulting algorithm is
called inexact-infeasible IPM. The first polynomial time inexact-infeasible IPMs were proposed by Korzak \cite{korzak2000convergence} and Mizuno and Jarre \cite{mizuno1999global}. Inexact-infeasible IPMs are natural candidates for quantization. Using this terminology, the ``prototypical'' classical IPM described in \cite{ben2001lectures} can be called an exact-feasible IPM.

The little-studied inexact-feasible IPM (IF-IPM) bridges the gap between exact-feasible and inexact-infeasible IPMs. In an inexact-feasible IPM, it is typically assumed that primal and dual feasibility can be satisfied exactly, but we allow for a margin of error in solving a reduced Newton system. We argue that this framework is the most well-suited to serve as a foundation for QIPMs, compared to inexact-infeasible IPMs. Indeed, inexact-feasible IPMs exhibit the same iteration complexity as exact-feasible IPMs, and they impose milder restrictions on the inexactness of the Newton linear system; because of this, quantizing an inexact-feasible IPM leads to better theoretical performance than an inexact-infeasible IPM. Gondzio \cite{gondzio2013convergence} presented an Inexact-Feasible IPM for convex quadratic optimization problems; his complexity analysis is the basis for the analysis of our novel IF-QIPM, although our QIPM bears little resemblance to the classical IF-IPMs. Indeed, when designing an inexact-feasible QIPM we make several choices that are very atypical for classical IPMs, yielding an algorithm with no direct classical analogue, regardless of the fact that its analysis relies on known classical tools. One key distinction being that the IPM in \cite{gondzio2013convergence} \textit{assumes} that primal and dual feasibility can be satisfied exactly, whereas our IF-QIPM \textit{guarantees} primal-dual feasibility. 

\subsection{Contributions of this paper}
This paper highlights several challenges in the literature on QIPMs that have been previously overlooked. To address some of these challenges, it develops two provably convergent QIPMs for SDO. One algorithm is a quantization of the classical inexact-infeasible IPM from \cite{zhou2004polynomiality}; the other algorithm is an inexact-feasible QIPM loosely based on \cite{gondzio2013convergence}, where we introduce significant differences for the quantum setting. The inexact-infeasible QIPM converges to an $\epsilon$-approximate solution to an SDOP in $\Ocal \left(n^2 \log \left( \frac{1}{\epsilon} \right) \right)$ iterations, the inexact-feasible QIPM converges in $\Ocal \left(\sqrt{n} \log \left( \frac{1}{\epsilon} \right) \right)$ iterations.


Our first algorithm can be seen as a direct extension of \cite{kerenidis2020quantum} to properly address the two main issues discussed above, i.e., ensuring the existence of symmetric solutions and dealing with inexact search directions. This leads to the inexact-infeasible QIPM (II-QIPM). The running time of this algorithm, when using the HKM
direction \cite{helmberg1996interior, kojima1997interior, monteiro1998polynomial}, is
\begin{equation*}
 \Ocal \left(\left( n^{5.5} \kappa \kappa_{\cal A} \rho
  \left(  \|{\cal A}\|_F + \rho n^{1.5} \right)  \right) \frac{1}{\epsilon} \cdot \operatorname{polylog} \left(n, \kappa, \frac{1}{\epsilon} \right) \right),
\end{equation*}
where $\kappa_{\Acal}$ is the condition number of the matrix $\Acal$ whose columns are the vectorized constraint matrices $A^{(1)}, \dots, A^{(m)}$, and $\rho > 0$ is used to define the size of the initial infeasible solution and is generally
considered a constant in many papers (i.e., it is often ignored in the
running time analysis). Comparing the running time of the inexact-infeasible QIPM to classical IPM, there does not seem to be an
advantage in any parameter: classical exact-feasible IPMs (with running time
$\Ocal(n^{2\omega + 0.5} \log 1/\epsilon)$ on dense problems) are superior to
the above quantum algorithm. While the II-QIPM does not achieve any advantage compared to classical algorithms, we discuss it because it shows one possible way of dealing with the issues of existing QIPMs in the open literature, highlighting the fact that quantum advantage seems unlikely in this setting: to improve upon classical algorithm at least in some parameters, we need to modify the basic classical IPM scheme.

Our second algorithm, the inexact-feasible QIPM (IF-QIPM), solves the primal-dual SDO pair \eqref{e:SDO}-\eqref{e:SDD} with $m = \Ocal (n^2)$ constraints to $\epsilon$-optimality using at most
$$ \Ocal \left( n^{3.5} \frac{\kappa^2}{\epsilon} \cdot \operatorname{polylog} \left(n, \kappa, \frac{1}{\epsilon} \right) \right)  $$
QRAM accesses and $ \Ocal \left( n^{4.5}   \operatorname{polylog} \left(n, \kappa, \frac{1}{\epsilon} \right) \right) $ arithmetic operations. When applied to LO problems, the IF-QIPM requires
$$ \Ocal \left( n^{2} \frac{\kappa^2}{\epsilon} \cdot \operatorname{polylog} \left(n, \kappa, \frac{1}{\epsilon} \right) \right)  $$
QRAM accesses and $ \Ocal \left( n^{2.5} \cdot  \operatorname{polylog} \left(n, \kappa, \frac{1}{\epsilon} \right) \right) $ arithmetic operations.

For SDO problems involving $n \times n$ matrices and $m = \Ocal (n^2)$ constraints, the IF-QIPM achieves speedups in terms of $n$ compared to both classical feasible IPMs, and II-(Q)IPMs existing in the literature, though its dependence on $\kappa$ and $1/\epsilon$ prohibits an overall speedup, see, Section \ref{s:comparingrt}. We also de-quantize our framework to obtain a classical IF-IPM, finding that our classical algorithm exhibits an overall running time of $$\Ocal \left( n^{4.5} \kappa \cdot \operatorname{polylog} \left(\kappa, \frac{1}{\epsilon} \right) \right)$$ when applied to SDO problems $m = \Ocal (n^2)$. Like the IF-QIPM, our IF-IPM is faster in $n$ than all existing classical IPMs, and its dependence on $\epsilon$ is exponentially faster than its quantum counterpart. Yet, we cannot decisively conclude that the IF-IPM outperforms the current state of the art classical IPMs for SDO; its running time still depends (linearly) on $\kappa$. When applied to LO problems, our IF-IPM has a running time of $\Ocal \left( n^{2.5} \kappa \cdot \operatorname{polylog} \left(\kappa, \frac{1}{\epsilon} \right)\right)$. Note that this does not outperform the best classical IPM for LO even with respect to the dimension $n$: classical randomized IPMs can solve LOPs with $m = n$ constraints in time $\widetilde{\Ocal}_{n, \frac{1}{\epsilon}} \left( n^{\omega} \right)$, see \cite{cohen2021solving}.

It should be noted that the running time dependence on the condition number $\kappa$ is due to use of a QLSA for the solution of the Newton linear system. It is known that $\kappa$ can grow very large over the course of IPMs (regardless of problem class, e.g., LO or SDO), and in fact $\kappa \to \infty$ as we approach optimality, see the discussion in Section~\ref{s:comparingrt}. The QIPM has better dependence on
$1/\epsilon$ than the quantum MWU, but this comes at a cost of 
worse dependence on other parameters, most notably $n$ and
$\kappa$. It is worth noting, however, that QIPMs (as they are presented in this work) classically output the primal-dual optimal solution $(X^*, y^*, S^*)$, whereas QMWU algorithms only output $y^*$ and a way to construct $X^*, S^*$. In addition, the solutions we obtain using the IF-QIPM will satisfy primal and dual feasibility \textit{exactly}, while (Q)MWU algorithms can only obtain solutions which satisfy the constraints to some additive error, e.g., in the primal case we have
$$  A^{(i)} \bullet X  = b_i + \epsilon,~\forall i \in [m].$$
It should also be noted that (Q)IPMs achieve a normalized duality gap $\frac{X^* \bullet S^*}{n}$ at most $\epsilon$, whereas with (Q)MWUs the guarantee is $\epsilon$-additive. For a more detailed comparison of (Q)IPMs and (Q)MWUs, see Section \ref{ss:compare}. The QIPMs developed here enjoy a lower complexity than classical IPMs only if $\epsilon$ is relatively large, and $\kappa$ remains polynomially small over the course of the algorithm, a condition that cannot be guaranteed in theory. In practice, it is possible that these algorithms exhibit good performance if we are interested in obtaining a low-precision solution.

Summarizing, based on the current state of the art for classical and
quantum algorithms, quantum algorithms for linear systems do not seem
sufficient to obtain a speedup over classical IPMs, due to the
challenges highlighted above. This was
also observed in \cite{brandao2019faster} comparing classical and quantum algorithms for a specific application of SDO
in combinatorial optimization. It is known that the dependence of QLSAs on
$\kappa$ cannot in general be reduced to $\kappa^{1-\delta}$ for any $\delta > 0$
unless BQP = PSPACE \cite{harrow2009quantum}. We therefore
conjecture that a quantum speedup would have to rely on different
techniques: the direct quantization of a classical IPM, with linear
algebra performed on a quantum computer, does not seem to be
sufficient for an end-to-end speedup in general.

Our paper identifies some open questions. First, can we reduce the dependence on $1/\epsilon$ from linear to polylogarithmic? A promising direction to attain this objective is that of iterative refinement techniques, which have been successfully applied in the classical optimization literature for LOPs \cite{gleixner2020linear, gleixner2012improving, gleixner2016iterative}. We conjecture that they can be adapted to the setting of SDO, and this may help us improve the $\epsilon$-dependence of the QIPM. Second, is there a way to improve the dependence on $\kappa$? This seems difficult within the QLSA framework, due to the lower bounds of \cite{harrow2009quantum,somma2021complexity}, but in the context of QIPM, it would still be advantageous to use techniques that allow us to trade $\kappa$ for a slightly higher dependence on $n$ -- which, crucially, is \emph{exponential} in the number of qubits anyway.

The rest of this paper is organized as follows. In Section
\ref{s:pre}, we introduce our notation, some necessary concepts in
linear algebra, and basic quantum procedures. In Section \ref{s:cip} we
discuss the main components of classical IPMs. Section \ref{s:error} discusses how to deal with tomography errors , as well as algorithmic variants that allow one to capture inexactness and infeasibility. Section \ref{s:tech} provides technical results that are crucial to the convergence analysis of the IF-QIPM. In Section \ref{s:IFQIPM} we discuss its quantization, and provide a theoretical analysis of the
running time for the Inexact-Feasible and Inexact-Infeasible Quantum Interior Point Methods in Section \ref{s:complexity}. Section
\ref{s:con} concludes the paper. Some technical results, and results from other papers are presented in Appendices A and B, respectively. 

\section{Preliminaries} \label{s:pre} \label{s:qls}
We write $A^{\dagger}$ to indicate the conjugate transpose of $A$. We denote by $G \otimes K$ the tensor product of two matrices. For a matrix $G \in \R{n \times n}$, $\vec(G)$ is the $n^2 \times 1$ vector given by
$$ \vec(G) = \left(g_{11}, g_{21}, \dots, g_{n1}, g_{12}, \dots, g_{nn}  \right)^{\top}.$$ We make extensive use of the functions \textbf{svec} and \textbf{smat}, formally defined as follows. 

\begin{definition}
For $G \in \Scal^n$, $\textbf{\textup{svec}} (G) \in \R{\frac{1}{2} n (n+1)}$ is given by
    $$ \textup{\textbf{svec}}(G) =\left(g_{11}, \sqrt{2} g_{21}, \dots, \sqrt{2} g_{n1}, g_{22}, \sqrt{2} g_{32} \dots, \sqrt{2} g_{n2}, \dots, g_{nn} \right)^{\top} .$$
Further, the operator $\textup{\textbf{smat}}$ is the inverse operator of $\textbf{\textup{svec}}$. That is, 
$$ \textbf{\textup{smat}} \left[ \textbf{\textup{svec}}  (G) \right] = G. $$
\end{definition}
We also use the symmetric Kronecker product, defined below.
\begin{definition}\label{d:SymKronProd}
Let $G, K \in \Scal^n$. Define the $\frac{n(n+1)}{2} \times n^2$ matrix $V$ as:
$$ V_{(i,j), (k,l)} = \begin{cases}
1 &\text{if}~i = j = k = l, \\
1/\sqrt{2} &\text{if}~i = j \neq k = l,~\text{or}~i = l \neq j = k \\
0 &\text{otherwise}.
\end{cases}.$$
Then, the symmetric Kronecker product $G \otimes_s K$ of $G$ and $K$ is defined as:
$$G \otimes_s K = \frac{1}{2} V \left( G \otimes K + K \otimes G \right) V^{\top}.$$
\end{definition}

We let $\Scal_+^n$ and $\Scal_{++}^n$ denote the cones of symmetric positive semidefinite, and symmetric positive definite matrices, respectively. Given
two vectors $x, y$, we denote by $x \circ y$ the concatenation of the
two vectors. For $x \in \R{n}$, we denote its amplitude encoding by
$\ket{x}$, defined as
$$ \ket{x} = \frac{1}{\|x\|} \sum_{j \in [n]} x_j \ket{j}.$$ Notice
that $\ket{x}$ is a $(\log n)$-qubit state;  for simplicity, we assume that the sizes
of all spaces are powers of 2. All logarithms are base
2. The smallest and largest singular values of a matrix $A$ are denoted $\sigma_{\min}(A), \sigma_{\max}(A)$, and the smallest and largest eigenvalues are denoted $\lambda_{\min}(A), \lambda_{\max}(A)$. The condition number of $A$ is $\kappa_A = \frac{\lambda_{\max}(A)}{\lambda_{\min} (A)}$; we simply write $\kappa$ when referring the upper bound on the condition number of the Newton linear system. For an element $x \in \Cmbb^{d}$, $\Re(x) \in \R{d}$ is its real part.  We write ``II'' for inexact-infeasible and ``IF'' for inexact-feasible.

\subsection{Basic concepts on block-encoded matrices}
The QIPMs described in this paper make extensive use of quantum linear
algebra subroutines; notably, matrix multiplication
and matrix-vector products. We perform all these operations in the
framework of block-encodings, as discussed in \cite{chakraborty2018power,gilyen2019quantum}. The block-encoding framework generalizes earlier work on high-accuracy QLSAs in the sparse-access input model \cite{childs2017quantum} by expanding not only the possible input models, but also the set of operations allowed on the input matrix. Before giving formal definitions and an
overview of key results from the literature, we give an informal
introduction of the work in \cite{chakraborty2018power} to convey intuition.

To get a basic understanding of how to perform matrix-vector and matrix-matrix multiplication using this framework, suppose we want to multiply a matrix $A$ by some vector $u$. We assume that $u$ is encoded in the quantum state $\ket{u}$, which we want to update as follows:
$$ \ket{u} \to \frac{A\ket{u}}{\| A \ket{u}  \|}.$$
A block-encoding $U$ of $A$ is a unitary such that:
\begin{align*}
U = \begin{pmatrix}
\frac{A}{\alpha}& \cdot \\
\cdot & \cdot
\end{pmatrix},
\end{align*}
where $\alpha$ is a normalization factor that is chosen to ensure that $U$ has operator norm at most 1, i.e., $\| U \| \leq 1$. In particular, if $A$ is a
$w$-qubit operator, an $(\alpha, a, \xi)$-block-encoding of $A$
uses $a$ extra qubits and implements $A$ up to total error
$\xi$. Hence, we have
\begin{align*}
U \ket{u}\ket{0}^{\otimes a} \approx \begin{pmatrix}
\frac{A}{\alpha}& \cdot \\
\cdot & \cdot
\end{pmatrix}
\begin{pmatrix}
\ket{u} \\
\ket{0}
\end{pmatrix}
= \frac{1}{\alpha} A \ket{u} \ket{0}^{\otimes a} + \ket{\cdot},
\end{align*}
and performing approximately $\frac{\alpha}{\| A \ket{u} \|}$ rounds of
amplitude amplification yields $ \frac{A \ket{u} }{\| A \ket{u}\|}$
with high probability, see e.g., \cite{chakraborty2018power}.

For matrix multiplication, let $U_A$ be an $(\alpha_1, a_1,
\xi_1)$-block-encoding of an $w$-qubit operator $A$, and $U_B$ be a
$(\alpha_2, a_2, \xi_2)$-block-encoding of a $w$-qubit operator $B$. Then we
have
$$(I_{a_2} \otimes U_A)(I_{a_1} \otimes U_B) = \begin{pmatrix}
\frac{B}{\alpha_2}& \cdot \\
\cdot & \cdot
\end{pmatrix}
\begin{pmatrix}
\frac{A}{\alpha_1}& \cdot \\
\cdot & \cdot
\end{pmatrix} = \begin{pmatrix}
\frac{AB}{\alpha_1 \alpha_2}& \cdot \\
\cdot & \cdot
\end{pmatrix},$$
which yields an $(\alpha_1 \alpha_2, a_1 + a_2, \alpha_1 \xi_2 + \alpha_2
\xi_1)$-block-encoding of $AB$.

For matrix inversion, let $U$ be a $(\alpha, a, \xi)$-block
encoding of $A$ that can be implemented in time $T_U$. Then,
Chakraborty et al.~\cite{chakraborty2018power} prove that we can
implement a block-encoding $V$ of the inverse of $A$:
$$ V = \begin{pmatrix}
\frac{A^{-1}}{2 \kappa}& \cdot \\
\cdot & \cdot
\end{pmatrix}$$
Hence, given $A$ as a block-encoding and $\ket{u}$, we can solve the linear system 
$$Ax = u$$
by outputting the state
$$ \frac{A^{-1} \ket{u}}{\|A^{-1} \ket{u} \|},$$
computed by applying amplitude amplification to the state: 
$$ V \ket{u} \ket{0}^{\otimes a} = \frac{1}{2 \kappa} A^{-1} \ket{u} \ket{0}^{\otimes a}  + \ket{\cdot}.$$

To construct block-encodings of the matrices used by our QIPMs, we use the following
construction, see \cite{chakraborty2018power} for details. Suppose we
want to construct a matrix $A$ and we have a way of constructing the
quantum states $\ket{\psi_i} = \sum_{j} \frac{A_{ij}}{\| A_i \|}
\ket{i,j}$, where $A_{i}$ is the $i$-th row of $A$, and $\ket{\phi_j}
= \sum_{i} \frac{\|A_{i} \|}{\| A \|_F} \ket{i,j}$. Then we construct:
$$ U_R^{\dagger} U_L = \begin{pmatrix} 
- & \bra{\psi_1} & - & \cdot\\
&  \vdots &  & \\
- & \bra{\psi_n} & - & \cdot \\
 & \cdot & & \cdot
\end{pmatrix} 
\begin{pmatrix} 
| & & |  &\\
\ket{\phi_1} & \cdots & \ket{\phi_n} & \cdot \\
| & & | & \\
\cdot &  & \cdot & \cdot
\end{pmatrix} = 
\begin{pmatrix} 
[ \bra{\psi_i}\ket{\phi_j} ]_{i,j \in [n]} & \cdot \\
\cdot & \cdot 
\end{pmatrix} = \begin{pmatrix} 
\frac{A}{\|A\|_F} & \cdot \\
\cdot & \cdot 
\end{pmatrix},$$ where we used the fact that
\begin{align*}
\bra{\psi_i}\ket{\phi_j} &=  \frac{A_{ij}}{\|A\|_F}.
\end{align*}
Thus, $U_R^{\dag} U_L$ is a block-encoding of $A$ with normalization
factor $\alpha = \|A\|_F$. For the above construction to work, we must have a procedure to implement $U_R, U_L$. Note that these unitaries are easy to obtain starting from controlled operations to prepare the states $\ket{\psi_i}, \ket{\phi_j}$, i.e., operations of the form
$\ket{i}\ket{0} \to \ket{i}\ket{\psi_i}$, $\ket{j}\ket{0} \to
\ket{j}\ket{\phi_j}$, and these controlled operations can in turn be constructed with a procedure similar to the state preparation
algorithm of Grover and Rudolph \cite{grover2002creating} for
efficiently integrable distributions. This can be made more efficient if we allow pre-processing to create certain data structures that can be stored in quantum-accessible storage, i.e., QRAM. A quantum RAM (QRAM) is a form of storage that allows for querying a superposition of addresses. Given a QRAM that stores the classical vector $v_j \in
\R{2^q}$, and a quantum state $\sum_{j=0}^{2^q-1} \beta_j \ket{j}$,
the QRAM assumption is that the following mapping can be performed in
time $\widetilde{\Ocal}_n (1)$, i.e., polylogarithmic in the size of the vector:
\begin{equation*}
  \sum_{j=0}^{2^q-1} \beta_j \ket{j} \otimes \ket{0} \to \sum_{j=0}^{2^q-1} (\beta_j \ket{j} \otimes \ket{v_j}).
\end{equation*}
Throughout this paper, we assume that we have access to a classical-write, quantum-read QRAM that is large enough to store all input matrices, as well as a number of vectors so that we can efficiently perform indexed SWAP operations in our tomography subroutine (see Section \ref{s:tomography}), i.e., $ \widetilde{\Ocal}_{n, \frac{1}{\delta}, \frac{1}{\epsilon}} \left( m n^2 + 2^{\frac{n^2}{\epsilon}} \right)$. The exponential term in the QRAM size comes from the tomography subroutine if we insist on a classical-write QRAM, but as we discuss in \ref{s:tomography}, there are several alternative models to reduce this cost: for example, if the QRAM can be written in superposition, or if we have access to indexed SWAP gates, then the necessary QRAM size is only polynomial. For more
details about the QRAM data structure to prepare the amplitude
encoding of a vector, or the matrices $U_R, U_L$ discussed above, we
refer the reader to \cite{chakraborty2018power, kerenidis2020quantum}. 

\subsection{Useful results on block-encoded matrices}
We now provide formal definitions for the concepts informally
discussed in the previous section, as well as other results that are
used in the remainder of this paper. The material in this section is
mostly taken from \cite{chakraborty2018power, gilyen2019quantum},
which provide improvements over the framework discussed in
\cite{kerenidis2016quantum,kerenidis2020quantum}.

\begin{definition}[block-encoding]
  \label{prop:qramblockenc}
Let $A \in \C{2^w \times 2^w}$ be a $w$-qubit operator. Then, a $(w + a)$-qubit unitary $U$ is an $(\alpha, a, \xi)$-block-encoding of $A$ if $ U = \begin{pmatrix}
\widetilde{A} & \cdot \\
\cdot & \cdot
\end{pmatrix}$,
such that $$\| \alpha \widetilde{A} - A \| \leq \xi$$ An $(\alpha, a, \xi)$-block-encoding of $A$ is said to be efficient if it can be implemented in time $T_U = \Ocal (\textup{poly} (w)).$
\end{definition}
Note that Definition \ref{prop:qramblockenc} is equivalent to the following property:
$$ \| A - \alpha ( \bra{0}^{\otimes a} \otimes I_{2^w}) U ( \ket{0}^{\otimes a} \otimes I_{2^w}) \| \leq \xi.$$

The next proposition formalizes an idea discussed in the previous
section. We do not provide details on the necessary data structure,
referring the reader to \cite{chakraborty2018power} for an extensive
discussion on this topic.
\begin{proposition}[Lemma 50 in \cite{gilyen2019quantum}]
  \label{prop:qramblockenc}
  Let $A \in \C{m \times m}$ with $m = 2^w$ and $\xi > 0$.
  \begin{enumerate}
  \item[(i)] Fix $q \in [0, 2]$ and define $\mu_q (A) = \sqrt{n_q (A) n_{(2-q)} (A^{\top})}$ where $n_q (A) = \max_i \| a_i \|_q^q$ is the $q$-th power of the maximum $q$-norm of the rows of $A$. If $A^{q}$ and $(A^{2-q})^{\dag}$ are
    both stored in QRAM data structures, then there exist unitaries $U_R$ and $
    U_L$ that can be implemented in time $\Ocal(\textup{poly}(w \log
    \frac{1}{\xi}))$ and such that $U^{\dag}_RU_L$ is a $(\mu_q(A),
    w + 2, \xi)$-block-encoding of $A$.
  \item[(ii)] If $A$ is stored in a QRAM data structure, then there exist
    unitaries $U_R$ and $U_L$ that can be implemented in time
    $\Ocal(\textup{poly}(w \log \frac{1}{\xi}))$ and such that
    $U^{\dag}_RU_L$ is an $(\|A\|_F, w + 2, \xi)$-block-encoding of
    $A$.
  \end{enumerate}
\end{proposition}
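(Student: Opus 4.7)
The plan is to construct $U_L$ and $U_R$ so that, on a pair of ancilla flag registers initialized to $\ket{0}$, the top block of $U_R^\dagger U_L$ equals $A/\alpha$, with $\alpha = \|A\|_F$ in part (ii) and $\alpha = \mu_q(A)$ in part (i). Following the construction sketched informally in the preceding discussion, I would pick two families of states $\{\ket{\psi_i}\}_i$ and $\{\ket{\phi_j}\}_j$ whose pairwise inner products satisfy $\langle \psi_i | \phi_j \rangle = A_{ij}/\alpha$, then take $U_R$ to be any unitary implementing $\ket{i,0} \mapsto \ket{\psi_i}$ and $U_L$ to be any unitary implementing $\ket{0,j} \mapsto \ket{\phi_j}$. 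Since the $\ket{\psi_i}$ are orthonormal (first register distinguishes them) and similarly for the $\ket{\phi_j}$, these partial specifications can always be completed to full unitaries. The block-encoding property then follows from $\bra{i,0} U_R^\dagger U_L \ket{0,j} = \langle \psi_i | \phi_j \rangle = A_{ij}/\alpha$ together with $\|U_R^\dagger U_L\| \le 1$.

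For part (ii), the correct states are the ones already introduced in the preliminary discussion, namely $\ket{\psi_i} \propto \sum_j A_{ij}\ket{i,j}$ (with conjugation on the ket side if $A$ is complex) normalized by $\|A_i\|$, and $\ket{\phi_j} = \sum_i (\|A_i\|/\|A\|_F)\ket{i,j}$; the row-norm $\|A_i\|$ cancels and one gets $A_{ij}/\|A\|_F$ directly. For part (i), I would split the amplitudes fractionally, setting $\ket{\psi_i} \propto \sum_j \operatorname{sgn}(A_{ij}) |A_{ij}|^{q/2} \ket{i,j}$ and $\ket{\phi_j} \propto \sum_i |A_{ij}|^{(2-q)/2} \ket{i,j}$, then adjoining an extra ancilla register to each that carries whatever residual amplitude is needed so that both states are normalized by the \emph{row-/column-independent} quantities $\sqrt{n_q(A)}$ and $\sqrt{n_{2-q}(A^\top)}$ respectively. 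The ancilla components are orthogonal across $\ket{\psi_i}$ and $\ket{\phi_j}$ by construction and contribute nothing to the inner product, which collapses to $A_{ij}/\sqrt{n_q(A)\,n_{2-q}(A^\top)} = A_{ij}/\mu_q(A)$.

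The efficient implementation of $U_L$ and $U_R$ is where the QRAM hypothesis is used. The data structures of \cite{kerenidis2016quantum, chakraborty2018power} store the relevant entries (of $A^{q}$ and $(A^{2-q})^\dagger$ for part (i), or of $A$ itself for part (ii)) in binary trees whose internal nodes carry the cumulative squared-amplitude partial sums along their subtrees. Given such a tree, each $\ket{\psi_i}$ or $\ket{\phi_j}$ can be prepared by a chain of $w$ controlled rotations whose angles are read directly from the tree in Grover--Rudolph style; $U_L$ and $U_R$ are then obtained by indexing this subroutine on the row or column register, giving total cost $\Ocal(\textup{poly}(w \log \tfrac{1}{\xi}))$ per application.

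The main obstacle I expect is the error analysis. The partial sums in the tree, and the rotation angles derived from them, can be stored only to finite precision, and this perturbs each prepared state in $\ell_2$-norm. I would propagate these perturbations through the bound
$$\bigl\|A - \alpha\,(\bra{0}^{\otimes a}\otimes I)\,\widetilde{U}_R^{\dagger} \widetilde{U}_L\,(\ket{0}^{\otimes a}\otimes I)\bigr\| \le \alpha\bigl(\|U_R - \widetilde{U}_R\| + \|U_L - \widetilde{U}_L\|\bigr),$$
choose the per-node precision so that the right-hand side is at most $\xi$, and verify that the required bit-length is $\textup{polylog}(1/\xi)$, which is absorbed into the stated $\Ocal(\textup{poly}(w\log(1/\xi)))$ runtime. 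Care must be taken because $\alpha$ itself can be as large as $\|A\|_F$ or $\mu_q(A)$, so the states must be prepared to accuracy $\xi/\alpha$; but since $\alpha$ appears only polynomially and the dependence on the inverse accuracy is polylogarithmic, this scaling is benign.
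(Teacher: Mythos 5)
This proposition is imported from the literature (Lemma 50 of \cite{gilyen2019quantum}); the paper supplies no proof of its own, only the informal $U_R^{\dagger}U_L$ sketch for the Frobenius-norm case in the preceding subsection, and your reconstruction follows exactly that standard route --- the inner-product construction $\langle\psi_i|\phi_j\rangle = A_{ij}/\alpha$, Grover--Rudolph state preparation from the QRAM partial-sum trees, and, for part (i), the padding trick that makes the normalizations row-/column-independent so the inner products give $A_{ij}/\mu_q(A)$ exactly. Modulo the usual care with complex phases (conjugation on the bra side, and a phase rather than a sign in part (i)) and the precise ancilla count, your argument matches the cited proof and is correct.
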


\begin{definition}[State preparation pair]
Let $y \in \mathbb{C}^m$ and $\| y \|_1 \leq \beta$. The pair of unitaries $(P_L, P_R)$ is called a $(\beta, p, \xi)$-state-preparation-pair if $P_L \ket{0}^{\otimes p} = \sum_{j = 0}^{{2^p} - 1} c_j \ket{j}$ and $P_R \ket{0}^{\otimes p} = \sum_{j = 1}^{{2^p} - 1} d_j \ket{j}$ such that $\sum_{j=0}^{m-1} | \beta (c_j^* d_j) - y_j | \leq \xi$ and for all $j \in m, \dots, 2^p -1$ we have $c_j^* d_j = 0$.
\end{definition}

\begin{proposition}[Lemma 52 in \cite{gilyen2019quantum}]
  \label{prop:lincombblock}
  (Linear combination of block-encoded matrices, with weights given by
  a state preparation pair) Let $A = \sum_{j=0}^{m-1} y_j A^{(j)}$ be
  a $w$-qubit operator, where $A^{(j)}$ are matrices. Suppose $P_L, P_R$
  is a $(\beta, p, \xi_1)$-state-preparation pair for $y$, $W =
  \sum_{j=0}^{m-1} \ket{j}\bra{j}\otimes U_j + ((I - \sum_{j=0}^{m-1}
  \ket{j}\bra{j}) \otimes I_a \otimes I_w)$ is an $(w + a + p)$-qubit
  unitary with the property that $U_j$ is an $(\alpha, a,
  \xi_2)$-block-encoding of $A^{(j)}$. Then we can implement a
  $(\alpha\beta, a+p, \alpha \xi_1 + \alpha \beta
  \xi_2)$-block-encoding of $A$ with a single use of $W, P_R$ and
  $P_L^{\dag}$.  
\end{proposition}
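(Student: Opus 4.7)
The plan is to give the block-encoding explicitly as a product of three unitaries on an enlarged Hilbert space, and then compute its top-left block by tracing through its action register by register. Concretely, I would set
$$ U := (P_L^{\dagger} \otimes I_a \otimes I_w)\,W\,(P_R \otimes I_a \otimes I_w),$$
where the three tensor factors act on the $p$-qubit ``selection'' register, the $a$-qubit block-encoding ancilla, and the $w$-qubit target system, respectively. By construction $U$ is a $(p+a+w)$-qubit unitary (so it uses exactly $a+p$ ancillas relative to the $w$-qubit target) and invokes each of $P_R$, $W$, and $P_L^{\dagger}$ exactly once, which already matches the resource count in the statement.

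The next step is to identify the top-left block of $U$. Applying $P_R \otimes I_a \otimes I_w$ to $\ket{0}^{\otimes p}\ket{0}^{\otimes a}\ket{\psi}$ yields $\sum_{j} d_j \ket{j}\ket{0}^{\otimes a}\ket{\psi}$. Because $W$ acts as $\ket{j}\bra{j}\otimes U_j$ for $j<m$ and as the identity on the last two registers for $j\geq m$, a further application of $W$ produces
$$ \sum_{j=0}^{m-1} d_j \ket{j}\,U_j\bigl(\ket{0}^{\otimes a}\ket{\psi}\bigr)\;+\;\sum_{j=m}^{2^{p}-1} d_j \ket{j}\ket{0}^{\otimes a}\ket{\psi}.$$
Projecting the selection register against $\bra{0}^{\otimes p}P_L^{\dagger}=\sum_j c_j^{*}\bra{j}$, the ancilla against $\bra{0}^{\otimes a}$, and invoking the state-preparation condition $c_j^{*}d_j=0$ for $j\geq m$ to kill the ``trivial'' sum, yields
$$ \bigl(\bra{0}^{\otimes(a+p)}\otimes I_w\bigr)\,U\,\bigl(\ket{0}^{\otimes(a+p)}\otimes I_w\bigr) \;=\; \sum_{j=0}^{m-1} c_j^{*}d_j\,\tilde A^{(j)},$$
where $\tilde A^{(j)} := (\bra{0}^{\otimes a}\otimes I_w)\,U_j\,(\ket{0}^{\otimes a}\otimes I_w)$ denotes the block encoded by $U_j$.

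To close the argument, I would estimate the discrepancy between $\alpha\beta$ times this block and $A$. The standard trick is the telescoping identity
$$ \alpha\beta\,c_j^{*}d_j\,\tilde A^{(j)} - y_j A^{(j)} \;=\; \alpha\,(\beta c_j^{*}d_j - y_j)\,\tilde A^{(j)} \;+\; y_j\,(\alpha \tilde A^{(j)} - A^{(j)}),$$
summed over $j\in\{0,\dots,m-1\}$. Using $\|\tilde A^{(j)}\|\leq 1$ (since $\tilde A^{(j)}$ is a sub-block of the unitary $U_j$) together with the state-preparation estimate $\sum_j|\beta c_j^{*}d_j-y_j|\leq \xi_1$, the first group of terms is controlled by $\alpha\xi_1$; using the individual block-encoding errors $\|\alpha \tilde A^{(j)}-A^{(j)}\|\leq \xi_2$ along with $\|y\|_1\leq\beta$, the second group contributes the $\alpha\beta\xi_2$ term. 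Combining the two pieces produces the claimed $(\alpha\beta,\,a+p,\,\alpha\xi_1+\alpha\beta\xi_2)$-block-encoding.

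The main obstacle is simply bookkeeping the tensor structure across three registers and verifying that the branch of $W$ corresponding to $j\geq m$ — which acts trivially on the ancilla-plus-target registers even though $d_j$ may be nonzero — contributes nothing to the final block. The state-preparation pair condition $c_j^{*}d_j=0$ for $j\geq m$ is precisely what is needed for this cancellation; once it is invoked, the error estimate reduces to a routine triangle-inequality computation based on the two given hypotheses.
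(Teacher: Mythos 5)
Your construction and error analysis are correct, and they constitute essentially the standard proof of this result; note that the paper itself does not prove this proposition at all, but imports it verbatim as Lemma 52 of \cite{gilyen2019quantum}, so there is no internal proof to compare against — your argument is the one found in that reference (sandwich $W$ between $P_R$ and $P_L^{\dagger}$, read off the top-left block, kill the $j \geq m$ branch via $c_j^* d_j = 0$, and split the error by a telescoping triangle inequality). One bookkeeping remark: your second error group is bounded by $\sum_j |y_j|\,\|\alpha \tilde A^{(j)} - A^{(j)}\| \leq \|y\|_1 \xi_2 \leq \beta \xi_2$, not by $\alpha\beta\xi_2$ as you assert; this matches the stated error term only when $\alpha \geq 1$ (which holds in every use of the proposition in this paper, where $\alpha$ is a Frobenius-norm subnormalization, and the same slack is already present in the original statement of Lemma 52 relative to its proof). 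If you want the write-up to be airtight as stated, either record the bound as $\alpha\xi_1 + \|y\|_1\xi_2$ and observe it implies the claim for $\alpha \geq 1$, or split the error as $\|\sum_j (y_j - \beta c_j^* d_j) A^{(j)}\| + \beta\|\sum_j c_j^* d_j (A^{(j)} - \alpha \tilde A^{(j)})\|$ and bound $\|A^{(j)}\| \leq \alpha + \xi_2$; either way the discrepancy is harmless.
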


The following two propositions are critical to the efficiency of the
QIPMs. They state that one can construct a block-encoding as a
product of two block-encoded matrices, with overhead that is merely
polylogarithmic in the size of the matrices. The difference between
the two propositions is that in the second one, the normalization
factor of the block-encoding of the product is fixed, rather than
dependent on the input.
\begin{proposition}[Lemma 4 \cite{chakraborty2018power}] 
  \label{prop:product}
  (Product of block-encoded matrices)
If $U_A$ is an $(\alpha_1, a_1, \xi_1)$-block-encoding of an $s$-qubit operator $A$, and $U_B$ is an $(\alpha_2, a_2, \xi_2)$-block-encoding of an $s$-qubit operator $B$, then $(I_{a_2} \otimes U_A)(I_{a_1} \otimes U_B)$ is an $(\alpha_1 \alpha_2, a_1+a_2, \alpha_1 \xi_2 + \alpha_2 \xi_1)$-block-encoding of $AB$. 
\end{proposition}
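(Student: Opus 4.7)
The plan is to verify directly the three defining properties of an $(\alpha_1\alpha_2,\ a_1+a_2,\ \alpha_1\xi_2+\alpha_2\xi_1)$-block-encoding for the product operator $W := (I_{a_2}\otimes U_A)(I_{a_1}\otimes U_B)$. Unitarity is immediate since $W$ is a product of two unitaries, and the total number of qubits is $s+a_1+a_2$, so the combined ancilla register has size $a_1+a_2$.

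The crux is showing that the relevant sub-block of $W$ equals $\widetilde{A}\widetilde{B}$, where $\widetilde{A}, \widetilde{B}$ denote the top-left sub-blocks of $U_A, U_B$ as in Definition \ref{prop:qramblockenc}. First I would pin down the tensor-product convention by introducing distinct ancilla registers $R_A$ (size $a_1$) and $R_B$ (size $a_2$) together with the system register $R_s$ (size $s$), and interpret $I_{a_2}\otimes U_A$ as $U_A$ acting on $R_A\otimes R_s$ combined with the identity on $R_B$ (symmetrically for $I_{a_1}\otimes U_B$). Because the first factor acts trivially on $R_B$, it commutes with every operator supported on $R_B$, in particular with $\ket{0}_{R_B}\bra{0}_{R_B}$; an analogous statement holds for the second factor and $R_A$. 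Passing these projections through the respective factors decouples the two block-encodings and yields
\begin{equation*}
  (\bra{0}_{R_A}\otimes\bra{0}_{R_B}\otimes I_{R_s})\, W\, (\ket{0}_{R_A}\otimes\ket{0}_{R_B}\otimes I_{R_s}) = \widetilde{A}\,\widetilde{B},
\end{equation*}
which can also be verified in one line by a direct matrix-element computation.

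The error bound follows from a standard telescoping estimate. The block-encoding hypotheses give $\|\alpha_1\widetilde{A}-A\|\le\xi_1$ and $\|\alpha_2\widetilde{B}-B\|\le\xi_2$, so
\begin{equation*}
  \alpha_1\alpha_2\widetilde{A}\widetilde{B} - AB = \alpha_1\widetilde{A}\,(\alpha_2\widetilde{B}-B) + (\alpha_1\widetilde{A}-A)\,B,
\end{equation*}
and submultiplicativity together with $\|\widetilde{A}\|\le 1$ (since $\widetilde{A}$ is a sub-block of a unitary) and $\|B\|\le\alpha_2$ (the standard normalization convention on block-encodings) delivers the target bound $\alpha_1\xi_2+\alpha_2\xi_1$. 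The main obstacle is purely notational: one must be careful that the two occurrences of ``identity'' in $(I_{a_2}\otimes U_A)(I_{a_1}\otimes U_B)$ refer to disjoint ancilla registers, since otherwise the product would be ill-defined. Once this bookkeeping is settled, the proof reduces to the elementary algebraic manipulations above.
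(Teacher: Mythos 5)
Your proof is correct and takes essentially the same route as the paper's informal derivation (and the cited Lemma 4 of \cite{chakraborty2018power}): pad the two block-encodings on disjoint ancilla registers so that projecting both ancillas onto $\ket{0}$ decouples them and yields $\widetilde{A}\widetilde{B}$, then bound the error by the telescoping estimate $\alpha_1\alpha_2\widetilde{A}\widetilde{B}-AB=\alpha_1\widetilde{A}(\alpha_2\widetilde{B}-B)+(\alpha_1\widetilde{A}-A)B$. The only minor caveat is your use of $\|B\|\le\alpha_2$, which the definition of an $(\alpha_2,a_2,\xi_2)$-block-encoding only guarantees up to $\alpha_2+\xi_2$; the original lemma makes the same implicit normalization assumption, so your argument is consistent with the statement as cited.
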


\begin{proposition}[Lemma 5 in \cite{chakraborty2018power}]
  \label{prop:ampproduct}
  (Product of preamplified block-encoded matrices) Suppose $A$ and $B$ have sizes such that $AB$ is a a valid matrix product, and $\|A \|, \|B \| \leq 1$. Given an $(\alpha_1,
  a_1, \xi_1)$-block-encoding $U_A$ of an $s$-qubit operator $A$, and a
  $(\alpha_2, a_2, \xi_2)$-block-encoding $U_B$ of an $w$-qubit operator
  $B$, with $\alpha_1, \alpha_2 \ge 1$, we can implement a $(2, a_1 + a_2
  + 2, \sqrt{2}(\xi_1 + \xi_2 + \xi_3))$-block-encoding of $AB$ in
  time $\Ocal((\alpha_1 (T_{U_A} + a_1) + \alpha_2(T_{U_B} + a_2))\log \frac{1}{\xi_3})$,
  where $T_{U_A}$ and $T_{U_B}$ are the implementation times for $U_A$ and $U_B$,
  respectively.
\end{proposition}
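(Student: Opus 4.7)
The plan is to preamplify each of the two given block-encodings separately so that their normalization constants are reduced to $\sqrt{2}$, and then to invoke the standard product construction of Proposition \ref{prop:product}. Under this plan, the combined normalization of the product becomes $\sqrt{2}\cdot\sqrt{2} = 2$, exactly matching the statement, and the running time comes out additive in $\alpha_1$ and $\alpha_2$ rather than multiplicative, which is the whole point of the result compared to Proposition \ref{prop:product} alone.

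First, I would apply uniform singular value amplification (established via quantum singular value transformation in \cite{gilyen2019quantum}) to the $(\alpha_1, a_1, \xi_1)$-block-encoding $U_A$, producing a $(\sqrt{2}, a_1 + 1, \xi_1 + \xi_3/2)$-block-encoding $\widetilde{U}_A$ of $A$. The hypothesis $\|A\|\leq 1$ is exactly what is needed here: the singular values of the encoded block $A/\alpha_1$ lie in $[0, 1/\alpha_1]$, so amplification by a factor of $\alpha_1/\sqrt{2}$ keeps all amplified singular values bounded by $1/\sqrt{2}$, strictly below $1$, which is the admissibility requirement for singular value amplification. The cost of producing $\widetilde{U}_A$ is $O(\alpha_1 \log(1/\xi_3))$ applications of $U_A$ and $U_A^{\dagger}$ interleaved with reflections on the $a_1$ ancilla qubits, giving total time $O(\alpha_1 (T_{U_A} + a_1)\log(1/\xi_3))$. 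Running the same procedure on $U_B$ yields a $(\sqrt{2}, a_2 + 1, \xi_2 + \xi_3/2)$-block-encoding $\widetilde{U}_B$ in time $O(\alpha_2 (T_{U_B} + a_2)\log(1/\xi_3))$.

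Second, I would combine $\widetilde{U}_A$ and $\widetilde{U}_B$ via Proposition \ref{prop:product}. That proposition yields a block-encoding of $AB$ with normalization $\sqrt{2}\cdot\sqrt{2} = 2$, ancilla count $a_1 + a_2 + 2$, and error $\sqrt{2}(\xi_1 + \xi_3/2) + \sqrt{2}(\xi_2 + \xi_3/2) = \sqrt{2}(\xi_1 + \xi_2 + \xi_3)$, exactly as required. The running time is dominated by the two independent amplifications, yielding the additive expression in the statement; the product step itself contributes only polylogarithmic overhead beyond its implementation times.

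The main obstacle is the error bookkeeping in the amplification step: the polynomial approximation used to implement uniform singular value amplification only approximates the identity on $[0, 1/(\sqrt{2}\alpha_1)]$ up to a controllable precision, and one must verify that this approximation error translates to only an additive $\xi_3/2$ term in the resulting block (not multiplied by $\alpha_1$). A related subtle point is that the target normalization $\sqrt{2}$ (rather than a tighter value approaching $\|A\|$) is chosen precisely to leave the required margin away from $1$ for the amplification polynomial; pushing this closer to $1$ would blow up the degree of the polynomial and hence the running time. Once these checks are settled, the product step and final error aggregation are entirely routine and use only the already-established Proposition \ref{prop:product}.
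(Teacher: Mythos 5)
The paper does not prove this proposition at all: it is imported verbatim as Lemma~5 of \cite{chakraborty2018power}, so there is no in-paper argument to compare against. Your reconstruction is correct and follows essentially the same route as the proof in that reference --- uniform singular value amplification (via QSVT, using $\|A\|,\|B\|\leq 1$ to keep the amplified singular values at most $1/\sqrt{2}$) to turn each encoding into a $(\sqrt{2},\cdot,\cdot)$-block-encoding at cost $\Ocal(\alpha_i(T_{U_i}+a_i)\log\frac{1}{\xi_3})$, followed by Proposition~\ref{prop:product} --- and your error and cost bookkeeping matches the stated parameters.
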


If one has a block-encoded matrix, one can also implement a block-encoding of (positive or negative) powers of that matrix, see \cite[Lemma 9, 10]{chakraborty2018power}. We are now in a position to state how block-encodings can be used to solve quantum linear systems problems, which is formalized in the following result from \cite{chakraborty2018power}.
\begin{theorem}[Theorem 30 in \cite{chakraborty2018power}]
  \label{thm:qlsa}
  (Solution of linear system) Let $r \in (0, \infty)$, $\kappa \geq 2$
  and $H$ a Hermitian matrix such that its nonzero eigenvalues lie in
  $[-1, -1/\kappa]\cup[1/\kappa, 1]$. Suppose that
  $$ \xi = o \left( \frac{ \delta}{ \kappa^{2} \log^3
    \frac{\kappa^{2}}{\delta}} \right)$$ and $U$ is an $(\alpha, a,
  \xi)$-block-encoding of $H$, that can be implemented in time $T_U$. Suppose further that we can prepare a state
  $\ket{v}$ that is in the image of $H$ in time $T_v$.
  Then, for any $\delta$, we can output a state that is $\delta$-close to $H^{-1} \ket{v} / \|H^{-1} v\|$ in time
$$ \Ocal \left( \kappa\left(\alpha  (a + T_U) \log^2 \left(\frac{\kappa}{\delta}\right) + T_v\right) \log \kappa\right).$$
\end{theorem}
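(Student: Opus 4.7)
The plan is to build a block-encoding of (approximately) $H^{-1}$ via a polynomial approximation of $1/x$ on the spectrum of $H$, then apply it to $\ket{v}$, and finish with amplitude amplification to obtain the normalized state. The heavy lifting is provided by Quantum Singular Value Transformation (QSVT) of Gily\'en--Su--Low--Wiebe, which lets us apply any suitably bounded polynomial to a block-encoded operator while controlling errors.

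First, I would approximate the target function. Since $1/x$ is unbounded near the origin, I only approximate it on $D_\kappa = [-1,-1/\kappa]\cup[1/\kappa,1]$, where the nonzero eigenvalues of $H$ lie. Standard results (Childs--Kothari--Somma, refined in \cite{chakraborty2018power,gilyen2019quantum}) yield an odd polynomial $P$ of degree $d = \Ocal(\kappa \log(\kappa/\delta))$ with $\|P\|_\infty \leq 1$ on $[-1,1]$ and $|P(x) - 1/(2\kappa x)| \leq \delta/(2\kappa)$ on $D_\kappa$; the factor $1/(2\kappa)$ enforces $\|P\|_\infty \leq 1$ so that $P(H)$ can sit inside a unitary block.

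Next, I would apply QSVT to the given $(\alpha,a,\xi)$-block-encoding $U$ of $H$. Rescaling so that the polynomial is applied to $H$ (rather than $H/\alpha$), QSVT produces a $(1, a+\Ocal(1), \xi')$-block-encoding $V$ of a matrix $\widetilde{M}$ with $\|\widetilde{M} - H^{-1}/(2\kappa)\| \leq \delta/(2\kappa)$, using $d$ queries to $U$ and $U^\dagger$ and $\Ocal(d(a+1))$ extra gates. The robustness of QSVT propagates the per-query error as $\xi' = \Ocal(d\alpha\xi)$, and the hypothesis $\xi = o(\delta/(\kappa^2 \log^3(\kappa^2/\delta)))$ is exactly what is needed so that after multiplying by $d$, by $\alpha$, and by the additional factor $\Ocal(\kappa)$ introduced by the amplification step, the total error in the final state remains $\Ocal(\delta)$. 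Preparing $\ket{v}\ket{0}^{\otimes(a+\Ocal(1))}$ and applying $V$ gives
\begin{equation*}
V\ket{v}\ket{0}^{\otimes(a+\Ocal(1))} = \frac{1}{2\kappa} H^{-1}\ket{v}\ket{0}^{\otimes(a+\Ocal(1))} + \ket{\bot},
\end{equation*}
where $\ket{\bot}$ is orthogonal to the ``good'' ancilla subspace. Because $\ket{v}$ lies in the image of $H$ and $\|H\| \leq 1$, we have $\|H^{-1}\ket{v}\| \geq 1$, so the amplitude on the good subspace is at least $1/(2\kappa)$. Fixed-point amplitude amplification then boosts this amplitude to $\Omega(1)$ in $\Ocal(\kappa)$ rounds, each costing one preparation of $\ket{v}$ (time $T_v$) and one application of $V$ (time $\Ocal(\alpha(a+T_U)\log^2(\kappa/\delta))$ by QSVT), yielding the stated total runtime
\begin{equation*}
\Ocal\!\left(\kappa\bigl(\alpha(a+T_U)\log^2(\kappa/\delta) + T_v\bigr)\log\kappa\right).
\end{equation*}

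The main obstacle is the error bookkeeping, not any single high-level idea: QSVT is applied with a polynomial of degree $d = \Ocal(\kappa \log(\kappa/\delta))$, so implementation error from $U$ accumulates roughly as $d\alpha\xi$, and amplitude amplification contributes another $\Ocal(\kappa)$ factor on top of the polynomial approximation slack $\delta/(2\kappa)$. Calibrating $\xi$ so that the final state is $\delta$-close to $H^{-1}\ket{v}/\|H^{-1}\ket{v}\|$, while also absorbing the $\log^3(\kappa^2/\delta)$ overhead from the polynomial construction, is precisely what forces the hypothesis $\xi = o(\delta/(\kappa^2 \log^3(\kappa^2/\delta)))$ in the statement. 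All other steps are mechanical once the QSVT and amplification stability lemmas of \cite{gilyen2019quantum,chakraborty2018power} are invoked.
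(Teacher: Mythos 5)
The paper does not prove this statement; it is imported verbatim as Theorem 30 of \cite{chakraborty2018power}, so the only meaningful comparison is with the proof in that reference. Measured against it, your proposal has a genuine gap in the $\kappa$-accounting. Your circuit $V$ is a QSVT implementation of a degree-$d$ polynomial with $d = \Ocal(\kappa\log(\kappa/\delta))$ (in fact effectively $\Ocal(\alpha\kappa\log(\kappa/\delta))$ once you work with the block-encoded $H/\alpha$, whose relevant eigenvalues are as small as $1/(\alpha\kappa)$), so a single application of $V$ costs $\Ocal\left(\alpha\kappa(a+T_U)\log^2\left(\frac{\kappa}{\delta}\right)\right)$, not the $\Ocal\left(\alpha(a+T_U)\log^2\left(\frac{\kappa}{\delta}\right)\right)$ you quote — the factor $\kappa$ from the polynomial degree has been silently dropped. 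Wrapping this inside $\Ocal(\kappa)$ rounds of ordinary (fixed-point) amplitude amplification therefore gives a total of order $\alpha\kappa^{2}(a+T_U)\log^2(\kappa/\delta) + \kappa T_v$, which is worse by essentially a full factor of $\kappa$ than the bound claimed in the theorem, $\Ocal\left(\kappa\left(\alpha(a+T_U)\log^2\left(\frac{\kappa}{\delta}\right)+T_v\right)\log\kappa\right)$.

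The missing idea is precisely how \cite{chakraborty2018power} avoids this multiplicative blow-up: they do not compose a full-precision inversion block-encoding with generic amplitude amplification, but instead use Ambainis-style \emph{variable-time} amplitude amplification (and variable-time amplitude estimation for the norm variant, cf.\ Proposition~\ref{prop:blocknormest}), in which the inversion is split into stages adapted to the eigenvalue magnitude so that well-conditioned components stop early; this is what yields near-linear dependence on $\kappa$, at the price of the extra $\log\kappa$ factor visible in the statement. Your polynomial-approximation step, the lower bound $\|H^{-1}\ket{v}\|\ge 1$, and the error-propagation reasoning behind the hypothesis on $\xi$ are all fine, but without VTAA (or an equivalent device) the argument as written proves a strictly weaker, $\widetilde{\Ocal}(\kappa^2)$-type statement rather than the theorem being cited.
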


\begin{proposition}[Corollary 32 in \cite{chakraborty2018power}]
  \label{prop:blocknormest}
  (Norm estimation) Let $p \in (0, \infty)$, $\kappa \geq 2$ and $H$ a
  Hermitian matrix such that its nonzero eigenvalues lie in $[-1,
    -1/\kappa]\cup[1/\kappa, 1]$. Suppose that
  $$ \xi = o \left( \frac{ \delta}{ \kappa^{2} \log^3
    \frac{\kappa^{2}}{\delta}} \right)$$ and $U$ is an $(\alpha, a,
  \xi)$-block-encoding of $H$, that can be implemented in time $T_U$. Suppose further that we can prepare a state
  $\ket{v}$ that is in the image of $H$ in time $T_v$. Then we can output $\tilde{e}$ such that
  \begin{equation*}
    (1-\delta)\|H^{-1}\ket{v}\| \le \tilde{e} \le (1+\delta)\|H^{-1}\ket{v}\|
  \end{equation*}
  in time
$$ \Ocal \left( \frac{\kappa}{\delta} \left(\alpha(a + T_U) \log^2 \left(\frac{\kappa}{\delta}\right) + T_v \right) \log^3 \kappa \log \frac{\log\kappa}{\xi} \right).$$
\end{proposition}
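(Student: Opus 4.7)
The plan is to reduce norm estimation to amplitude estimation applied to a block-encoded inverse of $H$. First, using the construction underlying Theorem~\ref{thm:qlsa}, I would build a $(2\kappa, a', \xi')$-block-encoding $V$ of $H^{-1}$ starting from the given $(\alpha,a,\xi)$-block-encoding $U$ of $H$. Choosing $\xi'$ of order $\delta/\kappa$ propagates back, through the QLSA analysis, to the requirement $\xi = o(\delta/(\kappa^{2}\log^{3}(\kappa^{2}/\delta)))$ assumed in the hypothesis; each application of $V$ then costs $\widetilde{\Ocal}(\kappa\alpha(a+T_U))$ with explicit $\log^{2}(\kappa/\delta)$ overhead coming from the polynomial approximation of $1/x$ used inside the QLSA.

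Second, after preparing $\ket{v}\ket{0}^{\otimes a'}$ in time $T_v$ and applying $V$, the block-encoding property yields
$$
V\ket{v}\ket{0}^{\otimes a'} \;=\; \frac{1}{2\kappa}\bigl(H^{-1}\ket{v}\bigr)\ket{0}^{\otimes a'} + \ket{\perp},
$$
where $\ket{\perp}$ has no support on ancilla register $\ket{0}^{\otimes a'}$. Consequently, the amplitude of the ``good'' subspace is $a := \|H^{-1}\ket{v}\|/(2\kappa)$, and estimating $a$ to multiplicative precision $\delta$ is equivalent to estimating $\|H^{-1}\ket{v}\|$ to the required relative precision after rescaling by $2\kappa$.

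Third, I would apply the amplitude estimation routine of Brassard, H\o yer, Mosca, and Tapp to obtain $\hat a$ with $|\hat a - a|\le \delta a/2$, which requires $M = \Ocal(1/(a\delta))$ controlled invocations of $V$ and its inverse, composed with state preparation of $\ket{v}$. Since the nonzero eigenvalues of $H$ lie in $[-1,-1/\kappa]\cup[1/\kappa,1]$ we have $\|H\|\le 1$, and because $\ket{v}$ is in the image of $H$ we obtain $\|H^{-1}\ket{v}\|\ge \|\ket{v}\|/\|H\|\ge 1$, hence $a\ge 1/(2\kappa)$ and $M = \Ocal(\kappa/\delta)$. Returning $\tilde e := 2\kappa\,\hat a$ satisfies the required two-sided relative bound $(1-\delta)\|H^{-1}\ket{v}\|\le \tilde e \le (1+\delta)\|H^{-1}\ket{v}\|$.

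The main obstacle is the error bookkeeping: one has to show that the block-encoding error of $V$, the truncation error of the QLSA polynomial approximation, and the intrinsic error of amplitude estimation all compose to produce a \emph{relative} rather than merely additive error on $\|H^{-1}\ket{v}\|$, while simultaneously keeping the ancilla-measurement amplitude bounded below by $1/(2\kappa)$ so that the query count stays $\Ocal(\kappa/\delta)$. This is precisely what forces the stringent choice of $\xi$ in the hypothesis and introduces the $\log^{3}\kappa$ and $\log(\log\kappa/\xi)$ factors in the final cost, once the per-query cost $\alpha(a+T_U)\log^{2}(\kappa/\delta)+T_v$ is multiplied by the $\Ocal(\kappa/\delta)$ amplitude-estimation queries.
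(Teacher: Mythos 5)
Your reduction --- apply a block-encoded $H^{-1}$ to $\ket{v}$, note the good-subspace amplitude is $\|H^{-1}\ket{v}\|/(2\kappa)\ge 1/(2\kappa)$, and estimate it by amplitude estimation, returning $2\kappa\hat a$ --- is a sensible skeleton, and your lower bound $\|H^{-1}\ket{v}\|\ge 1$ for $\ket{v}$ in the image of $H$ is correct. But the complexity accounting does not close, and the discrepancy is exactly a factor of $\kappa$. By your own statement, each application of the block-encoding $V$ of $H^{-1}$ costs $\widetilde{\Ocal}\bigl(\kappa\,\alpha(a+T_U)\bigr)$ (the degree-$\Ocal(\kappa\log(\kappa/\xi'))$ polynomial approximation of $1/x$ is unavoidable), and standard Brassard--H\o yer--Mosca--Tapp amplitude estimation to relative precision $\delta$ on an amplitude of size $\Theta(1/\kappa)$ needs $\Ocal(\kappa/\delta)$ controlled invocations of $V$. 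The product is $\widetilde{\Ocal}\bigl(\tfrac{\kappa^{2}}{\delta}\,\alpha(a+T_U)\bigr)$, whereas the proposition claims $\Ocal\bigl(\tfrac{\kappa}{\delta}(\alpha(a+T_U)\log^{2}(\kappa/\delta)+T_v)\log^{3}\kappa\log\tfrac{\log\kappa}{\xi}\bigr)$. Your closing paragraph quietly substitutes a per-query cost of $\alpha(a+T_U)\log^{2}(\kappa/\delta)+T_v$, which contradicts your earlier (correct) per-query cost of $\widetilde{\Ocal}(\kappa\alpha(a+T_U))$; the $\log^{3}\kappa$ and $\log(\log\kappa/\xi)$ factors do not arise from multiplying a fixed per-query cost by $\Ocal(\kappa/\delta)$ queries.

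The missing idea is variable-time amplitude amplification/estimation. The paper itself gives no proof --- it imports the statement verbatim as Corollary 32 of Chakraborty, Gily\'en and Jeffery --- and in that reference the single factor of $\kappa$ is achieved precisely because the inversion is implemented in Ambainis' variable-time framework (gapped phase estimation splits $\ket{v}$ into eigenvalue bands, and branches with larger $|\lambda|$ are stopped early), so that the cost of inversion and the cost of boosting/estimating the $\Theta(1/\kappa)$ amplitude are not simply multiplied. The $\log^{3}\kappa$ and $\log\frac{\log\kappa}{\xi}$ overheads are artifacts of that variable-time machinery, and the stringent hypothesis on $\xi$ is what keeps the block-encoding error compatible with it. Without replacing your standard QLSA-plus-AE composition by variable-time amplitude estimation (or some equivalent device), the argument proves only a $\widetilde{\Ocal}(\kappa^{2}/\delta)$ bound, not the stated one.
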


\begin{proposition}
  \label{prop:blockencrule}
  Let $$A = \begin{pmatrix} M_{11} & \dots & M_{1c} \\ \vdots & \ddots
    & \vdots \\ M_{r1} & \dots & M_{rc} \end{pmatrix} \in \R{n \times
    n},$$ where for simplicity $r$, $c$ are powers of 2, and each $M_{ij}$ is a matrix, of appropriate dimension,
  that is stored in a QRAM data structure. Suppose further the row
  norms and column norms of each matrix $M_{ij}$ are known. Then we can
  construct a $(rc\|A\|_F, \Ocal(\log n), \xi)$-block-encoding of $A$ in
  time $\Ocal(\textup{poly}(rc, \log n, \log \frac{1}{\xi}))$.
\end{proposition}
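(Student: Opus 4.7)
The plan is to combine QRAM-based block-encodings of the individual blocks $M_{ij}$ using the Linear Combination of Unitaries (LCU) framework from Proposition \ref{prop:lincombblock}, with non-uniform weights chosen to absorb the varying Frobenius norms of the blocks into the state-preparation pair. Expressing $A = \sum_{i,j} N_{ij}$, where $N_{ij}$ is the $n \times n$ matrix that contains $M_{ij}$ in block position $(i,j)$ and zeros elsewhere, the LCU will take weights $y_{ij} = \|M_{ij}\|_F$ and summands $A^{(ij)} = N_{ij}/\|M_{ij}\|_F$ (each of operator norm at most $1$), and the state-preparation pair will be built with $\beta = rc\,\|A\|_F$ using the known block norms.

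To build the individual ingredients, for each $(i,j) \in [r]\times[c]$ with $M_{ij} \neq 0$ I would apply Proposition \ref{prop:qramblockenc}(ii) to the QRAM data for $M_{ij}$, obtaining a $(\|M_{ij}\|_F, \Ocal(\log n), \xi')$-block-encoding $U_{ij}$ of $M_{ij}$ in time $\Ocal(\textup{poly}(\log n, \log(1/\xi')))$. I would then lift each $U_{ij}$ to a block-encoding $\tilde U_{ij}$ of $N_{ij}$ by adjoining $\log r + \log c = \Ocal(\log n)$ ancillary qubits to hold the block coordinates, together with basis-permutation gates and a controlled application of $U_{ij}$ that routes its content to the $(i,j)$-th block; this yields a $(\|M_{ij}\|_F, \Ocal(\log n), \xi')$-block-encoding of $N_{ij}$ at essentially no additional cost, and reinterpreting it gives a $(1, \Ocal(\log n), \xi'/\|M_{ij}\|_F)$-block-encoding of $N_{ij}/\|M_{ij}\|_F$. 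This equalizes the common normalization at $\alpha = 1$ across all blocks.

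It remains to check the LCU assembly. By Cauchy--Schwarz, $\|y\|_1 = \sum_{i,j}\|M_{ij}\|_F \leq \sqrt{rc}\,\|A\|_F \leq rc\,\|A\|_F$, so $\beta = rc\,\|A\|_F$ is a valid upper bound, and the associated state-preparation pair can be constructed in time $\Ocal(\textup{poly}(rc, \log n, \log(1/\xi'')))$ via controlled-rotation preparation from the classical block norms. Proposition \ref{prop:lincombblock} then yields an $(rc\,\|A\|_F, \Ocal(\log n), \xi)$-block-encoding of $A$, provided $\xi'$ and $\xi''$ are chosen inverse-polynomially small in $\xi$ and $\min_{i,j}\|M_{ij}\|_F$, which is only a polylogarithmic overhead. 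The main obstacle in this plan is handling the block-dependent normalizations: a naive attempt to unify the $U_{ij}$ at $\alpha = \|A\|_F$ would inflate the error by a term proportional to $\|A\|_F - \|M_{ij}\|_F$, but pushing these block-dependent norms into the LCU weights $y_{ij}$ sidesteps the issue entirely. A secondary technical point is to verify that the embedding in step two preserves the block-encoding error, which follows from routine controlled-unitary composition.
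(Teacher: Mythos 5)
Your proposal is correct and follows essentially the same route as the paper: decompose $A$ into the embedded blocks, block-encode each one from QRAM via Proposition~\ref{prop:qramblockenc}, and assemble them with the linear-combination construction of Proposition~\ref{prop:lincombblock}. The only difference is bookkeeping: the paper normalizes every embedded block by the single factor $\|A\|_F$ and uses uniform weights with a Hadamard state-preparation pair ($\beta = rc$), while you push the block norms $\|M_{ij}\|_F$ into non-uniform LCU weights with $\beta = rc\|A\|_F$; both give the claimed $(rc\|A\|_F, \Ocal(\log n), \xi)$-block-encoding, with the paper's choice avoiding the (harmless, logarithmic) dependence on $\min_{i,j}\|M_{ij}\|_F$ in your error calibration.
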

\begin{proof}
Let $A^{(1)}, \dots, A^{(m)}$ denote the $m = rc$ matrices in $\R{n \times n}$ that each store one of the $M_{ij}$ matrices in the same position that it appears in $A$, but with all other entries being 0. Then, $A$ can be written as a linear combination of the $m$ many matrices $A^{(k)}$ as $A = \sum_{k=1}^{m} y_k A^{(k)}$,
where $y_k = 1$ for all $k$. We normalize all matrices using the Frobenius norm of $A$, to ensure that all spectral norms are $\le 1$.

Notice that if a matrix $M_{ij}$ is stored in QRAM, we can construct the block-encoding of the corresponding $A^{(k)}$ using Proposition~\ref{prop:qramblockenc}, with straightforward modifications of the underlying algorithm, because the elements and norms of $A^{(k)}$ can be efficiently computed from those of $M_{ij}$. Thus, a $(\|A\|_F, \log n + 2, \xi')$-block-encoding of $A^{(k)}$ can be efficiently constructed, for any $\xi' > 0$. 

Let $P_L, P_R$ be a $(m, \log m, 0)$-state-preparation pair for $y$, which can be constructed by simply taking $P_L = P_R = H^{\otimes \log m}$ where $H$ is the Hadamard gate. We then use Proposition \ref{prop:lincombblock}, where $W$ can be obtained by adding a control qubit to the circuits used to construct the block-encoding of each $A^{(k)}$, and the parameter $\xi_2$ is chosen to be $\xi/(m\|A\|_F)$. This yields a $(m \| A \|_F , \log n + \log m,  \xi)$-block-encoding of $A$ with a single use of $W, P_R$ and $P_L^{\dag}$; since $P_R$ and $P_L^{\dag}$ only require $\Ocal(\log m)$ gates each, and $W$ takes $\Ocal(\textup{poly}(m, \log n, \log \frac{1}{\xi}))$ time, the proof is complete.
\end{proof}

When symmetrizing the Newton linear system, some components of the coefficient matrix can be conveniently defined using Kronecker, and symmetric Kronecker products of the input matrices. We conclude this section by formalizing how one can block-encode $G \otimes_s K$ up to error $\xi$ in time $\widetilde{\Ocal}_{\frac{n}{\xi}}(1)$, provided that the matrices $G$ and $K$ are stored in a QRAM data structure.


 
To do so in an efficient manner, we construct a block-encoding using the sparse-access input model; in particular we use this input model to construct block-encodings of the matrix $V$ and its transpose $V^{\top}$ which are used to define the symmetric tensor product of two matrices as given in Definition \ref{d:SymKronProd}. In this setting, it is assumed that the matrix $\Mcal \in \C{M \times N}$ we wish to block-encode has $s_r$-sparse rows and $s_c$-sparse columns. Further, we can query the elements of $\Mcal$ using an oracle:
$$ O_{\Mcal} : \ket{i} \ket{j} \ket{0}^{\otimes p} \mapsto \ket{i} \ket{j} \ket{m_{ij}},\quad \forall i \in [M], j \in [N].$$
Likewise, it is assumed that we have access to oracles $O_r$ and $O_c$ that are capable of querying the indices of non-zero elements of each row and column. This input model is the most suitable to construct $V$, since we can easily implement an efficient algorithmic description of its sparsity pattern and element values. The following result from \cite{gilyen2019quantum} establishes how one can efficiently implement block-encodings of sparse-access matrices.  

\begin{lemma}[Lemma 48 in \cite{gilyen2019quantum}] \label{lem:sparseBE}
Let $\Mcal \in \C{2^w \times 2^w}$ be a matrix that is $s_r$-row-sparse and $s_c$-column-sparse, and each element of $\Mcal$ has value at most 1. Suppose that we have access to the following sparse-access oracles acting on two $(w+1)$ qubit registers: 
\begin{align*}
     O_{r} : \ket{i} \ket{k}  &\mapsto \ket{i} \ket{r_{ij}}\quad \forall i \in [2^w] -1, k \in [s_r],~\text{and} \\
     O_{c} : \ket{l} \ket{j}  &\mapsto \ket{c_{lj}} \ket{j}\quad \forall l \in [s_c], j \in [2^w] - 1,~\text{where}
\end{align*}
$r_{ij}$ is the index for the $j$-th non-zero entry of the $i$-th row of $\Mcal$, or if there are less than $i$ non-zero entries, then it is $j + 2^w$, and similarly $c_{ij}$ is the index for the $i$-th non-zero entry of the $j$-th column of $\Mcal$, or if there are less than $j$ non-zero entries, then it is $i + 2^w$. Additionally, assume that we have access to an oracle $O_{\Mcal}$ that returns the entries of $\Mcal$ in a binary description: 
$$ O_{\Mcal} : \ket{i} \ket{j} \ket{0}^{\otimes p} \mapsto \ket{i} \ket{j} \ket{m_{ij}},\quad \forall i,j \in [2^w]-1,$$
where $m_{ij}$ is a $p$-bit binary description of the $ij$-matrix element of $\Mcal$. Then, we can implement a $(\sqrt{s_r s_c}, w+3, \xi)$-block-encoding of $\Mcal$ with a single use of $O_r$, $O_c$ and two uses of $O_{\Mcal}$, and additionally using \\$\Ocal~\left(w + \log^{2.5} \left( \frac{s_r s_c}{\xi} \right) \right)$ one and two qubit gates while using $\Ocal \left(p, \log^{2.5} \left( \frac{s_r s_c}{\xi} \right) \right)$ ancilla qubits. 
\end{lemma}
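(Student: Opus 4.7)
The plan is to follow the sparse-access block-encoding recipe going back to Berry--Childs--Kothari: build the unitary as a product of a row-preparation isometry made from $O_r$, a column-preparation isometry made from $O_c$, and a ``value injector'' using $O_{\Mcal}$ together with a controlled rotation that imprints each $m_{ij}$ as an amplitude. The three pieces are composed so that the amplitude $\langle i| \langle 0| U |j\rangle|0\rangle$ evaluates directly to $m_{ij}/\sqrt{s_r s_c}$, which is exactly the claimed block-encoding with normalization $\sqrt{s_r s_c}$.

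First I would use $O_r$, preceded by a Hadamard layer on an auxiliary $k$-register of $\log s_r$ qubits, to implement $T_r : |i\rangle|0\rangle \mapsto \frac{1}{\sqrt{s_r}}\sum_{k \in [s_r]} |i\rangle|r_{ik}\rangle$, a uniform superposition over the non-zero column indices of row $i$. The padding convention in the definition of $O_r$ (indices shifted by $2^w$ when a row has fewer than $s_r$ non-zeros) keeps the extra basis states orthogonal to physical states, so they can be projected out in the final block. An analogous construction using $O_c$ gives the column-side isometry $T_c$. Next, I would use $O_{\Mcal}$ once to load $|m_{ij}\rangle$ into a $p$-qubit ancilla, apply a controlled rotation
\[
|m_{ij}\rangle|0\rangle \mapsto |m_{ij}\rangle \bigl(m_{ij}|0\rangle + \sqrt{1 - |m_{ij}|^2}\,|1\rangle\bigr),
\]
and invoke $O_{\Mcal}$ a second time to uncompute the ancilla. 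This matches the stated two-query budget for $O_{\Mcal}$ and uses $p$ ancillas for the binary description plus one flag qubit.

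Finally I would combine these pieces in the form $U = (T_r^{\dagger} \otimes I)(\mathrm{SWAP} \otimes W)(T_c \otimes I)$, modulo register bookkeeping, and verify by direct amplitude computation that $\langle i|\langle 0|U|j\rangle|0\rangle = m_{ij}/\sqrt{s_r s_c}$ in the ideal (exact-rotation) case. The $w+3$ block-encoding qubits are accounted for by the two sparse-index registers of size $\log s_r, \log s_c \leq w+1$ and the single rotation flag. The main obstacle is the error analysis: the controlled rotation by angle $\arcsin(m_{ij})$ cannot be realized exactly from $p$-bit data and must be synthesized to accuracy $\xi$. The hard part will be showing that the synthesis error appears coherently in the off-diagonal blocks of $U$ rather than contaminating the encoded block, so that the perturbation bound $\|\sqrt{s_r s_c}\,\widetilde{\Mcal} - \Mcal\| \le \xi$ in Definition~1 is preserved and does not degrade under later amplitude amplification. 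Once that is settled, standard rotation-synthesis estimates give the claimed $\Ocal(w + \log^{2.5}(s_r s_c/\xi))$ gate count and $\Ocal(p + \log^{2.5}(s_r s_c/\xi))$ ancilla count, completing the resource tally.
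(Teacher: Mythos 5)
Note that the paper does not prove this lemma at all: it is imported verbatim as Lemma~48 of \cite{gilyen2019quantum}, so there is no in-paper argument to compare against. Your sketch is essentially a faithful reconstruction of the original proof in that reference --- uniform state-preparation isometries built from $O_r$ and $O_c$, a value injector using two queries to $O_{\Mcal}$ with a controlled rotation imprinting $m_{ij}$, the product giving overlap $m_{ij}/\sqrt{s_r s_c}$, and rotation synthesis accounting for the $\Ocal\left(\log^{2.5}(s_r s_c/\xi)\right)$ gate cost --- so the approach is the correct and standard one for this statement.
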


The above result allows us to establish how to block-encode the Symmetric Kronecker product of two matrices.
\begin{proposition}
  \label{prop:blockencSymKronecker}(Block-encoding symmetric Kronecker products)
  Let $G, K \in \R{n \times n}$, be stored in a QRAM data structure. Then we can
  construct a $(\|G \otimes K\|^2_F, \Ocal(\log n), \xi)$-block-encoding of $G \otimes_s K$ in
  time $$\Ocal\left(\textup{poly}\left(\log n, \log \frac{1}{\xi}\right)\right).$$
\end{proposition}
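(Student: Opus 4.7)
My plan is to decompose $G \otimes_s K = \tfrac{1}{2} V (G \otimes K + K \otimes G) V^{\top}$ exactly as in Definition~\ref{d:SymKronProd}, block-encode each of the three factors $V$, $G \otimes K + K \otimes G$, and $V^{\top}$ separately, and assemble them using the block-encoding calculus established in the preceding propositions.

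First, I would handle $V$ (and, by a symmetric argument, $V^{\top}$). Reading off Definition~\ref{d:SymKronProd}, $V$ has at most two nonzero entries per row and at most one per column, with every entry in $\{0, 1, 1/\sqrt{2}\}$; crucially, both the locations of the nonzeros and their values can be computed from the row/column indices by simple arithmetic. Hence the sparse-access oracles $O_V, O_r, O_c$ demanded by Lemma~\ref{lem:sparseBE} are implementable by standard reversible circuitry in time $\Ocal(\textup{polylog}(n))$, with \emph{no} QRAM accesses required. Invoking Lemma~\ref{lem:sparseBE} with $s_r = 2, s_c = 1$ then produces a $(\sqrt{2}, \Ocal(\log n), \xi_V)$-block-encoding of $V$ (and likewise of $V^{\top}$) after padding the dimensions to the nearest power of two.

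Next, Proposition~\ref{prop:qramblockenc}(ii) yields $(\|G\|_F, \Ocal(\log n), \xi_G)$- and $(\|K\|_F, \Ocal(\log n), \xi_K)$-block-encodings $U_G, U_K$ of $G$ and $K$ directly from the QRAM data structure. Because the tensor product of two unitaries is unitary, $U_G \otimes U_K$, followed by a harmless permutation of registers that gathers the two ancilla blocks, is a $(\|G\|_F \|K\|_F, \Ocal(\log n), \|G\|_F \xi_K + \|K\|_F \xi_G)$-block-encoding of $G \otimes K$, and identically for $K \otimes G$. A single application of Proposition~\ref{prop:lincombblock} with the Hadamard state-preparation pair $P_L = P_R = H$ (which realizes $y = (1,1)$ with $\beta = 2$) combines these into a block-encoding of $G \otimes K + K \otimes G$ with normalization $2 \|G\|_F \|K\|_F$ and $\Ocal(\log n)$ ancillas. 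Two successive applications of Proposition~\ref{prop:product} then block-encode the full product $V (G \otimes K + K \otimes G) V^{\top}$; the normalization is $\sqrt{2} \cdot 2\|G\|_F\|K\|_F \cdot \sqrt{2} = 4\|G\|_F \|K\|_F$, and absorbing the overall scalar $\tfrac{1}{2}$ from Definition~\ref{d:SymKronProd} into the normalization leaves $2\|G\|_F \|K\|_F$, which is bounded by $\|G\|_F^2 \|K\|_F^2 = \|G \otimes K\|_F^2$ whenever the right-hand side is at least $2$; the degenerate small-norm case can be handled by trivial rescaling of the normalization constant.

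Finally, to meet the target error $\xi$, I would set each intermediate tolerance $\xi_V, \xi_G, \xi_K$ proportional to $\xi / (\|G\|_F \|K\|_F)$ and verify, using the additive-error formulas of Propositions~\ref{prop:lincombblock} and~\ref{prop:product}, that the cumulative error in the final block-encoding is at most $\xi$; every intermediate step runs in time $\Ocal(\textup{poly}(\log n, \log(1/\xi)))$, delivering the claimed runtime. I expect the main technical obstacle to be the careful bookkeeping of error propagation across the two product steps and the linear-combination step (and the verification that the ancilla count never exceeds $\Ocal(\log n)$), rather than any conceptual difficulty -- the sparse-access oracle for $V$ is routine precisely because its structure is fully explicit from the definition.
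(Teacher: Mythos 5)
Your proposal is correct and follows essentially the same route as the paper's own proof: block-encode $V$ and $V^{\top}$ via the sparse-access construction of Lemma~\ref{lem:sparseBE}, build $G \otimes K$ and $K \otimes G$ by tensoring the QRAM block-encodings of $G$ and $K$, combine them with Proposition~\ref{prop:lincombblock}, and apply Proposition~\ref{prop:product} twice, choosing intermediate tolerances so the total error is $\xi$. Your bookkeeping of the normalization factor (ending near $2\|G\|_F\|K\|_F$ and relaxing up to $\|G \otimes K\|_F^2$) is, if anything, more explicit than the paper's, so no substantive gap remains.
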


\begin{proof} Recall that 
$$G \otimes_s K = \frac{1}{2} V \left( G \otimes K + K \otimes G \right) V^{\top},$$
where $V$ is the $\frac{n(n+1)}{2} \times n^2$ matrix defined in Definition~\ref{d:SymKronProd}. Observe that by definition, $V$ is 1-column-sparse and 2-row-sparse so $s_c s_r = 1 \cdot 2 = 2$.   

Using the description of $V$, we can construct the sparse-access oracles $O_r$ and $O_c$ as defined in Lemma \ref{lem:sparseBE} (which act on two $(\log n^2+1)$ qubit registers). Additionally, from the definition of $V$ we can construct an oracle $O_{V}$ that returns the entries of $V$ in a binary description: 
$$ O_{V} : \ket{i} \ket{j} \ket{0}^{\otimes p} \mapsto \ket{i} \ket{j} \ket{v_{ij}},\quad \forall i,j \in [2^{\log n^2}]-1,$$
where $v_{ij}$ is a $p$-bit binary description of the $ij$-matrix element of $V$. By Lemma \ref{lem:sparseBE} we can efficiently implement a $(\sqrt{2}, \Ocal(\log n), \xi_V)$-block-encoding of $V$ with a single use of $O_r$, $O_c$ and two uses of $O_{V}$.
We can construct a $(\sqrt{2}, \Ocal(\log n), \xi_{V})$-block-encoding of $V^{\top}$ in the same manner.

Given that $G$ and $K$ are stored in a QRAM data structure, we can construct a $(\|G \otimes K\|_F, \Ocal(\log n), \xi_{G \otimes K})$-block-encoding of $G \otimes K$ in time $\Ocal(\textup{poly}(\log n, \log \frac{1}{\xi_{G \otimes K}}))$, and similarly for $K \otimes G$, for some error parameters to be specified later. Indeed, note that a $(\|G \otimes K\|_F, \Ocal(\log n), \xi_{G \otimes K})$-block-encoding of $G \otimes K$ is trivial to construct: by definition of block-encoding, it suffices to take the tensor product of block-encodings for $G$ and $K$, constructed using Proposition~\ref{prop:qramblockenc}, keeping the ancilla bits separate and ensuring that the sum of errors is less than $\xi_{G \otimes K}$.
 
From here we use a linear combination of our block-encodings of $G \otimes K$ and $K \otimes G$ to block-encode $G \otimes K + K \otimes G$. Applying Proposition \ref{prop:lincombblock} with 
\begin{align*}
    \xi_{G \otimes K} = \frac{\xi_1}{2 \| G \otimes K \|_F} \qquad \text{and}  \qquad
    \xi_{K \otimes G} = \frac{\xi_1}{2 \| G \otimes K \|_F},
\end{align*}
yields a $(\|G \otimes K\|_F^2, \Ocal( \log n), \xi_1)$-block-encoding of $G \otimes K + K \otimes G$, and we can implement this block-encoding in time $\Ocal(\textup{poly}(\log n, \log \frac{1}{\xi_1}))$, i.e., $\widetilde{\Ocal}_{\frac{n}{\xi_1}} (1)$. 
  
Using our block-encodings of $V$ and $G \otimes K + K \otimes G$, we apply Proposition \ref{prop:product} with
\begin{align*}
    \xi_{V} = \frac{\xi_2}{2 \sqrt{2}} \qquad \text{and} \qquad \xi_{1} = \frac{\xi_2}{4  \| G \otimes K \|_F^2},
\end{align*}
to implement a $( \sqrt{2} \|G \otimes K\|_F^2, \Ocal(\log n), \xi_2)$-block-encoding of $V \left( G \otimes K + K \otimes G \right)$ in time $$\Ocal\left(\textup{poly}\left(\log n, \log \frac{1}{ \xi_2} \right)\right).$$ A final application of Proposition \ref{prop:product} with 
\begin{align*}
    \xi_{V} = \frac{\xi}{2 \sqrt{2}} \qquad \text{and} \qquad
    \xi_{2} = \frac{\xi}{2 \sqrt{2} \| G \otimes K \|_F^2},
\end{align*}
yields a $( 2 \|G \otimes K\|_F^2, \Ocal( \log n ), \xi)$ block-encoding of $$\frac{1}{2} V \left( G \otimes K + K \otimes G \right) V^{\top} = G \otimes_s K,$$ in time  $\Ocal(\textup{poly}(\log n, \log \frac{1}{\xi}))$, which completes the proof.  
\end{proof}

\subsection{Extracting the solution of a linear system: tomography algorithm}
\label{s:tomography}
In each iteration of our QIPMs, we solve a linear system of
equations (known as the Newton linear system) to obtain search directions to progress to the next iterate. Due to the fact that we solve this system using a QLSA, the solution we obtain is encoded in a quantum state, and a classical description is necessary to construct the linear
system that arises in the subsequent iteration. Namely, in order to update the current solutions to the primal and dual SDOPs, $X$ and $(y,S)$, we require a procedure to map the state proportional to the solution of the Newton linear system to a classical solution pair. There are many such procedures (e.g.,
\cite{keyl2006quantum, o2016efficient}); in this paper we can take advantage of the fact that we have access to a unitary preparing the quantum state, as well as its inverse. 

To obtain an $\ell_2$-norm estimate of the quantum state, we rely on the following result.
\begin{theorem}[\cite{van2021tomography}]
\label{thm:euclidean_norm_tomo}
Let $\ket{\psi} = \sum_{j=0}^{d-1} y_j \ket{j}$ be a quantum state, $y \in \mathbb{C}^d$ the vector with elements $y_j$, and $U\ket{0} = \ket{\psi}$. There is a quantum algorithm that, with probability at least $1-\delta$, outputs $\tilde{y} \in \mathbb{R}^{d}$ such that $\| \Re(y) - \tilde{y}\|_2 \le \varepsilon$ using $\Ocal(\frac{d}{\varepsilon} \log \frac{d}{\delta})$ applications of $U$, $\widetilde{\Ocal}_{d, \frac{1}{\delta}, \frac{1}{\varepsilon}}(\frac{d}{\varepsilon})$ indexed-SWAP gates acting on $d$ bits, and $\widetilde{\Ocal}_{d, \frac{1}{\delta}, \frac{1}{\epsilon}}(\frac{d}{\varepsilon})$ additional gates. If we have access to a classical-write, quantum-read QRAM of size $\widetilde{\Ocal}_{d, \frac{1}{\delta}, \frac{1}{\varepsilon}}(2^{d/\varepsilon})$, we do not need the indexed-SWAP gates.
\end{theorem}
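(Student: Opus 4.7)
The plan is to reduce the $\ell_2$-tomography task to the simultaneous estimation of the $d$ real amplitudes $\Re(y_j) = \Re\bra{j}U\ket{0}$, each of which is accessible via the standard Hadamard-test circuit built from one controlled application of $U$ and one ancilla qubit. First I would set up the per-coordinate primitive: the Hadamard test outputs a bit equal to $1$ with probability $\tfrac{1}{2}(1+\Re(y_j))$, so amplitude estimation recovers $\Re(y_j)$ to additive error $\eta$ with success probability $1-\delta'$ using $\Ocal(\eta^{-1}\log\tfrac{1}{\delta'})$ calls to $U$ and $U^\dagger$. A direct coordinate-by-coordinate application with $\eta=\varepsilon/\sqrt{d}$ and $\delta'=\delta/d$ already yields a valid $\ell_2$ estimate, but its cost $\Ocal(d^{3/2}\varepsilon^{-1}\log(d/\delta))$ is a factor $\sqrt{d}$ worse than the target.

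To eliminate this factor, I would replace independent coordinate-by-coordinate estimation by a coherent joint version. A $(\log d)$-qubit pointer register selects the coordinate $j$, and the key hardware element is a mechanism that routes the Hadamard-test output of coordinate $j$ into its own coordinate-indexed storage register: this is precisely what indexed-SWAP gates provide natively, and alternatively a classical-write/quantum-read QRAM of the stated size $\widetilde{\Ocal}_{d,\tfrac{1}{\delta},\tfrac{1}{\varepsilon}}(2^{d/\varepsilon})$ can be used to tabulate the result of the routing for all $\Ocal(d/\varepsilon)$ rounds in advance, removing the need for explicit indexed-SWAPs. Running amplitude estimation once on this joint circuit lets each call to $U$ contribute coherently to every coordinate estimate, producing the claimed $\Ocal(d/\varepsilon)$ query bound by a Heisenberg-type scaling argument.

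The main obstacle is the $\ell_2$ concentration analysis of the jointly estimated vector $\tilde{y}$. Per-coordinate amplitude-estimation error bounds are standard, but because the $d$ estimates share the same estimation register they are correlated, and the naive union bound only produces an $\ell_\infty$ guarantee. I would control the resulting cross-covariances by a median-of-means amplification on $\Ocal(\log(d/\delta))$ disjoint copies of the estimator, upgrading each coordinate-wise success probability to $1-\delta/d$ and then combining via a union bound so that an $\ell_\infty$ bound of $\varepsilon/\sqrt{d}$ per coordinate translates into the desired $\ell_2$ bound $\|\Re(y)-\tilde{y}\|_2 \le \varepsilon$. A concluding bookkeeping step checks that the number of indexed-SWAPs and of auxiliary gates used in the routing and post-processing is $\widetilde{\Ocal}_{d,\tfrac{1}{\delta},\tfrac{1}{\varepsilon}}(d/\varepsilon)$, matching the claimed resource counts in both the indexed-SWAP formulation and the QRAM-only formulation.
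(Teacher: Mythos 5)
There is a genuine gap. The entire content of the theorem is the passage from the naive per-coordinate bound to $\Ocal(\frac{d}{\varepsilon}\log\frac{d}{\delta})$ applications of $U$: as you note yourself, Hadamard tests plus amplitude estimation coordinate-by-coordinate with accuracy $\varepsilon/\sqrt{d}$ costs $\Ocal(d^{3/2}\varepsilon^{-1}\log(d/\delta))$, and everything after that in your proposal is asserted rather than proved. ``Running amplitude estimation once on this joint circuit'' so that ``each call to $U$ contributes coherently to every coordinate estimate'' is not something standard amplitude estimation provides -- amplitude estimation returns a single number per run, and there is no generic Heisenberg-type argument by which one run of it yields $d$ independent estimates each to precision $\varepsilon/\sqrt{d}$. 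Obtaining an $\ell_\infty$ estimate of all $d$ amplitudes to accuracy $\varepsilon/\sqrt{d}$ with only $\Ocal(d/\varepsilon)$ state-preparation queries is precisely the nontrivial result of \cite{van2021tomography} (their Prop.~22), whose algorithm and analysis rest on substantially different machinery than a joint Hadamard test; your median-of-means and union-bound step only boosts success probabilities and converts $\ell_\infty$ to $\ell_2$, which is the easy part, but does nothing to justify the claimed query count. Note that the paper itself does not reprove this: its proof simply invokes Prop.~22 of \cite{van2021tomography} with the $\ell_\infty$ error set to $\varepsilon/\sqrt{d}$, so a blind reproof like yours has to carry the full weight of that cited theorem, which your sketch does not.

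A secondary inaccuracy concerns the role of the indexed-SWAP gates and the QRAM. You describe the QRAM as tabulating ``the result of the routing for all $\Ocal(d/\varepsilon)$ rounds in advance,'' but in the cited construction (and in the paper's remark) the indexed-SWAP gate implements the map from a binary description of a vector $x\in[-1,1]^d$ to its amplitude encoding $\ket{x}$, and the $\widetilde{\Ocal}_{d,\frac{1}{\delta},\frac{1}{\varepsilon}}(2^{d/\varepsilon})$ QRAM size comes from storing this map for all candidate grid vectors at per-coordinate precision $\Ocal(1/\varepsilon)$ -- it is independent of the number of rounds. As written, your accounting of the indexed-SWAP/QRAM resources does not match what those resources are actually used for, so the concluding ``bookkeeping step'' cannot be checked.
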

\begin{proof}
This follows from \cite[Prop.~22]{van2021tomography}, setting the $\ell_\infty$-norm error to $\varepsilon/\sqrt{d}$. For the gate complexity, \cite{van2021tomography} uses a ``QRAM-like'' indexed-SWAP gate, which is used to efficiently implement the mapping from a binary description of a vector $x$ to its amplitude encoding $\ket{x}$. It is straightforward to note that this mapping can also be stored in a classical-write, quantum-read QRAM: as the vectors $x \in [-1,1]^d$ necessary for the construction are known in advance, and each coordinate requires precision $\Ocal(1/\varepsilon)$, we can employ the usual QRAM data structure for each of these $\Ocal(2^{d/\varepsilon})$ vectors in QRAM. This operation needs to be performed only once and it only depends on the dimension and the precision. Alternatively, the operation can be performed with a $\widetilde{\Ocal}_{d, \frac{1}{\delta}, \frac{1}{\varepsilon}}(\frac{d^2}{\varepsilon})$ quantum-writable QRAM (the data structure can be computed in superposition and written onto the QRAM).
\end{proof}

It is shown in \cite{van2021tomography} that the above sample complexity is optimal, i.e., the number of applications of $U$ cannot be reduced in general. For purposes of the QIPMs presented in this paper, we treat the algorithm from \cite{van2021tomography} as a black box oracle that performs our tomography steps. For simplicity, and since we are already assuming access to a classical-write, quantum-read QRAM, we use that input model, although as noted, other models are possible and potentially more efficient.

In light of the normalization of quantum states, in quantum state tomography it is assumed that the vector to be extracted (for our purposes, the solution to a linear system) has unit norm. The tomography error bound is
relative, and hence, in order to ensure that we satisfy the absolute error bound for the unscaled vector, we need to divide $\xi_k$ by the
maximum norm of a solution vector, which we denote by $\varrho$. Setting $\varepsilon = \xi/ \varrho$ and $T_U = T_{LS}$ (i.e, the time to prepare and solve our linear system using a quantum computer), Theorem \ref{thm:euclidean_norm_tomo} asserts that we can extract a classical estimate of a solution to the linear system with Euclidean norm error $\xi$ in time
$$T_{TO} (T_{LS}, \xi) = \widetilde{\Ocal}_{d, \varrho, \frac{1}{\xi}} \left( \frac{d}{\xi/ \varrho} T_{LS}\right) = \widetilde{\Ocal}_{d, \varrho, \frac{1}{\xi}}  \left(\frac{d}{\xi} \varrho  T_{LS}  \right).$$
Note that previous work on QIPMs used the tomography subroutine in \cite{kerenidis2020quantum}, which has quadratic dependence on the inverse precision, rather than linear; this improves our final running time, but it is not the only reason why we are able to give a faster QIPM than previously known. Indeed, as we describe in Section~\ref{s:IFQIPM}, the most impactful source of improvement is our development of an inexact-feasible QIPM, that overcomes the issues with feasibility present in other approaches.

\section{Primal and Dual SDOs and the Central Path}\label{s:cip}
Recall that the symmetric matrices $A^{(1)} , \dots, A^{(m)}$ are
linearly independent by assumption, and $S \in \Scal^n$ is the slack
matrix of the dual problem, as defined in \eqref{e:SDD}. We denote the feasible sets of \eqref{e:SDO} and \eqref{e:SDD} by
\begin{align*}
    \Pcal &= \left\{X \in \Scal^n :  A^{(i)} \bullet X = b_i, ~\forall i \in  [m], X \succeq 0 \right\}, \\
    \Dcal &=  \left\{(y, S) \in \R{m} \times \Scal^n : \sum_{i=1}^m y_i A^{(i)} + S = C, S \succeq 0 \right\}.
\end{align*}
For a feasible IPM we must assume that a strictly feasible pair
$X$ and $(y, S)$ with $X \succ 0$ and $S \succ 0$ exists, i.e., the
interior point condition (IPC) is satisfied \cite{de1997initialization}. It
is known that with the self-dual embedding model this condition may be
assumed without loss of generality \cite{de1997initialization}. We define the sets of \textit{interior feasible solutions} by
\begin{align*}
    \Pcal^0 &= \left\{ X \in \Pcal : X \succ 0 \right\},\\
    \Dcal^0 &= \left\{ (y, S) \in \Dcal : S \succ 0 \right\}.
\end{align*}
With the IPC satisfied, it is guaranteed that the
primal and dual optimal sets:
\begin{align*}
    \Pcal^* &= \left\{X \in \Pcal : C \bullet X = z_{P} \right\}, \\
    \Dcal^* &=  \left\{(y, S) \in \Dcal :  b^{\top} y = z_{D} \right\}, 
\end{align*}
are nonempty and bounded, with $z_P = z_D$, i.e., the duality gap is zero. In particular, this holds for all optimal solutions
$X^*$ and $(y^*, S^*)$ satisfying
\begin{equation}
    C \bullet X^* - b^{\top} y^*  = X^* \bullet S^*  = 0,
\end{equation}
which implies $X^* S^* = S^* X^* = 0$ as $X^*$ and $S^*$ are symmetric
positive semidefinite matrices. However, for primal-dual IPMs, the
complementarity condition $XS = 0$ is perturbed to:
\begin{equation}\label{e:CP}
   XS = \sigma \nu I,
\end{equation}
where $I$ is the $n \times n$ identity matrix, $\sigma \in [0,1]$ is the centering parameter, and $\nu > 0$ is the
so-called central path parameter, which is progressively reduced to
zero in the course of the optimization algorithm.

For all $\nu > 0$, assuming the IPC and linear
independence of the matrices $A^{(i)}$ for $i \in [m]$, the central path equation
system
\begin{align}\label{e:CP2}
    A^{(i)} \bullet X &= b_i ~\forall i \in [m], ~X \succ 0 \nonumber \\
    \sum_{i \in [m]} y_i A^{(i)} + S &= C,~ S \succ 0 \\
    XS &=  \sigma \nu I, \nonumber 
\end{align}
has a unique solution \cite{nesterov1988general}. The set of solutions
for all $\nu > 0$ gives the central path for the primal and dual
SDOPs. In IPMs we aim to follow the central path as $\nu \to 0$,
i.e., as we approach optimality. In fact, we simply seek to
stay in a certain neighborhood of the central path.

A classical IPM, as outlined in Algorithm \ref{alg:IPM}, begins with a strictly feasible primal-dual pair $(X_0 , S_0) \in \Pcal^0 \times \Dcal^0$ for the primal and dual SDO problems \eqref{e:SDO} and \eqref{e:SDD}. This pair has a duality gap of $X_0 \bullet S_0 = \nu_0 n$, and a distance to the central path of $d(X_0,S_0) \leq \gamma \nu_0$ for $\gamma \in (0,1)$, where $\nu_0 = \frac{X_0 \bullet S_0}{n}$, for some appropriate distance metric $d(X,S)$. Without loss of generality, one can assume that $\nu_0 = \Ocal (1)$. As we mentioned earlier, it is always possible obtain an interior feasible solution using the self-dual embedding model, and in particular, the analytic center of our space, i.e., $X_0 = S_0 = n^{-1} I$ is always strictly feasible in this setting. 

In our work on the IF-QIPM, we consider the narrow (or, Frobenius) neighborhood 
$$ \Ncal_F (\gamma) = \left\{ (X, y, S) \in \Pcal^0 \times \Dcal^0 : \left\| X^{1/2} S X^{1/2} - \nu I \right\|_F = \left[ \sum_{i=1}^n (\lambda_i (XS) - \nu)^2 \right]^{1/2} \leq \gamma \nu  \right\}.$$
Another popular alternative is the so called \textit{negative infinity neighborhood} that is a \textit{large neighborhood}, defined as
$$ \Ncal_{\infty}^- (1-\gamma) = \{ (X,y,S) \in \Pcal^0 \times \Dcal^0 : \lambda_{\min} (XS) \geq \gamma \nu \},$$
where $\gamma \in (0,1)$. It is well known that classical feasible primal-dual path following IPMs that use the former neighborhood exhibit an iteration complexity of the order $\Ocal(\sqrt{n} \log(1/\epsilon))$ whereas long-step IPMs, using the latter neighborhood, have iteration complexity $\Ocal(n \log(1/\epsilon))$. However, in practice the long step algorithms tend to converge faster than the short step counterpart; in this paper we focus on the theoretical running time and therefore use the narrow (or Frobenius) neighborhood $\Ncal_F (\gamma)$. Thus, we define our centrality measure at a point $(X, S) \in \Scal_{+}^n \times \Scal_{+}^n$, with $\nu = \frac{X \bullet S}{n}$, to be
\begin{equation}\label{e:7}
    d(X,S) = \left\| X^{1/2} S X^{1/2} - \nu I \right\|_F.
\end{equation}

Then, in each iteration, we solve the so-called Newton linear system in order to obtain $\Delta X$ and $\Delta S$ and update the solutions using the following rule:
\begin{align*}
    X &= X + \Delta X, \\
    S &= S + \Delta S.
\end{align*}
If we were to directly linearize the complementarity condition, the Newton linear system would then be written as:
\begin{equation}\label{e:KP}
\begin{aligned}
\Delta X S + \Delta S X &= \sigma \nu I - XS, \\
\Delta X &\in \Null (A^{(1)}, A^{(2)}, \dots, A^{(m)}), \\
\Delta S &\in \Rcal (A^{(1)}, A^{(2)}, \dots, A^{(m)}),
\end{aligned}
\end{equation}
where $\Null (\cdot), \Rcal (\cdot)$ are the nullspace (or, \textit{kernel}) and rowspace of the constraint matrices, respectively (we provide explicit definitions for these subspaces in the following sections). 
However, it is known that \eqref{e:KP} does not have a symmetric
matrix solution with respect to $\Delta X$, see, e.g.,
\cite{alizadeh1998primal}. This must be addressed, because both the
primal and dual solutions must be symmetric.
 
\begin{algorithm}[H]
\SetAlgoLined
\KwIn{$\epsilon, \delta > 0$;  $\sigma = 1 - \delta/\sqrt{n}$; $\gamma \in (0,1)$ \\
Choose $(X_0, y_0, S_0) \in \Ncal_F (\gamma)$\\
Choose a nonsingular matrix $P$ from Monteiro-Zhang family of search directions , and compute $P_0$ \\
Set $\nu_0 \gets \frac{X_0 \bullet S_0}{n}$, $k \gets 0$ \\}
    \While{$\nu > \epsilon$}{ 
    \begin{enumerate}
        \item  $ \nu_k \leftarrow \frac{X_k \bullet S_k}{n}$
        \item Compute the scaling matrix $P_k$
        \item Compute the solution $(\Delta X_k, \Delta y_k, \Delta S_k)$ to the Newton linear system \eqref{e:svecTodd2} defined by choice of $P_k$
        \begin{align*}
           X_{k+1} &\leftarrow X_k +  \Delta X_k,~ S_{k+1} \leftarrow S_k + \Delta S_k ~\text{and}~y_{k+1} \leftarrow y_k +  \Delta y_k \\
           k &\leftarrow k+1
        \end{align*}
    \end{enumerate}
  } 
 \caption{Classical interior point method}
\label{alg:IPM}
\end{algorithm}

\subsection{Symmetrizing the Newton System}
As has been extensively studied in the literature of IPMs (see, e.g.,
\cite{alizadeh1998primal, nesterov1997self, nesterov1998primal}), the
Newton steps $\Delta S$ and $\Delta X$ have to be symmetric
matrices. More specifically, even though $\Delta S$ is guaranteed to
be symmetric by the definition of a dual feasible solution, there is
no symmetric $\Delta X$ that solves system \eqref{e:KP}. Thus, we need
a different approach to guarantee symmetry.

In an effort to generalize the scaling methods required for
primal-dual symmetry, following Zhang \cite{zhang1998extending}, we
define the linear transformation $H_P (M)$ that symmetrizes a
matrix $M$ using a given invertible matrix $P$:
\begin{equation}
    H_P(M) = \frac{1}{2 } \left[ P M P^{-1} + \itran{P} \tran{M} \tran{P} \right].
\end{equation}
Observe that $H_P(M) = \sigma \nu I \iff M = \sigma \nu I$; we can then write the
central path equations in symmetric form as: $H_P (XS) = \sigma \nu I$. In
this light, we can also symmetrize the linearized complementarity condition of the Newton system $X \Delta S +
S\Delta X = \sigma \nu I - XS$ as
\begin{equation}\label{e:tranP}
    H_P (X \Delta S + S\Delta X) = \sigma \nu I - H_P (XS).
\end{equation}
Equation \eqref{e:tranP} can be explicitly expressed as: 
$$
    P (X \Delta S +  \Delta X S) P^{-1} 
    + \itran{P} (\Delta S X + S\Delta  X) \tran{P} 
    = 2 \sigma \nu I - P XS P^{-1} - \itran{P} SX \tran{P}.
$$

Defining $\Acal^{\top} = [\textbf{\textup{vec}}(A^{(1)}) \textbf{\textup{vec}}(A^{(2)}) \cdots \textbf{\textup{vec}}(A^{(m)})]$, we can write the symmetrized Newton linear system as
\begin{align*}
         \begin{pmatrix}
         0 & \tran{\cal A} & I_{n^2} \\
         {\cal A} & 0 & 0 \\
         \Ecal & 0 & \Fcal
         \end{pmatrix}
         \begin{pmatrix}
         \vec (\Delta X) \\
          \Delta y \\
         \vec (\Delta S)
         \end{pmatrix}
         &= \begin{pmatrix}
        0 \\
        0 \\
         \vec (R^c)
         \end{pmatrix},
\end{align*}
where $I_{n^2}$ is the identity matrix of order $n^2$ and 
\begin{align*}
         \Ecal &= P \otimes S P^{-1} + P^{-1} S \otimes P,\\
         \Fcal &= P X \otimes P^{-1} + P^{-1} \otimes X P,\\
         R^c&= \sigma \nu I - H_P (XS).
\end{align*} 

Now let us define $\Acal_s^{\top} = [\textbf{\textup{svec}}(A^{(1)}) \textbf{\textup{svec}}(A^{(2)}) \cdots \textbf{\textup{svec}}(A^{(m)})].$ Following Todd et al.~\cite{todd1998nesterov}, we can also write the Newton linear system using the $\svec$ notation, in which case we have:
\begin{equation}\label{e:svecTodd2}
    \begin{pmatrix}
         0 & {\cal A}_s & \Ical \\
        \tran {\cal A}_s & 0 & 0 \\
         \Ecal_s & 0 & \Fcal_s
         \end{pmatrix}
         \begin{pmatrix}
         \svec (\Delta X) \\
          \Delta y \\
         \svec (\Delta S)
         \end{pmatrix}
    = 
            \begin{pmatrix}
    0 \\ 0\\ R^c_s
    \end{pmatrix},
\end{equation}
where $\Ical$ is the identity matrix of order $\frac{n(n+1)}{2}$ and 
\begin{align*}
    \Ecal_s &= P \otimes_s P^{- \top} S,\\
    \Fcal_s &= P X \otimes_s P^{-\top}, \\
    R^c_s   &= \svec(\sigma \nu I - H_P (XS)) = \svec(R^c).
\end{align*}
One can note that matrices $\Ecal_s$ and $\Fcal_s$ are nonsingular whenever $X$ and $S$ are positive definite matrices \cite{todd1998nesterov}. We have:
\begin{subequations}
\begin{align}
    \Ecal_s = P \otimes_s P^{-\top} S &= (I \otimes_s P^{-\top} S P^{-1}) (P \otimes_s P), \label{e:EkInv} \\
    \Fcal_s = PX \otimes_s P^{-\top} &= (PXP^{\top} \otimes_s I) (P^{-\top} \otimes_s P^{-\top}), \label{e:FkInv}
\end{align}
\end{subequations}
where the equalities follow from the definition of the symmetric Kronecker product detailed earlier. 

We then need to choose $P$ to obtain specific instances of this system, and we refer to the class of search directions parameterized by $P$ as the Monteiro-Zhang family of search directions \cite{monteiro1998unified}. We present results involving three possible choice of $P$: $P = I$, the AHO direction \cite{alizadeh1998primal}; $P = S^{1/2}$, the HKM direction \cite{helmberg1996interior, kojima1997interior, monteiro1998polynomial}; and $P = W^{-1/2}$, where $W$ is the Nesterov-Todd \cite{nesterov1998primal} scaling matrix, defined as:
\begin{equation}\label{e:NTdir}
 W = S^{-1/2} (S^{1/2} X S^{1/2})^{1/2} S^{-1/2} 
= X^{1/2} (X^{1/2} S X^{1/2})^{-1/2} X^{1/2}. 
\end{equation}

To define the HKM Newton linear system for the $k$-th iteration, we use symmetry to drop all of the transpose terms involving $P$ and set:
\begin{subequations}\label{e:HKM}
\begin{align}
    \Ecal_{s_k} &=S_k^{1/2} \otimes_s S_k^{-1/2} S_k =S_k^{1/2} \otimes_s S_k^{1/2}, \label{e:HKM_E}\\
    \Fcal_{s_k} &= S_k^{1/2} X_k \otimes_s S_k^{-1/2}, \label{e:HKM_F}\\
    R^c_{s_k}   &= \svec(2 \sigma \nu_k I -  S_k^{1/2} X_k S_k^{1/2}).
\end{align}
\end{subequations}
In defining the AHO Newton Linear system, taking $P = I$ simplifies the above quantities to to
\begin{align*}
    \Ecal_{s_k} &= I \otimes_s S_k,\\
    \Fcal_{s_k} &= X_k \otimes_s I, \\
    R^c_{s_k}   &= \svec( 2 \sigma \nu I - X_k S_k - S_k X_k).
\end{align*}
Although this guarantees that our solutions to the Newton system are symmetric, the AHO
direction only guarantees a solution if we are within a small (local)
neighborhood of the central path: the $\infty$-norm neighborhood with opening less than 1/3. Finally, for
the Nesterov-Todd Newton system, we set $P=W^{-1/2}$ and we can write:
\begin{subequations}\label{e:NT2}
          \begin{align}
    \Ecal_{s_k} &= W_k^{-1/2} \otimes_s W_k^{1/2} S_k, \\
    \Fcal_{s_k} &= W_k^{-1/2} X_k \otimes_s W_k^{1/2},\\
         R_{s_k}^c &= \svec \left(2 \sigma \nu I -  W_k^{-1/2} X_k S_k W_k^{1/2}- W_k^{1/2} S_k X_k W_k^{-1/2} \right).
         \end{align}
\end{subequations}

Symmetry is not the only issue overlooked by previous works on QIPMs; we still need to account for the consequences of inexact tomography in the context of QIPMs. 

\section{Newton Linear Systems for QIPMs}\label{s:error}
We now turn our attention to solving the Newton linear system using quantum linear system solvers. We present two different Newton linear systems that account for the inexactness in the search directions resulting from the use of tomography. The first system, based on the infeasible central path and its neighborhood, allows us to quantize an Inexact-Infeasible IPM. However, in subsequent sections we will show that an algorithm based on this direct methodology leads to an algorithm with a poor overall running time. Thus, we subsequently introduce a new methodology that allows us to design an inexact, but feasible QIPM framework which yields a speedup in $n$ over the classical variants. 

\subsection{An Infeasible Central Path and its Neighborhood}
Our first effort to deal with the tomography errors is to quantize an Inexact-Infeasible IPM framework. That is, we work with a solution to a Newton linear system in which the right hand side includes residuals to capture \textit{(i)} infeasibility of the iterate sequence
and \textit{(ii)} inexactness of the search direction. In this section we describe the infeasible central path, and the resulting Newton linear system. 
 
To illustrate the issue at hand precisely, previous works on QIPMs \cite{casares2020quantum, kerenidis2020quantum, kerenidis2019quantum} (if we apply symmetrization to their Newton systems) assume we can exactly solve the Newton linear system \eqref{e:svecTodd2}. Yet, solving the system \eqref{e:svecTodd2} with a quantum computer introduces errors due to our use of QLSA and tomography. As a result, a classical estimate $(\Delta X, \Delta y, \Delta S)$ of the quantum state $\ket{\Delta X \circ \Delta y \circ \Delta S}$ will indeed only satisfy 
\begin{equation*} 
         \begin{pmatrix}
         0 & \tran{\cal A} & I_{n^2} \\
         {\cal A} & 0 & 0 \\
         \Ecal & 0 & \Fcal
         \end{pmatrix}
         \begin{pmatrix}
         \vec (\Delta X) \\
          \Delta y \\
         \vec (\Delta S)
         \end{pmatrix}
    = 
            \begin{pmatrix}
    \xi_d \\ \xi_p\\ R^c + \xi_c
    \end{pmatrix},
\end{equation*}
where $(\xi_d, \xi_p, \xi_c)$ are the errors to which we satisfy primal feasibility, dual feasibility and the complementarity condition, respectively. 

 The feasible IPM framework, upon which the QIPMs in \cite{kerenidis2020quantum, kerenidis2019quantum} are built, therefore cannot be applied (even if these works had properly accounted for symmetry). The analysis of feasible primal-dual IPMs relies on the fact that $\Delta X \bullet \Delta S = 0$, this property is used throughout the classical analyses. However, the presence of the errors $\xi_d$ and $\xi_p$ means the primal and dual search directions $\Delta X$ and $\Delta S$ are not guaranteed to be in orthogonal subspaces. While this alone implies that the QIPMs presently in the literature are not convergent, one also has to account for the inexactness to which we solve the complementarity condition. 

Let $(X_0, y_0, S_0$) be an initial point, not necessarily
feasible, where for $\rho >0$ we have
\begin{equation} \label{e5}
X_0 = S_0 = \rho I.
\end{equation}
Following \cite{zhou2004polynomiality}, it is assumed that $\rho$ is specified such that for a primal-dual optimal solution $(X^*, y^*, S^*)$, and for some $0 < \gamma_1 < 1$, we have
\begin{subequations}
\begin{align}
(1 - \gamma_1) X_0 &\succeq X^*,~~(1 - \gamma_1) S_0 \succeq S^* \label{e6}\\
\rho &\geq \frac{1}{n} \left( \trace{(X^*)} + \trace{(S^*)} \right). \label{e7}
\end{align}
\end{subequations}

Letting $b^{\top} = [b_1, b_2, \dots, b_m]$, consider the quantities: 
$$
\mu_0 = \frac{X_0 \bullet S_0}{n} = \rho^2, \quad 
R_0^p = \Acal \vec (X_0) - b,  \quad 
R_0^d = \tran{\Acal} y_0 + \vec (S_0) - \vec (C). 
$$
Then, for $\tau \in (0, 1]$, the \emph{infeasible} central path equations are given by:
  \begin{equation}
    \label{e:infcp}
    \begin{pmatrix}
      \tran{\cal A} y + \vec(S) - \vec(C) \\
      {\cal A} \vec(X) - b \\
      XS
    \end{pmatrix}
    = \begin{pmatrix}
      \tau R_0^d \\
      \tau R_0^p \\
      \tau \mu_0 I
    \end{pmatrix}, ~X, S \succ 0.
  \end{equation}
These equations define a set of infeasible centers, that approaches
the solution of \eqref{e:SDO}-\eqref{e:SDD} as $\tau \to 0$. Let
$\gamma_2 \in (0,1)$; we can define a neighborhood of the
infeasible central path, which we denote $\Ncal_I$, as:
\begin{equation}
  \label{e:neighborhood}
\Ncal_I = \left\{ \begin{split}&(\tau, \theta, X, y, S) \in \R{}_+ \times \R{}_+  \times \Scal_+^n \times \R{m} \times \Scal_+^n : \tau \leq \theta \\
&\Acal^{\top} y +  \vec(S) - \vec(C) = \tau( R_0^d + \zeta^d), \|\zeta^d \| \leq \gamma_1 \rho \\
&\Acal \vec(X) - b  = \tau( R_0^p + \zeta^p), \| \Acal^+ \zeta^p \| \leq \gamma_1 \rho \\
&\| H_p (XS) - \theta \mu_0 I \| \leq \gamma_2 \theta \mu_0
\end{split} \right\}
\end{equation}
where $\Acal^+ = \tran{\Acal} (\Acal \tran{\Acal})^{-1}$ is the right pseudoinverse of $\Acal$.

The following results from Zhou and Toh \cite{zhou2004polynomiality}
are crucial for the complexity analysis of the II-QIPM later in the paper; they
correspond to when the HKM search direction is chosen, which is the
one for which we give a full analysis of the method.\footnote{For the other
search directions, we will limit ourselves to discussing the running
time, without giving all the details.}

\begin{lemma}[Lemma 1 in \cite{zhou2004polynomiality}] \label{lemma1}
For any $r^p$ and $r^d$ satisfying $\|\Acal^+ r^p\| \leq \gamma_1 \rho$ and $\|  r^d\| \leq \gamma_1 \rho$, there exists $(\tilde{X}, \tilde{y}, \tilde{S})$ that satisfies the following conditions:
\begin{subequations}
\begin{align}
\Acal^{\top} \tilde{y} + \vec( \tilde{S}) - \vec(C) &= R_0^d + r^d \\
 \tran{\Acal} \vec( \tilde{X}) - b   &= R_0^p + r^p\\
 (1-\gamma_1) \rho I &\preceq \tilde{X} \preceq (1+ \gamma_1) \rho I \\
  (1-\gamma_1) \rho I &\preceq \tilde{S} \preceq (1+ \gamma_1) \rho I.
\end{align}
\end{subequations}
\end{lemma}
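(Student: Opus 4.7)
The plan is to construct $(\tilde{X}, \tilde{y}, \tilde{S})$ explicitly as perturbations of the initial point $(X_0, y_0, S_0) = (\rho I, y_0, \rho I)$, rather than appealing to a fixed-point or continuity argument. The key observation is that the eigenvalue bounds $(1-\gamma_1)\rho I \preceq \tilde{X} \preceq (1+\gamma_1)\rho I$ are equivalent to $\tilde{X} - \rho I$ being symmetric with operator norm at most $\gamma_1 \rho$; since the operator norm is dominated by the Frobenius norm, which in turn equals the Euclidean norm of the vectorization, the task reduces to finding symmetric perturbations of appropriate Euclidean size whose image under $\Acal$ (for the primal) or whose vectorization (for the dual) matches the prescribed residual.

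For the primal, I would set $\tilde{X} := X_0 + \delta X$, where $\delta X \in \Scal^n$ is the symmetric matrix with $\vec(\delta X) = \Acal^+ r^p$. Because the $A^{(i)}$ are linearly independent, $\Acal$ has full row rank and $\Acal \Acal^+ = I_m$; hence
\begin{equation*}
\Acal \vec(\tilde{X}) - b = \Acal \vec(X_0) - b + \Acal \Acal^+ r^p = R_0^p + r^p.
\end{equation*}
The eigenvalue bounds then follow from $\|\delta X\| \leq \|\delta X\|_F = \|\Acal^+ r^p\|_2 \leq \gamma_1 \rho$, which gives $(1-\gamma_1)\rho I \preceq \tilde{X} \preceq (1+\gamma_1)\rho I$.

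For the dual, I would take $\tilde{y} := y_0$ and $\tilde{S} := S_0 + \delta S$, where $\delta S \in \Scal^n$ is the symmetric matrix with $\vec(\delta S) = r^d$. Then
\begin{equation*}
\Acal^\top \tilde{y} + \vec(\tilde{S}) - \vec(C) = \Acal^\top y_0 + \vec(S_0) - \vec(C) + r^d = R_0^d + r^d,
\end{equation*}
and $\|\delta S\| \leq \|r^d\|_2 \leq \gamma_1 \rho$ yields the required eigenvalue bounds on $\tilde{S}$.

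The main subtlety is justifying that $\delta X$ and $\delta S$ are well-defined symmetric matrices, as the semidefinite-order bounds implicitly require this. For $\delta X$, symmetry holds because the rows of $\Acal$ are $\vec(A^{(i)})^\top$ with each $A^{(i)} \in \Scal^n$; consequently $\Acal^+ r^p = \Acal^\top(\Acal \Acal^\top)^{-1} r^p$ is a linear combination of the $\vec(A^{(i)})$ and therefore lies in $\vec(\Scal^n)$. For $\delta S$, symmetry is inherited from the fact that $r^d$ is by construction a residual for the dual equation, whose terms $\Acal^\top y$, $\vec(S)$, and $\vec(C)$ all lie in $\vec(\Scal^n)$. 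Beyond these symmetry checks, the remainder of the proof is a short direct calculation; notably, the assumptions \eqref{e6} and \eqref{e7} on $\rho$ relative to $(X^*, y^*, S^*)$ are not invoked here, but instead support later steps of the convergence analysis.
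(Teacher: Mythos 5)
Your construction is correct and is essentially the argument behind the cited result: perturb the initial point $X_0 = S_0 = \rho I$ by the symmetric matrices whose vectorizations are $\Acal^+ r^p$ and $r^d$, use $\Acal \Acal^+ = I_m$ to verify the residual identities, and pass from the Euclidean/Frobenius bound to the operator-norm bound to get the eigenvalue sandwich. Your symmetry justifications (the rowspace of $\Acal$ lies in $\vec(\Scal^n)$, and $r^d$ is only meaningful as the vectorization of a symmetric matrix, which it is in every application of the lemma) and your observation that conditions \eqref{e6}--\eqref{e7} are not needed for this lemma are both accurate.
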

\begin{lemma}[Lemma 2 in \cite{zhou2004polynomiality}] \label{lemma2}
 If the conditions \eqref{e5}, \eqref{e6} and \eqref{e7} hold, then for any $(\tau, \theta, X, y, S) \in \Ncal_I$ with $\theta \in (0, 1]$, we have 
$$ \tau \trace{(X)} = \Ocal(\theta \rho n)~\textup{and}~\tau \trace{(S)} = \Ocal(\theta \rho n).$$
\end{lemma}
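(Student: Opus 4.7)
The plan is to mimic the standard infeasible-IPM orthogonality argument. First, Lemma~\ref{lemma1} is used to construct an auxiliary point $(\tilde X, \tilde y, \tilde S)$ that realizes exactly the residual pair $(\zeta^{p}, \zeta^{d})$ appearing in the definition of $\Ncal_{I}$ (but without the overall $\tau$ scaling) and whose spectrum lies in $[(1-\gamma_{1})\rho,(1+\gamma_{1})\rho]$. I then form the convex combination $\hat X := \tau \tilde X + (1-\tau) X^{*}$ and $\hat S := \tau \tilde S + (1-\tau) S^{*}$ with a primal--dual optimal pair $(X^{*}, y^{*}, S^{*})$, and subtract the three sets of feasibility equations with weights $1, -\tau, -(1-\tau)$. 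The residual terms cancel, giving $\Acal\,\vec(X - \hat X) = 0$ and $\Acal^{\top}(y - \tau \tilde y - (1-\tau) y^{*}) + \vec(S - \hat S) = 0$, which by the usual null-space/range-space pairing yields
\begin{equation*}
(X - \hat X) \bullet (S - \hat S) = 0.
\end{equation*}

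Expanding this identity gives $X \bullet \hat S + \hat X \bullet S = X \bullet S + \hat X \bullet \hat S$. Since $\hat S \succeq \tau(1-\gamma_{1})\rho I$ and $X \succeq 0$, one has $X \bullet \hat S \ge \tau(1-\gamma_{1})\rho\,\trace(X)$, and symmetrically for $\hat X \bullet S$; hence the left-hand side is bounded below by $\tau(1-\gamma_{1})\rho [\trace(X) + \trace(S)]$. For the right-hand side, $\trace(H_{P}(XS)) = X \bullet S$ combined with the centrality inequality $\|H_{P}(XS) - \theta \mu_{0} I\|_{F} \le \gamma_{2}\theta\mu_{0}$ yields $X \bullet S = \Ocal(n \theta \rho^{2})$. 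Expanding $\hat X \bullet \hat S$, the term $(1-\tau)^{2} X^{*} \bullet S^{*}$ vanishes by strong duality, and every remaining summand carries a factor of $\tau$; the spectral bounds $\tilde X, \tilde S \preceq (1+\gamma_{1})\rho I$ from Lemma~\ref{lemma1} together with the trace bound $\trace(X^{*}) + \trace(S^{*}) \le \rho n$ from~\eqref{e7} then give $\hat X \bullet \hat S = \Ocal(\tau \rho^{2} n)$, which the hypothesis $\tau \le \theta$ upgrades to $\Ocal(\theta \rho^{2} n)$. Dividing by $(1-\gamma_{1})\rho$ yields the claim.

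The main point of care is the very first step: the orthogonality identity closes only because the auxiliary point reproduces the residuals of $(X, y, S)$ \emph{exactly} up to the $\tau$ scaling, so the choice $r^{p} = \zeta^{p}$, $r^{d} = \zeta^{d}$ in the application of Lemma~\ref{lemma1}, together with the precise weights $(1,-\tau,-(1-\tau))$ in the convex combination, must be checked carefully. The other non-obvious ingredient is that the constraint $\tau \le \theta$ built into the definition of $\Ncal_{I}$ is \emph{precisely} what is needed to convert the $\tau$-linear bound on $\hat X \bullet \hat S$ into the desired $\theta$-scaling; without it the estimate would degrade to $\Ocal(\rho n)$ rather than $\Ocal(\theta \rho n)$.
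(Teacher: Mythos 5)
Your proof is correct and follows essentially the same route as the source the paper cites for this lemma (Zhou--Toh), namely constructing the auxiliary point from Lemma~\ref{lemma1}, cancelling the residuals via the combination $X-\tau\tilde X-(1-\tau)X^*$, $S-\tau\tilde S-(1-\tau)S^*$ to get the orthogonality identity (the same identity as \eqref{e60} in Appendix~\ref{app:zhou_toh}), and then bounding $X\bullet S$ by the centrality condition and $\hat X\bullet\hat S$ by the spectral bounds of Lemma~\ref{lemma1}, \eqref{e7}, and $\tau\le\theta$. The paper itself does not reproduce this proof but defers to the citation, and your argument matches that standard one, including the correct handling of the $\tau$-versus-$\theta$ scaling.
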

Implications of the above lemma are summarized below.
\begin{remark}[\cite{zhou2004polynomiality}] \label{r:Rem2}
Let $\Bcal = \{ (X, y, S) : (\tau, \theta, X, y, S) \in \Ncal_I, \theta
\in (0,1], \tau = \theta\}$.  Then, by Lemma \ref{lemma2}, the set
  $\Bcal$ is bounded, because for $(X, y, S) \in \Bcal$ we have
$\| X \| \leq \trace{(X)} \leq \Ocal(\rho n)~\textup{and}~\| S \| \leq \trace{(S)} \leq \Ocal(\rho n)$.
\end{remark}
From the above, one can derive Frobenius norm bounds on the solutions $X$ and $S$ by noting the relationship $\| G \|_F \leq \sqrt{n} \| G \|$ for $G \in \R{n \times n}$, i.e., 
\begin{equation}\label{e:II_norm_bound}
    \| X \|_F, \|S \|_F = \Ocal(\rho n^{1.5}).
\end{equation}
We do however note that the bounds on the operator norms and trace of the solutions $$\| X \| \leq \trace{(X)} \leq \Ocal(\rho n)~\textup{and}~\| S \| \leq \trace{(S)} \leq \Ocal(\rho n)$$ given in the above remarks are \emph{only relevant} to II-IPMs. They are properties associated with the infeasible central path, and our discussion for the IF-QIPM will not use these bounds. 

\begin{remark}[\cite{zhou2004polynomiality}]
  Suppose we generate a sequence $\{(\tau_k, \theta_k, X_k, y_k,
  S_k)\}$ in the neighborhood $\Ncal_I$ with the properties that
  $\theta_k \geq \tau_k \; \forall k$ and $1 = \theta_0 \geq \theta_k
  \geq \theta_{k+1} \geq 0$. If $\theta_k \to 0$ as $k \to \infty$,
  then any limit point of the sequence $\{(X_k, y_k, S_k)\}$ is a
  solution of
    \begin{equation*} 
    \begin{pmatrix}
      \tran{\cal A} y + \vec(S) - \vec(C) \\
      {\cal A} \vec(X) - b \\
      XS
    \end{pmatrix}
    = 0, \quad X, S \succeq 0.
  \end{equation*}
  If $\tau_k = \theta_k$, then the sequence
  $\{(X_k, S_k)\}$ is also bounded.
\end{remark}

\subsubsection{Inexact Search Directions}
Let $\eta_1 \in (0, 1]$, $\eta_2 \in (0,1)$ with $\eta_1 \geq \eta_2$
and $\tau_0, \theta_0 = 1.$ Given a point $(\tau_k, \theta_k, X_k,
y_k, S_k) \in \Ncal_I$, we try to generate a new point
$$(\tau_{k+1}, \theta_{k+1}, X_{k+1}, y_{k+1}, S_{k+1}) \in \Ncal_I$$
by computing an update along the search direction $(\Delta X_k,
\Delta y_k, \Delta S_k)$ for the solution $X, y, S$. This naturally leads to an iterative algorithm. We choose the search direction to be a solution of the following system of equations:
\begin{align}\label{e:inNewt}
         \begin{pmatrix}
         0 & \tran{\cal A} & I_{n^2} \\
         {\cal A} & 0 & 0 \\
         \Ecal_k & 0 & \Fcal_k
         \end{pmatrix}
         \begin{pmatrix}
         \vec (\Delta X_k) \\
          \Delta y_k \\
         \vec (\Delta S_k)
         \end{pmatrix}
         &= \begin{pmatrix}
         - \eta_1 R_k^d \\
         - \eta_1 R_k^p \\
         \vec (R^c_k)
         \end{pmatrix},
\end{align}
where 
$$ R_{k}^c = \vec \left((1 - \eta_2) \tau_k \mu_0 I -  H_P (X_k S_k) \right).$$

As previously discussed, in the quantum setting both the QLSA and
tomography necessarily introduce errors in the solution of
\eqref{e:inNewt}. We therefore consider an inexact search direction
$(\Delta X_k, \Delta y_k, \Delta S_k)$, defined next, following
\cite{zhou2004polynomiality}. Let $\{ \vartheta_i \}_{i =0}^{\infty}$
be a sequence of reals in $(0, 1]$, such that
$$ \bar{\vartheta} = \sum_{i = 0}^{\infty} \vartheta_i < \infty.$$
We say that $(\Delta X_k, \Delta y_k, \Delta S_k)$ is an \emph{inexact search direction} at the $k$-th iteration if it solves the linear system:
\begin{align}\label{e:inexactNewtII}
         \begin{pmatrix}
         0 & \tran{\cal A} & I_{n^2} \\
         {\cal A} & 0 & 0 \\
         \Ecal_k & 0 & \Fcal_k
         \end{pmatrix}
         \begin{pmatrix}
         \vec (\Delta X_k) \\
          \Delta y_k \\
         \vec (\Delta S_k)
         \end{pmatrix}
         &= \begin{pmatrix}
         - \eta_1 (R_k^d + r^d_k)\\
         - \eta_1 (R_k^p + r^p_k) \\
         \vec (R_k^c) + \vec(r_k^c)
         \end{pmatrix}, 
\end{align}
where $r_k^c \in \Scal^n$ and the \emph{residual terms} satisfy:
\begin{equation}\label{e:resbound}
\| r_k^p \| \leq \hat{\vartheta} \gamma_1 \rho \tau_k \vartheta_k,~~\| r_k^d \| \leq \gamma_1 \rho \tau_k \vartheta_k,~~\|r_k^c\| \leq \frac{1}{2}(1 - \eta_2) \gamma_2 \theta_k \mu_0,
\end{equation}
with $0< \hat{\vartheta} \leq \sigma_{\min}(\Acal)$. 
Upon obtaining a classical estimate of the solution to \eqref{e:inexactNewtII} using QLSA and tomography, it is possible that the complementarity error $r^c$ is not symmetric (as we have no safeguards to ensure that the tomography errors are symmetric). Therefore, in order to ensure that we always have $r^c \in \Scal^n$, we take one further precaution and utilize the $\svec (\cdot)$ characterization of the Newton system \eqref{e:inexactNewtII}. Let
\begin{align*} 
R_{s_k}^p &= \Acal_s \svec (X_k) - b,  \\
R_{s_k}^d &= \tran{\Acal}_s y_k + \svec (S_k) - \svec (C). 
\end{align*}
Then, in each iteration of the II-QIPM we use the HKM direction with $P = S^{1/2}$, and solve 
\begin{align}\label{e:inexactNewt}
         \begin{pmatrix}
         0 & \tran{\cal A}_s & \Ical \\
         {\cal A}_s & 0 & 0 \\
         \Ecal_{s_k} & 0 & \Fcal_{s_k}
         \end{pmatrix}
         \begin{pmatrix}
         \svec (\Delta X_k) \\
          \Delta y_k \\
         \svec (\Delta S_k)
         \end{pmatrix}
         &= \begin{pmatrix}
         - \eta_1 (R_{s_k}^d + r^d_k)\\
         - \eta_1 (R_{s_k}^p + r^p_k) \\
          R_{s_k}^c + \svec(r_k^c)
         \end{pmatrix}, 
\end{align}
where $\Ecal_{s_k}$ and $\Fcal_{s_k}$ are defined according to \eqref{e:HKM_E}-\eqref{e:HKM_F}, and 
$$ R_{s_k}^c = \svec \left(2 (1 - \eta_2) \tau_k \mu_0 I -  S_k^{1/2} X_k S_k^{1/2} \right).$$
 
We can now give a full description of the classical Inexact-Infeasible IPM, as given in
\cite{zhou2004polynomiality}, in Algorithm \ref{alg:InfeaseIPM}. This algorithm converges in polynomial time provided that \eqref{e:resbound} and \eqref{e:inexactNewt} hold; this will be discussed in Section \ref{s:IFQIPM}.

\begin{algorithm}[H]
\SetAlgoLined 
\KwIn{$\tau_0, \theta_0 = 1$; $\epsilon > 0$; $\eta_1 \in (0,1]$; $\gamma_1, \gamma_2, \eta_2 \in (0,1)$ \\
\noindent Pick a sequence $\{ \vartheta_i \}_{i =0}^{\infty}$ in $(0, 1]$, such that $ \bar{\vartheta} = \sum_{i = 0}^{\infty} \vartheta_i < \infty$ \\
\noindent Choose a nonsingular matrix $P$ from Monteiro-Zhang family of search directions \\
\noindent Find an initial point $(\tau_0, \theta_0, X_0, y_0, S_0) \in \Ncal_I$\\
\noindent Store $(\tau_0, \theta_0, X_0, y_0, S_0)$ in QRAM\\}
    \While{$\tau > \epsilon$}{
      \begin{enumerate}
      \item Solve the system \eqref{e:inexactNewt} to obtain inexact search direction $(\Delta X_k, \Delta y_k, \Delta S_k)$ \label{QIPM-solve}
      \item Set \footnotesize $$\alpha_k = \max\left\{ \alpha ~:~\alpha \in \left[0 , \min \left(1, \frac{1}{\eta_1(1+ \bar{\vartheta})} \right) \right],~(\tau_k(\alpha), \theta_k(\alpha), X_k(\alpha), y_k(\alpha), S_k(\alpha)) \in \Ncal_I \right\}$$
        \normalsize
        \begin{align*}
          X_{k+1} &\leftarrow X_k + \alpha_k \Delta X_k,~ S_{k+1} \leftarrow S_k + \alpha_k \Delta S_k ~\text{and}~y_{k+1} \leftarrow y_k +  \alpha_k\Delta y_k \\
           \tau_{k+1} &\leftarrow (1 - \alpha \eta_1) \tau_k,~\text{and}~\theta_{k+1} \leftarrow (1 - \alpha \eta_1) \theta_k \\
           k &\leftarrow k+1
        \end{align*}
    \end{enumerate}
  } 
 \caption{Inexact-Infeasible Interior Point Method.}
 \label{alg:InfeaseIPM}
\end{algorithm}

\subsection{The Nullspace Representation of the Newton Linear System}
Analogous to Gondzio's \cite{gondzio2013convergence} IF-IPM for linearly-constrained quadratic optimization (LCQO), in our novel IF-QIPM approach the first two equations of \eqref{e:svecTodd2} will hold, while we assume that the third equation is only satisfied up to some residual error $R^r \in \Scal^{n}$, which in our setting is due to the use of quantum state tomography (subsequent discussion will show that we can guarantee the residual is symmetric):
\begin{equation}\label{e:svecTodd}
    \begin{pmatrix}
    0 & \tran{\cal A}_s & \Ical \\
         {\cal A}_s & 0 & 0 \\
         \Ecal_{s} & 0 & \Fcal_{s}
    \end{pmatrix}
        \begin{pmatrix}
     \svec(\Delta X) \\ \Delta y \\ \svec(\Delta S)
    \end{pmatrix}
    = 
            \begin{pmatrix}
    0 \\ 0\\ R^c_s + \svec(R^r)
    \end{pmatrix},
\end{equation}
where $R^c_s = \svec(\sigma \nu I - H_P (XS)) = \svec(R^c)$. As is standard in the inexact Newton method literature  \cite{dembo1982inexact, gondzio2013convergence, kelley1995iterative}, we assume that the residual term $R^r$, for some $\beta \in (0,1)$, satisfies 
\begin{equation}\label{e:RcAssumption1}
    \| R^r \|_F \leq \beta \| R^c  \|_F. \tag{AR1}
\end{equation}
In this way, we ensure that the residual norms are being driven towards zero as we approach optimality, due to the fact that $\| R^c  \|_F \to 0$ as $\nu \to 0$. Additionally, choosing the level of error in this manner is not very restrictive, as the algorithm terminates upon reaching a solution satisfying $\nu \leq \epsilon$.

The ensuing result from \cite{todd1998nesterov} establishes the uniqueness of the solution to the system \eqref{e:svecTodd}. 
\begin{theorem}[Theorem 3.1 in \cite{todd1998nesterov}] \label{t:uniqueSol}
Suppose $X$ and $S$ are positive definite. Then the system of equations \eqref{e:svecTodd} has a unique solution $(\Delta X, \Delta y, \Delta S) \in   \Scal^n \times \R{m} \times \Scal^n$ if $\Ecal^{-1}_s \Fcal_s$ is positive definite (not necessarily symmetric). In particular, this holds when $X$, $S$ and $H_P (XS)$ are positive definite.
\end{theorem}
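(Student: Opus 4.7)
The plan is to prove uniqueness by showing the coefficient matrix in \eqref{e:svecTodd} is nonsingular, which reduces to showing that the associated homogeneous system admits only the trivial solution. So I would start with the homogeneous system
\begin{align*}
\tran{\Acal}_s \Delta y + \svec(\Delta S) &= 0, \\
\Acal_s \svec(\Delta X) &= 0, \\
\Ecal_s \svec(\Delta X) + \Fcal_s \svec(\Delta S) &= 0,
\end{align*}
and exploit the classical orthogonality trick. The second equation forces $\svec(\Delta X) \in \Null(\Acal_s)$, while the first rewrites as $\svec(\Delta S) = -\tran{\Acal}_s \Delta y \in \Rcal(\tran{\Acal}_s) = \Null(\Acal_s)^{\perp}$. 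Consequently $\svec(\Delta X)^{\top} \svec(\Delta S) = \Delta X \bullet \Delta S = 0$, a property that usually drives the classical Newton uniqueness arguments.

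Next, since $X, S \succ 0$ the matrix $\Fcal_s$ is invertible (as recorded after equation \eqref{e:FkInv}), so the third equation yields $\svec(\Delta X) = -\Ecal_s^{-1}\Fcal_s\, \svec(\Delta S)$ once we also know $\Ecal_s$ is invertible; by hypothesis $\Ecal_s^{-1}\Fcal_s$ is well-defined and positive definite in the generalized sense $v^{\top} M v > 0$ for all $v\neq 0$. Substituting into the orthogonality identity gives
\begin{equation*}
0 \;=\; \svec(\Delta X)^{\top}\svec(\Delta S) \;=\; -\svec(\Delta S)^{\top}\bigl(\Ecal_s^{-1}\Fcal_s\bigr)^{\top}\svec(\Delta S).
\end{equation*}
Since $v^{\top}M v = v^{\top}M^{\top}v$ for scalars, positive definiteness of $\Ecal_s^{-1}\Fcal_s$ passes to its transpose, forcing $\svec(\Delta S) = 0$. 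Backsubstitution yields $\svec(\Delta X) = 0$, and the first equation together with linear independence of $A^{(1)},\dots,A^{(m)}$ (equivalently, injectivity of $\tran{\Acal}_s$) gives $\Delta y = 0$. This establishes uniqueness.

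For the ``in particular'' clause, I would use the factorizations \eqref{e:EkInv}--\eqref{e:FkInv} to write
\begin{equation*}
\Ecal_s^{-1}\Fcal_s \;=\; (P\otimes_s P)^{-1}\bigl(I\otimes_s P^{-\top}SP^{-1}\bigr)^{-1}\bigl(PXP^{\top}\otimes_s I\bigr)\bigl(P^{-\top}\otimes_s P^{-\top}\bigr),
\end{equation*}
and reduce the positive definiteness question to properties of symmetric Kronecker products of symmetric positive definite matrices. The key observation is that the eigenvalues of $\Ecal_s^{-1}\Fcal_s$ can be tied, via a similarity transformation, to the spectrum of the symmetric operator associated with $H_P(XS)$; concretely, one shows that the quadratic form $u^{\top}\Ecal_s^{-1}\Fcal_s\, u$ is similar (through a $(P\otimes_s P)$-conjugation) to a form governed by $P^{-\top}SXP^{\top}$ and $PXSP^{-1}$, whose symmetric part is precisely $H_P(XS)$. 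The main obstacle is carrying out this symmetric-Kronecker algebra cleanly: standard identities such as $(G\otimes_s K)\svec(M) = \tfrac{1}{2}\svec(KMG^{\top} + GMK^{\top})$ let one rewrite $u^{\top}\Ecal_s^{-1}\Fcal_s\, u$ as $M\bullet H_P(XS)M$-type inner products where $M = \smat(u)$ transforms appropriately, and positive definiteness of $H_P(XS)$ then gives strict positivity. This completes both parts of the theorem.
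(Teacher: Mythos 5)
Your uniqueness argument is correct, and it is essentially the paper's argument in a different packaging. The paper reorders the blocks of the homogeneous system and eliminates to the Schur complement equation $(\Acal_s \Ecal_s^{-1}\Fcal_s \Acal_s^{\top})\Delta y = 0$, concluding $\Delta y = 0$ from positive definiteness of $\Ecal_s^{-1}\Fcal_s$ and full row rank of $\Acal_s$, then backsubstituting. You instead apply the same positive definiteness directly to $\svec(\Delta S) = -\Acal_s^{\top}\Delta y$ via the orthogonality $\Delta X \bullet \Delta S = 0$; since the quadratic form certifying definiteness of the Schur complement is exactly your orthogonality identity with $v = \Acal_s^{\top}\Delta y$, the two routes coincide, and your order of conclusions ($\Delta S = 0$, then $\Delta X = 0$, then $\Delta y = 0$ from injectivity of $\Acal_s^{\top}$) is fine. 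One small point: invertibility of $\Fcal_s$ is never needed, only that of $\Ecal_s$, which follows from \eqref{e:EkInv}.

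For the ``in particular'' clause, the paper proves nothing and simply defers to \cite{todd1998nesterov}, so your sketch goes beyond the paper's proof, but as written it is not yet a proof and its framing is off in one respect: the transformation by $P \otimes_s P$ is a \emph{congruence}, not a similarity, and for positive definiteness of a nonsymmetric matrix (in the sense $v^{\top}Mv > 0$) it is congruence, not eigenvalue/similarity information, that is the right invariant -- a nonsymmetric matrix can have positive spectrum without a positive quadratic form. The clean completion is: $v^{\top}\Ecal_s^{-1}\Fcal_s v > 0$ for all $v \neq 0$ iff $w^{\top}\Fcal_s\Ecal_s^{\top}w > 0$ for all $w \neq 0$ (substitute $v = \Ecal_s^{\top}w$), and the product rule for symmetric Kronecker products gives
\begin{equation*}
\Fcal_s\Ecal_s^{\top} \;=\; \tfrac{1}{2}\left( PXP^{\top}\otimes_s P^{-\top}SP^{-1} \;+\; PXSP^{-1}\otimes_s I \right),
\end{equation*}
where the first term is positive definite because $X, S \succ 0$, and the symmetric part of the second term is $H_P(XS)\otimes_s I$, which is positive definite when $H_P(XS) \succ 0$. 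Your sketch gestures at these identities but does not isolate this decomposition, and the phrase ``positive definiteness of $H_P(XS)$ then gives strict positivity'' quietly omits the role of the $X, S \succ 0$ term.
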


\begin{proof}
The proof is the same as the one given in \cite{todd1998nesterov}, but we repeat it here for completeness. In what follows, positive definiteness does not imply symmetry. We seek to show that the system 
\begin{equation}\label{e:solTrivial}
    \begin{pmatrix}
    0 & \Acal_s & 0 \\
     \Acal^{\top}_s & 0 & \Ical \\
    0 & \Ecal_s & \Fcal_s 
    \end{pmatrix}
        \begin{pmatrix}
    \Delta y \\ \svec(\Delta X) \\ \svec(\Delta S)
    \end{pmatrix}
    = 0
\end{equation}
is satisfied only by the all-zero solution. Note that for convenience, we have reordered the block-rows of the Newton system in order to utilize block-Gaussian elimination for the purposes of this proof. From \eqref{e:EkInv} it follows that $\Ecal_s$ is invertible. Hence, applying block-Gaussian elimination to \eqref{e:solTrivial} yields the Schur complement equation:
\begin{equation*} 
    (\Acal_s \Ecal_s^{-1} \Fcal_s \Acal_{s}^{\top} ) \Delta y = 0.
\end{equation*}
By assumption, $\Ecal_s^{-1} \Fcal_s \succ 0$ and $\Acal_s$ has full row rank, implying that the matrix $\Acal_s \Ecal_s^{-1} \Fcal_s \Acal_{s}^{\top}$ is positive definite, therefore $\Delta y = 0$. Thus the second equation in \eqref{e:solTrivial} gives $\Delta S = - \smat (\Acal_s^{\top} \Delta y) = 0$ and from the third equation it follows that $\Delta X = - \Ecal_s^{-1} \Fcal_s \Delta S = 0$. This completes the proof. 
\end{proof}

If the Newton linear system as given by \eqref{e:svecTodd} is solved using QLSA and tomography, the first two block equations in this system will not be satisfied exactly, either. To design our IF-QIPM, we re-write the optimality conditions \eqref{e:svecTodd2} as:
\begin{align}\label{e37b}
   \Delta X &\in \Null (\Acal_s) \nonumber\\
    \Delta S &\in \Rcal (\Acal_s) \\
    (P \otimes_s P^{- \top} S) \svec(\Delta X) + (P X \otimes_s P^{-\top}) \svec (\Delta S)   &= \svec(\sigma \nu I - H_P (XS)) \nonumber
\end{align}
where $\Null (\Acal_s)$ denotes the nullspace of $\Acal_s$ and $\Rcal (\Acal_s)$ denotes the rowspace of $\Acal_s$. Crucially, $\Null (\Acal_s)$ and $\Rcal (\Acal_s)$ are orthogonal subspaces to one another; we always have $\Delta X \bullet \Delta S = 0$ for any $\Delta X \in \Null (\Acal_s)$ and $\Delta S \in \Rcal (\Acal_s)$.

For $\Rcal (\Acal_s)$, we choose  $$\{\textbf{\textup{svec}}(A^{(1)}), \textbf{\textup{svec}}(A^{(2)}), \dots, \textbf{\textup{svec}}(A^{(m)})\}$$ as a basis. If the primal-dual pair \eqref{e:SDO}-\eqref{e:SDD} is given in canonical form (inequalities instead of equalities), we can trivially obtain bases for the nullspace and rowspace directly from the coefficient matrix. Alternatively, one can recast the SDOP as a slightly larger one using the Self-Dual embedding model, for which the nullspace basis matrix can be constructed trivially from the coefficient matrix (i.e., without needing factorization or Gaussian elimination). We can also calculate a basis of $\Null (\Acal_s)$ using either Gaussian elimination or the QR-Factorization. Recall that $\Acal_s \in \R{m \times \frac{n(n+1)}{2}}$, so $\Acal_s^{\top} \in \R{\frac{n(n+1)}{2} \times m}$. The QR factorization of $\Acal^{\top}_s$ is defined as:
$$ \Acal_s^{\top} = \begin{bmatrix} Q_1 & Q_2 \end{bmatrix} \begin{bmatrix} R \\ 0 \end{bmatrix}, $$
where $Q_1 \in \R{\frac{n(n+1)}{2} \times m}$ and $Q_2 \in \R{\frac{n(n+1)}{2} \times \left(\frac{n(n+1)}{2} - m \right) }$. Then, it follows that the columns of $Q_2$ form a basis for the null space of $\Acal_s.$ Hence, introducing a variable $\Delta z \in \R{ \left(\frac{n(n+1)}{2} - m \right)}$, $\Delta X$ and $\Delta S$ can be written as
\begin{subequations}
\begin{align}
    \textbf{svec}(\Delta X) &= Q_2 \Delta z, \label{e:pBasis}\\
    \textbf{svec}(\Delta S) &= - \Acal_s^{\top} \Delta y. \label{e:dBasis}
\end{align}
\end{subequations}

The following result defines the Newton linear system to be solved at each iteration of the IF-QIPM.
\begin{proposition}
 Let $P$ be a nonsingular scaling matrix from the Monteiro-Zhang family of search directions, and define 
 \begin{align*}
    \Ecal_s &= (P \otimes_s P^{- \top} S),\\
    \Fcal_s &= (P X \otimes_s P^{-\top}), \\
     R^c_s  &= \svec( \sigma \nu I - H_P (XS)).
 \end{align*}
The Newton linear system for the IF-QIPM is given by
\begin{equation}\label{e:IF-Newton}
    \begin{bmatrix} 
    \Ecal_s Q_2 & \Fcal_s (-\Acal_s^{\top}) 
    \end{bmatrix}
        \begin{bmatrix} \Delta z \\ \Delta y
    \end{bmatrix} =
    R_s^c.
\end{equation}
\end{proposition}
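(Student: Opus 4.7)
The plan is to show that the system \eqref{e:IF-Newton} arises by substituting the nullspace/rowspace parameterizations \eqref{e:pBasis}--\eqref{e:dBasis} into the optimality conditions \eqref{e37b}, and checking that no information is lost. The first step is to observe that the parameterization is a \emph{full} description of the feasible search directions: by construction, the columns of $Q_2$ form a basis of $\Null(\Acal_s)$, so every $\Delta X \in \Scal^n$ with $\svec(\Delta X) \in \Null(\Acal_s)$ can be written uniquely as $\svec(\Delta X) = Q_2 \Delta z$ for some $\Delta z \in \R{n(n+1)/2 - m}$. Similarly, since the rows of $\Acal_s$ are linearly independent (by the standing assumption that $A^{(1)},\dots,A^{(m)}$ are linearly independent), every $\Delta S \in \Scal^n$ with $\svec(\Delta S) \in \Rcal(\Acal_s)$ can be written uniquely as $\svec(\Delta S) = -\Acal_s^{\top} \Delta y$ for some $\Delta y \in \R{m}$. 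The negative sign in \eqref{e:dBasis} is a convention that matches the sign used in the dual feasibility equation of the SDO pair.

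Next, I would verify that the first two block equations of \eqref{e37b}, namely $\Delta X \in \Null(\Acal_s)$ and $\Delta S \in \Rcal(\Acal_s)$, are satisfied automatically by the parameterization. Indeed, $\Acal_s Q_2 \Delta z = 0$ since $Q_2$ is a nullspace basis for $\Acal_s$, and $-\Acal_s^{\top} \Delta y$ is a linear combination of the columns of $\Acal_s^{\top}$, so it lies in $\Rcal(\Acal_s)$ by definition. Hence these two conditions do not impose any restriction on the free parameters $(\Delta z, \Delta y)$, and we only need to encode the third (linearized, symmetrized complementarity) equation.

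Substituting $\svec(\Delta X) = Q_2 \Delta z$ and $\svec(\Delta S) = -\Acal_s^{\top} \Delta y$ into the third equation of \eqref{e37b} yields
\begin{equation*}
    \Ecal_s Q_2 \Delta z + \Fcal_s (-\Acal_s^{\top}) \Delta y \;=\; R_s^c,
\end{equation*}
which, written in block form, is precisely \eqref{e:IF-Newton}. Conversely, any solution $(\Delta z, \Delta y)$ to \eqref{e:IF-Newton} yields, via \eqref{e:pBasis}--\eqref{e:dBasis}, a triple $(\Delta X, \Delta y, \Delta S)$ satisfying the full symmetrized Newton system \eqref{e:svecTodd2}. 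Thus solving \eqref{e:IF-Newton} is equivalent to solving \eqref{e:svecTodd2}.

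There is no substantive obstacle here: the argument is a straightforward change of variables that exploits the orthogonality of $\Null(\Acal_s)$ and $\Rcal(\Acal_s)$. The only point worth emphasizing is that the reduced system \eqref{e:IF-Newton} has $n(n+1)/2$ equations in $n(n+1)/2$ unknowns (since $\Delta z$ has $n(n+1)/2 - m$ coordinates and $\Delta y$ has $m$), matching the dimension of the original primal-dual complementarity residual $R_s^c \in \R{n(n+1)/2}$. Uniqueness of the solution, under the hypotheses of Theorem \ref{t:uniqueSol}, follows from the equivalence with \eqref{e:svecTodd2}.
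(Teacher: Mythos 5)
Your argument is correct and follows essentially the same route as the paper: substitute the parameterizations $\svec(\Delta X) = Q_2\,\Delta z$ and $\svec(\Delta S) = -\Acal_s^{\top}\Delta y$ into the third (symmetrized complementarity) equation of \eqref{e37b}, after noting that the first two conditions hold automatically by construction of the bases. The extra remarks you add (dimension count, the converse direction, uniqueness) are handled in the paper by the surrounding discussion and Proposition \ref{p:QuantToClassic}, but they do not change the substance of the proof.
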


\begin{proof}
Substituting equations \eqref{e:pBasis} and \eqref{e:dBasis} into the left-hand side of the third equation in system \eqref{e37b} yields
 \begin{align*}  
     (P \otimes_s P^{- \top} S) \svec (\Delta X) + (P X \otimes_s P^{-\top}) \svec (\Delta S) &= (P \otimes_s P^{- \top} S)( Q_2 \Delta z)  + (P X \otimes_s P^{-\top}) (- \Acal_s^{\top} \Delta y) \\
     &= \begin{bmatrix} \Ecal_s Q_2 & \Fcal_s (-\Acal_s^{\top}) 
    \end{bmatrix}
        \begin{bmatrix} \Delta z \\ \Delta y
    \end{bmatrix}.
 \end{align*} 
\end{proof}
Observe that the coefficient matrix of equation \eqref{e:IF-Newton} is full rank, as the columns of $A^{\top}_s$ and $Q_2$ are orthogonal, \textit{and} the columns of $A^{\top}_s$ and $Q_2$ are linearly independent. Thus, equation \eqref{e:IF-Newton} has a unique solution. Upon using QLSA to solve the system \eqref{e:IF-Newton} for $\ket{\Delta z \circ \Delta y}$, we will need to map classical estimates $(\overbar{\Delta z}, \overbar{\Delta y})$ of the solution to a classical estimate $(\overbar{\Delta X}, \overbar{\Delta y}, \overbar{\Delta S})$ in order to proceed to the next iteration. Define $\Delta X$ and $\Delta S$ by \eqref{e:pBasis} and \eqref{e:dBasis}, respectively. As $(\Delta z, \Delta y)$ is the unique solution of \eqref{e:IF-Newton}, consequently $\Delta X$ and $\Delta S$ are uniquely determined as well.

Our use of inexact tomography causes $\Delta z$ and $\Delta y$ to be calculated with some error. Yet, by using the bases of the nullspace and the rowspace to calculate $\Delta X$ and $\Delta S$, we ensure that these search directions are always coming from orthogonal subspaces, regardless of the errors introduced by tomography or the length of the stepsize. We formalize this result in the following proposition. 
 
\begin{proposition} For a given $X \in \Pcal^0$ and $(y,S) \in \Dcal^0$, let $(\overbar{\Delta z}, \overbar{\Delta y})$ be an estimate of the solution to \eqref{e:IF-Newton}. Then, 
$$
    \overbar{\Delta z} = \Delta z + \xi_{z} ~\text{and}~
    \overbar{\Delta y} = \Delta y+ \xi_{y},
$$
where $\xi_{z}$ and $\xi_y$ denote the error terms of system \eqref{e:IF-Newton}. 
Let $\overbar{\Delta X}$ and $\overbar{\Delta S}$ be given by \eqref{e:pBasis} and \eqref{e:dBasis}, respectively. Then for any step size $\alpha$, we have
    \begin{align*}
        \Acal_s (\svec(X) + \alpha \svec(\overbar{\Delta X})) &= b,~\text{and}~
        \Acal_s^{\top} (y + \alpha \overbar{\Delta y}) + (\svec(S) + \alpha \svec (\overbar{\Delta S})) = \svec(C).
    \end{align*}
\end{proposition}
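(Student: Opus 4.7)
The plan is to observe that the choice of bases used to parametrize $\overbar{\Delta X}$ and $\overbar{\Delta S}$ automatically encodes primal and dual feasibility, \emph{regardless} of any error incurred in $(\overbar{\Delta z}, \overbar{\Delta y})$. The argument is essentially by construction, and decomposes into two symmetric parts: primal feasibility is preserved because $\svec(\overbar{\Delta X})$ lies in $\Null(\Acal_s)$, and dual feasibility is preserved because $\svec(\overbar{\Delta S})$ is expressed as a combination of columns of $\Acal_s^{\top}$.

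For the primal equation, I would start from the definition $\svec(\overbar{\Delta X}) = Q_2 \overbar{\Delta z}$ from \eqref{e:pBasis}. Since the columns of $Q_2$ form a basis of $\Null(\Acal_s)$ by the QR-factorization construction given just before \eqref{e:pBasis}, we have $\Acal_s Q_2 = 0$, and hence $\Acal_s \svec(\overbar{\Delta X}) = \Acal_s Q_2 \overbar{\Delta z} = 0$ irrespective of the error $\xi_z$ in $\overbar{\Delta z}$. Combining with $\Acal_s \svec(X) = b$ (from $X \in \Pcal^0$) and multiplying through by $\alpha$ yields $\Acal_s(\svec(X) + \alpha \svec(\overbar{\Delta X})) = b$ for every $\alpha$.

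For the dual equation, I would use the definition $\svec(\overbar{\Delta S}) = -\Acal_s^{\top} \overbar{\Delta y}$ from \eqref{e:dBasis}, which rearranges to $\Acal_s^{\top} \overbar{\Delta y} + \svec(\overbar{\Delta S}) = 0$, again independently of $\xi_y$. Using $(y,S) \in \Dcal^0$, we have $\Acal_s^{\top} y + \svec(S) = \svec(C)$, and adding $\alpha$ times the previous identity gives $\Acal_s^{\top}(y + \alpha \overbar{\Delta y}) + \svec(S) + \alpha \svec(\overbar{\Delta S}) = \svec(C)$, which is exactly the required equation.

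There is no real obstacle here: the content of the proposition is precisely that the nullspace parametrization absorbs all tomography error into the third (complementarity) equation of \eqref{e:svecTodd}, leaving the first two (feasibility) equations untouched. The only thing worth flagging is that this holds for \emph{any} step size $\alpha$, which matters because in later sections the step length will be chosen to maintain a neighborhood condition, and the present proposition guarantees that no matter how $\alpha$ is chosen, feasibility is never broken. This is the structural advantage of the IF-QIPM over the II-QIPM formulation of \eqref{e:inexactNewt}.
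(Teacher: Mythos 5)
Your proof is correct and follows essentially the same route as the paper's: primal feasibility via $\Acal_s Q_2 = 0$ applied to $\svec(\overbar{\Delta X}) = Q_2\overbar{\Delta z}$, and dual feasibility via the cancellation $\Acal_s^{\top}\overbar{\Delta y} + \svec(\overbar{\Delta S}) = 0$ coming directly from \eqref{e:dBasis}. Nothing further is needed.
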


\begin{proof}
  First, note that for any $X \in \Pcal$ and $(y, S) \in \Dcal$, we have
  \begin{align*}
      \Acal_s \svec(X)  &= b \\
      \Acal_s^{\top} y + \svec(S) &= \svec(C).
\end{align*}
  Therefore,
  \begin{align*}
      \Acal_s (\svec(X) + \alpha \svec( \overbar{\Delta X})) &= \Acal_s \svec( X) + \alpha \Acal_s \svec( \overbar{\Delta X}) \\
      &= b + \alpha \Acal_s [Q_2 (\overbar{\Delta z})] \\
      &= b + \alpha [\Acal_s Q_2 (\Delta z + \xi_z)] \\
      &= b + \alpha [0 \cdot (\Delta z + \xi_z)] \\
      &= b. 
  \end{align*}
  Similarly, 
\begin{align*}
    \Acal_s^{\top} (y + \alpha \overbar{\Delta y}) + (\svec(S) + \alpha \svec (\overbar{\Delta S})) &=  [\Acal_s^{\top} y + \svec(S)] + [\Acal_s^{\top} \alpha \overbar{\Delta y} + \alpha \svec (\overbar{\Delta S})] \\
    &= \svec (C)  + \alpha [\Acal_s^{\top} \overbar{\Delta y} +  \svec (\overbar{\Delta S})] \\
    &= \svec (C)  + \alpha [\Acal_s^{\top} (\overbar{\Delta y} - \overbar{\Delta y} )] \\
    &= \svec (C). 
\end{align*}
\end{proof}

Having established the fact that our search directions computed from inexact estimates of the solution to \eqref{e:IF-Newton} preserve primal and dual feasibility, we next certify the validity of using the system \eqref{e:IF-Newton} to obtain a solution to the system \eqref{e:svecTodd}. Further, we demonstrate that if the system \eqref{e:svecTodd} has a unique solution, than the system \eqref{e:IF-Newton} does as well. The proof of the next result is straightforward using the definition of the quantities involved.

\begin{proposition}\label{p:QuantToClassic}
Suppose $X$ and $S$ are positive definite. Then, any solution $(\Delta z, \Delta y)$ of the system \eqref{e:IF-Newton}, provides a solution $(\Delta X, \Delta y, \Delta S)$ to the system \eqref{e:svecTodd}. 
\end{proposition}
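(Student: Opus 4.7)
The plan is a direct substitution argument: given $(\Delta z, \Delta y)$ that satisfies \eqref{e:IF-Newton}, I define $\svec(\Delta X) := Q_2 \Delta z$ and $\svec(\Delta S) := -\Acal_s^{\top} \Delta y$ via \eqref{e:pBasis}--\eqref{e:dBasis}, and then verify each of the three block-row equations of \eqref{e:svecTodd} in turn. Since both $Q_2 \Delta z$ and $-\Acal_s^{\top} \Delta y$ lie in the range of $\svec$, applying $\smat$ immediately yields $\Delta X, \Delta S \in \Scal^n$, so the resulting triple lives in the correct space without any additional work.

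For the first block-row equation $\Acal_s^{\top} \Delta y + \svec(\Delta S) = 0$, substituting the definition of $\Delta S$ gives $\Acal_s^{\top} \Delta y - \Acal_s^{\top} \Delta y = 0$, which holds trivially. For the second block-row equation $\Acal_s \svec(\Delta X) = 0$, I use that the columns of $Q_2$ (obtained from the QR factorization of $\Acal_s^{\top}$) form a basis for $\Null(\Acal_s)$, so $\Acal_s Q_2 = 0$, and therefore $\Acal_s \svec(\Delta X) = \Acal_s Q_2 \Delta z = 0$. For the third block-row equation $\Ecal_s \svec(\Delta X) + \Fcal_s \svec(\Delta S) = R_s^c$, substitution yields $\Ecal_s Q_2 \Delta z + \Fcal_s(-\Acal_s^{\top}) \Delta y = R_s^c$, which is exactly \eqref{e:IF-Newton} and holds by hypothesis. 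Note that because the right-hand side of \eqref{e:IF-Newton} is precisely $R_s^c$ (with no residual term), an exact solution produces a solution to \eqref{e:svecTodd} with $R^r = 0$; the $R^r \neq 0$ case relevant to the quantum implementation arises only when \eqref{e:IF-Newton} is solved approximately, and that is handled separately in the algorithmic analysis.

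There is essentially no technical obstacle: the whole point of the nullspace-rowspace reformulation in \eqref{e37b} is that primal feasibility of $\Delta X$ and dual feasibility of $\Delta S$ are built into the definitions \eqref{e:pBasis}--\eqref{e:dBasis}, so the first two block rows of \eqref{e:svecTodd} are automatic and only the third row requires checking, which is by design the content of \eqref{e:IF-Newton}. The hypothesis that $X$ and $S$ are positive definite is not required for the construction of the solution map $(\Delta z, \Delta y) \mapsto (\Delta X, \Delta y, \Delta S)$; it is included because it ensures, via Theorem~\ref{t:uniqueSol} together with $\Ecal_s^{-1} \Fcal_s \succ 0$ and full row rank of $\Acal_s$, that the solution of \eqref{e:svecTodd} so obtained is in fact the unique solution, which is the property one actually wants to invoke when using \eqref{e:IF-Newton} as a substitute for \eqref{e:svecTodd} inside the IF-QIPM.
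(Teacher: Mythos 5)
Your proof is correct and follows exactly the direct-substitution argument the paper has in mind (the paper simply declares the result ``straightforward using the definition of the quantities involved'' and omits the details): the first two block rows of \eqref{e:svecTodd} hold automatically because $\Acal_s Q_2 = 0$ and $\svec(\Delta S) = -\Acal_s^{\top}\Delta y$, while the third row is precisely \eqref{e:IF-Newton}, with the residual $R^r$ (necessarily symmetric, since it is $\smat$ of a vector) absorbing any inexactness in $(\Delta z, \Delta y)$. Your added remarks — that an exact solution yields $R^r = 0$ and that positive definiteness of $X,S$ is only needed for uniqueness via Theorem~\ref{t:uniqueSol} rather than for the construction itself — are accurate and consistent with the paper's usage.
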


The following result from \cite{monteiro1998polynomial} plays a crucial role in the analysis required to establish polynomial convergence of our IF-QIPM.

\begin{lemma}[Lemma 2.1 in \cite{monteiro1998polynomial}]
Suppose that $(X, S) \in \Scal_{++}^n \times \Scal_{++}^n$, $Q \in \R{n \times n}$ is a nonsingular matrix and $\nu = \frac{X \bullet S}{n}$. Then: 
\begin{itemize}
    \item[(a)] $d (\widetilde{X}, \widetilde{S}) = d(X,S)$ where $\widetilde{X} = QXQ^{\top}$ and $\widetilde{S} = Q^{-\top} S Q^{-1}$.\label{lem2.1a}
    \item[(b)] $d(X, S) \leq \| H_Q (XS - \nu I) \|_F$ with equality holding if $QXSQ^{-1} \in \Scal^n$. \label{lem2.1b}
\end{itemize}
\end{lemma}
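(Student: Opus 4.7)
The plan is to prove each part by exploiting the fact that $d(X,S)$ depends only on the spectrum of $XS$ and the scalar $\nu$, since $X^{1/2}SX^{1/2}$ is symmetric and similar to $XS$ via $X^{1/2}$, giving $d(X,S)^2 = \sum_{i=1}^n (\lambda_i(XS)-\nu)^2$.

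For part (a), I would first compute $\widetilde{X}\widetilde{S} = QXQ^{\top}Q^{-\top}SQ^{-1} = Q(XS)Q^{-1}$, which is similar to $XS$ and therefore has the same eigenvalues. Then observe $\widetilde{X}\bullet\widetilde{S} = \trace(Q(XS)Q^{-1}) = \trace(XS) = X\bullet S$, so the associated normalized duality parameter $\widetilde{\nu}$ equals $\nu$. Using the spectral expression above for both $d(X,S)$ and $d(\widetilde{X},\widetilde{S})$ gives equality. The only subtlety is verifying that $\widetilde{X}, \widetilde{S}$ remain positive definite so the square-root expression is well-defined; this is immediate from $Q$ being nonsingular.

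For part (b), the key algebraic identity is that $H_Q(\cdot)$ is linear and satisfies $H_Q(\nu I) = \nu I$, so $H_Q(XS-\nu I) = M^{\textup{sym}} := \tfrac{1}{2}(M + M^{\top})$ where $M := Q(XS-\nu I)Q^{-1}$. Since $M$ is similar to $XS-\nu I$, its eigenvalues are the real numbers $\lambda_i(XS)-\nu$. I would then compute
\begin{equation*}
\|M^{\textup{sym}}\|_F^2 = \tfrac{1}{4}\trace\bigl((M+M^{\top})^{\top}(M+M^{\top})\bigr) = \tfrac{1}{2}\bigl(\|M\|_F^2 + \trace(M^2)\bigr),
\end{equation*}
using that $\trace(M^{\top}M) = \|M\|_F^2$ and $\trace(M^{\top}M^{\top}) = \trace(M^2)$. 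Because the eigenvalues of $M$ are real, $\trace(M^2) = \sum_i(\lambda_i(XS)-\nu)^2$, which is exactly $d(X,S)^2$. The main step is then Schur's inequality $\|M\|_F^2 \ge \sum_i |\lambda_i(M)|^2$, which yields $\|H_Q(XS-\nu I)\|_F^2 \ge \tfrac{1}{2}\bigl(d(X,S)^2 + d(X,S)^2\bigr) = d(X,S)^2$.

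The equality case will be the part requiring the most care: Schur's inequality is tight precisely when $M$ is normal, and since its eigenvalues are real this forces $M$ to be symmetric, i.e.\ $Q(XS-\nu I)Q^{-1} \in \Scal^n$. Subtracting $\nu I$ does not affect symmetry, so this is equivalent to $QXSQ^{-1}\in \Scal^n$, matching the stated condition. I expect the only real pitfall is keeping track of the fact that we need real eigenvalues in order to pass from the Frobenius identity to $\trace(M^2) = \sum \lambda_i(M)^2$ and to conclude that normality plus real spectrum implies symmetry; both follow cleanly from $M$ being similar to the real-spectrum matrix $XS-\nu I$.
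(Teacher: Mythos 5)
The paper does not actually prove this lemma: it is imported verbatim as Lemma 2.1 of \cite{monteiro1998polynomial}, so there is no in-paper argument to compare against. Judged on its own, your proof is correct and complete. Part (a) is exactly the standard argument: $\widetilde{X}\widetilde{S}=Q(XS)Q^{-1}$ preserves the spectrum and the trace, hence $\widetilde{\nu}=\nu$, and $d$ depends only on the eigenvalues of the product and on $\nu$ (using that $X^{1/2}SX^{1/2}$ is symmetric and similar to $XS$, which also guarantees the eigenvalues are real). For part (b), your identity $\|\tfrac12(M+M^{\top})\|_F^2=\tfrac12\bigl(\|M\|_F^2+\trace(M^2)\bigr)$ with $M=Q(XS-\nu I)Q^{-1}$, combined with $\trace(M^2)=\sum_i(\lambda_i(XS)-\nu)^2=d(X,S)^2$ and Schur's inequality $\|M\|_F^2\ge\sum_i|\lambda_i(M)|^2$, gives the inequality cleanly, and the equality discussion is sound: a real normal matrix with real spectrum is symmetric, and symmetry of $QXSQ^{-1}$ is unaffected by subtracting $\nu I$, so the stated sufficient condition (indeed, an equivalence) follows. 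The one point worth making explicit when writing this up is where positive definiteness enters: $S\succ0$ and $X\succ0$ are what make the eigenvalues of $XS$ (hence of $M$) real, which is needed both for $\trace(M^2)=d(X,S)^2$ and for the step ``normal $\Rightarrow$ symmetric''; you flag this, so the argument stands as a valid self-contained replacement for the citation.
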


  

\section{Technical Results}\label{s:tech}
In this section, we present the technical results required to prove that our IF-QIPM for SDO exhibits a polynomial iteration complexity. Our convergence analysis can be viewed as a generalization of Gondzio's \cite{gondzio2013convergence} analysis from LCQO to SDO. To accomplish this, we adapt the work of Monteiro \cite{monteiro1998polynomial} to account for the inexactness of the complementarity condition in order to prove polynomial convergence of the IF-QIPM presented in Section \ref{s:IFQIPM}. Though in this work we chose the Nesterov-Todd direction $P = W^{-1/2}$, the convergence analysis we perform is presented in general terms of $P$, and holds for any member of the Monteiro-Zhang family of search directions \cite{monteiro1998unified}, which includes the NT, AHO and HKM directions. We leave the results used to establish convergence of the II-QIPM to Appendix \ref{app:zhou_toh}. \\

Following \cite{monteiro1998polynomial}, in this section we assume that $(X, y, S) \in \Scal_{++}^n \times \R{m} \times \Scal_{++}^n$ and that $P \in \R{n \times n}$ is a nonsingular matrix. Additionally, we assume $(\Delta X, \Delta y, \Delta S)$ is a solution of the system 
\begin{subequations}\label{e:6}
\begin{align}
 A^{(i)} \bullet \Delta X &= b_i -  A^{(i)} \bullet X, ~~\forall i \in [m], \label{e:6c}  \\
 \sum_{i = 1}^m \Delta y_i A^{(i)}+ \Delta S &= C - S - \sum_{i = 1}^m y_i A^{(i)}, \label{e:6b}\\
   H_P (\Delta X S + X \Delta S) &= \sigma \nu I - H_P (XS) + R^r, \label{e:6a}
\end{align}
\end{subequations}
for some $\sigma \in [0, 1]$. Recall that $(\Delta X, \Delta y, \Delta S)$ is a solution to the system \eqref{e:svecTodd} obtained from our classical estimate $(\Delta z, \Delta y)$ of the solution $\ket{ \Delta z \circ \Delta y}$ to the Newton linear system \eqref{e:IF-Newton}. That is,
\begin{equation}\label{e:t40}
     (\Delta X, \Delta y, \Delta S) = \left(\smat (Q_2 \Delta z), \Delta y, - \smat (\Acal_s^{\top} \Delta y) \right),
\end{equation}
where $\Delta X$ and $\Delta S$ are computed according to \eqref{e:pBasis} and \eqref{e:dBasis}, respectively. Further, for $\alpha \in \R{}$ we define the quantities:
\begin{subequations}
\begin{align}
    \wtX &= PXP^{\top}, &\wtS &= P^{-\top} S P^{-1}, \label{e8} \\
    \wtDX &= P \Delta XP^{\top}, &\wtDS &= P^{-\top} \Delta S P^{-1},  \label{e9} \\
    X(\alpha) &= X + \alpha \Delta X, &S(\alpha) &= S + \alpha \Delta S,~ &y(\alpha) = y + \alpha \Delta y,  \label{e10}\\
    \wtX (\alpha) &= P X(\alpha) P^{\top} = \wtX + \alpha \wtDX, \label{e11} \\
    \wtS (\alpha) &= P^{-\top} S(\alpha) P^{-1} = \wtS + \alpha \wtDS, \label{e12}\\
    \Wcal_X &= \wtX^{-1/2} \left[ \wtDX \wtS + \wtX \wtDS + \wtX \wtS - \sigma \nu I - \tr^r\right] \wtX^{1/2}, \label{e13}
\end{align}
\end{subequations}

and
\begin{equation} \label{e14}
      \nu = \frac{X \bullet S}{n} = \frac{\wtX \bullet \wtS}{n}, ~~~~\nu (\alpha)= \frac{X (\alpha)\bullet S(\alpha)}{n} = \frac{\wtX (\alpha) \bullet \wtS(\alpha)}{n}.  
\end{equation}

\begin{lemma}[Lemma 3.1 in \cite{monteiro1998polynomial}]
$(\Delta X,  \Delta y, \Delta S)$ is a solution of the system \eqref{e:6} if and only if $(\wtDX, \Delta y, \wtDS)$ is a solution of the system 
\begin{subequations}\label{e:15}
\begin{align}
 \widetilde{A}^{(i)} \bullet \Delta \wtX &= b_i -  \widetilde{A}^{(i)} \bullet \wtX, ~~\forall i \in [m], \label{e:15c}  \\
\sum_{i = 1}^m \Delta y_i \widetilde{A}^{(i)}+ \wtDS &= \widetilde{C} - \wtS - \sum_{i = 1}^m y_i \widetilde{A}^{(i)} \label{e:15b}\\
   H_I (\wtDX \wtS + \wtX \wtDS) &= \sigma \nu I - H_I (\wtX \wtS) + \tr^r \label{e:15a}
\end{align}
where $\tr^r = P^{-\top} R^r P^{-1}$, $\widetilde{C} = P^{-\top} C P^{-1}$ and $\widetilde{A}^{(i)} = P^{-\top} \widetilde{A}^{(i)} P^{-1}$ for $i \in [m]$.
\end{subequations}
\end{lemma}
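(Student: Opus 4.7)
The claim is a change-of-variables statement: each equation in \eqref{e:6} is transported to the corresponding equation in \eqref{e:15} by the invertible maps $X \mapsto PXP^{\top}$, $S \mapsto P^{-\top} S P^{-1}$, and their analogues for the Newton steps and constraint matrices. Because every map involved is a bijection on $\R{n \times n}$, the plan is to verify the forward implication by direct substitution and then observe that reading the same substitutions backward gives the converse.

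First, I would dispatch the two affine blocks \eqref{e:6c} and \eqref{e:6b}. Using the cyclic property of the trace together with $\widetilde{A}^{(i)} = P^{-\top} A^{(i)} P^{-1}$ and $\wtX = PXP^{\top}$, one immediately gets $\widetilde{A}^{(i)} \bullet \wtX = \operatorname{tr}(P^{-\top}A^{(i)}P^{-1} \cdot P X P^{\top}) = \operatorname{tr}(A^{(i)}X) = A^{(i)} \bullet X$, and analogously $\widetilde{A}^{(i)} \bullet \wtDX = A^{(i)} \bullet \Delta X$, so \eqref{e:6c} turns into \eqref{e:15c}. For the dual equation, applying the congruence $M \mapsto P^{-\top} M P^{-1}$ to both sides of \eqref{e:6b} converts each of $\Delta S$, $S$, $C$, and $A^{(i)}$ into its tilde version, yielding \eqref{e:15b}. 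Invertibility of the congruence makes this equivalence rather than merely implication.

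The substantive step is the symmetrized complementarity equation \eqref{e:6a}. Here the key identity is
\begin{equation*}
H_P(M) \;=\; \tfrac{1}{2}\bigl(PMP^{-1} + P^{-\top} M^{\top} P^{\top}\bigr) \;=\; \tfrac{1}{2}\bigl(PMP^{-1} + (PMP^{-1})^{\top}\bigr) \;=\; H_I\bigl(PMP^{-1}\bigr),
\end{equation*}
valid for every $M \in \R{n \times n}$. Because $X,S,\Delta X,\Delta S$ are symmetric, the products scale cleanly: $P(\Delta X\, S + X\, \Delta S)P^{-1} = \wtDX\,\wtS + \wtX\,\wtDS$ and $P(XS)P^{-1} = \wtX\,\wtS$, both of which follow by inserting $P^{-1}P = I$ or $P^{\top}P^{-\top} = I$ between the relevant factors. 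Plugging these into the $H_P = H_I \circ (\text{conjugation})$ identity rewrites \eqref{e:6a} as the tilde equation, with the residual reappearing on the right-hand side as $\tr^r = P^{-\top} R^r P^{-1}$ once the congruence is applied consistently to the entire equation.

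The only subtlety — which I expect to be the \emph{main obstacle}, though it amounts to bookkeeping rather than a genuine difficulty — is tracking the residual transformation so that it matches the form $\tr^r = P^{-\top} R^r P^{-1}$ stated in \eqref{e:15a}: one must take care to apply $P^{-\top}(\cdot)P^{-1}$ symmetrically on both sides so that every term in the scaled equation lives in the same coordinate system as $H_I(\wtDX\wtS + \wtX\wtDS)$. Once this is carefully sorted out, the forward direction is complete, and since each substitution used is invertible, reversing the chain establishes the converse, finishing the equivalence.
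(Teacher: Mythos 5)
Your overall route -- direct substitution through the scaling maps $X \mapsto PXP^{\top}$, $S \mapsto P^{-\top}SP^{-1}$ -- is exactly what the paper's own (one-line) proof gestures at when it says the lemma "follows as a consequence of \eqref{e:6}, \eqref{e8} and \eqref{e9}"; your treatment of the primal block via cyclicity of the trace, of the dual block via the invertible congruence, and the key identity $H_P(M) = H_I(PMP^{-1})$ are all correct and fill in the details the paper omits.

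The gap is in your handling of the residual. By your own identity, $H_P(\Delta X S + X\Delta S) = H_I(\wtDX\wtS + \wtX\wtDS)$ and $H_P(XS) = H_I(\wtX\wtS)$ hold \emph{without} applying any congruence to the equation; substituting them into \eqref{e:6a} therefore yields \eqref{e:15a} with the residual carried over unchanged, i.e.\ with $\tr^r = R^r$. To reach the stated form $\tr^r = P^{-\top} R^r P^{-1}$ you appeal to "applying the congruence consistently to the entire equation," but that move is not available: conjugating \eqref{e:6a} by $P^{-\top}(\cdot)P^{-1}$ sends $\sigma\nu I$ to $\sigma\nu P^{-\top}P^{-1} \neq \sigma\nu I$ and does not turn $H_P(\Delta X S + X\Delta S)$ into $H_I(\wtDX\wtS + \wtX\wtDS)$, so the two mechanisms you invoke cannot be combined. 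As written, your final step does not derive \eqref{e:15a} with $\tr^r = P^{-\top}R^rP^{-1}$; what your computation actually proves is the equivalence with $\tr^r = R^r$, and $R^r = P^{-\top}R^rP^{-1}$ fails for general nonsingular $P$. You should either state and prove the equivalence with the residual passed through unchanged, treating the relation between $R^r$ and $\tr^r$ separately (note the paper's subsequent analysis only ever invokes the norm assumptions \eqref{e:RcAssumption1}, \eqref{e:scaledBoundr} and \eqref{e:scaledBound} on these residuals), or explain concretely where a congruence acting on $R^r$ alone is supposed to originate -- the current wording conceals that it does not come out of the substitution you perform.
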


\begin{proof}
  The proof follows as a consequence of equations \eqref{e:6}, \eqref{e8} and \eqref{e9}.
\end{proof}

For the rest of this section, we work under the assumption that
$$ (X, y, S) \in \Pcal^0 \times \Dcal^0.$$
We remark that in the scaled setting our assumption for the residual term in \eqref{e:RcAssumption1} is given by:
\begin{equation}\label{e:scaledBoundr}
     \| \tr^r \|_F \leq \beta \| \wtR^c  \|_F, \tag{AR2}
\end{equation}
where $\beta \in (0,1)$ and $\wtR^c = \sigma \nu I - H_I (\wtX \wtS).$ For the analysis in this section we assume that the error tolerances are chosen so that the residual term $\tr^r$for the scaled problem \eqref{e:15} in each iteration satisfies
\begin{equation}\label{e:scaledBound}
    \| \wtX^{-1/2} \tr^r \wtX^{1/2} \|_F \leq \beta \|\wtX^{-1/2} \wtR^c  \wtX^{1/2} \|_F \tag{AR3}
\end{equation}
for some $\beta \in (0,1)$. Note that these assumptions that we make regarding the residual are mild. In light of Lemma \ref{lem2.1b}(a), $R^r$ and $\tr^r$ satisfy the same Frobenius norm bound, and the right hand sides of the inequalities in \eqref{e:RcAssumption1}, \eqref{e:scaledBoundr} and \eqref{e:scaledBound} are all invariant under the stated scaling. 

\begin{lemma}[Lemma 3.3 in \cite{monteiro1998polynomial}]
The following relations hold: 
\begin{align}
    H_{\wtX^{1/2}} (\Wcal_X) &= 0, \label{e16} \\
    \wtDX \bullet \wtDS &= \Delta X \bullet \Delta S = 0. \label{e17}
\end{align}
\end{lemma}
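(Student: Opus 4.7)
The plan is to prove the two displayed equalities in order, exploiting the symmetry of $\wtX^{1/2}$ and the orthogonality of the nullspace/rowspace decomposition.

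For \eqref{e16}, I would first observe that $P = \wtX^{1/2}$ is symmetric (as $\wtX \in \Scal_{++}^n$), so $P^{\top} = P$ and $P^{-\top} = P^{-1}$. Abbreviating $M = \wtDX \wtS + \wtX \wtDS + \wtX \wtS - \sigma \nu I - \tr^r$ so that $\Wcal_X = \wtX^{-1/2} M \wtX^{1/2}$, a direct computation gives $P \Wcal_X P^{-1} = M$ and, using that each of $\wtX$, $\wtS$, $\wtDX$, $\wtDS$, $\tr^r$ is symmetric, $P^{-\top}\Wcal_X^{\top} P^{\top} = M^{\top}$. Hence $H_{\wtX^{1/2}}(\Wcal_X) = \tfrac12(M + M^{\top}) = H_I(M)$. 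Expanding and using that $\sigma\nu I$ and $\tr^r$ are fixed under $H_I$, this evaluates to $H_I(\wtDX\wtS + \wtX\wtDS) + H_I(\wtX\wtS) - \sigma\nu I - \tr^r$, which is exactly zero by equation \eqref{e:15a}. The only thing to double-check carefully is the symmetry of $\tr^r$, which follows from $R^r \in \Scal^n$ (ensured by our use of the $\svec$ encoding of the residual).

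For \eqref{e17}, the equality $\wtDX \bullet \wtDS = \Delta X \bullet \Delta S$ is immediate from the cyclic property of the trace applied to $\trace(\wtDX\, \wtDS) = \trace(P \Delta X P^{\top} P^{-\top} \Delta S P^{-1}) = \trace(\Delta X \Delta S)$. For the vanishing, I would invoke the standing assumption $(X,y,S) \in \Pcal^0 \times \Dcal^0$, so that the right-hand sides of \eqref{e:6c} and \eqref{e:6b} collapse to $0$; this is the feasibility property that the IF-QIPM preserves via construction \eqref{e:t40}. Consequently $\Delta X \in \Null(\Acal_s)$ and $\Delta S = -\sum_{i=1}^m \Delta y_i A^{(i)} \in \Rcal(\Acal_s)$, and
\begin{equation*}
\Delta X \bullet \Delta S = -\sum_{i=1}^m \Delta y_i \bigl(A^{(i)} \bullet \Delta X\bigr) = 0.
\end{equation*}

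I do not anticipate a substantive obstacle: both claims are essentially bookkeeping once the right-hand side of \eqref{e:15a} and the orthogonality of $\Null(\Acal_s)$ and $\Rcal(\Acal_s)$ are invoked. The only mild subtlety is being precise that the inexactness enters only through the symmetric residual $\tr^r$ inside $\Wcal_X$ and does \emph{not} perturb the primal/dual feasibility equations \eqref{e:15c}--\eqref{e:15b}; this is precisely what the nullspace representation of the IF-QIPM buys us, and it is what makes $\Delta X \bullet \Delta S = 0$ survive despite inexactness.
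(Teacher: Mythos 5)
Your proof is correct and follows essentially the same route as the paper's (which states it very tersely): \eqref{e16} is obtained from \eqref{e:15a} together with the symmetry of $\tr^r$, and \eqref{e17} comes from the scaling relations \eqref{e9} plus the primal/dual feasibility equations \eqref{e:6c}--\eqref{e:6b}, whose right-hand sides vanish under the assumption $(X,y,S)\in\Pcal^0\times\Dcal^0$, giving $\Delta X\in\Null(\Acal_s)$, $\Delta S\in\Rcal(\Acal_s)$ and hence orthogonality. Your write-up merely supplies the bookkeeping the paper leaves implicit.
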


\begin{proof}
Equation \eqref{e16} is a direct consequence of \eqref{e:15a}. Further from our assumption 
$$ (X, y, S) \in \Pcal^0 \times \Dcal^0,$$
and applying equations \eqref{e:6a}, \eqref{e:6b} and \eqref{e9} yields the two identities in \eqref{e17}. 
\end{proof}

The following result can be viewed as the inexact analogue of Lemma 3.4 in \cite{monteiro1998polynomial}. 
\begin{lemma}
For all $\alpha \in \R{}$ we have
\begin{align}
    \nu (\alpha) &= (1 - \alpha + \sigma \alpha) \nu + \alpha \frac{\trace{(R^r)}}{n}, \label{e18} \\
    \wtX^{-1/2} \left[ \wtX (\alpha) \wtS (\alpha) - \nu (\alpha) I \right] \wtX^{1/2} &= (1-\alpha)( \wtX^{1/2}  \wtS \wtX^{1/2}  - \nu I) + \alpha \Wcal_X + \alpha^2 \wtX^{-1/2}\wtDX \wtDS\wtX^{1/2}  \nonumber \\ 
    &~~~+ \alpha \wtX^{-1/2}  \left[ \tr^r - \frac{\trace{(R^r)}}{n} I \right] \wtX^{1/2}. \label{e20}
\end{align}
\end{lemma}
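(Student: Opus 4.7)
My plan is to prove the two identities sequentially, using the first to substitute into the second. Throughout, I will use two key scalar/matrix identities that follow from the hypotheses: the orthogonality $\Delta X\bullet\Delta S=\wtDX\bullet\wtDS=0$ given by \eqref{e17}, and the trace identity $\trace(H_P(M))=\trace(M)$ which holds because $H_P(M)$ is a convex combination of $PMP^{-1}$ and its transpose.

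For equation \eqref{e18}, I would expand
\[
\nu(\alpha)=\frac{(X+\alpha\Delta X)\bullet(S+\alpha\Delta S)}{n}=\nu+\frac{\alpha}{n}\bigl(X\bullet\Delta S+\Delta X\bullet S\bigr)+\frac{\alpha^{2}}{n}\,\Delta X\bullet\Delta S,
\]
drop the last term by \eqref{e17}, and then take the trace of the complementarity equation \eqref{e:6a}. Since $\trace H_P(\cdot)=\trace(\cdot)$, this yields $\Delta X\bullet S+X\bullet\Delta S=n(\sigma-1)\nu+\trace(R^{r})$, and substituting gives \eqref{e18} immediately.

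For equation \eqref{e20}, I would expand the product
\[
\wtX(\alpha)\wtS(\alpha)=\wtX\wtS+\alpha(\wtDX\wtS+\wtX\wtDS)+\alpha^{2}\wtDX\wtDS,
\]
subtract $\nu(\alpha)I$ using the expression for $\nu(\alpha)$ from the first part, and split the coefficient of $\alpha$ as
\[
\wtX\wtS+\wtDX\wtS+\wtX\wtDS-\sigma\nu I-\tfrac{\trace(R^{r})}{n}I
=\bigl[\wtX\wtS+\wtDX\wtS+\wtX\wtDS-\sigma\nu I-\tr^{r}\bigr]+\bigl[\tr^{r}-\tfrac{\trace(R^{r})}{n}I\bigr].
\]
Conjugating by $\wtX^{-1/2}(\cdot)\wtX^{1/2}$, the first bracket becomes exactly $\Wcal_X$ by definition \eqref{e13}, the $(1-\alpha)(\wtX\wtS-\nu I)$ term transforms into $(1-\alpha)(\wtX^{1/2}\wtS\wtX^{1/2}-\nu I)$ (using that $\wtX^{-1/2}$ and $\wtX^{1/2}$ commute with the scalar $\nu I$), and the remaining two terms give the $\alpha^{2}$ quadratic term and the residual correction. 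This reproduces \eqref{e20}.

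I do not expect any serious obstacle here: the main points are purely algebraic bookkeeping. The only subtlety worth flagging is keeping track of whether $\trace(R^{r})$ or $\trace(\tr^{r})$ appears in \eqref{e18}; this is resolved by taking the trace of the unscaled equation \eqref{e:6a} rather than its scaled counterpart \eqref{e:15a}, since $\nu(\alpha)$ equals $(X\bullet S)(\alpha)/n$ independently of the scaling. Everything else reduces to linear algebra already packaged in the definitions \eqref{e8}--\eqref{e14} and the equivalence of the scaled and unscaled systems.
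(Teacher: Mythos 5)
Your proposal is correct and follows essentially the same route as the paper: both identities come from expanding the products, invoking $\trace(H_P(M))=\trace(M)$ together with $\Delta X\bullet\Delta S=0$ from \eqref{e17} for \eqref{e18}, and then adding and subtracting $\alpha\tr^r$ so the $\alpha$-coefficient matches the definition \eqref{e13} of $\Wcal_X$ before conjugating by $\wtX^{-1/2}(\cdot)\wtX^{1/2}$. The only cosmetic difference is that for \eqref{e18} the paper applies $H_P$ to the full product $X(\alpha)S(\alpha)$ and then takes the trace, whereas you take the trace of \eqref{e:6a} directly and substitute into the bilinear expansion of $\nu(\alpha)$; your remark about using the unscaled residual $R^r$ (not $\tr^r$) there is also exactly how the paper handles it.
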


\begin{proof}
From \eqref{e10} we have
$$ X(\alpha) S(\alpha) = (X + \alpha \Delta X)(S + \alpha \Delta S) = XS + \alpha (X \Delta S + \Delta X S) + \alpha^2 \Delta X \Delta S.$$
Hence, by the linearity of $H_P(\cdot)$ and equation \eqref{e:6a}, it follows
\begin{align*}
     X(\alpha) S(\alpha) &= (X + \alpha \Delta X)(S + \alpha \Delta S) = H_P(XS) + \alpha H_P (X \Delta S + \Delta X S) + \alpha^2 H_P (\Delta X \Delta S) \\
    &=  H_P(XS) + \alpha [H_P ( \sigma \nu I - XS) + R^r] + \alpha^2 H_P (\Delta X \Delta S) \\
    &=  (1-\alpha) H_P(XS) + \alpha \sigma \nu I + \alpha R^r + \alpha^2 H_P (\Delta X \Delta S). 
\end{align*}
Now, for $M \in \R{n \times n}$ we have\footnote{This is due to the fact that the eigenvalues of $M$ are invariant under the transformation $H_P (M)$.} $\trace{(H_P (M))} = \trace{(M)}$, and thus
\begin{align*}
      X(\alpha) \bullet S(\alpha) &= \trace{[X(\alpha) S(\alpha) ]} = \trace{[H_P (X(\alpha) S(\alpha))]} \\
      &= \trace{[ (1-\alpha) H_P(XS) + \alpha \sigma \nu I + \alpha R^r + \alpha^2 H_P (\Delta X \Delta S)  ]}\\
      &= (1-\alpha) \trace{[H_P (XS)]} + \alpha \sigma \nu \trace{(I)} + \alpha \trace{(R^r)} + \alpha^2 \trace{[H_P (\Delta X \Delta S)]} \\
      &= (1-\alpha) X \bullet S + \alpha \sigma \nu n + \alpha \trace{(R^r)} + \alpha^2 \Delta X \bullet \Delta S \\
      &= (1-\alpha) X \bullet  S + \alpha \sigma \nu n + \alpha \trace{(R^r)} ,
\end{align*}
where the final equality follows from \eqref{e17}. Dividing the above expression by $n$ and applying \eqref{e14} yields \eqref{e18}. \\

\noindent From here, we apply equations \eqref{e11}, \eqref{e12}, \eqref{e18} and \eqref{e13}, and it follows:
\begin{align*}
    \wtX (\alpha) \wtS (\alpha) - \nu (\alpha) I &= (\wtX + \alpha \wtDX ) (\wtS + \alpha \wtDS) - \nu (\alpha) I \\
        &= \wtX \wtS + \alpha (\wtDX \wtS + \wtX \wtDS) + \alpha^2 \wtDX \wtDS -  \nu (\alpha)  I \\
        &= (1-\alpha) \wtX \wtS + \alpha \wtX \wtS + \alpha (\wtDX \wtS + \wtX \wtDS) + \alpha^2 \wtDX \wtDS -  \nu (\alpha)  I \\ 
        &= (1-\alpha) \wtX \wtS + \alpha (\wtDX \wtS + \wtX \wtDS + \wtX \wtS) + \alpha^2 \wtDX \wtDS -  \nu (\alpha)  I \\ 
        &= (1-\alpha) \wtX \wtS + \alpha (\wtDX \wtS + \wtX \wtDS + \wtX \wtS) + \alpha^2 \wtDX \wtDS -  \nu (\alpha)  I + \underbrace{(\alpha \tr^r - \alpha \tr^r)}_{=0}\\ 
        &= (1-\alpha) \wtX \wtS + \alpha (\wtDX \wtS + \wtX \wtDS + \wtX \wtS) + \alpha^2 \wtDX \wtDS -  \left[ (1 - \alpha + \sigma \alpha) \nu + \frac{\alpha \trace{(R^r)}}{n} \right]  I \\
        &~~~~~~+ (\alpha \tr^r - \alpha \tr^r)\\  
        &= (1-\alpha) (\wtX \wtS - \nu I) + \alpha (\wtDX \wtS + \wtX \wtDS + \wtX \wtS -  \sigma \nu I - \tr^r) + \alpha^2 \wtDX \wtDS \\
        &~~~~~~+ \alpha \left[ \tr^r - \frac{\trace{(R^r)}}{n} I \right]    \\  
    &= (1-\alpha)( \wtX \wtS - \nu I) + \alpha \wtX^{1/2} \Wcal_X \wtX^{-1/2} + \alpha^2 \wtDX \wtDS + \alpha \left[ \tr^r - \frac{ \trace{(R^r)}}{n} I \right],
\end{align*}
which holds for any $\alpha \in \R{}$. Multiplying on the left by $\wtX^{-1/2}$ and on the right by $\wtX^{1/2}$ gives \eqref{e20}. 
\end{proof}

The following result from \cite{monteiro1998polynomial} plays a crucial role in the analysis. 

\begin{lemma}[Lemma 3.5 in \cite{monteiro1998polynomial}]\label{lem3.5}
Let $M \in \R{n \times n}$ be such that $H_Q (M) = 0$ for some nonsingular $Q \in \R{n \times n}$. Then, 
\begin{align}
    \| H_I (M) \|_F &\leq \frac{1}{2} \| M - M^{\top} \|_F, \label{e21} \\
    \| M \|_F &\leq \frac{\sqrt{2}}{2} \| M - M^{\top} \|_F  \label{e22}.
\end{align}
In particular, if $M = U_1 + U_2$ for some $U_1 \in \Scal^n$ and $U_2 \in \R{n \times n}$, then
$$ \| M \|_F \leq \sqrt{2} \| U_2 \|_F.$$
\end{lemma}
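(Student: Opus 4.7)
The plan is to reduce the hypothesis $H_Q(M) = 0$ to the statement that $M$ is conjugate to a skew-symmetric matrix via a symmetric positive definite matrix, and then perform a transparent entrywise computation in an eigenbasis. First, $H_Q(M) = 0$ is equivalent to $QMQ^{-1} = -(QMQ^{-1})^{\top}$, i.e., $QMQ^{-1}$ is skew-symmetric. Multiplying this identity on the left by $Q^{-1}$ and on the right by $Q$ yields $PM + M^{\top}P = 0$, where $P := Q^{\top}Q \succ 0$. Setting $T := P^{1/2}$ (the symmetric positive definite square root) and rearranging, one checks that $\hat{M} := TMT^{-1}$ satisfies $\hat{M} = -T^{-1}M^{\top}T = -(TMT^{-1})^{\top} = -\hat{M}^{\top}$, so $\hat{M}$ is skew-symmetric.

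Next, I would diagonalize $T = UDU^{\top}$ with $U$ orthogonal and $D = \operatorname{diag}(d_1, \ldots, d_n)$, $d_i > 0$, and set $\hat{M}' := U^{\top}\hat{M}U$, which remains skew-symmetric. Writing $M = T^{-1}\hat{M}T$ and using $T = T^{\top}$ gives $M^{\top} = -T\hat{M}T^{-1}$, so
\begin{equation*}
  M + M^{\top} = T^{-1}\hat{M}T - T\hat{M}T^{-1}, \qquad M - M^{\top} = T^{-1}\hat{M}T + T\hat{M}T^{-1}.
\end{equation*}
By orthogonal invariance of the Frobenius norm, conjugating by $U$ reduces these to $D^{-1}\hat{M}'D \mp D\hat{M}'D^{-1}$, whose $(i,j)$ entries are $\left(\tfrac{d_j}{d_i} \mp \tfrac{d_i}{d_j}\right) \hat{M}'_{ij}$. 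Since $d_i, d_j > 0$, the elementary inequality $(a-b)^2 \le (a+b)^2$ for $a, b > 0$ applied termwise gives $\|M + M^{\top}\|_F \le \|M - M^{\top}\|_F$, which is \eqref{e21} after dividing by $2$.

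For \eqref{e22}, I would invoke the orthogonal decomposition $M = M_s + M_a$ with $M_s = \tfrac{1}{2}(M + M^{\top})$ and $M_a = \tfrac{1}{2}(M - M^{\top})$; the symmetric and skew-symmetric parts are orthogonal in the Frobenius inner product, so $\|M\|_F^2 = \|M_s\|_F^2 + \|M_a\|_F^2$. Combining this with \eqref{e21}, which states $\|M_s\|_F \le \|M_a\|_F$, yields $\|M\|_F^2 \le 2\|M_a\|_F^2 = \tfrac{1}{2}\|M - M^{\top}\|_F^2$, i.e., \eqref{e22}. For the concluding "in particular" statement, the symmetric summand $U_1$ cancels in $M - M^{\top} = U_2 - U_2^{\top}$, so the triangle inequality gives $\|M - M^{\top}\|_F \le 2\|U_2\|_F$, and substituting into \eqref{e22} produces $\|M\|_F \le \sqrt{2}\|U_2\|_F$.

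The main obstacle is recognizing the structural reduction at the start: rewriting $H_Q(M) = 0$ as $PM + M^{\top}P = 0$ with $P = Q^{\top}Q$ and then conjugating by $P^{1/2}$ to produce a skew-symmetric matrix. Once this change of variables is in place, the rest of the argument is a routine diagonalization and a sign comparison on $1{\times}1$ blocks; the orthogonality of symmetric and skew-symmetric matrices in the Frobenius inner product then immediately delivers \eqref{e22} from \eqref{e21}.
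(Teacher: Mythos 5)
Your proof is correct. Note that the paper itself does not prove this lemma at all: it is imported verbatim from \cite{monteiro1998polynomial}, so there is no in-paper argument to compare against. Your reduction — observing that $H_Q(M)=0$ means $QMQ^{-1}$ is skew-symmetric, passing to $PM+M^{\top}P=0$ with $P=Q^{\top}Q\succ 0$, conjugating by $P^{1/2}$ to get a skew-symmetric matrix, diagonalizing and comparing entries via $(a-b)^2\le(a+b)^2$ for $a,b>0$, and then invoking the Frobenius-orthogonality of the symmetric and skew-symmetric parts to pass from \eqref{e21} to \eqref{e22} and to the final bound $\|M\|_F\le\sqrt{2}\|U_2\|_F$ — is complete and is essentially the standard argument behind the cited result, so it would serve as a valid self-contained proof.
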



We can now use the above result in order to prove the following lemma, which is adapted from Lemma 3.6 in \cite{monteiro1998polynomial}. 

\begin{lemma}\label{lem3.6}
For every $\theta \in \R{}$, we have
\begin{align}
      \| \Wcal_X \|_F &\leq \sqrt{2} \delta_x \left\| \wtS^{1/2} \wtX^{1/2} -  \theta \nu \wtS^{-1/2} \wtX^{-1/2} \right\| + \sqrt{2} \left\| \wtX^{-1/2} \tr^r \wtX^{1/2} \right\|_F,\label{e23}
\end{align}
and
\begin{align}
        \left\|  \wtX^{-1/2} \left[ \wtX (\alpha) \wtS (\alpha) - \nu (\alpha) I \right] \wtX^{1/2} \right\|_F &\leq (1-\alpha) d (\wtX, \wtS) + \alpha^2 \delta_x \delta_s +  \alpha \left\| \wtX^{-1/2}  \left[ \tr^r - \frac{\trace{(R^r)}}{n} I \right] \wtX^{1/2}   \right\| \nonumber \\
        &~~~+ \alpha \sqrt{2} \left\| \wtS^{1/2} \wtX^{1/2} - \theta \nu \wtS^{-1/2} \wtX^{-1/2}  \right\| + \sqrt{2} \| \wtX^{-1/2} \tr^r \wtX^{1/2} \| ,\label{e24}
\end{align}
for all $\alpha \in [0,1]$, where
\begin{equation}\label{e25}
    \delta_x = \left\| \wtX^{-1/2} \wtDX \wtS^{1/2} \right\|_F,~~~ \delta_s = \left\| \wtS^{-1/2} \wtDS \wtX^{1/2} \right\|_F.
\end{equation}
\end{lemma}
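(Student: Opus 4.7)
The plan is to treat the two inequalities separately, deriving \eqref{e23} first and then deducing \eqref{e24} from it together with the identity \eqref{e20}.

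For \eqref{e23}, I would exploit \eqref{e16}, namely $H_{\wtX^{1/2}}(\Wcal_X)=0$, and apply the last assertion of Lemma~\ref{lem3.5}: if $\Wcal_X=U_1+U_2$ with $U_1\in\Scal^n$, then $\|\Wcal_X\|_F\le \sqrt{2}\|U_2\|_F$. Writing out
$$\Wcal_X=\wtX^{-1/2}\wtDX\wtS\wtX^{1/2}+\wtX^{1/2}\wtDS\wtX^{1/2}+\wtX^{1/2}\wtS\wtX^{1/2}-\sigma\nu I-\wtX^{-1/2}\tr^r\wtX^{1/2},$$
the terms $\wtX^{1/2}\wtDS\wtX^{1/2}$, $\wtX^{1/2}\wtS\wtX^{1/2}$, and $\sigma\nu I$ are already in $\Scal^n$, while the first and last summands are generally not symmetric. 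The key Monteiro-style manipulation is to introduce the free parameter $\theta$ via
$$\wtX^{-1/2}\wtDX\wtS\wtX^{1/2}=\wtX^{-1/2}\wtDX\wtS^{1/2}\bigl(\wtS^{1/2}\wtX^{1/2}-\theta\nu\wtS^{-1/2}\wtX^{-1/2}\bigr)+\theta\nu\,\wtX^{-1/2}\wtDX\wtX^{-1/2},$$
where the last summand is symmetric because $\wtDX\in\Scal^n$. Pushing that symmetric piece into $U_1$, I can then take
$$U_2=\wtX^{-1/2}\wtDX\wtS^{1/2}\bigl(\wtS^{1/2}\wtX^{1/2}-\theta\nu\wtS^{-1/2}\wtX^{-1/2}\bigr)-\wtX^{-1/2}\tr^r\wtX^{1/2}.$$
The bound \eqref{e23} now follows from the triangle inequality, the submultiplicativity $\|AB\|_F\le\|A\|_F\|B\|$, and the definition \eqref{e25} of $\delta_x$.

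For \eqref{e24}, I would start from the identity \eqref{e20}, apply the Frobenius-norm triangle inequality termwise, and bound each summand. The first summand contributes $(1-\alpha)\|\wtX^{1/2}\wtS\wtX^{1/2}-\nu I\|_F=(1-\alpha)d(\wtX,\wtS)$ (using the invariance \hyperref[lem2.1a]{Lemma~2.1(a)} if one prefers to write it in scaled form). The $\alpha\Wcal_X$ term is bounded by \eqref{e23}. The quadratic cross-term is handled by the factorization
$$\wtX^{-1/2}\wtDX\wtDS\wtX^{1/2}=\bigl(\wtX^{-1/2}\wtDX\wtS^{1/2}\bigr)\bigl(\wtS^{-1/2}\wtDS\wtX^{1/2}\bigr),$$
so that $\|\wtX^{-1/2}\wtDX\wtDS\wtX^{1/2}\|_F\le\delta_x\delta_s$ via $\|AB\|_F\le\|A\|_F\|B\|_F$. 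The remaining residual term is carried along directly.

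The main technical obstacle is the decomposition used for \eqref{e23}: adding and subtracting $\theta\nu\wtX^{-1/2}\wtDX\wtX^{-1/2}$ to manufacture a symmetric piece is what makes Lemma~\ref{lem3.5} applicable, and it is crucial to verify that the residual term $-\wtX^{-1/2}\tr^r\wtX^{1/2}$ can be carried in $U_2$ without disturbing the symmetry of $U_1$; everything else is routine triangle-inequality and submultiplicativity bookkeeping that generalizes Lemma~3.6 of \cite{monteiro1998polynomial} to our inexact setting.
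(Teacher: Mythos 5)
Your proposal is correct and follows essentially the same route as the paper: the decomposition $\Wcal_X=U_1+U_2$ you construct (absorbing the symmetric piece $\theta\nu\,\wtX^{-1/2}\wtDX\wtX^{-1/2}$ into $U_1$ and keeping $\wtX^{-1/2}\wtDX\wtS^{1/2}(\wtS^{1/2}\wtX^{1/2}-\theta\nu\wtS^{-1/2}\wtX^{-1/2})-\wtX^{-1/2}\tr^r\wtX^{1/2}$ as $U_2$) is exactly the one used in the paper, followed by the same application of \eqref{e16}, Lemma~\ref{lem3.5}, and $\|AB\|_F\le\|A\|_F\|B\|$. Your derivation of \eqref{e24} from \eqref{e20}, with the quadratic term bounded by $\delta_x\delta_s$ through the factorization $(\wtX^{-1/2}\wtDX\wtS^{1/2})(\wtS^{-1/2}\wtDS\wtX^{1/2})$, likewise matches the paper's argument.
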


\begin{proof}
We follow a similar strategy to \cite{monteiro1998polynomial}, but modify the decomposition (namely the term $U_2$), to account for the inexactness residual term. Let $\theta \in \R{}$ be given. By \eqref{e13}, we have $\Wcal_X = U_1 + U_2$, where
\begin{align*}
    U_1 &= \wtX^{1/2} \wtDS \wtX^{1/2} + \theta \nu \wtX^{-1/2} \wtDX \wtX^{-1/2} + \wtX^{1/2} \wtS \wtX^{1/2} - \sigma \nu I \in \Scal^n,~\text{and} \\
    U_2 &= \wtX^{-1/2} \wtDX \wtS^{1/2} \left(\wtS^{1/2} \wtX^{1/2} - \theta \nu \wtS^{-1/2} \wtX^{-1/2}   \right) - \wtX^{-1/2} \tr^r \wtX^{1/2}.
\end{align*}
Then, applying \eqref{e16}, it follows $H_Q (\Wcal_X) = 0$ for $Q = \wtX^{1/2}$. This fact, combined with Lemma \ref{lem3.5} and the fact that $\| AB \|_F \leq \| A \|_F \| B \|$ for every $A, B \in \R{n \times n}$, from the definition of $U_2$ and $\delta_x$ it follows
\begin{align*}
    \| \Wcal_X \|_F \leq \sqrt{2} \| U_2 \|_F \leq \sqrt{2} \delta_x \left\| \wtS^{1/2} \wtX^{1/2} -  \theta \nu \wtS^{-1/2} \wtX^{-1/2} \right\| + \sqrt{2} \left\| \wtX^{-1/2} \tr^r \wtX^{1/2} \right\|_F,
\end{align*} 
i.e., \eqref{e23} holds. From here, we apply \eqref{e:7} and \eqref{e20}, and note that for $\alpha \in [0,1]$, by the definition of $\delta_x$ and $\delta_s$ we have:
\small
\begin{align*}
     &\left\|  \wtX^{-1/2} \left[ \wtX (\alpha) \wtS (\alpha) - \nu (\alpha) I \right] \wtX^{1/2} \right\|_F \\
     &\quad \quad\ \leq (1-\alpha) \left\| \wtX^{1/2} \wtS \wtX^{1/2} - \nu I \right\|_F + \alpha \| \Wcal_X \|_F + \alpha \delta_x \delta_s 
     + \alpha \left\| \wtX^{-1/2}  \left[ \tr^r - \frac{\trace{(R^r)}}{n} I \right] \wtX^{1/2}   \right\|\\
     &\quad \quad = (1-\alpha) d (\wtX, \wtS) + \alpha \| \Wcal_X \|_F + \alpha^2 \delta_x \delta_s + \alpha \left\| \wtX^{-1/2}  \left[ \tr^r - \frac{\trace{(R^r)}}{n} I \right] \wtX^{1/2}   \right\|.
\end{align*}
\normalsize
Combining the above result with \eqref{e23} implies \eqref{e24} and the proof is complete. 
\end{proof}

For ease of notation, we follow \cite{monteiro1998polynomial} in defining the quantity
 \begin{equation}\label{e26}
     \Phi_{\theta} (A, B) = \left\| A^{1/2} B^{1/2} - \theta \frac{A \bullet B}{n} A^{-1/2} B^{-1/2} \right\|_F
 \end{equation}
 for every $(A, B) \in \Scal_{++}^n \times \Scal_{++}^n$ and $\theta \in \R{}$. 
 
\begin{lemma}[Lemma 3.7 in \cite{monteiro1998polynomial}] If $d(X, S) \leq \gamma \nu$ for some $\gamma \in (0,1)$, then
\begin{subequations}
\begin{align}
    \left\| \wtX^{-1/2} \wtS^{-1/2}  \right\|^2 &\leq \frac{1}{(1-\gamma) \nu}, \label{e27}\\
    \left[ \Phi_{\theta} (\wtX, \wtS)  \right]^2 &\leq \frac{\gamma^2 + (1-\theta)^2 n}{1- \gamma} \nu. \label{e28}
\end{align}
\end{subequations}
\end{lemma}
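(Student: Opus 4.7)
The plan is to prove both bounds using the spectrum of $\widetilde{X}^{1/2}\widetilde{S}\widetilde{X}^{1/2}$ (equivalently $\widetilde{S}^{1/2}\widetilde{X}\widetilde{S}^{1/2}$), which is symmetric positive definite and similar to $\widetilde{X}\widetilde{S}$. First I would observe that by Lemma 2.1(a) (invariance of $d$ under the stated scaling), the hypothesis $d(X,S)\le\gamma\nu$ transfers to $d(\widetilde{X},\widetilde{S})=\|\widetilde{X}^{1/2}\widetilde{S}\widetilde{X}^{1/2}-\nu I\|_F\le\gamma\nu$. Letting $\lambda_1,\dots,\lambda_n$ denote the eigenvalues of $\widetilde{X}^{1/2}\widetilde{S}\widetilde{X}^{1/2}$, symmetry gives
\begin{equation*}
\sum_{i=1}^n (\lambda_i-\nu)^2 \;\le\; \gamma^2\nu^2,
\end{equation*}
hence $|\lambda_i-\nu|\le\gamma\nu$ for each $i$, and in particular $\lambda_i\ge(1-\gamma)\nu>0$. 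Additionally, $\sum_i\lambda_i=\trace(\widetilde{X}\widetilde{S})=n\nu$.

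For \eqref{e27}, the identity $\|A\|^2=\|A^\top A\|$ applied to $A=\widetilde{X}^{-1/2}\widetilde{S}^{-1/2}$ yields $\|\widetilde{X}^{-1/2}\widetilde{S}^{-1/2}\|^2=\lambda_{\max}(\widetilde{S}^{-1/2}\widetilde{X}^{-1}\widetilde{S}^{-1/2})=1/\lambda_{\min}(\widetilde{S}^{1/2}\widetilde{X}\widetilde{S}^{1/2})=1/\min_i\lambda_i\le 1/((1-\gamma)\nu)$, using that $\widetilde{S}^{1/2}\widetilde{X}\widetilde{S}^{1/2}$ is similar to $\widetilde{X}\widetilde{S}$ and thus shares the $\lambda_i$. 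This step is essentially mechanical.

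For \eqref{e28}, the main task is to rewrite $\Phi_\theta(\widetilde{X},\widetilde{S})^2$ as a sum over the $\lambda_i$. Setting $M=\widetilde{X}^{1/2}\widetilde{S}^{1/2}$, we have $M^{-\top}=\widetilde{X}^{-1/2}\widetilde{S}^{-1/2}$, and a direct trace expansion gives
\begin{equation*}
\Phi_\theta(\widetilde{X},\widetilde{S})^2
= \|M\|_F^2 - 2\theta\nu\,\trace(I) + \theta^2\nu^2\|M^{-1}\|_F^2,
\end{equation*}
because $\trace(M^\top M^{-\top})=\trace(I)=n$. Since $\|M\|_F^2=\trace(M^\top M)=\sum_i\lambda_i$ and $\|M^{-1}\|_F^2=\trace((MM^\top)^{-1})=\sum_i 1/\lambda_i$, one obtains the clean identity
\begin{equation*}
\Phi_\theta(\widetilde{X},\widetilde{S})^2
\;=\;\sum_{i=1}^n \frac{(\lambda_i-\theta\nu)^2}{\lambda_i}.
\end{equation*}

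From here the bound is routine. Using $\lambda_i\ge(1-\gamma)\nu$ in the denominator,
\begin{equation*}
\Phi_\theta(\widetilde{X},\widetilde{S})^2 \;\le\; \frac{1}{(1-\gamma)\nu}\sum_{i=1}^n(\lambda_i-\theta\nu)^2.
\end{equation*}
Writing $\lambda_i-\theta\nu=(\lambda_i-\nu)+(1-\theta)\nu$ and expanding, the cross term vanishes because $\sum_i(\lambda_i-\nu)=n\nu-n\nu=0$, leaving
\begin{equation*}
\sum_{i=1}^n(\lambda_i-\theta\nu)^2 \;=\; \sum_{i=1}^n(\lambda_i-\nu)^2 + (1-\theta)^2 n\nu^2 \;\le\; \gamma^2\nu^2+(1-\theta)^2n\nu^2,
\end{equation*}
which combined with the previous display yields \eqref{e28}. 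The only potentially delicate step is the derivation of the identity $\Phi_\theta^2=\sum(\lambda_i-\theta\nu)^2/\lambda_i$, since $M$ is not symmetric; once that algebraic reduction is secured, the remainder is arithmetic on eigenvalues.
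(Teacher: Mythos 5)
Your proof is correct: the reduction to the eigenvalues of $\wtX^{1/2}\wtS\wtX^{1/2}$, the identity $\Phi_\theta^2=\sum_i(\lambda_i-\theta\nu)^2/\lambda_i$, and the bounds $\lambda_i\ge(1-\gamma)\nu$, $\sum_i(\lambda_i-\nu)=0$ all check out. The paper itself gives no proof here (it cites Lemma 3.7 of Monteiro directly), and your argument is essentially the standard eigenvalue-based proof from that reference, so there is nothing to amend.
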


The following result establishes upper bounds on the Frobenius norms of terms involving the residuals. 

 \begin{lemma}\label{lemmaRbounds}
Suppose that the error tolerances are chosen such that the assumptions \eqref{e:RcAssumption1},  \eqref{e:scaledBoundr} and \eqref{e:scaledBound} hold. For $\beta \in (0, 1)$ we have
\begin{subequations}
\begin{align}
\| R^r \|_F \leq \beta \gamma \sigma \nu,~~\| \tr^r \|_F &\leq \beta \gamma \sigma \nu, \label{eRcBound} \\
\left\| \wtX^{-1/2} \tr^r \wtX^{1/2} \right\|_F &\leq \beta \gamma \sigma \nu, \label{eXRcBound} \\
\left\| \wtX^{-1/2} \left[ \tr^r - \frac{ \trace{(R^r)}}{n} I \right] \wtX^{1/2}   \right\|_F &\leq 2 \beta \gamma \sigma \nu.\label{eRcBoundTrace}
\end{align}
\end{subequations}
 \end{lemma}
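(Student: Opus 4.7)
The plan is to combine the three inexactness assumptions \eqref{e:RcAssumption1}--\eqref{e:scaledBound} with a centrality-based upper bound on the right-hand sides $R^c$, $\wtR^c$, and $\wtX^{-1/2}\wtR^c\wtX^{1/2}$.

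First I would show that the narrow-neighborhood condition $d(X,S) \leq \gamma\nu$ implies $\|R^c\|_F \leq \gamma\sigma\nu$, together with the analogous bounds $\|\wtR^c\|_F \leq \gamma\sigma\nu$ and $\|\wtX^{-1/2}\wtR^c\wtX^{1/2}\|_F \leq \gamma\sigma\nu$. The starting point is the decomposition $R^c = (\sigma - 1)\nu I - (H_P(XS) - \nu I)$; the identity $\|H_P(XS) - \nu I\|_F = d(X,S) \leq \gamma\nu$ from Lemma~2.1(b) of \cite{monteiro1998polynomial} (tight, using that $PXSP^{-1}$ is symmetric for the scalings we consider) controls the nontrivial term, and the scale invariance $d(\wtX, \wtS) = d(X,S)$ from Lemma~2.1(a) of \cite{monteiro1998polynomial} transfers the same estimate to the tilde quantities. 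With those centrality bounds in hand, \eqref{e:RcAssumption1} and \eqref{e:scaledBoundr} immediately give both parts of \eqref{eRcBound}, and \eqref{e:scaledBound} applied to the doubly-scaled residual gives \eqref{eXRcBound}.

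For \eqref{eRcBoundTrace}, the strategy is to isolate the trace correction via the triangle inequality:
\[
\left\| \wtX^{-1/2} \left[ \tr^r - \frac{\trace{(R^r)}}{n} I \right] \wtX^{1/2} \right\|_F \leq \left\| \wtX^{-1/2} \tr^r \wtX^{1/2} \right\|_F + \frac{|\trace{(R^r)}|}{n} \left\| \wtX^{-1/2} \wtX^{1/2} \right\|_F.
\]
The first summand is bounded by \eqref{eXRcBound}. For the second, I would combine the Cauchy--Schwarz estimate $|\trace{(R^r)}| \leq \sqrt{n}\,\|R^r\|_F$ with $\|\wtX^{-1/2} \wtX^{1/2}\|_F = \|I\|_F = \sqrt{n}$, which reduces the second summand to $\|R^r\|_F$; this quantity is in turn bounded by $\beta\gamma\sigma\nu$ via \eqref{eRcBound}. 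Adding the two summands produces the claimed factor of $2$.

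The hard part is the centrality-side estimate of Step~1. The naive triangle inequality applied to the decomposition of $R^c$ produces $\|R^c\|_F \leq \gamma\nu + (1-\sigma)\nu\sqrt{n}$, which matches the target $\gamma\sigma\nu$ only in the pure-centering case $\sigma = 1$ or after a careful coupling of $\sigma$, $\gamma$, and $n$ via the short-step schedule used in the IF-QIPM. Verifying that this coupling is consistent with a single choice of $\beta \in (0,1)$ uniformly across iterations is where the real substance lies; once it is in place, everything downstream is mechanical application of the triangle and Cauchy--Schwarz inequalities.
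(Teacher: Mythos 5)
Your handling of \eqref{eXRcBound} and \eqref{eRcBoundTrace} is essentially the paper's argument: identify the (scaled) residual bounds via \eqref{e:RcAssumption1}--\eqref{e:scaledBound}, then treat the trace correction by the triangle inequality together with $|\trace{(R^r)}| \leq \sqrt{n}\,\| R^r \|_F$ and $\left\| \tfrac{\trace{(R^r)}}{n} I \right\|_F = |\trace{(R^r)}|/\sqrt{n}$, which is exactly how the factor $2$ arises in the paper (your conjugation $\wtX^{-1/2}(cI)\wtX^{1/2}=cI$ makes the two computations identical). Likewise, your identification of $\| \wtR^c \|_F$ and $\| \wtX^{-1/2}\wtR^c\wtX^{1/2} \|_F$ with the centrality measure relative to the target $\sigma\nu I$, using the invariance $d(\wtX,\wtS)=d(X,S)$, mirrors the paper's proof.

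The gap is the one you flag yourself and then leave open: you never establish $\| R^c \|_F \leq \sigma\gamma\nu$ (equivalently $\|X^{1/2}SX^{1/2}-\sigma\nu I\|_F \leq \sigma\gamma\nu$), and \eqref{eRcBound} is precisely $\beta$ times this bound, so the lemma is not proved as written. Moreover, the ``careful coupling'' you hope will rescue the naive estimate $\| R^c \|_F \leq \gamma\nu + (1-\sigma)\sqrt{n}\,\nu = (\gamma+\delta)\nu$ cannot exist under membership in $\Ncal_F(\gamma)$ alone: at a perfectly centered iterate, $X^{1/2}SX^{1/2}=\nu I$ gives $\| R^c \|_F = (1-\sigma)\sqrt{n}\,\nu = \delta\nu$, which already exceeds $\sigma\gamma\nu$ for the paper's constants $\gamma=\delta=1/20$ (since $\sigma<1$). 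In other words, the inequality must enter as a standing assumption on the iterates rather than as a consequence of the neighborhood condition, and indeed the paper does not derive it either: its proof writes $\| \wtR^c \|_F = d_{\sigma}(X,S)$ and then invokes $d_{\sigma}(X,S)\leq\sigma\gamma\nu$ directly, consistent with the assertion in Section~\ref{ss:runT} that $\| R_k^c \|_F \leq \sigma\gamma\nu_k$ holds at every iteration. So you have correctly located the one step carrying real content, but a complete proof must either state that bound as a hypothesis or justify it separately; as submitted, the argument for \eqref{eRcBound} --- and hence for the chain feeding \eqref{eXRcBound} and \eqref{eRcBoundTrace} --- is incomplete.
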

 
 \begin{proof}
\noindent From equation \eqref{e:15a}, we have  $\wtR^c = \sigma \nu I - H_I(\wtX, \wtS)$. We first note that 
\begin{align*}
     \wtX^{-1/2}  \wtR^c \wtX^{1/2} =  \wtX^{-1/2} H_I(\sigma \nu I - \wtX \wtS) \wtX^{1/2} = \wtX^{-1/2}  [\sigma \nu I - \wtX \wtS] \wtX^{1/2} 
    &= \sigma \nu I - \wtX^{-1/2}  \wtX \wtS \wtX^{1/2}\\
    &= \sigma \nu I - \wtX^{1/2} \wtS \wtX^{1/2}.
\end{align*}
Thus, it follows
$$ \| \wtX^{-1/2}  \wtR^c \wtX^{1/2} \|_F = \| \sigma \nu I - \wtX^{1/2} \wtS \wtX^{1/2} \|_F = \|\wtX^{1/2} \wtS \wtX^{1/2} - \sigma \nu I \|_F = d_{\sigma} (\wtX, \wtS).$$
Further, applying Lemma \ref{lem2.1b}(a) one can observe
$$\| \wtR^c \|_F =  \|H_I (\wtX, \wtS)  - \sigma \nu I \|_F = d_{\sigma} (\wtX, \wtS) = d_{\sigma} (X, S) \leq \sigma \gamma \nu.$$
Combining this fact with assumptions \eqref{e:RcAssumption1} and \eqref{e:scaledBoundr} we have
\begin{align*}
\| R^r \|_F &\leq \beta \| R^c \|_F \leq \beta \sigma \gamma \nu,~\text{and}~
    \| \tr^r \|_F \leq \beta \| \wtR^c \|_F \leq \beta \sigma \gamma \nu,
\end{align*}
and hence \eqref{eRcBound} holds. By assumption \eqref{e:scaledBound}, it follows that
$$ \| \wtX^{-1/2} \tr^r \wtX^{1/2} \|_F \leq \beta \|\wtX^{-1/2} \wtR^c  \wtX^{1/2} \|_F \leq \beta \sigma \gamma \nu.$$

Recall that for two matrices $A, B \in \R{n \times n}$, $\trace{(AB)} \leq \| A \|_F \| B \|_F$, thus
$$ \trace{(R^r)} = \trace{(R^r I)} \leq \|R^r \|_F \|I \|_F = \sqrt{n} \|R^r \|_F.$$
Then, noting that $\| R^r \|_F \leq \beta \| R^c \| \leq \beta \sigma \gamma \nu$, it follows
\begin{align*}
    \left\| \wtX^{-1/2} \left[ \tr^r - \frac{ \trace{(R^r)}}{n} I \right] \wtX^{1/2}   \right\|_F &\leq \left\| \wtX^{-1/2}  \tr^r  \wtX^{1/2} \right\|_F + \left\| \frac{ \trace{(R^r)}}{n} I  \right\|_F \\
    &\leq \beta \sigma \gamma \nu +   \frac{\left\vert \trace{(R^r)}  \right\vert }{n} \sqrt{n} \\
    &\leq \beta \sigma \gamma \nu + \left\|  R^r \right\|_F \leq 2 \beta \sigma \gamma \nu,
\end{align*}
which completes the proof. 
\end{proof}

In the next result, we establish an upper bound on the terms $\delta_x$ and $\delta_s$ by adapting Lemma 3.8 of \cite{monteiro1998polynomial} to the inexact setting.

\begin{lemma}\label{lemma3.8}
If $(X, y, S) \in \Ncal_F (\gamma)$ for some $\gamma > 0$ satisfying 
\begin{equation}\label{e29}
  2 \sqrt{2}  \frac{\gamma}{1-\gamma} \leq 1,
\end{equation}
and $\beta \in (0,1)$ such that 
$$ \beta \sigma  \leq \sqrt{ \frac{\gamma^2 + (1-\sigma)^2 n}{1- \gamma}},$$
then
$$ \max \{ \delta_x , \delta_s \} \leq 2 \left( \Phi_{\sigma} (\wtX , \wtS) + \sqrt{ \frac{\gamma^2 + (1-\sigma)^2 n}{1- \gamma} \nu} \right),$$
where $\delta_x$, $\delta_s$ and $\Phi_{\sigma} (\cdot, \cdot)$ are defined by \eqref{e25} and \eqref{e26}.
\end{lemma}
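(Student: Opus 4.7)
The plan is to follow the strategy of Monteiro's proof of Lemma 3.8 in \cite{monteiro1998polynomial}, carrying the residual term $\tr^r$ through each step and absorbing its contribution at the end via Lemma \ref{lemmaRbounds} together with the hypothesis on $\beta$. The argument reduces to combining the upper bound on $\|\Wcal_X\|_F$ already established in \eqref{e23} with a matching lower bound on $\|\Wcal_X\|_F$ in terms of $\delta_x$ and $\delta_s$, and then solving the resulting self-consistent inequality using the centrality hypothesis \eqref{e29}.

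First I would apply \eqref{e23} with the choice $\theta = \sigma$, yielding
\begin{equation*}
\|\Wcal_X\|_F \leq \sqrt{2}\,\delta_x\,\bigl\|\wtS^{1/2}\wtX^{1/2} - \sigma \nu \wtS^{-1/2}\wtX^{-1/2}\bigr\| + \sqrt{2}\,\bigl\|\wtX^{-1/2}\tr^r\wtX^{1/2}\bigr\|_F.
\end{equation*}
The operator-norm factor is at most $\Phi_\sigma(\wtX,\wtS)\cdot\|\wtX^{-1/2}\wtS^{-1/2}\|^2$-type quantity, which by \eqref{e27} is controlled by $1/\sqrt{(1-\gamma)\nu}$, and \eqref{e28} then bounds $\Phi_\sigma(\wtX,\wtS)$ by $\sqrt{(\gamma^2 + (1-\sigma)^2 n)\nu/(1-\gamma)}$. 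Lemma \ref{lemmaRbounds} controls the residual factor via $\|\wtX^{-1/2}\tr^r\wtX^{1/2}\|_F \leq \beta\sigma\gamma\nu$, and the hypothesis $\beta\sigma \leq \sqrt{(\gamma^2 + (1-\sigma)^2 n)/(1-\gamma)}$ is tailored precisely so that this residual contribution is dominated by the $\sqrt{(\gamma^2 + (1-\sigma)^2 n)\nu/(1-\gamma)}$ term already appearing in the bound on $\Phi_\sigma$.

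Next I would derive a complementary lower bound on $\|\Wcal_X\|_F$ in terms of $\delta_s$ by expanding \eqref{e13} directly and exploiting the orthogonality $\wtDX \bullet \wtDS = 0$ from \eqref{e17}. Writing $\Wcal_X$ as the sum of the $\wtX^{1/2}\wtDS\wtX^{1/2}$ contribution (whose Frobenius norm is at least $\delta_s$ up to a factor involving $\|\wtX^{-1/2}\wtS^{-1/2}\|$) and a remainder that can be bounded by $\Phi_\sigma(\wtX,\wtS)$ plus $\|\wtX^{-1/2}\tr^r\wtX^{1/2}\|_F$, I obtain an inequality of the form $\delta_s \leq c_1 \cdot \|\Wcal_X\|_F + c_2 \cdot \Phi_\sigma + c_3 \cdot \|\wtX^{-1/2}\tr^r\wtX^{1/2}\|_F$ with explicit constants determined by \eqref{e27}. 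Substituting the upper bound on $\|\Wcal_X\|_F$ from the previous paragraph gives a self-consistent inequality of the form $\max\{\delta_x, \delta_s\} \leq \tfrac{1}{2}\max\{\delta_x,\delta_s\} + (\Phi_\sigma + \sqrt{(\gamma^2+(1-\sigma)^2 n)\nu/(1-\gamma)})$, where the coefficient $1/2$ is produced precisely by the hypothesis $2\sqrt{2}\gamma/(1-\gamma) \leq 1$ from \eqref{e29}.

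Rearranging this inequality yields $\max\{\delta_x, \delta_s\} \leq 2(\Phi_\sigma(\wtX,\wtS) + \sqrt{(\gamma^2 + (1-\sigma)^2 n)\nu/(1-\gamma)})$, which is the claimed bound. The main obstacle I anticipate is bookkeeping the residual $\tr^r$ consistently across both the upper and lower bounds on $\|\Wcal_X\|_F$, since the Pythagorean decomposition underlying the lower bound relies on the exact orthogonality $\wtDX \bullet \wtDS = 0$, while the scaled complementarity equation \eqref{e:15a} is only inexact. The remedy is that the extra terms introduced by $\tr^r$ appear only through the Frobenius norm $\|\wtX^{-1/2}\tr^r\wtX^{1/2}\|_F$, which the hypothesis on $\beta$ and Lemma \ref{lemmaRbounds} are designed to absorb into the $\Phi_\sigma$ term without harming the final constant of $2$.
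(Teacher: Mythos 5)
Your overall plan (bound $\|\Wcal_X\|_F$ by $\delta_x$ times a small factor, then close a self-consistent inequality using \eqref{e29}) captures half of the paper's argument, but the way you propose to bring $\delta_x$ and $\delta_s$ into one inequality has a genuine gap. The paper's proof hinges on an identity you do not reproduce: by the orthogonality \eqref{e17}, the matrices $\wtX^{-1/2}\wtDX\wtS^{1/2}$ and $\wtX^{1/2}\wtDS\wtS^{-1/2}$ are Frobenius-orthogonal, so
\begin{equation*}
\max\{\delta_x,\delta_s\}\;\leq\;\left(\delta_x^2+\delta_s^2\right)^{1/2}\;=\;\left\| \wtX^{-1/2}\wtDX\wtS^{1/2}+\wtX^{1/2}\wtDS\wtS^{-1/2}\right\|_F,
\end{equation*}
and this sum is rewritten via \eqref{e13} (multiplying on the left by $\wtX^{-1/2}$ and on the right by $\wtS^{-1/2}$, which is exactly \eqref{e30}) as $\Wcal_X\wtX^{-1/2}\wtS^{-1/2}+\sigma\nu\wtX^{-1/2}\wtS^{-1/2}-\wtX^{1/2}\wtS^{1/2}+\wtX^{-1/2}\tr^r\wtS^{-1/2}$, so that $\delta_x$ and $\delta_s$ are controlled \emph{simultaneously}. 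Your scheme instead produces only an inequality of the form $\delta_s\leq c_1\|\Wcal_X\|_F+c_2\Phi_\sigma+c_3\|\wtX^{-1/2}\tr^r\wtX^{1/2}\|_F$; since your upper bound on $\|\Wcal_X\|_F$ is in terms of $\delta_x$, you end up bounding $\delta_s$ by an expression containing $\delta_x$, but nowhere do you bound $\delta_x$ itself, so the claimed conclusion $\max\{\delta_x,\delta_s\}\leq\tfrac12\max\{\delta_x,\delta_s\}+(\cdots)$ does not follow when the maximum is attained by $\delta_x$.

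The decomposition behind your ``lower bound'' step is also not sound. If you split $\Wcal_X$ into the $\wtX^{1/2}\wtDS\wtX^{1/2}$ part plus a remainder, the remainder contains $\wtX^{-1/2}\wtDX\wtS\wtX^{1/2}$, which is \emph{not} bounded by $\Phi_\sigma+\|\wtX^{-1/2}\tr^r\wtX^{1/2}\|_F$: its Frobenius norm is of order $\delta_x\|\wtS^{1/2}\wtX^{1/2}\|\approx\delta_x\sqrt{(1+\gamma)\nu}$, and after multiplying by $\|\wtX^{-1/2}\wtS^{-1/2}\|\leq 1/\sqrt{(1-\gamma)\nu}$ from \eqref{e27} it contributes a coefficient at least $\sqrt{(1+\gamma)/(1-\gamma)}>1$ on $\delta_x$, which \eqref{e29} cannot absorb; moreover the two pieces in your split are not Frobenius-orthogonal (orthogonality holds only for the $\wtS^{\pm1/2}$-scaled pair above), so no Pythagorean argument applies to them. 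A smaller but real issue: you invoke \eqref{e23} with $\theta=\sigma$, whereas the paper uses $\theta=1$ together with \eqref{e28} at $\theta=1$; with $\theta=\sigma$ the factor multiplying $\delta_x$ becomes $\sqrt{2}\,[\gamma^2+(1-\sigma)^2 n]^{1/2}/(1-\gamma)$ rather than $\sqrt{2}\,\gamma/(1-\gamma)$, and since $(1-\sigma)^2 n$ need not be small, \eqref{e29} no longer yields the coefficient $1/2$ (nor, in general, any coefficient below one), so the final constant $2$ would not come out. Your handling of the residual through Lemma \ref{lemmaRbounds} and the hypothesis on $\beta\sigma$ is the right idea and matches the paper; the missing ingredients are the orthogonality identity \eqref{e30} and the choice $\theta=1$.
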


\begin{proof}
From \eqref{e13}, one can see 
\begin{align}\label{e30}
      \wtX^{1/2} \wtDS \wtS^{-1/2} +  \wtX^{-1/2} \wtDX \wtS^{1/2} &= \Wcal_{X} \wtX^{-1/2} \wtS^{-1/2} + \sigma \nu \wtX^{-1/2} \wtS^{-1/2} - \wtX^{1/2} \wtS^{1/2} \nonumber \\
      &+ (\wtX^{-1/2} \tr^r \wtX^{1/2}) \wtX^{-1/2} \wtS^{-1/2}. 
\end{align} 
Since $\gamma \in (0,1)$, it follows that $(1-\gamma) \in (0,1)$ so we have $(1-\gamma)  < (1 - \gamma)^{1/2}$ and hence $$ \frac{1}{(1 - \gamma)^{1/2}} <  \frac{1}{1 - \gamma}.$$

From \eqref{e17}, observe that the two terms on the left hand side of \eqref{e30} are orthogonal. Combining this fact with \eqref{e25}, \eqref{e30}, \eqref{e26}, \eqref{e14}, \eqref{e23} with $\theta = 1$; and \eqref{e27}, \eqref{e28} with $\theta = 1$ and \eqref{e29}, it follows that:
\begingroup
\allowdisplaybreaks
\begin{align*}
    \max &\{ \delta_x, \delta_s \} \leq \left( \delta_x^2 + \delta_s^2 \right)^{1/2} = \left( \left\| \wtX^{-1/2} \wtDX \wtS^{1/2} \right\|^2_F + \left\| \wtS^{-1/2} \wtDS \wtX^{1/2} \right\|^2_F \right)^{1/2} \\
    &= \left\| \wtX^{-1/2} \wtDX \wtS^{1/2} + \wtX^{1/2} \wtDS \wtS^{-1/2}  \right\|_F  \\
    &\leq \left\| \Wcal_X \right\|_F \left\| \wtX^{-1/2} \wtS^{-1/2} \right\| + \left\| \wtX^{1/2} \wtS^{1/2} - \sigma \nu \wtX^{-1/2} \wtS^{-1/2} \right\| + \left\| (\wtX^{-1/2} \tr^r \wtX^{1/2}) \wtX^{-1/2} \wtS^{-1/2} \right\|_F~~~(\text{by}~\eqref{e30})\\
      &=\left\| \Wcal_X \right\|_F \left\| \wtX^{-1/2} \wtS^{-1/2} \right\| + \left\| \wtX^{1/2} \wtS^{1/2} - \sigma \nu \wtX^{-1/2} \wtS^{-1/2} \right\| + \left\| \wtX^{-1/2} \tr^r  \wtS^{-1/2} \right\|_F\\
    &\leq \sqrt{2} \delta_x \left\| \wtX^{1/2} \wtS^{1/2} - \nu \wtX^{-1/2} \wtS^{-1/2} \right\|_F \left\| \wtX^{-1/2} \wtS^{-1/2} \right\| + \Phi_{\sigma} (\wtX, \wtS) + (1 + \sqrt{2}) \left\| \wtX^{-1/2} \tr^r \wtX^{1/2} \right\|_F \left\| \wtX^{-1/2} \wtS^{-1/2} \right\|  \\
    &\leq \sqrt{2} \delta_x \frac{\gamma \nu^{1/2}}{(1-\gamma)^{1/2}} \frac{1}{(1-\gamma)^{1/2} \nu^{1/2}} + \Phi_{\sigma} (\wtX, \wtS) + (1 + \sqrt{2}) \beta \sigma \gamma \nu \frac{1}{(1-\gamma)^{1/2} \nu^{1/2}} \\
    &= \sqrt{2} \delta_x \frac{\gamma \nu^{1/2}}{(1-\gamma)^{1/2}} \frac{1}{(1-\gamma)^{1/2} \nu^{1/2}} + \Phi_{\sigma} (\wtX, \wtS) +  \left[(1 + \sqrt{2}) \frac{\gamma}{(1-\gamma)^{1/2}} \right] \beta \sigma \nu^{1/2} \\
    &< \sqrt{2} \delta_x \frac{\gamma \nu^{1/2}}{(1-\gamma)^{1/2}} \frac{1}{(1-\gamma)^{1/2} \nu^{1/2}} + \Phi_{\sigma} (\wtX, \wtS) +  \left[(1 + \sqrt{2}) \frac{\gamma}{(1-\gamma)} \right] \beta \sigma \nu^{1/2} \\
    &\leq \sqrt{2} \delta_x \frac{\gamma \nu^{1/2}}{(1-\gamma)^{1/2}} \frac{1}{(1-\gamma)^{1/2} \nu^{1/2}} + \Phi_{\sigma} (\wtX, \wtS) +  \left[ 2\sqrt{2} \frac{\gamma}{(1-\gamma)} \right] \beta \sigma \nu^{1/2} \\
    &\leq \sqrt{2} \delta_x \frac{\gamma \nu^{1/2}}{(1-\gamma)^{1/2}} \frac{1}{(1-\gamma)^{1/2} \nu^{1/2}} + \Phi_{\sigma} (\wtX, \wtS) +  \beta \sigma \nu^{1/2} \\
    &\leq \sqrt{2} \delta_x \frac{\gamma \nu^{1/2}}{(1-\gamma)^{1/2}} \frac{1}{(1-\gamma)^{1/2} \nu^{1/2}} + \Phi_{\sigma} (\wtX, \wtS) + \sqrt{ \frac{\gamma^2 + (1-\sigma)^2 n}{1- \gamma} \nu}   \\
    &\leq \frac{\delta_x}{2} + \Phi_{\sigma} (\wtX, \wtS) + \sqrt{ \frac{\gamma^2 + (1-\sigma)^2 n}{1- \gamma} \nu} \leq \frac{1}{2} \max \{ \delta_x, \delta_s \} + \Phi_{\sigma} (\wtX, \wtS) + \sqrt{ \frac{\gamma^2 + (1-\sigma)^2 n}{1- \gamma} \nu} ,
\end{align*}
\endgroup
and the proof is complete. 
\end{proof}

The final result of this section, analogous to Lemma 3.9 of \cite{monteiro1998polynomial}, follows from the above lemmas. 
\begin{lemma}\label{lem:3.9}
If $(X, y, S) \in \Ncal_F (\gamma)$ for some $\gamma > 0$ satisfying \eqref{e29}, then, for every $\alpha \in [0, 1]$, $\beta \in (0,1)$, and $\eta > 0$, we have  
\begin{align*}
         &\left\|  \wtX^{-1/2} \left[ \wtX (\alpha) \wtS (\alpha) - \nu (\alpha) I \right] \wtX^{1/2} \right\|_F \\
         &~~~\leq \left( (1-\alpha) \gamma + 4 \sqrt{2} \alpha \frac{\gamma \left[ \gamma^2 + (1-\sigma)^2 n \right]^{1/2}}{1-\gamma} + 16 \alpha^2 \frac{\gamma^2 + (1-\sigma)^2 n}{1-\gamma} +\alpha (2+ \sqrt{2}) \beta  \sigma  \gamma   \right) \nu.
\end{align*}
\end{lemma}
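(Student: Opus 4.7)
The plan is to derive this estimate by substituting the preceding bounds into inequality \eqref{e24} of Lemma 3.6 taken with $\theta = \sigma$, then grouping terms. The right-hand side of \eqref{e24} already decomposes the Frobenius norm on the left into four contributions, each of which we control using a separate result from earlier in the section. Throughout, I will use the fact that $(X, y, S) \in \Ncal_F(\gamma)$ together with the scaling invariance in Lemma 2.1(a), which gives $d(\wtX, \wtS) = d(X, S) \leq \gamma \nu$.

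First I would bound the centrality term: Lemma 2.1(a) and the neighborhood hypothesis give $(1 - \alpha) d(\wtX, \wtS) \leq (1 - \alpha) \gamma \nu$, which supplies the first summand of the target. Next I would turn to the quadratic-in-$\alpha$ piece. Using \eqref{e28} with $\theta = \sigma$, I obtain $\Phi_\sigma(\wtX, \wtS) \leq \bigl[(\gamma^2 + (1 - \sigma)^2 n)/(1 - \gamma)\bigr]^{1/2} \nu^{1/2}$, and then Lemma \ref{lemma3.8} (the inexact adaptation of Monteiro's Lemma 3.8) gives
\[
\max\{\delta_x, \delta_s\} \;\leq\; 2 \Phi_\sigma(\wtX, \wtS) + 2 \sqrt{\tfrac{\gamma^2 + (1 - \sigma)^2 n}{1 - \gamma}\,\nu} \;\leq\; 4 \sqrt{\tfrac{\gamma^2 + (1 - \sigma)^2 n}{1 - \gamma}\,\nu}.
\]
Multiplying these bounds yields $\alpha^2 \delta_x \delta_s \leq 16 \alpha^2 \tfrac{\gamma^2 + (1 - \sigma)^2 n}{1 - \gamma}\, \nu$, which is the third summand. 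The cross term $\alpha \sqrt{2}\, \delta_x \, \|\wtS^{1/2} \wtX^{1/2} - \sigma \nu \wtS^{-1/2} \wtX^{-1/2}\|$ arising in the proof of \eqref{e24} is handled by using the estimate $\|\cdot\| \leq \Phi_\sigma(\wtX, \wtS)$ for the operator norm, combined with the same bounds from Lemma \ref{lemma3.8} and \eqref{e28}, together with the centrality bound $d(\wtX, \wtS) \leq \gamma \nu$ to extract one factor of $\gamma$, producing the middle contribution $4 \sqrt{2}\, \alpha\, \gamma \bigl[\gamma^2 + (1 - \sigma)^2 n\bigr]^{1/2}/(1 - \gamma) \cdot \nu$.

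Finally I would deal with the two residual terms in \eqref{e24}. Lemma \ref{lemmaRbounds} gives $\|\wtX^{-1/2} \tr^r \wtX^{1/2}\|_F \leq \beta \gamma \sigma \nu$ and $\|\wtX^{-1/2}[\tr^r - \tfrac{1}{n}\trace(R^r) I] \wtX^{1/2}\|_F \leq 2 \beta \gamma \sigma \nu$. Combined with the coefficients $\alpha \sqrt{2}$ and $\alpha$ appearing in front of these terms in \eqref{e24}, the two residual contributions merge into the single summand $\alpha (2 + \sqrt{2}) \beta \sigma \gamma \nu$. Collecting the four estimated contributions gives exactly the right-hand side claimed in the lemma.

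The main obstacle is purely algebraic bookkeeping: the four summands come from distinct pieces of \eqref{e24}, and one must apply the bounds on $\delta_x$, $\delta_s$, $\Phi_\sigma(\wtX, \wtS)$ and the residuals in a precisely coordinated way so that the final constants line up as $(1 - \alpha)$, $4 \sqrt{2}$, $16$, and $2 + \sqrt{2}$. In particular, obtaining the single $\gamma$-factor in the middle summand (rather than only $[\gamma^2 + (1 - \sigma)^2 n]^{1/2}$) requires using the centrality bound $d(\wtX, \wtS) \leq \gamma \nu$ together with Lemma \ref{lemma3.8}, instead of just substituting the worst-case bound on $\Phi_\sigma$ twice.
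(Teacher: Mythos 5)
Your handling of the first, third, and fourth summands matches the paper's proof exactly: $(1-\alpha)d(\wtX,\wtS)\leq(1-\alpha)\gamma\nu$, the bound $\alpha^2\delta_x\delta_s\leq 16\alpha^2\frac{\gamma^2+(1-\sigma)^2n}{1-\gamma}\nu$ via Lemma \ref{lemma3.8} and \eqref{e28} with $\theta=\sigma$, and the residual terms $\alpha\sqrt{2}\,\beta\sigma\gamma\nu+\alpha\,2\beta\sigma\gamma\nu=\alpha(2+\sqrt{2})\beta\sigma\gamma\nu$ via Lemma \ref{lemmaRbounds}. The gap is in the cross term. You apply \eqref{e24} with $\theta=\sigma$, so the cross term you must control is $\sqrt{2}\,\alpha\,\delta_x\,\|\wtS^{1/2}\wtX^{1/2}-\sigma\nu\wtS^{-1/2}\wtX^{-1/2}\|\leq\sqrt{2}\,\alpha\,\delta_x\,\Phi_\sigma(\wtX,\wtS)$, and the only available bounds give $\delta_x\leq 4[\frac{\gamma^2+(1-\sigma)^2n}{1-\gamma}\nu]^{1/2}$ and $\Phi_\sigma(\wtX,\wtS)\leq[\frac{\gamma^2+(1-\sigma)^2n}{1-\gamma}\nu]^{1/2}$, which produces $4\sqrt{2}\,\alpha\,\frac{\gamma^2+(1-\sigma)^2n}{1-\gamma}\nu$ --- strictly weaker than the claimed middle summand $4\sqrt{2}\,\alpha\,\frac{\gamma[\gamma^2+(1-\sigma)^2n]^{1/2}}{1-\gamma}\nu$ whenever $\sigma<1$. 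Your suggested repair, ``use $d(\wtX,\wtS)\leq\gamma\nu$ to extract one factor of $\gamma$,'' has no mechanism behind it: neither $\delta_x$ nor $\Phi_\sigma$ admits a bound of the form $\gamma\sqrt{\nu/(1-\gamma)}$ in general. Indeed, at a perfectly centered point $\wtX\wtS=\nu I$ one has $\Phi_\sigma(\wtX,\wtS)=(1-\sigma)\nu\,\|\wtX^{-1/2}\wtS^{-1/2}\|_F\approx(1-\sigma)\sqrt{n\nu}$, so the $(1-\sigma)^2 n$ contribution is genuinely present and cannot be replaced by $\gamma$.

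The correct move, which is what the paper does, is to exploit the fact that \eqref{e24} holds for \emph{every} $\theta$ and to choose $\theta=1$ there (while keeping $\theta=\sigma$ only inside Lemma \ref{lemma3.8}). Then the cross term becomes $\sqrt{2}\,\alpha\,\delta_x\,\Phi_1(\wtX,\wtS)$, and \eqref{e28} with $\theta=1$ kills the $(1-\theta)^2n$ contribution, giving $\Phi_1(\wtX,\wtS)\leq\gamma\sqrt{\nu}/(1-\gamma)^{1/2}$; multiplying by $\delta_x\leq 4[\frac{\gamma^2+(1-\sigma)^2n}{1-\gamma}\nu]^{1/2}$ yields exactly the middle summand with the single $\gamma$-factor. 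As written, your decomposition with $\theta=\sigma$ proves only a weaker inequality, not the lemma as stated (although, as an aside, that weaker form would still be absorbed into the $21.7$ constant used in the subsequent convergence theorem).
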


\begin{proof}
We apply \eqref{e24} with $\theta = 1$, Lemma \ref{lem2.1a} (a), \eqref{e8}, the assumption $d(X,S) \leq \gamma \nu$, Lemma \ref{lemmaRbounds}, Lemma \ref{lemma3.8} and \eqref{e28} where we choose $\theta = 1$ and $\theta = \sigma$. We have:
\small
\begin{align*}
    &\left\|  \wtX^{-1/2} \left[ \wtX (\alpha) \wtS (\alpha) - \nu (\alpha) I \right] \wtX^{1/2} \right\|_F \\
    &~~\leq (1- \alpha) d(X,S) + \sqrt{2} \alpha \delta_x \Phi_1 (\wtX, \wtS) + \alpha^2 \delta_x \delta_s + \alpha  (2+ \sqrt{2}) \beta  d_{\sigma} (X,S) \\
    &~~\leq (1- \alpha) \gamma \nu + 2 \sqrt{2} \alpha  \left( \Phi_{\sigma} (\wtX , \wtS) + \sqrt{ \frac{\gamma^2 + (1-\sigma)^2 n}{1- \gamma} \nu} \right) \Phi_1 (\wtX, \wtS) + \alpha^2 \left[ 2 \Phi_{\sigma} (\wtX , \wtS) + 2 \sqrt{ \frac{\gamma^2 + (1-\sigma)^2 n}{1- \gamma} \nu} \right]^2\\ &~~~~~~+ \alpha (2+ \sqrt{2}) \beta  \sigma  \gamma \nu \\
    &~~\leq (1- \alpha) \gamma \nu + 2 \sqrt{2} \alpha  \left(2 \sqrt{ \frac{\gamma^2 + (1-\sigma)^2 n}{1- \gamma} \nu} \right) \Phi_1 (\wtX, \wtS) + \alpha^2 \left[ 4 \sqrt{ \frac{\gamma^2 + (1-\sigma)^2 n}{1- \gamma} \nu} \right]^2 + \alpha (2+ \sqrt{2}) \beta  \sigma  \gamma \nu \\
     &~~= (1- \alpha) \gamma \nu + 4 \sqrt{2} \alpha  \sqrt{ \frac{\gamma^2 + (1-\sigma)^2 n}{1- \gamma} \nu} \cdot \Phi_1 (\wtX, \wtS) + 16 \alpha^2 \left[  \sqrt{ \frac{\gamma^2 + (1-\sigma)^2 n}{1- \gamma} \nu} \right]^2 + \alpha (2+ \sqrt{2}) \beta  \sigma  \gamma \nu \\
     &~~\leq \left( (1-\alpha) \gamma + 4 \sqrt{2} \alpha \frac{\gamma \left[ \gamma^2 + (1-\sigma)^2 n \right]^{1/2}}{1-\gamma} + 16 \alpha^2 \frac{\gamma^2 + (1-\sigma)^2 n}{1-\gamma} +\alpha (2+ \sqrt{2}) \beta  \sigma  \gamma   \right) \nu.
\end{align*}
\normalsize
The proof is complete. 
\end{proof}

\section{Quantum Interior Point Methods for SDO} \label{s:IFQIPM}
In this section we present our Inexact-Feasible and Inexact-Infeasible QIPMs. For the IF-QIPM, we establish polynomial convergence, then discuss how to implement the block-encoding of the Newton linear system, and conduct the overall running time analysis. Though we present the II-QIPM in its entirety, we leave the discussion on convergence, and implementing the block-encodings for the corresponding Newton linear system to the Appendix. 

\subsection{An Inexact-Feasible QIPM for SDO}
We quantize a general IF-IPM and the resulting scheme is described in Algorithm \ref{alg:IF-QIPM}. There is no classical counterpart in the literature for SDO, as the nullspace representation of the Newton system destroys the symmetry of the Newton system, and is thus impractical for classical computing. The algorithm has several parameters, the most
important of which is the optimality gap tolerance $\epsilon$. Details
for all steps of Algorithm \ref{alg:IF-QIPM} are discussed
subsequently in this paper.

In our initialization steps, we must calculate bases of the nullspace and rowspace of $\Acal_s$\label{p:tlsPage}. For this step, one can either use Gaussian elimination, or the QR-factorization of $\Acal_s$, which is typically more stable from a numerical point of view. Theoretically this can be accomplished using $\Ocal(n^{2 \omega})$ classical arithmetic operations
where $\omega = 2.37$ is the matrix multiplication exponent \cite{burgisser2013algebraic, strassen1969gaussian}, although this cost can be potentially reduced much closer to $\Ocal(n^{4})$ if the initial data is sparse and we use sparse matrix-vector multiplication, see \cite{yuster2005fast}. It is also possible to trivially obtain a basis for the nullspace and the range space for ``free" if the SDO problem is given in canonical form (inequalities instead of equalities): in that case, using the Self-Dual embedding model for that form, the nullspace basis matrix can be constructed trivially too from the coefficient matrix (without needing a QR-factorization or Gaussian elimination). Crucially, the computation of these bases need only be carried out once. That is, we calculate $\Acal_s$ and $Q_2$ one time, and store these matrices in QRAM before the algorithm begins, as they remain unchanged for the duration of the algorithm. This step is akin to the preprocessing steps (e.g., Cholesky factorization, ordering of basis) typical of many optimization algorithms.

In steps \ref{IF-QIPM-inv} and \ref{IF-QIPM-mult} of Algorithm
\ref{alg:IF-QIPM} we require classical matrix multiplication and
inversion of $n \times n$ matrices. These steps take $\Ocal (n^{\omega})$ classical arithmetic operations. A detailed analysis for step \ref{IF-QIPM-mult} can be found in Proposition \ref{prop:NTcompact}. In steps \ref{IF-QIPM-solve} and \ref{IF-QIPM-tom}, we employ
a QLSA and the tomography routine which
reconstructs, with high probability, a classical description of the
solution to the Newton linear system. Following
our discussion on tomography in Section \ref{s:tomography}, this can be accomplished in $\widetilde{\Ocal}_{n, \frac{1}{\xi_k}}
\left( \frac{n^2}{\xi_k} T_{\text{LS}}\right)$ time, where, for
the NT direction, $T_{LS}$ is the
time needed to prepare and solve the NT linear system \label{p:tlsPage}, and $\xi_k$ is
the error to which we perform our tomography steps to extract a classical estimate of the solution to the Newton linear system. A discussion on how to set $\xi_k$ at each iterate is provided at the beginning of Section \ref{ss:runT}. Then, in step \ref{IF-QIPM-base}, we use our classical estimates of $\overbar{\Delta z}$ and $\overbar{\Delta y}$ to classically calculate $\overbar{\Delta X}$ and $\overbar{\Delta S}$. This step amounts to matrix-vector multiplication of dimension $n^2$, and requires $\Ocal(n^4)$ arithmetic operations. Finally, we classically update the current solutions to the SDOP $X_k$, $y_k$ and $S_k$, and
the central path parameter $\nu_k$.

\begin{algorithm}[tb]
\SetAlgoLined
\KwIn{$\epsilon, \delta > 0$;  $\sigma = 1 - \delta/\sqrt{n}$; $\beta, \gamma \in (0,1)$ \\
Choose a nonsingular matrix $P$ from Monteiro-Zhang family of search directions \\
Choose $(X_0, y_0, S_0) \in \Ncal_F (\gamma)$\\
Set $\nu_0 = \frac{X_0 \bullet S_0}{n}$ \\
Store $(X_0, y_0, S_0)$ in QRAM \\
\noindent Compute bases for $\Rcal (\Acal_s) = \Acal_s^{\top}$ and $\Null (\Acal_s) = Q_2$ \\
\noindent Store $\Acal_s$ and $Q_2$ in QRAM }
    \While{$\nu > \epsilon$}{ 
    \begin{enumerate}
    \item Update central path parameter $\nu_k \leftarrow \frac{X_k \bullet S_k}{n} $
    \item Compute matrices $X_k^{-1}$ and $S_k^{-1}$ classically, and store these matrices in QRAM  \label{IF-QIPM-inv}
    \item Compute scaling matrix $P_k$, right hand side matrix, and necessary matrix products for Newton system \eqref{e:IF-Newton} and store in QRAM   \label{IF-QIPM-mult}
        \item Using block-encodings, solve \eqref{e:IF-Newton} to construct inexact search direction $ | \Delta z_k  \circ \Delta y_k \rangle$   \label{IF-QIPM-solve}
        \item Obtain classical estimate $\overbar{\Delta z}_k, \overbar{\Delta y}_k$ of $\Delta z_k,  \Delta y_k$ using vector state tomography \label{IF-QIPM-tom}
        \item Use classical estimate $\overbar{\Delta z}_k, \overbar{\Delta y}_k$ to obtain classical estimate  $\overbar{\Delta X}_k, \overbar{\Delta S}_k$ of $\Delta X_k, \Delta S_k$ \label{IF-QIPM-base}
        \normalsize
        \item Update current solution  
        \begin{align*}
           X_{k+1} &\leftarrow X_k + \Delta X_k,~ S_{k+1} \leftarrow S_k + \Delta S_k ~\text{and}~y_{k+1} \leftarrow y_k + \Delta y_k \\
           k &\leftarrow  k+1
        \end{align*} \label{IF-QIPM-update}
    \end{enumerate}
  } 
 \caption{Inexact-Feasible Quantum Interior Point Method}
\label{alg:IF-QIPM}
\end{algorithm}

Here we establish that Algorithm \ref{alg:IF-QIPM} converges for a specific choice of the error
parameters, discuss how to construct the Newton linear
system for the Nesterov-Todd direction, and
provide a complexity analysis for the chosen parameters.

\subsubsection{Polynomial Convergence of the IF-QIPM}\label{ss:runT}
In order to have a convergent algorithm, we need to ensure that the errors $\xi_k$ introduced by tomography are properly controlled. As we discussed earlier, the tomography error bound is
relative (it is stated in additive form, but assumes that the vector
to be extracted has unit norm): to ensure that we satisfy the absolute
error bound for the unscaled vector, we need to divide $\xi_k$ by the
maximum norm of a solution vector. Now, by definition $\| R_k^c \|_F \leq \sigma \gamma \nu_k$ holds at every iteration, and we terminate once $\nu_k \leq \epsilon$. Hence, for every iteration of the algorithm before termination with an $\epsilon$-optimal solution, the inequality
    $$\| R_k^c \|_F > \sigma \gamma \epsilon,$$
    holds, where $\gamma = \frac{1}{20}$ and $\sigma \approx 1$ for large $n$. In other words, the requirement on the residual bound never becomes much smaller than $\epsilon$, if we aim at $\epsilon$-optimality. Thus, we can set
\begin{equation} \label{e:xik}
  \xi_k = \frac{\beta}{\varrho} \cdot \max  \left\{  \| R_k^c \|_F, \frac{1}{25} \epsilon \right\}. 
\end{equation}
where $\varrho$ is the maximum norm of a solution to the Newton linear
system. 

\begin{lemma}\label{l:symmetrization}
  Choosing $\xi_k$ according to \eqref{e:xik}, the inexact search
  direction $(\Delta z, \Delta y)$ computed at step \ref{IF-QIPM-tom} of
  Algorithm~\ref{alg:IF-QIPM} satisfies \eqref{e:RcAssumption1}.
\end{lemma}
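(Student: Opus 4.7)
The plan is to express the residual matrix $R^r$ as the linear image of the tomography error vector under the coefficient matrix of the nullspace Newton system \eqref{e:IF-Newton}, and then combine the tomography precision guarantee of Theorem \ref{thm:euclidean_norm_tomo} with the specific choice \eqref{e:xik}.

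First, I would write $(\overbar{\Delta z}, \overbar{\Delta y}) = (\Delta z, \Delta y) + (\xi_z, \xi_y)$, where $(\Delta z, \Delta y)$ is the exact solution of \eqref{e:IF-Newton} (which exists and is unique by Theorem \ref{t:uniqueSol} and Proposition \ref{p:QuantToClassic}). Because $\overbar{\Delta X} = \smat(Q_2 \overbar{\Delta z})$ and $\overbar{\Delta S} = -\smat(\Acal_s^{\top} \overbar{\Delta y})$ both lie in $\Scal^n$ by construction, the residual produced in the third block of \eqref{e:svecTodd} is automatically symmetric, settling the parenthetical remark in the text. Subtracting the Newton identity $[\Ecal_s Q_2 \mid -\Fcal_s \Acal_s^{\top}](\Delta z, \Delta y)^{\top} = R^c_s$ from its inexact analogue yields
\begin{equation*}
\svec(R^r) \;=\; \begin{bmatrix} \Ecal_s Q_2 & -\Fcal_s \Acal_s^{\top}\end{bmatrix}\begin{pmatrix} \xi_z \\ \xi_y \end{pmatrix} \;=\; \Mcal \begin{pmatrix} \xi_z \\ \xi_y \end{pmatrix},
\end{equation*}
where $\Mcal$ denotes the coefficient matrix of \eqref{e:IF-Newton}.

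Second, I would invoke Theorem \ref{thm:euclidean_norm_tomo} applied to the unit-norm state encoding $(\Delta z, \Delta y)$ with target precision $\xi_k/\varrho$, rescaling by the norm estimate obtained from Proposition \ref{prop:blocknormest}; since $\varrho$ upper-bounds $\|(\Delta z, \Delta y)\|_2$ throughout the algorithm, the absolute tomography error satisfies $\|(\xi_z, \xi_y)\|_2 \le \xi_k$ (up to the multiplicative constant from norm estimation, which is absorbed). Using the Frobenius-Euclidean isometry $\|R^r\|_F = \|\svec(R^r)\|_2$, the previous display gives
\begin{equation*}
\|R^r\|_F \;\le\; \|\Mcal\|\, \xi_k \;=\; \|\Mcal\|\,\frac{\beta}{\varrho}\,\max\!\left\{\|R^c_k\|_F,\tfrac{1}{25}\epsilon\right\}.
\end{equation*}
Since $\varrho$ is chosen as an upper bound on the norm of any solution of \eqref{e:IF-Newton} produced over the course of Algorithm \ref{alg:IF-QIPM}, the factor $\|\Mcal\|/\varrho$ is absorbed into the definition of $\varrho$ (interpreted, as is standard in the QIPM literature, as a scaling bound on the solution norm that also accounts for the spectral size of $\Mcal$), giving $\|R^r\|_F \le \beta\max\{\|R^c_k\|_F, \epsilon/25\}$.

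Third, I would close by verifying that the maximum in \eqref{e:xik} is attained by $\|R^c_k\|_F$ whenever the while-loop of Algorithm \ref{alg:IF-QIPM} has not terminated. Before termination we have $\nu_k > \epsilon$; combining this with $\sigma = 1-\delta/\sqrt{n}$ and the Frobenius centrality bound $d(X_k,S_k)\le \gamma\nu_k$ via Lemma \ref{lem2.1b}(b) and the triangle inequality yields $\|R^c_k\|_F \ge (1-\sigma)\nu_k\sqrt{n}-\gamma\nu_k = (\delta - \gamma)\nu_k$, which exceeds $\epsilon/25$ for the parameter choice $\gamma = 1/20$ made in Section \ref{ss:runT}. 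Hence $\|R^r\|_F \le \beta\|R^c_k\|_F$, which is \eqref{e:RcAssumption1}. The main obstacle is articulating the convention that $\varrho$ simultaneously controls $\|(\Delta z,\Delta y)\|$ and the spectral scaling $\|\Mcal\|$; a secondary subtlety is that the tomography success probability must be boosted (by repetition and majority voting) across the $\mathcal{O}(\sqrt{n}\log(1/\epsilon))$ iterations via a standard union bound, which is routine but must be stated.
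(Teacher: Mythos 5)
Your route --- writing $\svec(R^r)=\Mcal\,(\xi_z,\xi_y)^{\top}$ for the coefficient matrix $\Mcal$ of \eqref{e:IF-Newton} and feeding in the tomography guarantee --- is a much more explicit reconstruction than the paper's own one-line proof (which merely invokes the geometric decrease of $\nu_k$ and the bound $\|R^c_k\|_F\le\sigma\gamma\nu_k$), but your third step has a genuine gap. To get \eqref{e:RcAssumption1} rather than only $\|R^r\|_F\le\beta\max\{\|R^c_k\|_F,\epsilon/25\}$ you must show the maximum in \eqref{e:xik} is attained by $\|R^c_k\|_F$ before termination, and your lower bound $\|R^c_k\|_F\ge(1-\sigma)\sqrt{n}\,\nu_k-\gamma\nu_k=(\delta-\gamma)\nu_k$ does not do this: with the constants the paper actually uses ($\gamma=\delta=1/20$, see Section~\ref{ss:runT} and Corollary~\ref{c:Converge}) it equals $0$, and for it to exceed $\epsilon/25$ at the last iterations one would need roughly $\delta>\gamma+1/25$, which is incompatible with the conditions \eqref{e:31} (the resulting bound on $\beta$ becomes negative). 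The triangle inequality is the wrong tool. Instead use that $\trace\bigl(\nu_k I-H_P(X_kS_k)\bigr)=n\nu_k-X_k\bullet S_k=0$, so the two pieces of $R^c_k=(\sigma-1)\nu_k I+\bigl(\nu_k I-H_P(X_kS_k)\bigr)$ are Frobenius-orthogonal and
\begin{equation*}
\|R^c_k\|_F^2 \;=\; n(1-\sigma)^2\nu_k^2+\bigl\|\nu_k I-H_P(X_kS_k)\bigr\|_F^2\;\ge\;\delta^2\nu_k^2,
\end{equation*}
whence $\|R^c_k\|_F\ge\delta\nu_k>\delta\epsilon=\epsilon/20>\epsilon/25$ while the loop runs. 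With that replacement your conclusion goes through.

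A secondary weak point is the claim that the factor $\|\Mcal\|$ can be ``absorbed into $\varrho$'': the paper defines $\varrho$ as the maximum norm of a solution of \eqref{e:IF-Newton} and uses exactly that definition in Proposition~\ref{prop:solSize} and in the estimate $\varrho\,T_{LS}=\widetilde{\Ocal}_{n,\kappa,\frac{1}{\xi}}(n\kappa^2)$, so silently redefining it would change those bounds. The clean fix is to observe that the system actually handed to the QLSA is the normalized (block-encoded) one, whose coefficient matrix has operator norm at most $1$, so $\|\svec(R^r)\|_2\le\|(\xi_z,\xi_y)\|_2\le\varrho\,\xi_k$ directly; alternatively carry $\|\Mcal\|$ explicitly and fold $\|\Mcal\|\,\|\Mcal^{-1}\|\le\kappa$ into the accuracy bookkeeping, as the paper does elsewhere. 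Your observations that the induced residual is automatically symmetric and that the success probability must be boosted over the $\Ocal(\sqrt{n}\log(1/\epsilon))$ iterations are correct and are indeed left implicit in the paper.
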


\begin{proof}
For feasible IPMs, the complementarity gap is updated as $\nu_{k+1} = (1 - \frac{\delta}{\sqrt{n}}) \nu_k$, which implies that  $\| R^c_{k+1} \|_F \leq (1 - \frac{\delta}{\sqrt{n}})\| R^c_k \|_F$, due to the fact that $\| R^c_{k+1}\|_F \leq \sigma \gamma \nu_{k+1}$.
\end{proof}

Next, we provide the inexact analogue of Theorem 4.1 from \cite{monteiro1998polynomial}, which yields the analysis of one iteration of Algorithm \ref{alg:IF-QIPM} for suitable choices of $\gamma$, $\delta$ and $\beta$.  

\begin{theorem}\label{t:Converge}
Let $\gamma, \beta \in (0,1)$ and $\delta \in [0, n^{1/2})$ be constants satisfying 
\begin{equation}\label{e:31}
   \frac{2 \sqrt{2} \gamma}{1 - \gamma} \leq 1,~~ \beta \sigma  \leq \sqrt{ \frac{\gamma^2 + (1-\sigma)^2 n}{1- \gamma}},~~\beta \leq  1 -   \frac{\gamma}{\sqrt{n}} - \frac{ 21.7 (\gamma^2 + \delta^2)}{ (2 + \sqrt{2}) \left( 1 - \frac{\delta}{\sqrt{n}}\right) \gamma (1- \gamma)}.
\end{equation}
Suppose that $(X, y, S) \in \Ncal_F (\gamma)$ and let $(\Delta X, \Delta y, \Delta S)$ denote the solution that we obtain from solving system \eqref{e:IF-Newton}, where $\sigma = 1 - \delta/\sqrt{n}$, $\nu = (X \bullet S)/n$ and $P \in \R{n \times n}$ is a nonsingular matrix. Then,
\begin{itemize}
    \item[(a)] $(\widehat{X}, \hat{y}, \widehat{S}) = (X + \Delta X,  y + \Delta y, S + \Delta S) \in \Ncal_F (\gamma);$
    \item[(b)] $\widehat{X} \bullet \widehat{S} = \left( 1 - \frac{\delta}{\sqrt{n}} \right) (X \bullet S).$
\end{itemize}
\end{theorem}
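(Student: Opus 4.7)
The plan is to leverage the technical machinery developed in Section \ref{s:tech}, particularly Lemma \ref{lem:3.9} and equation \eqref{e18}, together with the pivotal structural property of the IF-QIPM: the directions $\Delta X = \smat(Q_2 \Delta z)$ and $\Delta S = -\smat(\Acal_s^\top \Delta y)$ lie in the orthogonal subspaces $\Null(\Acal_s)$ and $\Rcal(\Acal_s)$ by construction, so $\Delta X \bullet \Delta S = 0$ holds exactly \emph{regardless of the inexactness} in $(\overbar{\Delta z}, \overbar{\Delta y})$. This exact orthogonality is the reason the feasible analysis survives tomography errors and underlies both parts of the proof.

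For part (b), I would expand $\widehat{X} \bullet \widehat{S} = X \bullet S + (X \bullet \Delta S + \Delta X \bullet S) + \Delta X \bullet \Delta S$, drop the last term by orthogonality, and take the trace of the third equation of \eqref{e:6}. Since $\trace(H_P(M)) = \trace(M)$, this gives $X \bullet \Delta S + \Delta X \bullet S = n\sigma\nu - X \bullet S + \trace(R^r)$, hence $\widehat{X} \bullet \widehat{S} = \sigma(X\bullet S) + \trace(R^r)$. The exact equality asserted in (b) therefore reduces to the condition $\trace(R^r) = 0$, which is a mild post-processing constraint on the tomography output: subtracting $(\trace(R^r)/n)\,I$ from the recovered residual direction enforces it while affecting the assumption \eqref{e:RcAssumption1} by at most a constant factor.

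For part (a), primal and dual feasibility of $(\widehat{X}, \hat{y}, \widehat{S})$ are preserved exactly because of the nullspace/rowspace representation. The substantive task is to verify the centrality bound $d(\widehat{X}, \widehat{S}) \leq \gamma \nu(1)$. By Lemma \ref{lem2.1a}, $d(\widehat{X}, \widehat{S}) = d(\wtX(1), \wtS(1))$, and by Lemma \ref{lem2.1b} together with the inequality $\|H_{\wtX^{-1/2}}(M)\|_F \leq \|\wtX^{-1/2}M\wtX^{1/2}\|_F$ (the symmetric part is no larger in Frobenius norm than the whole matrix),
\[
  d(\wtX(1), \wtS(1)) \leq \bigl\| \wtX^{-1/2}\bigl[\wtX(1)\wtS(1) - \nu(1)I\bigr]\wtX^{1/2} \bigr\|_F.
\]
Applying Lemma \ref{lem:3.9} at $\alpha = 1$, and using $(1-\sigma)^2 n = \delta^2$, bounds this by
\[
  \left( 4\sqrt{2}\,\frac{\gamma[\gamma^2 + \delta^2]^{1/2}}{1-\gamma} + 16\,\frac{\gamma^2 + \delta^2}{1-\gamma} + (2+\sqrt{2})\beta\sigma\gamma \right)\nu.
\]
The target is $\gamma \nu(1) = \gamma\sigma\nu$ by part (b); dividing both sides by $\gamma\sigma\nu$ reduces the desired inequality to an algebraic condition on $(\gamma, \delta, \beta)$ that is precisely the third inequality of \eqref{e:31}, while the first two inequalities of \eqref{e:31} furnish the hypotheses of Lemmas \ref{lemma3.8} and \ref{lem:3.9}. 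Positive definiteness of $\widehat{X}, \widehat{S}$ then follows by a standard continuity argument: with $\gamma < 1$, all eigenvalues of $\widehat{X}^{1/2}\widehat{S}\widehat{X}^{1/2}$ lie in $[(1-\gamma)\nu(1), (1+\gamma)\nu(1)]$ along the segment from $(X,S)$ to $(\widehat{X}, \widehat{S})$, which is incompatible with either matrix losing definiteness.

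The main obstacle is twofold. First, the trace-residual subtlety in part (b) is easy to overlook, and enforcing $\trace(R^r) = 0$ via post-processing must be executed without violating the bound \eqref{e:RcAssumption1}. Second, the algebraic reconciliation between the bound of Lemma \ref{lem:3.9} and the third condition of \eqref{e:31} demands careful bookkeeping of the constants $4\sqrt{2}$, $16$, and $(2+\sqrt{2})$; the conditions in \eqref{e:31} are essentially saturated, since any slackness in these estimates would either shrink the admissible range of $\beta$ to zero or degrade the final iteration complexity below $\Ocal(\sqrt{n}\log(1/\epsilon))$.
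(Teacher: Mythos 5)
Your outline follows the paper's own route almost step for step: exact primal--dual feasibility from the nullspace/rowspace representation, $\Delta X \bullet \Delta S = 0$, the bound of Lemma~\ref{lem:3.9} at $\alpha=1$ combined with Lemma~\ref{lem2.1a} and equation~\eqref{e18}, and a continuity argument for positive definiteness. The one place where you genuinely depart from the paper is the treatment of $\trace{(R^r)}$, and that is where the proposal has a gap.

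You reduce part (b) to the condition $\trace{(R^r)}=0$ and propose to enforce it by ``subtracting $(\trace{(R^r)}/n)I$ from the recovered residual.'' But $R^r$ is not a quantity the algorithm can edit: it is determined by the computed pair $(\overbar{\Delta z},\overbar{\Delta y})$ through $\svec(R^r)=\Ecal_s Q_2\overbar{\Delta z}-\Fcal_s\Acal_s^{\top}\overbar{\Delta y}-R^c_s$, so cancelling its trace would require correcting the direction by a solution of the same Newton system with right-hand side $-(\trace{(R^r)}/n)\svec(I)$ --- which is again available only inexactly, so the ``mild post-processing'' is not executable as described. This matters downstream: without it, the correct centrality target in part (a) is $\gamma\nu(1)$ with $\nu(1)=\sigma\nu+\trace{(R^r)}/n$, which is \emph{strictly stronger} than your target $\gamma\sigma\nu$ whenever $\trace{(R^r)}<0$, and the sign of $\trace{(R^r)}$ cannot be controlled. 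The paper's proof handles exactly this point: it bounds $|\trace{(R^r)}|\le\sqrt{n}\,\|R^r\|_F\le\sqrt{n}\,\beta\sigma\gamma\nu$ (Lemma~\ref{lemmaRbounds}) and proves the strengthened inequality $\|\wtX^{-1/2}[\wtX(\alpha)\wtS(\alpha)-\nu(\alpha)I]\wtX^{1/2}\|_F\le\gamma(1-\alpha+\sigma\alpha)\nu-\alpha\gamma\sigma\gamma\nu/\sqrt{n}$; that safety margin is precisely the origin of the $\gamma/\sqrt{n}$ term in the third inequality of \eqref{e:31}. Consequently your claim that the traceless reduction yields ``precisely'' the third inequality of \eqref{e:31} is not accurate --- that inequality is tailored to the margin argument, and the chain of estimates you set up would require a different (and in its leading constant more restrictive) condition on $\beta$. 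The repair is simply to keep $\trace{(R^r)}$ throughout, as the paper does, and prove the strengthened bound; your observation about part (b) is a fair one (the paper's appeal to \eqref{e18} at $\alpha=1$ leaves the $\trace{(R^r)}/n$ term implicit), but the resolution has to live with a nonzero trace rather than legislate it away.
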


\begin{proof}
In the following proof, we seek to show that a new iterate, after a step $\alpha$ in this direction remains in the neighborhood of the central path. That is, the new iterate satisfies
$$ \left\|  \wtX^{-1/2} \left[ \wtX (\alpha) \wtS (\alpha) - \nu (\alpha) I \right] \wtX^{1/2} \right\|_F \leq \gamma \nu (\alpha) =\gamma (1 - \alpha + \sigma \alpha) \nu + \alpha \gamma \frac{\trace{(R^r)}}{n}.$$
 Now, $|\trace{(R^r)}| \leq \sqrt{n} \| R^r \|_F \leq \sqrt{n} \beta \sigma \gamma \nu < \sqrt{n} \sigma \gamma \nu$. However, since we cannot make any conclusions regarding the sign of the quantity $\trace{(R^r)}$, it suffices to show 
$$ \left\|  \wtX^{-1/2} \left[ \wtX (\alpha) \wtS (\alpha) - \nu (\alpha) I \right] \wtX^{1/2} \right\|_F \leq \gamma (1 - \alpha +  \sigma \alpha) \nu - \alpha \gamma \frac{\sigma \gamma \nu}{\sqrt{n}}. ~~(*)$$ 

First, we establish an upper bound on the quantity 
$$ \left\|  \wtX^{-1/2} \left[ \wtX (\alpha) \wtS (\alpha) - \nu (\alpha) I \right] \wtX^{1/2} \right\|_F,$$
and show that this upper bound satisfies the inequality $(*)$ for appropriately chosen $\gamma, \beta$ and $\delta$. Applying Lemma \ref{lem:3.9}, along with the definition of $\sigma$ and \eqref{e:31}, for all $\alpha \in [0,1]$ we have
  \begin{align}
         &\left\|  \wtX^{-1/2} \left[ \wtX (\alpha) \wtS (\alpha) - \nu (\alpha) I \right] \wtX^{1/2} \right\|_F \nonumber \\
         &~~~\leq \left( (1-\alpha) \gamma + 4 \sqrt{2} \alpha \frac{\gamma \left[ \gamma^2 + (1-\sigma)^2 n \right]^{1/2}}{1-\gamma} + 16 \alpha^2 \frac{\gamma^2 + (1-\sigma)^2 n}{1-\gamma} +\alpha (2+ \sqrt{2}) \beta  \sigma  \gamma   \right) \nu \nonumber \\
         &~~~\leq \left( (1-\alpha) \gamma  + 21.7 \alpha \frac{\gamma^2 + (1-\sigma)^2 n}{1-\gamma} + \alpha (2 + \sqrt{2}) \beta  \sigma  \gamma  \right) \nu \nonumber \\
         &~~~=\left( (1-\alpha) \gamma  +  \alpha \frac{21.7 (\gamma^2 + (1-\sigma)^2 n)}{1-\gamma} + \alpha (2 + \sqrt{2}) \beta  \sigma  \gamma   \right) \nu \nonumber \\
        &~~~=\left( (1-\alpha) \gamma  +  \alpha \frac{21.7 (\gamma^2 + \delta^2)}{1- \gamma}  + \alpha (2 + \sqrt{2})\beta  \sigma  \gamma   \right) \nu \nonumber.~~(**)
\end{align}
Substituting the upper bound $(**)$ into the inequality $(*)$, one can see that we must have
$$ \left( \alpha \frac{ 21.7 (\gamma^2 + \delta^2)}{1- \gamma}  + \alpha (2 + \sqrt{2}) \beta  \sigma  \gamma \right) \nu \leq \alpha \left( 1 - \frac{\delta}{\sqrt{n}}\right) \gamma \nu - \alpha  \gamma  \frac{\sigma \gamma \nu}{\sqrt{n}}.$$
Dropping like terms on both sides of the inequality, this implies
$$ \frac{ 21.7 (\gamma^2 + \delta^2)}{1- \gamma}  + (2 + \sqrt{2}) \beta  \sigma  \gamma \leq \left( 1 - \frac{\delta}{\sqrt{n}}\right) \gamma -   \frac{\sigma \gamma^2}{\sqrt{n}}.$$
Using the fact that $\sigma = 1 - \delta/\sqrt{n}$, we can further simplify this expression to 
\begin{equation}\label{e:beta}
    \beta \leq  1 -   \frac{\gamma}{\sqrt{n}} - \frac{ 21.7 (\gamma^2 + \delta^2)}{ (2 + \sqrt{2}) \left( 1 - \frac{\delta}{\sqrt{n}}\right) \gamma (1- \gamma)}.
\end{equation}
Choosing $\beta \in (0,1)$ such that \eqref{e:beta} is satisfied, it follows
  \begin{align}
         \left\|  \wtX^{-1/2} \left[ \wtX (\alpha) \wtS (\alpha) - \nu (\alpha) I \right] \wtX^{1/2} \right\|_F \nonumber 
        &\leq \left( (1-\alpha) \gamma  +  \alpha \left( 1 - \frac{\delta}{\sqrt{n}}\right) \gamma  \right) \nu + \alpha \gamma \frac{\trace{(R^r)}}{n} \nonumber \\  
         &= \left( (1-\alpha) \gamma  +  \alpha \gamma \sigma  \right) \nu + \alpha \gamma \frac{\trace{(R^r)}}{n}. \label{e32}  
\end{align}

By equation \eqref{e18}
$$  \nu (\alpha) = (1 - \alpha + \sigma \alpha) \nu + \alpha \frac{\trace{(R^r)}}{n}.$$
Hence, 
$$\left\|  \wtX^{-1/2} \left[ \wtX (\alpha) \wtS (\alpha) - \nu (\alpha) I \right] \wtX^{1/2} \right\|_F \leq \gamma (1 - \alpha + \alpha \sigma ) \nu +  \gamma \frac{ \trace{(R^r)}}{n} = \gamma \nu(\alpha).$$ 
Therefore, defining 
$$ G (\alpha) = \left( \wtX^{-1/2} \wtX(\alpha) \wtS (\alpha) \wtX^{1/2} \right) / \nu (\alpha),$$
it follows that
$G(\alpha)$ is nonsingular for all $\alpha \in [0,1]$ as $\| G(\alpha) - I \| \leq \gamma < 1$. As a consequence, $\wtX (\alpha)$ and $\wtS (\alpha)$ are also nonsingular for $\alpha \in [0,1]$. 

From here, we follow \cite{monteiro1998polynomial} by combining \eqref{e11} and \eqref{e12} with a simple continuity argument which demonstrates that for each $\alpha \in [0,1]$, we have that $(\wtX (\alpha), \wtS (\alpha))$ and $( X (\alpha) , S (\alpha))$ are in $\Scal_{++}^n \times \Scal^n_{++}$. Noting our assumption $(X, y, S) \in \Pcal^0 \times \Dcal^0$, along with equations \eqref{e:6b} and \eqref{e:6c}, one can observe that $\left(X (\alpha), y (\alpha), S (\alpha) \right)$ satisfies
\begin{align*}
     A^{(i)} \bullet X - b_i &= 0,~~i = 1, \dots, m, \\
    \sum_{i \in [m]} y_i A^{(i)} + S - C &= 0,
\end{align*}
and thus $\left( X(\alpha), y(\alpha), S(\alpha) \right) \in \Pcal^0 \times \Dcal^0$ for every $\alpha \in [0,1]$. Finally, we apply Lemma \ref{lem2.1a}(a) with $(X, S) = (X(1), S(1))$ and $Q=P$, along with Lemma \ref{lem2.1b}(b) with $(X, S) = (\wtX(1), \wtS(1))$ and $Q=\wtX^{-1/2}$, and equations \eqref{e11}, \eqref{e12}, \eqref{e18} with $\alpha = 1$, and \eqref{e32} with $\alpha = 1$, which yields 
\begin{align*}
    d \left( X(1) , S(1) \right) &= d \left( \wtX(1) , \wtS(1) \right) \leq \left\| H_{\wtX^{-1/2}} \left[ \wtX (1) \wtS(1) - \nu(1) I \right] \right\|_F \\ 
    &\leq \left\| \wtX^{-1/2} \left[ \wtX (1) \wtS(1) - \nu(1) I \right] \wtX^{1/2}  \right\|_F \leq \gamma \nu (1).
\end{align*}
Therefore,  $(\widehat{X}, \hat{y}, \widehat{S}) = (X + \Delta X,  y + \Delta y, S + \Delta S) \in \Ncal_F (\gamma)$, and hence (a) holds. Further, (b) follows immediately from the definition of $\sigma$ and equation \eqref{e18} with $\alpha =1$.
\end{proof}

Following from Theorem \ref{t:Converge}, we obtain the convergence result for Algorithm \ref{alg:IF-QIPM}. 

\begin{corollary}\label{c:Converge}
Let $\gamma$, $\delta$ and $\beta$ be constants in $(0,1)$ and satisfying 
$$   \frac{2 \sqrt{2} \gamma}{1 - \gamma} \leq 1,~~ \beta \sigma  \leq \sqrt{ \frac{\gamma^2 + (1-\sigma)^2 n}{1- \gamma}},~~ \beta \leq 1 -   \frac{\gamma}{\sqrt{n}} - \frac{ 21.7 (\gamma^2 + \delta^2)}{ (2 + \sqrt{2}) \left( 1 - \frac{\delta}{\sqrt{n}}\right) \gamma (1- \gamma)}.$$
Then, every iterate $(X_k, y_k, S_k)$ generated by the IF-QIPM in Algorithm \ref{alg:IF-QIPM} is an element of the neighborhood $\Ncal_F (\gamma)$, and satisfies the relation $X_k \bullet S_k = (1- \delta / \sqrt{n})^k (X_0 \bullet S_0).$ Further, the IF-QIPM terminates in at most $\Ocal(\sqrt{n} \log (1/\epsilon))$ iterations. 
\end{corollary}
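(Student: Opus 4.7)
The plan is to establish the three claims of the corollary by induction on the iteration index $k$, using Theorem~\ref{t:Converge} as the single-step guarantee. First I would verify the base case: by the initialization in Algorithm~\ref{alg:IF-QIPM}, the starting triple $(X_0,y_0,S_0)$ is chosen in $\Ncal_F(\gamma)$, so the membership statement holds trivially for $k = 0$, and $X_0 \bullet S_0 = (1-\delta/\sqrt n)^0 (X_0 \bullet S_0)$ is immediate. The conditions on $\gamma$, $\delta$, $\beta$ stated in the corollary coincide exactly with those required in \eqref{e:31}, so Theorem~\ref{t:Converge} is applicable.

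For the inductive step, assuming $(X_k,y_k,S_k) \in \Ncal_F(\gamma)$, Lemma~\ref{l:symmetrization} certifies that the choice of tomography precision \eqref{e:xik} at step~\ref{IF-QIPM-tom} delivers an inexact search direction whose complementarity residual satisfies assumption \eqref{e:RcAssumption1} (and hence, via Lemma~\ref{lem2.1b}(a) and the invariance properties noted after \eqref{e:scaledBound}, also \eqref{e:scaledBoundr} and \eqref{e:scaledBound}). Theorem~\ref{t:Converge}(a) then gives $(X_{k+1},y_{k+1},S_{k+1}) \in \Ncal_F(\gamma)$, closing the induction on the neighborhood membership. Part (b) of the same theorem yields $X_{k+1}\bullet S_{k+1} = (1-\delta/\sqrt n)(X_k\bullet S_k)$; composing these identities along the iterations produces the geometric decay $X_k\bullet S_k = (1-\delta/\sqrt n)^k (X_0\bullet S_0)$.

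For the iteration complexity bound, I would divide the identity of part (b) by $n$ to obtain $\nu_k = (1-\delta/\sqrt n)^k \nu_0$. Termination occurs as soon as $\nu_k \le \epsilon$, so it suffices to find the smallest $k$ with $(1-\delta/\sqrt n)^k \nu_0 \le \epsilon$. Taking logarithms and applying the standard estimate $-\log(1-x) \ge x$ for $x \in [0,1)$ to $x = \delta/\sqrt n$ gives
\begin{equation*}
  k \;\ge\; \frac{\sqrt n}{\delta}\,\log\!\left(\frac{\nu_0}{\epsilon}\right),
\end{equation*}
which is sufficient. Since we may assume $\nu_0 = \Ocal(1)$ (as discussed in Section~\ref{s:cip} regarding the analytic center initialization), and $\delta$ is a fixed constant chosen according to \eqref{e:31}, the number of iterations is $\Ocal(\sqrt n \log(1/\epsilon))$.

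The only subtle point is verifying that the tomography precision rule \eqref{e:xik} indeed meets the residual assumptions used throughout Section~\ref{s:tech}; this is taken care of by Lemma~\ref{l:symmetrization}. Everything else is a direct consequence of Theorem~\ref{t:Converge} combined with the standard logarithmic bound on the geometric sequence, so no new technical difficulty arises at this stage.
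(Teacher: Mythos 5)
Your proposal is correct and follows essentially the same route the paper intends: the corollary is stated as a direct consequence of Theorem~\ref{t:Converge}, and your induction (membership in $\Ncal_F(\gamma)$ via part (a), geometric decay of $X_k \bullet S_k$ via part (b), with Lemma~\ref{l:symmetrization} guaranteeing the residual assumptions at each step) together with the standard bound $-\log(1-x)\ge x$ giving $k = \Ocal(\sqrt{n}\log(1/\epsilon))$ is exactly the routine argument the paper leaves implicit.
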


Examples of constants $\gamma$, $\delta$ and $\beta$ satisfying the conditions of Corollary \ref{c:Converge} are $\gamma = \delta = 1/20$ and $\beta = 1/4$.

\subsubsection{Implementing the Nesterov-Todd linear system}
As in \cite{kerenidis2018quantum}, we factorize the NT Newton linear system given by \eqref{e:IF-Newton}, and construct block-encodings for its factors. In this paper we use block-encodings and QSLA within an IPM by pre-computing some matrix inverses and products \textit{classically}, and storing the results in QRAM so that the block-encodings of the factors of the Newton linear system can be constructed in time $\widetilde{\Ocal}_{\frac{n}{\xi}} (1)$, where $\xi$ is some error tolerance. This is more efficient than other schemes that we explored. Note that the largest matrices appearing in the Newton system, namely $Q_2 \in \R{\frac{n(n+1)}{2} \times (\frac{n(n+1)}{2} - m)}$ and $\Acal_s^{\top} \in \R{\frac{n(n+1)}{2} \times m}$, are only calculated once before the algorithm begins so their size is not prohibitive. Hence, compared to solving the Newton linear system, classical computation of matrix powers and products has an insignificant cost, and we circumvent increased dependence on the condition number. 
 
\begin{proposition}\label{prop:ntbe1}
  Let 
  $$ M_1 = \begin{bmatrix} (P \otimes_s P^{- \top} S) Q_2 & 0
\end{bmatrix}.$$
  Let $P$, $Q_2$, and $P^{-\top} S$ be stored in QRAM. Then a $( \| M_1 \|_F, \Ocal(\log n), \xi_{M_1})$-block-encoding of $M_1$ can be constructed in time $\widetilde{\Ocal}_{\frac{n}{\xi_{M_1}}}(1).$
\end{proposition}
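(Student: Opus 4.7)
The plan is to build the block-encoding of $M_1$ by factoring it into pieces whose block-encodings can be constructed directly from the QRAM-resident matrices $P$, $P^{-\top}S$, and $Q_2$, and then combining them with the composition rules stated earlier in the paper. Concretely, I will write $M_1 = (P \otimes_s P^{-\top}S)\cdot[Q_2\;\; 0]$ and handle the two factors separately before multiplying the resulting block-encodings.

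First, I would invoke Proposition~\ref{prop:blockencSymKronecker} to obtain a $(2\|P \otimes P^{-\top}S\|_F^2,\; \Ocal(\log n),\; \xi_1)$-block-encoding of the symmetric Kronecker product $P \otimes_s (P^{-\top}S)$. This step uses only the QRAM access to $P$ and $P^{-\top}S$ and runs in time $\widetilde{\Ocal}_{n/\xi_1}(1)$. Second, since $Q_2$ is stored in QRAM, the padded matrix $[Q_2\;\;0]$ is accessible through essentially the same data structure (the zero block contributes no nonzero entries, so the row norms and the total Frobenius norm are unchanged), and Proposition~\ref{prop:qramblockenc} then yields a $(\|Q_2\|_F,\; \Ocal(\log n),\; \xi_2)$-block-encoding of $[Q_2\;\;0]$ in time $\widetilde{\Ocal}_{n/\xi_2}(1)$. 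Finally, I would apply the product rule (Proposition~\ref{prop:product}) to compose the two block-encodings, producing the sought block-encoding of $M_1$ with normalization factor bounded by $2\|P\otimes P^{-\top}S\|_F^2\cdot\|Q_2\|_F$, which is the value reported in the statement as $\|M_1\|_F$ in the loose sense used throughout the paper for the running-time accounting.

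The main obstacle will not be the sequence of operations --- these are essentially routine applications of the paper's composition lemmas --- but the error bookkeeping, which needs to be arranged so that the propagated error stays below the target $\xi_{M_1}$. I will set the internal tolerances as $\xi_1 = \xi_{M_1}/(2\|Q_2\|_F)$ and $\xi_2 = \xi_{M_1}/(4\|P\otimes P^{-\top}S\|_F^2)$; Proposition~\ref{prop:product} then guarantees an error of at most $\alpha_1 \xi_2 + \alpha_2 \xi_1 \le \xi_{M_1}$ on the composed block-encoding. Since each intermediate construction runs in time polynomial in $\log n$ and $\log(1/\xi_{M_1})$, and the product rule adds only polylogarithmic overhead, the overall runtime is $\widetilde{\Ocal}_{n/\xi_{M_1}}(1)$, as required.
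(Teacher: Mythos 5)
Your proposal is correct and follows essentially the same route as the paper's own proof: block-encode $P \otimes_s P^{-\top}S$ via Proposition~\ref{prop:blockencSymKronecker}, block-encode the $Q_2$ factor from QRAM via Proposition~\ref{prop:qramblockenc}, and compose with Proposition~\ref{prop:product}, splitting the internal tolerances so the propagated error stays below $\xi_{M_1}$. The only differences are cosmetic: you make the zero-column padding of $Q_2$ and the normalization factor inherited from the symmetric-Kronecker construction explicit, whereas the paper absorbs both into the loose identification of the product normalization with $\|M_1\|_F$.
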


\begin{proof}
First, given that $Q_2$ is stored in QRAM, we can construct a $( \| Q_2 \|_F , \Ocal (\log n), \xi_{1})$-block-encoding of $Q_2$ in time $\widetilde{\Ocal}_{\frac{n}{\xi_{1}}}(1)$. Next, with $P$ and $P^{-\top} S$ stored in QRAM, we can apply Proposition \ref{prop:blockencSymKronecker} so that we can construct a $( \| P \otimes_s P^{-\top} S \|_F , \Ocal (\log n), \xi_{2})$-block-encoding of $P \otimes_s P^{-\top} S$ in time $\widetilde{\Ocal}_{\frac{n}{\xi_{2}}} (1)$. 

From here we apply Proposition \ref{prop:product} with 
$$
    \xi_{1} = \frac{\xi_{M_1}}{2 \|P \otimes_s P^{-\top} S\|_F}~\text{and}~
    \xi_{2} =\frac{\xi_{M_1}}{2 \|Q_2\|_F},
$$
yielding a $( \| M_1 \|_F, \Ocal(\log n), \xi_{M_1})$-block-encoding of $M_1$ in time $\widetilde{\Ocal}_{\frac{n}{\xi_{M_1}}}(1)$, as desired. 
\end{proof}

\begin{proposition}\label{prop:ntbe2}
  Let 
  $$ M_2 = \begin{bmatrix} 0 & (P X \otimes_s P^{-\top}) (-\Acal_s^{\top})
\end{bmatrix}.$$
Let $P^{- \top}$, $\Acal_s$, $PX$ and $P^{-\top}S$ be stored in QRAM. Then a $( \| M_2 \|_F, \Ocal(\log n), \xi_{M_2})$-block-encoding of $M_2$ can be constructed in time $\widetilde{\Ocal}\frac{n}{\xi_{M_2}}(1).$
\end{proposition}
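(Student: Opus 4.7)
The proof proceeds analogously to that of Proposition \ref{prop:ntbe1}, with the roles of the constituent matrices appropriately adjusted. The plan is to construct block-encodings of the two nontrivial factors $(PX \otimes_s P^{-\top})$ and $(-\Acal_s^{\top})$ separately, and then combine them via the product rule of Proposition \ref{prop:product}; the leading zero block in $M_2$ is handled by padding with zeros (equivalently, by adjoining an ancilla register that flags the nontrivial block), which affects neither the normalization factor nor the error of the resulting block-encoding.

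First, since $PX$ and $P^{-\top}$ are stored in QRAM, Proposition \ref{prop:blockencSymKronecker} yields a $(\|PX \otimes P^{-\top}\|_F^2, \Ocal(\log n), \xi_1)$-block-encoding of $PX \otimes_s P^{-\top}$ in time $\widetilde{\Ocal}_{\frac{n}{\xi_1}}(1)$. Next, since $\Acal_s$ is stored in QRAM, Proposition \ref{prop:qramblockenc}(ii) gives a $(\|\Acal_s^{\top}\|_F, \Ocal(\log n), \xi_2)$-block-encoding of $-\Acal_s^{\top}$ in time $\widetilde{\Ocal}_{\frac{n}{\xi_2}}(1)$ (the sign flip costs nothing, as it can be absorbed into the state preparation pair of the QRAM-based construction).

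Appending a zero column block to $-\Acal_s^{\top}$ to form the right-hand factor matches the dimensions required so that the product with $(PX \otimes_s P^{-\top})$ yields exactly $M_2$; this zero-padding requires only an additional $\Ocal(1)$ ancilla qubit and introduces no error. We then invoke Proposition \ref{prop:product} with the error parameters
\begin{align*}
\xi_1 &= \frac{\xi_{M_2}}{2\|\Acal_s^{\top}\|_F}, &\xi_2 &= \frac{\xi_{M_2}}{2\|PX \otimes P^{-\top}\|_F^2},
\end{align*}
which produces a $(\|PX \otimes P^{-\top}\|_F^2 \cdot \|\Acal_s^{\top}\|_F, \Ocal(\log n), \xi_{M_2})$-block-encoding of $(PX \otimes_s P^{-\top})(-\Acal_s^{\top})$, and hence of $M_2$. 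The normalization factor can then be absorbed into $\|M_2\|_F$ (up to constants) in the usual way, and the total construction time is $\widetilde{\Ocal}_{\frac{n}{\xi_{M_2}}}(1)$, as claimed.

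The only conceptual obstacle is ensuring that the zero-padding is compatible with the block-encoding framework: that is, verifying that prepending a zero block does not inflate the normalization factor beyond $\|M_2\|_F$. This follows because a zero-column extension corresponds to tensoring with a fixed ancilla projector, which is a unitary operation on the enlarged space and preserves the block-encoding structure. The remaining calculations are routine applications of the product rule.
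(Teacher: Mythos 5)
Your proof takes essentially the same route as the paper's: block-encode $PX \otimes_s P^{-\top}$ via Proposition \ref{prop:blockencSymKronecker}, block-encode $-\Acal_s^{\top}$ from QRAM via Proposition \ref{prop:qramblockenc}, and combine them with Proposition \ref{prop:product} using error parameters rescaled by the partner factor's normalization; your explicit handling of the zero block is a detail the paper leaves implicit. The only divergence is bookkeeping: you carry the normalization $\|PX \otimes P^{-\top}\|_F^2$ exactly as stated in Proposition \ref{prop:blockencSymKronecker} (the paper's proof instead writes $\|PX \otimes_s P^{-\top}\|_F$), and your final identification of the product normalization with $\|M_2\|_F$ is asserted rather than argued --- but the paper's own proof is loose in precisely the same way, so this is not a substantive gap.
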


\begin{proof}
The steps of the proof follow that of the previous result. With $- \Acal_s^{\top}$ stored in QRAM, we can construct a $( \| \Acal_s^{\top} \|_F , \Ocal (\log n), \xi_1)$-block-encoding of $-\Acal_s^{\top}$ in time $\widetilde{\Ocal}\frac{n}{\xi_1}(1)$. Further, with $PX$ and $P^{-\top}$ stored in QRAM, we apply Proposition \ref{prop:blockencSymKronecker} such that we can construct a $( \| PX \otimes_s P^{-\top} \|_F , \Ocal (\log n), \xi_{2})$-block-encoding of $PX \otimes_s P^{-\top}$ in time $\widetilde{\Ocal}_{\frac{n}{\xi_2}} (1)$. 

Applying Proposition \ref{prop:product} with 
$$
    \xi_{1} = \frac{\xi_{M_1}}{2 \|PX \otimes_s P^{-\top}\|_F}~\text{and}~
    \xi_2 =\frac{\xi_{M_1}}{2 \| \Acal \|_F},
$$
we obtain a $( \|M_2 \|_F, \Ocal(\log n), \xi_{M_2})$-block-encoding of $M_2$, and the proof is complete. 
\end{proof}

\begin{proposition}\label{prop:NTcompact}
  The Nesterov-Todd linear system matrix \eqref{e:IF-Newton} can be written compactly as:
  \begin{equation}\label{eq:ntcompactIF}
      M_{NT} = \begin{bmatrix} (P \otimes_s P^{- \top} S) Q_2 & (P X \otimes_s P^{-\top}) (-\Acal_s^{\top})
\end{bmatrix}.
  \end{equation}
 If  $P$, $P^{-\top}$, $-\Acal_s^{\top}$, $Q_2$, $PX$ and $P^{-\top}S$ are stored in QRAM. Then a $$( \| M_{NT} \|_F, \Ocal(\log n), \xi/(\| M_{NT} \|_F \kappa^2 \log^2 \frac{\kappa}{\xi}))$$
 block-encoding of $M_{NT}$ can be constructed in time $\widetilde{\Ocal}_{n, \kappa, \frac{1}{\xi}}(1).$
\end{proposition}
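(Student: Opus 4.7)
The plan is to exploit the decomposition $M_{NT} = M_1 + M_2$, where $M_1$ and $M_2$ are the two auxiliary matrices defined in Propositions~\ref{prop:ntbe1} and~\ref{prop:ntbe2}, and then to assemble a block-encoding of the sum via the linear-combination lemma (Proposition~\ref{prop:lincombblock}). Because $M_1$ and $M_2$ have disjoint column supports, we have $\|M_{NT}\|_F^2 = \|M_1\|_F^2 + \|M_2\|_F^2$; in particular $\|M_j\|_F \leq \|M_{NT}\|_F$ for $j=1,2$, and $\|M_1\|_F + \|M_2\|_F \leq \sqrt{2}\,\|M_{NT}\|_F$.

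I would first invoke Propositions~\ref{prop:ntbe1} and~\ref{prop:ntbe2} to obtain a $(\|M_1\|_F,\Ocal(\log n),\xi_{M_1})$-block-encoding $U_{M_1}$ of $M_1$ and a $(\|M_2\|_F,\Ocal(\log n),\xi_{M_2})$-block-encoding $U_{M_2}$ of $M_2$, each in time $\widetilde{\Ocal}_{n/\xi_{M_j}}(1)$. To put them on the common footing that Proposition~\ref{prop:lincombblock} requires, I would view each $U_{M_j}$ equivalently as a $(1,\Ocal(\log n),\xi_{M_j}/\|M_j\|_F)$-block-encoding of the rescaled matrix $M_j/\|M_j\|_F$, pad ancillas so both factors carry the same $\Ocal(\log n)$-qubit register, and use the weight vector $y = (\|M_1\|_F,\|M_2\|_F)$, for which $\sum_{j} y_j (M_j/\|M_j\|_F) = M_{NT}$ and $\|y\|_1 \leq \sqrt{2}\,\|M_{NT}\|_F$. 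A $(\|y\|_1,1,0)$-state-preparation pair for this length-two, classically known vector needs only a single-qubit rotation, so $P_L$ and $P_R$ cost $\Ocal(1)$. Applying Proposition~\ref{prop:lincombblock} then yields a $(\|y\|_1,\Ocal(\log n),\,\|y\|_1 \max_j \xi_{M_j}/\|M_j\|_F)$-block-encoding of $M_{NT}$, whose normalization is $\Ocal(\|M_{NT}\|_F)$.

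To meet the prescribed tolerance $\xi/(\|M_{NT}\|_F\,\kappa^2 \log^2 \frac{\kappa}{\xi})$, I would choose the sub-block-encoding tolerances $\xi_{M_j}$ of order $\|M_j\|_F\cdot\xi/(\|M_{NT}\|_F^2\,\kappa^2\log^2\frac{\kappa}{\xi})$, so that after multiplication by $\|y\|_1=\Ocal(\|M_{NT}\|_F)$ in the error expression the target is attained. The construction times from Propositions~\ref{prop:ntbe1}--\ref{prop:ntbe2} become $\widetilde{\Ocal}_{n,\kappa,1/\xi}(1)$, while the state-preparation pair and the linear-combination step contribute only $\Ocal(1)$ gates plus $\Ocal(\log n)$ extra ancillas; hence the overall running time is $\widetilde{\Ocal}_{n,\kappa,1/\xi}(1)$, as required.

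The only real piece of work, rather than a true mathematical obstacle, is this normalization bookkeeping: Proposition~\ref{prop:lincombblock} is stated for component block-encodings sharing a common normalization, so the Frobenius norms $\|M_j\|_F$ must first be absorbed into the weight vector. Once that rescaling is performed, the argument reduces to a direct substitution of error parameters, and the final dependence on $\|M_{NT}\|_F$, $\kappa$ and $\log(\kappa/\xi)$ is exactly what is required for the downstream application of the block-encoding inside the QLSA of Theorem~\ref{thm:qlsa}.
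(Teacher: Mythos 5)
Your proposal is correct and follows essentially the same route as the paper: decompose $M_{NT}=M_1+M_2$ using Propositions~\ref{prop:ntbe1}--\ref{prop:ntbe2}, combine the two block-encodings via Proposition~\ref{prop:lincombblock}, and choose the component error parameters so that the final normalization is $\Ocal(\|M_{NT}\|_F)$ and the error meets the stated bound, all in time $\widetilde{\Ocal}_{n,\kappa,\frac{1}{\xi}}(1)$. Your treatment of the normalization bookkeeping (rescaling $M_j/\|M_j\|_F$ and putting the Frobenius norms into the weight vector of the state-preparation pair) is just a more explicit rendering of the step the paper performs implicitly.
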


\begin{proof}
 Carrying out the calculations shows that \eqref{eq:ntcompactIF} corresponds to the Nesterov-Todd linear system. We construct the two following block-encodings:
 \begin{align*}
     M_1 &= \begin{bmatrix} (P \otimes_s P^{- \top} S) Q_2 & 0
\end{bmatrix}\\
    M_2 &= \begin{bmatrix} 0 & (P X \otimes_s P^{-\top}) (-\Acal_s^{\top}),
\end{bmatrix}
 \end{align*}
 using Propositions~\ref{prop:ntbe1}-\ref{prop:ntbe2}. We choose the
  precision of this step so that we obtain $$(\|M_1\|_F, \Ocal(\log n),
  \xi/(\|M_1\|_F \kappa^2 \log^2 \frac{\kappa}{\xi}))$$ and
  $(\|M_2\|_F, \Ocal(\log n), \xi/(\|M_2\|_F \kappa^2 \log^2
  \frac{\kappa}{\xi}))$-block-encodings of $M_1$ and $M_2$, respectively, where $\kappa$
  refers to the condition number of \eqref{e:IF-Newton}, here and
 in the sequel. We add these two block-encodings together using
  Proposition~\ref{prop:lincombblock}, obtaining a $(\max\{\|M_1\|_F,
  \|M_2\|_F\}, \Ocal(\log n), \xi/(\kappa^2 \log^2
  \frac{\kappa}{\xi}))$-block-encoding of \eqref{eq:ntcompactIF}, in
  time $\widetilde{\Ocal}_{n, \kappa, \frac{1}{\xi}}(1)$. Since $\max\{\|M_1\|_F, \|M_2\|_F\} =
  \Ocal(\|M_{\text{NT}}\|_F)$, we obtain the claimed result.
\end{proof}

One should note that the construction of the Newton linear system provided in Proposition~\ref{prop:NTcompact} is only one of the many possible ways to obtain a block-encoding of our Newton linear system while utilizing the Nesterov-Todd direction. Further, though the Nesterov-Todd direction is the search direction we study in this manuscript, note that our construction is written in general terms of $P$, thus this construction can be used for any suitable choice of $P$, such as the AHO ($P=I$) and HKM ($P = S^{1/2}$) directions. Note that by employing a factorization of the Newton linear system, we are able to theoretically improve the running time of implementing the block-encodings of the Newton linear system; which in turn allows us to improve the efficiency with which we solve the system. Next, we use this factorization to solve the Newton system. 

\begin{theorem}
Using the Nesterov-Todd direction $P = W^{-1/2}$, there is a quantum algorithm, which given $\ket{\svec (R^c)}$ and access to QRAM data structures encoding $P$, $P^{-\top}$, $-\Acal_s^{\top}$, $Q_2$, $PX$ and $P^{-\top}S$ , outputs a state $\xi$-close to $\ket{\Delta z \circ \Delta y }$ in time
$$ \widetilde{\Ocal}_{n, \kappa, \frac{1}{\xi}} \left(\kappa \max \{ \| M_1 \|_F, \| M_2 \|_F \} \right).$$
We can also output an estimate of $\|\Delta z \circ \Delta y  \|$ with relative error $\delta$ by increasing the running time by a factor of $\frac{1}{\delta}$. 
\end{theorem}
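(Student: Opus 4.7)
The plan is to chain together Proposition~\ref{prop:NTcompact} (which supplies a block-encoding of $M_{NT}$) with the QLSA of Theorem~\ref{thm:qlsa} and, for the norm, with Proposition~\ref{prop:blocknormest}. First, I would invoke Proposition~\ref{prop:NTcompact} to build a $(\|M_{NT}\|_F,\Ocal(\log n),\xi')$-block-encoding of $M_{NT}$, where $\xi'=\xi/(\|M_{NT}\|_F\kappa^2\log^2(\kappa/\xi))$ is chosen so that the precision hypothesis of Theorem~\ref{thm:qlsa} is met. By Proposition~\ref{prop:NTcompact}, the block-encoding can be implemented in time $T_U=\widetilde{\Ocal}_{n,\kappa,1/\xi}(1)$, since the dependence on the precision is polylogarithmic.

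Since $M_{NT}$ is in general not Hermitian, I would apply the standard dilation
$$\widetilde{M}_{NT}=\begin{pmatrix}0 & M_{NT}\\ M_{NT}^{\top} & 0\end{pmatrix},$$
whose nonzero eigenvalues are $\pm\sigma_i(M_{NT})$, so its condition number equals that of $M_{NT}$ up to a constant and its singular values lie in the same range $[1/\kappa,1]$ after rescaling by $\|M_{NT}\|_F$. A block-encoding of $\widetilde{M}_{NT}$ with the same normalization factor and precision as that of $M_{NT}$ can be assembled with one extra control qubit and no asymptotic overhead. Using the input oracle that prepares $\ket{\svec(R^c)}$, I would prepare the padded right-hand side $\ket{0}\otimes\ket{\svec(R^c)}$ and feed it to Theorem~\ref{thm:qlsa}; the solution $\ket{\Delta z\circ\Delta y}$ then appears in the complementary block and can be read out by projecting onto $\ket{1}$ in the dilation register.

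Applying Theorem~\ref{thm:qlsa} with $\alpha=\|M_{NT}\|_F$, $a=\Ocal(\log n)$, $T_U=\widetilde{\Ocal}_{n,\kappa,1/\xi}(1)$, and $T_v=\widetilde{\Ocal}_n(1)$ for the preparation of $\ket{\svec(R^c)}$, yields a state that is $\xi$-close to the normalized solution in total time
$$\Ocal\!\left(\kappa\,\|M_{NT}\|_F\,(a+T_U)\log^2(\kappa/\xi)\log\kappa\right)=\widetilde{\Ocal}_{n,\kappa,1/\xi}\!\left(\kappa\,\|M_{NT}\|_F\right).$$
Because $M_{NT}$ is obtained in Proposition~\ref{prop:NTcompact} as a linear combination of the two disjointly supported blocks $M_1$ and $M_2$ via Proposition~\ref{prop:lincombblock}, we have $\|M_{NT}\|_F=\Ocal(\max\{\|M_1\|_F,\|M_2\|_F\})$, which gives the claimed running time. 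For the norm, I would repeat the same construction but substitute Proposition~\ref{prop:blocknormest} for Theorem~\ref{thm:qlsa}, which preserves the dependence on all parameters while adding the multiplicative factor $1/\delta$ required for a relative-error norm estimate.

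The only delicate part is accounting: I must verify that the precision $\xi'$ used when building the block-encoding is exactly what QLSA consumes, that the Hermitian dilation does not inflate either $\alpha$ or $\kappa$, and that the identification $\|M_{NT}\|_F=\Ocal(\max\{\|M_1\|_F,\|M_2\|_F\})$ is valid given the block structure. All of these are routine but need to be tracked carefully through the composition of Propositions~\ref{prop:NTcompact},~\ref{prop:lincombblock} and Theorem~\ref{thm:qlsa}; everything else follows from the cited results as a black box.
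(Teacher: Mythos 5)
Your proposal is correct and follows essentially the same route as the paper, whose proof is simply the observation that the theorem is a direct consequence of Proposition~\ref{prop:NTcompact} combined with Theorem~\ref{thm:qlsa} (and Proposition~\ref{prop:blocknormest} for the norm estimate). The extra bookkeeping you supply — the Hermitian dilation of the non-Hermitian $M_{NT}$, the choice of block-encoding precision, and the bound $\|M_{NT}\|_F=\Ocal(\max\{\|M_1\|_F,\|M_2\|_F\})$ — is consistent with what the paper implicitly relies on.
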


\begin{proof}
  This is a direct consequence of Proposition~\ref{prop:NTcompact} and
  Theorem~\ref{thm:qlsa}.
\end{proof}


\begin{proposition}\label{prop:solSize}
  The norm $\varrho$ of the solution of the quantum Newton linear system \eqref{e:IF-Newton} satisfies
  $$ \varrho = \Ocal \left( \| M_{NT}^{-1} \| \right) .$$
\end{proposition}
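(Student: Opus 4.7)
The plan is to exploit the invertibility of $M_{NT}$, which was established in Theorem~\ref{t:uniqueSol} whenever $X$ and $S$ are positive definite, a condition that holds at every iterate of Algorithm~\ref{alg:IF-QIPM} by Corollary~\ref{c:Converge}. Writing the Newton linear system \eqref{e:IF-Newton} as $M_{NT}(\Delta z \circ \Delta y) = R_s^c$, I would invert it to obtain $(\Delta z \circ \Delta y) = M_{NT}^{-1} R_s^c$ and apply the standard submultiplicative bound
\begin{equation*}
\varrho = \left\|\begin{bmatrix}\Delta z\\ \Delta y\end{bmatrix}\right\|_2 \le \|M_{NT}^{-1}\|\cdot \|R_s^c\|_2.
\end{equation*}
Hence, the claim $\varrho = \Ocal(\|M_{NT}^{-1}\|)$ reduces to showing $\|R_s^c\|_2 = \Ocal(1)$ along the iterates.

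For this second step, I would use that $\svec$ is an isometry in the sense that $\|\svec(M)\|_2 = \|M\|_F$ for every $M \in \Scal^n$, so $\|R_s^c\|_2 = \|\sigma\nu I - H_P(XS)\|_F$. By the triangle inequality and the update rule $\sigma = 1 - \delta/\sqrt{n}$,
\begin{equation*}
\|\sigma\nu I - H_P(XS)\|_F \le \|\nu I - H_P(XS)\|_F + |1-\sigma|\sqrt{n}\,\nu = \|\nu I - H_P(XS)\|_F + \delta \nu.
\end{equation*}
For the Nesterov--Todd scaling $P = W^{-1/2}$ used in Algorithm~\ref{alg:IF-QIPM}, a direct computation shows that $PXSP^{-1} = \wtX\wtS$ is symmetric, so the equality case of Lemma~\ref{lem2.1b} applies and yields $\|\nu I - H_P(XS)\|_F = d(X,S) \le \gamma\nu$, where the last bound is precisely the defining property of the neighborhood $\Ncal_F(\gamma)$. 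Combined with $\nu \le \nu_0 = \Ocal(1)$ guaranteed by the choice of initial point, this gives $\|R_s^c\|_2 \le (\gamma + \delta)\nu = \Ocal(1)$, completing the argument.

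The only subtle point is the identity $\|H_P(XS) - \nu I\|_F = d(X,S)$: Lemma~\ref{lem2.1b} yields only the $\le$ direction in general, with equality exactly when $PXSP^{-1}$ is symmetric. This holds automatically for the NT direction featured in our analysis, so nothing more needs to be done; for other members of the Monteiro--Zhang family (AHO or HKM) one would either verify the analogous symmetry property case-by-case, or absorb the discrepancy into a constant factor, which still yields the same $\Ocal(\|M_{NT}^{-1}\|)$ bound.
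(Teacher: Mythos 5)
Your proposal is correct and follows essentially the same route as the paper: invert $M_{NT}$, apply the submultiplicative bound $\varrho \le \|M_{NT}^{-1}\|\cdot\|\text{RHS}\|$, and show the right-hand side is $\Ocal(\nu) = \Ocal(1)$ along the iterates. The only difference is that the paper bounds the right-hand side $\svec(R^c + R^r)$ including the tomography residual, invoking Lemma~\ref{lemmaRbounds} to get $\|R^c + R^r\|_F \le (1+\beta)\sigma\gamma\nu$, whereas you bound $\|R_s^c\|$ directly from the neighborhood definition (with a correct treatment of the equality case of Lemma~\ref{lem2.1b} for the NT scaling); both yield the same $\Ocal(\|M_{NT}^{-1}\|)$ conclusion.
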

\begin{proof}
We have 
$$ \begin{bmatrix}
    \Delta z\\
    \Delta y
\end{bmatrix} = (M_{NT})^{-1} \svec (R^c + R^r).$$

By Lemma \ref{lemmaRbounds}, we have 
\begin{align}
    \| R^c + R^r \|_F \leq   \| R^c \|_F + \| R^r \|_F \leq (1 + \beta) \sigma \gamma \nu.
\end{align}
Choosing a constant $c_0 \geq (1 + \beta) \sigma \gamma$, it follows that $\| R^c + R^r \|_F \leq c_0 \nu$, and therefore  
$$\| R^c + R^r \|_F = \Ocal (\nu) = \Ocal (1),$$
where the final equality follows from the definition of condition number and $\nu = \Ocal(1)$. 
 To see this, note that across every iteration of the QIPM, we have $\max_k \{ \nu_k \} = \nu_0 = \frac{X_0 \bullet S_0}{n} = \Ocal(1)$, as in each iteration we strictly decrease the quantity $X \bullet S$ (which consequently decreases $\nu)$.  
 
The bound on the norm of the solution is thus given by 
$$
   \| (\Delta z, \Delta y)\|_F  = \| M_{NT}^{-1} \svec (R^c + R^r) \|_F  
   \leq \| M_{NT}^{-1} \| \cdot \| (R^c + R^r) \|_F  = \Ocal \left(\| M_{NT}^{-1} \| \right).
$$
The proof is complete. 
\end{proof}

Recall (see page \pageref{p:tlsPage}) that $T_{LS}$ denotes the time needed to prepare and solve the Newton linear system \eqref{e:IF-Newton} with the Nesterov-Todd direction. 

\begin{proposition} 
Let $\xi$ be the precision to which we prepare the quantum state proportional to the solution of the Newton linear system \eqref{e:IF-Newton} using a QLSA. Then, we have $ \varrho T_{LS} = \widetilde{\Ocal}_{n, \kappa, \frac{1}{\xi}} \left(n \kappa^2  \right)$.
\end{proposition}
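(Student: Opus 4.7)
The plan is to combine the bound on $\varrho$ from Proposition~\ref{prop:solSize} with a running-time expression for $T_{LS}$ obtained by plugging the block-encoding from Proposition~\ref{prop:NTcompact} into the QLSA complexity of Theorem~\ref{thm:qlsa}, and then to pay for the Frobenius-norm normalization factor by a single standard inequality between $\|\cdot\|_F$ and $\|\cdot\|$.

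First, by Proposition~\ref{prop:NTcompact} we can build a $(\|M_{NT}\|_F, \Ocal(\log n), \xi')$-block-encoding of $M_{NT}$, with $\xi'$ small enough to meet the QLSA precision requirement, in time $T_U = \widetilde{\Ocal}_{n,\kappa,1/\xi}(1)$; the right-hand side state $\ket{\svec(R^c + R^r)}$ can likewise be prepared in polylogarithmic time using its QRAM representation. Theorem~\ref{thm:qlsa} then yields
\begin{equation*}
T_{LS} \;=\; \widetilde{\Ocal}_{n,\kappa,\tfrac{1}{\xi}}\!\left(\kappa\,\|M_{NT}\|_F\right),
\end{equation*}
since the dominant factor is $\kappa\,\alpha\,(a+T_U)$ with $\alpha = \|M_{NT}\|_F$ and $a+T_U$ polylogarithmic.

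Next, from Proposition~\ref{prop:solSize} we have $\varrho = \Ocal(\|M_{NT}^{-1}\|)$. The coefficient matrix $M_{NT}$ in \eqref{e:IF-Newton} is square of dimension $\tfrac{n(n+1)}{2} = \Ocal(n^2)$, so the elementary bound
\begin{equation*}
\|M_{NT}\|_F \;\leq\; \sqrt{\operatorname{rank}(M_{NT})}\,\|M_{NT}\| \;=\; \Ocal(n)\,\|M_{NT}\|
\end{equation*}
holds. Combining this with the definition $\kappa = \|M_{NT}\|\,\|M_{NT}^{-1}\|$ gives
\begin{equation*}
\varrho\,T_{LS} \;=\; \Ocal(\|M_{NT}^{-1}\|)\cdot \widetilde{\Ocal}_{n,\kappa,\tfrac{1}{\xi}}\!\bigl(\kappa\,\|M_{NT}\|_F\bigr) \;=\; \widetilde{\Ocal}_{n,\kappa,\tfrac{1}{\xi}}\!\bigl(n\,\kappa\,\|M_{NT}^{-1}\|\,\|M_{NT}\|\bigr) \;=\; \widetilde{\Ocal}_{n,\kappa,\tfrac{1}{\xi}}(n\,\kappa^2),
\end{equation*}
which is the claim.

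The only subtle point is the passage from a Frobenius-norm normalization in the block-encoding to a spectral-norm bound compatible with the definition of $\kappa$; this is handled cleanly by the $\sqrt{\operatorname{rank}}$ inequality because $M_{NT}$ is of size $\Ocal(n^2)\times\Ocal(n^2)$. Everything else is a direct chaining of earlier results, so no genuine obstacle is expected.
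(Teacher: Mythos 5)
Your proposal is correct and follows essentially the same route as the paper: it chains $T_{LS} = \widetilde{\Ocal}_{n,\kappa,\frac{1}{\xi}}(\kappa \|M_{NT}\|_F)$ (from Proposition~\ref{prop:NTcompact} together with Theorem~\ref{thm:qlsa}), the bound $\varrho = \Ocal(\|M_{NT}^{-1}\|)$ from Proposition~\ref{prop:solSize}, the dimension-based conversion $\|M_{NT}\|_F = \Ocal(n)\,\|M_{NT}\|$, and the definition of the condition number, exactly as the paper does. The only difference is that you re-derive the $T_{LS}$ bound explicitly rather than citing the preceding theorem, which is a matter of presentation, not substance.
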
 
\begin{proof}
The result follows trivially from Proposition \ref{prop:solSize} and noting the fact that $\max\{\|M_1\|_F, \|M_2\|_F\} =
  \Ocal(\|M_{\text{NT}}\|_F)$. First, observe
$$
     T_{LS} = \widetilde{\Ocal}_{n, \kappa, \frac{1}{\xi}} (\kappa \max\{\|M_1\|_F, \|M_2\|_F\} ) = \widetilde{\Ocal}_{n, \kappa, \frac{1}{\xi}} (\kappa \| M_{NT} \|_F) = \widetilde{\Ocal}_{n, \kappa, \frac{1}{\xi}} \left(n \kappa \cdot \| M_{NT} \| \right).
$$
Then,
$$\varrho T_{LS} = \widetilde{\Ocal}_{n, \kappa, \frac{1}{\xi}} \left(n  \kappa \cdot \| M_{NT} \| \cdot \| M_{NT}^{-1} \|  \right) = \widetilde{\Ocal}_{n, \kappa, \frac{1}{\xi}} \left(n \kappa^2  \right),$$
where the final equality follows from the definition of condition number. 
 \end{proof}
 
 \subsection{An Inexact-Infeasible QIPM for SDO}
The II-QIPM follows a similar algorithmic structure to that of the IF-QIPM, with the key distinction being the Newton linear system to be solved at each iteration, and that we do not need to compute the bases of the nullspace and rowspace of $\Acal$. 

\begin{algorithm}[tb]
\SetAlgoLined
\KwIn{$\nu_0, \theta_0 = 1$; $\epsilon, \delta > 0$; $\eta_1 \in (0,1]$; $\gamma_1, \gamma_2, \eta_2 \in (0,1)$ \\
Pick a sequence $\{ \vartheta \}_{i =0}^{\infty}$ in $(0, 1]$, such that 
$ \bar{\vartheta} = \sum_{i = 0}^{\infty} \vartheta_i < \infty$ \\
Choose a nonsingular matrix $P$ from Monteiro-Zhang family of search directions \\
\noindent Find an initial point $(\nu_0, \theta_0, X_0, y_0, S_0) \in \Ncal_I$ \\
\noindent Store $(\nu_0, \theta_0, X_0, y_0, S_0)$ in QRAM\\} 
\While{$\nu > \epsilon$}{ 
    \begin{enumerate}
    \item Compute matrix $X_k^{-1}$ and $S_k^{-1}$ classically, and store these matrices in QRAM \label{IIQIPM-inv}
    \item Compute scaling matrix $P_k$, right hand side matrix and necessary matrix products for Newton system factorization and store in QRAM  \label{IIQIPM-mult}
    \item Using quantum linear systems algorithm, solve the system \eqref{e:inexactNewt} to construct inexact search direction $ | \svec(\Delta X_k) \circ \Delta y_k \circ \svec(\Delta S_k) \rangle$  \label{IIQIPM-solve}
    \item Obtain classical estimate of $\svec(\Delta X_k), \Delta y_k, \svec(\Delta S_k)$ using tomography  \label{IIQIPM-tom}
    \item Set \footnotesize $$\alpha_k \leftarrow \max\left\{ \alpha ~:~\alpha \in \left[0 , \min \left(1, \frac{1}{\eta_1(1+ \bar{\vartheta})} \right) \right],~(\nu_k(\alpha), \theta_k(\alpha), X_k(\alpha), y_k(\alpha), S_k(\alpha)) \in \Ncal \right\}$$ \label{IIQIPM-search}
        \normalsize
    \item Update solution
        \begin{align*}
           X_{k+1} &\leftarrow X_k + \alpha_k \Delta X_k,~ y_{k+1} \leftarrow y_k + \alpha_k \Delta y_k ~\text{and}~S_{k+1} \leftarrow S_k +  \alpha_k\Delta S_k \\
           \nu_{k+1} &\leftarrow (1 - \alpha \eta_1) \nu_k,~\text{and}~\theta_{k+1} \leftarrow (1 - \alpha \eta_1) \theta_k \label{IIQIPM-update} \\
           k &\leftarrow k+ 1
        \end{align*}
    \end{enumerate}
  } 
 \caption{Inexact-Infeasible Quantum Interior Point Method}
\label{alg:QII-IPM}
\end{algorithm}

We now discuss convergence of the algorithm, using the main result from the work of Zhou and Toh \cite{zhou2004polynomiality}, adapted to our case. Note that the proof in \cite{zhou2004polynomiality} is only
explicitly given for the HKM direction ($P = S^{1/2}$), but
convergence for the other cases can be shown analogously. For a detailed discussion on the analysis required to prove convergence, and implementing the block-encodings for the II-QIPM Newton linear system, we refer the reader to the Appendix. 

To prove convergence, we first need to show that
Algorithm~\ref{alg:QII-IPM} satisfies several properties. More
specifically, we need to show that the inexact search direction
computed at the end of step \ref{IIQIPM-tom} satisfies
\eqref{e:resbound}-\eqref{e:inexactNewt}. This requires choosing
error parameters for the QLSA and for the tomography step. Recall the
required bounds on the inexactness residuals, reported for convenience
from \eqref{e:resbound}:
\begin{equation*}
  \|r_k^d\| \le \gamma_1 \rho \tau_k \vartheta_k,~
  \|r_k^p\| \le \hat{\vartheta} \gamma_1 \rho \tau_k \vartheta_k,~
  \|r_k^c\| \le \frac{1}{2}(1 - \eta_2) \gamma_2 \theta_k \rho^2.
\end{equation*}
Note that to obtain a classical estimate in step \ref{IIQIPM-tom}, we
must determine the norm of the inexact search direction. Recall that QLSA also introduces some error in the
inexact search direction. In line with our convergence discussion for the IF-QIPM we assume that
all of the error comes from tomography.

To determine appropriate error parameters, we first discuss the
sequence $\{ \vartheta_k \}^\infty_{k=0}$, as $\vartheta_k$ determines
the rate at which the residual norms must be decreased with respect to
$\tau_k$, where $\tau \in (0, 1]$ is defined in \eqref{e:infcp}. We choose the sequence:
\begin{equation}\label{eq:vartheta}
\vartheta_k = \begin{cases}
\min \left( 1, \frac{1}{\log n \log (1 / \epsilon)} \right) & \text{if}~k < N:= (\log n) n^2 \lfloor \log (1 / \epsilon) \rfloor \\
\frac{1}{(k+1 - N)^{1.1}} &\text{if}~ k \geq N.\end{cases} 
\end{equation}
This sequence, which is admissible according to
\cite{zhou2004polynomiality}, has the advantage of a relatively slow
decrease of the error terms, and it greatly simplifies our running
time analysis. Then $\bar{\vartheta} \leq n^2 + 10$ and for a
sufficiently large $n$,
$$ \vartheta_k \approx \frac{1}{\log n \log (1/\epsilon)}$$ for all
iterations $k$ before reaching $\epsilon$-optimality. Therefore,
choosing the sequence as defined by \eqref{eq:vartheta}, the residual
norms will decrease $1/(\log n \log (1/\epsilon))$ faster than
$\tau_k$.

At iteration $k$, we then choose $\xi_k$ satisfying:
\begin{equation*}
  \xi_k \leq \frac{1}{\varrho} \min \{\|r_k^d\|, \|r_k^p\|, \|r_k^c\|\},
\end{equation*}
where $\varrho$ is the maximum norm of a solution to the Newton linear
system \eqref{e:inexactNewt} corresponding to the II-QIPM. Setting $\vartheta_k$ according to
\eqref{eq:vartheta}, for sufficiently large
$n$, we simply have $\vartheta_k = \frac{1}{\log n
  \log 1/\epsilon}$ before convergence. Hence, we can then take the error
$\xi_k$ to be:
\begin{equation}
  \label{e:xikII}
  \xi_k =  \frac{\epsilon }{\varrho\left(\log n \log 1/\epsilon \right)}.
\end{equation}

\begin{lemma}\label{l:symmetrization}
  Choosing $\xi_k$ according to \eqref{e:xikII}, the inexact search
  direction computed at the end of step \ref{IIQIPM-tom} of
  Algorithm~\ref{alg:QII-IPM} satisfies \eqref{e:resbound}.
\end{lemma}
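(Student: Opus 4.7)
The plan is to control the three residuals $r_k^d$, $r_k^p$, $r_k^c$ by first bounding the $\ell_2$-error between the tomographic output of step~\ref{IIQIPM-tom} and the exact solution of the Newton system \eqref{e:inNewt}, and then propagating this error through the coefficient matrix of \eqref{e:inexactNewt}. By Theorem~\ref{thm:euclidean_norm_tomo}, applied with precision $\xi_k$, the tomography step produces a classical estimate whose $\ell_2$-error from the normalized exact solution is at most $\xi_k$; rescaling by the norm of the solution, which is at most $\varrho$ by definition, yields an absolute $\ell_2$-error of $\varrho \xi_k = \epsilon/(\log n \log 1/\epsilon)$ on the unscaled vector. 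Because the state is represented in $\svec$-coordinates and the tomography routine returns real components, the matrix $r_k^c$ obtained after applying $\smat$ is automatically an element of $\Scal^n$, so no extra symmetrization step is required.

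Writing $M_k$ for the coefficient matrix of \eqref{e:inexactNewt} and $e_k$ for the tomographic error vector, substituting the inexact estimate into \eqref{e:inexactNewt} and comparing with the exact right-hand side shows that the induced residual equals $M_k e_k$, rescaled by $-1/\eta_1$ in the first two blocks. Taking norms block-by-block, each of $\|r_k^d\|$, $\|r_k^p\|$, $\|r_k^c\|$ is bounded by a constant multiple of $\|M_k\| \varrho \xi_k / \eta_1$, i.e., $\Ocal(\|M_k\| \epsilon/(\log n \log 1/\epsilon))$. I would then use the sequence definition \eqref{eq:vartheta}: for any iteration $k < N = (\log n) n^2 \lfloor \log 1/\epsilon \rfloor$, we have $\vartheta_k = 1/(\log n \log 1/\epsilon)$, and prior to termination $\tau_k \geq \epsilon$ and $\theta_k \geq \tau_k \geq \epsilon$. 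Hence each of the targets $\hat{\vartheta}\gamma_1 \rho \tau_k \vartheta_k$, $\gamma_1 \rho \tau_k \vartheta_k$, and $\frac{1}{2}(1-\eta_2)\gamma_2 \theta_k \mu_0$ in \eqref{e:resbound} is at least of order $\epsilon/(\log n \log 1/\epsilon)$, so \eqref{e:resbound} follows provided the implicit constant in \eqref{e:xikII} is chosen to dominate $\|M_k\|/\eta_1$ together with the problem-dependent constants $\rho$, $\gamma_1$, $\gamma_2$, $\eta_2$, $\hat{\vartheta}$.

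The main obstacle is the appearance of $\|M_k\|$, and in particular the factor $\hat{\vartheta} \le \sigma_{\min}(\Acal)$ in the primal residual bound which brings in $\kappa_{\Acal}$, that is suppressed in the compact form of \eqref{e:xikII}. To make the argument fully rigorous, one should absorb these factors into the implicit constant in $\xi_k$, or equivalently into a slightly enlarged $\varrho$. Controlling $\|M_k\|$ polynomially in $n$ requires invoking Lemma~\ref{lemma2} and Remark~\ref{r:Rem2} to bound $\|X_k\|$ and $\|S_k\|$ by $\Ocal(\rho n)$, which in turn bounds the blocks $\Ecal_{s_k}$ and $\Fcal_{s_k}$ across all iterations. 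The resulting polynomial factor in $n$ enters the tomography cost only logarithmically through Theorem~\ref{thm:euclidean_norm_tomo}, and so it does not affect the correctness statement of the lemma, only the final complexity analysis.
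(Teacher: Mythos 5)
Your core argument is correct and, at bottom, it is the same argument the paper makes (the paper's own proof is a two-line remark relying on the surrounding discussion): before termination one has $\theta_k \ge \tau_k > \epsilon$, the choice \eqref{eq:vartheta} gives $\vartheta_k = 1/(\log n \log(1/\epsilon))$ for all relevant iterations since the iteration count $\Ocal(n^2\log(1/\epsilon))$ stays below $N$, so every target in \eqref{e:resbound} is bounded below by a constant times $\epsilon/(\log n \log(1/\epsilon))$, while the unscaled tomography error is $\varrho\xi_k = \epsilon/(\log n \log(1/\epsilon))$ by Theorem~\ref{thm:euclidean_norm_tomo}; your observation that the $\svec$ formulation makes $r_k^c$ automatically symmetric is also exactly the paper's rationale. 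Where you go beyond the paper is in propagating the error in the \emph{solution} vector through the coefficient matrix of \eqref{e:inexactNewt}: the quantities constrained by \eqref{e:resbound} are right-hand-side residuals, so they equal $M_k e_k$ (with a $1/\eta_1$ in the first two blocks), and a factor $\|M_k\|$, together with the constants $\gamma_1\rho\hat{\vartheta}$, $\eta_1$, etc., appears that the bare formula \eqref{e:xikII} silently absorbs. Flagging this and absorbing it into an enlarged $\varrho$ (using Lemma~\ref{lemma2} and \eqref{e:II_norm_bound} to bound $\|X_k\|,\|S_k\|$ and hence the blocks $\Ecal_{s_k},\Fcal_{s_k}$ polynomially in $n$) is a legitimate refinement rather than a defect, and it makes explicit a step the paper leaves implicit.

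One side claim is wrong, although it does not affect the lemma itself: the resulting polynomial factor does not enter the tomography cost ``only logarithmically.'' The complexity in Theorem~\ref{thm:euclidean_norm_tomo} is \emph{linear} in the inverse precision, so shrinking $\xi_k$ by a factor $\|M_k\| = \mathrm{poly}(n)$ multiplies the per-iteration cost by that same polynomial factor; this is precisely where factors such as $\varrho = \Ocal(\kappa_{\Acal}\rho n^{1.5})$ and the matrix norms surface in Proposition~\ref{prop:costPerIterII}. The consequence is a correction to the running-time bookkeeping, not to the statement that \eqref{e:resbound} is satisfied.
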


\begin{proof}
For infeasible IPMs, the termination condition requires the feasibility gap be less than or equal to $\epsilon$, and we scale down by $\varrho$ due to normalization. 
\end{proof}

\begin{theorem}\label{t:t1zhou}
Let $\epsilon > 0$ be a small constant. Suppose that the conditions
\eqref{e5}, \eqref{e6} and \eqref{e7} hold. Suppose further that
$\xi_k$ is chosen according to \eqref{e:xik}. Then, after $k = \Ocal(n^2
\ln (1/\epsilon))$ iterations of Algorithm~\ref{alg:QII-IPM}, we have
$\theta_k \leq \epsilon$.
\end{theorem}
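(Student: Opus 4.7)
The plan is to reduce the proof to the classical analysis of Zhou and Toh \cite{zhou2004polynomiality}, verifying that our quantum-generated search directions satisfy the inexactness conditions \eqref{e:resbound} under which their convergence theorem applies. Concretely, I would argue in three stages: (i) show that the tomography-based extraction in step \ref{IIQIPM-tom} of Algorithm \ref{alg:QII-IPM} produces residuals $(r_k^p, r_k^d, r_k^c)$ obeying \eqref{e:resbound}; (ii) apply the Zhou--Toh analysis to conclude that a step length $\alpha_k = \Omega(1/n^2)$ exists that keeps the iterate in $\Ncal_I$; (iii) combine the update $\theta_{k+1} = (1 - \alpha_k \eta_1)\theta_k$ with the lower bound on $\alpha_k$ to derive the claimed iteration count.

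For stage (i), the svec formulation \eqref{e:inexactNewt} automatically enforces $r_k^c \in \Scal^n$ by construction, circumventing the symmetrization issue. The tomography theorem (Theorem~\ref{thm:euclidean_norm_tomo}) yields a classical estimate of the unit-normalized solution vector with $\ell_2$-error $\xi_k$; after rescaling by the solution norm (bounded by $\varrho$), the absolute error in each of $\svec(\Delta X_k)$, $\Delta y_k$, $\svec(\Delta S_k)$ is at most $\varrho \xi_k$. Substituting into the block equations of \eqref{e:inexactNewt} and using the linearity of $\Acal$, the induced residuals satisfy $\|r_k^d\|, \|r_k^p\|, \|r_k^c\| = \Ocal(\varrho \xi_k)$ (with a factor absorbed into the constant and using $\hat\vartheta \le \sigma_{\min}(\Acal)$ for the primal bound). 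With the choice in \eqref{e:xikII}, the right-hand side is $\Ocal(\epsilon/(\log n \log(1/\epsilon)))$, which by construction of $\vartheta_k$ in \eqref{eq:vartheta} lies below all three thresholds in \eqref{e:resbound} for every $k < N = (\log n) n^2 \lfloor \log(1/\epsilon)\rfloor$, since $\tau_k \ge \epsilon$ and $\theta_k \ge \tau_k$ throughout the pre-termination regime.

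For stage (ii), Lemma \ref{lemma1} and Lemma \ref{lemma2} together with the Zhou--Toh framework (stated for the HKM direction, which is our analysis case) imply that once \eqref{e:resbound} holds, there is a uniform lower bound on the admissible step $\alpha_k$ in step \ref{IIQIPM-search}. For the HKM direction the classical analysis yields $\alpha_k = \Omega(1/n^2)$, obtained by bounding the curvature term $\|\Delta X_k \Delta S_k\|_F$ by $\Ocal(n \mu_0)$ using the trace bounds from Remark \ref{r:Rem2} and the $\svec$-norm bound \eqref{e:II_norm_bound}, and then comparing against the neighborhood opening $\gamma_2$. This guarantees $(\tau_{k+1}, \theta_{k+1}, X_{k+1}, y_{k+1}, S_{k+1}) \in \Ncal_I$ and preserves the invariants required for the next iteration.

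For stage (iii), unrolling $\theta_{k+1} = (1 - \alpha_k \eta_1)\theta_k$ with $\alpha_k \ge c/n^2$ and $\theta_0 = 1$ gives
\begin{equation*}
\theta_k \le \left(1 - \frac{c\eta_1}{n^2}\right)^k \le \exp\!\left(-\frac{c\eta_1 k}{n^2}\right),
\end{equation*}
so $\theta_k \le \epsilon$ whenever $k \ge (n^2/(c\eta_1))\log(1/\epsilon) = \Ocal(n^2 \log(1/\epsilon))$, which is within the pre-termination window $k < N$ justified in stage (i), closing the argument. The main obstacle is stage (i): the subtlety lies in verifying that the normalization of the extracted quantum state, the summability of $\{\vartheta_i\}$, and the \emph{a priori} polynomial bound $\varrho$ on the solution norm (used in choosing $\xi_k$) are mutually consistent, so that the tomography precision does not need to be driven below $\epsilon$ by more than polylogarithmic factors; this is precisely what the choice \eqref{e:xikII} and the sequence \eqref{eq:vartheta} are designed to achieve.
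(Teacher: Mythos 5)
Your overall route is the same as the paper's: the paper proves this theorem by (a) checking that the tomography precision \eqref{e:xikII}, together with the sequence \eqref{eq:vartheta}, forces the extracted search direction to satisfy the residual conditions \eqref{e:resbound} (your stage (i)), and then (b) invoking Theorem 1 of \cite{zhou2004polynomiality}, whose step-size and neighborhood machinery is reproduced in Appendix B; your stages (ii)--(iii) retrace exactly that machinery (a uniform step $\alpha_k=\Omega(1/n^2)$ followed by geometric decay of $\theta_k$). One detail in your stage (ii) is misquoted, however: the quantity that controls the admissible step is $\|H_P(\Delta X_k \Delta S_k)\|$, and the bound the analysis actually supplies (Lemma \ref{l:l4zhou}) is $\Ocal(n^2\theta_k\mu_0)$, not $\Ocal(n\mu_0)$ for $\|\Delta X_k\Delta S_k\|_F$. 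The factor $\theta_k\mu_0$ is essential: the curvature test in the step-size rule gives roughly $\alpha_k \gtrsim \frac{(1-\eta_2)\gamma_2\theta_k\mu_0}{2\|H_P(\Delta X_k\Delta S_k)\|}$, so the homogeneous bound yields a $k$-independent $\Omega(1/n^2)$, whereas a bound of the form $\Ocal(n\mu_0)$ would make the lower bound degrade like $\theta_k/n$ and your stage (iii) unrolling would fail. Since you ultimately defer to the classical analysis for the $\Omega(1/n^2)$ conclusion, this is a misstatement of an intermediate lemma rather than a structural gap, but it should be corrected.
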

\begin{proof}
  Follows from \cite[Theorem.~1]{zhou2004polynomiality}, after noting
  that, as a consequence of Lemma~\ref{l:symmetrization}, the inexact
  search direction used in Algorithm~\ref{alg:QII-IPM} satisfies the
  conditions of \cite{zhou2004polynomiality}.
\end{proof}
By the definition of the neighborhood of the infeasible central path, $\Ncal_I$, we have $\tau_k \leq \theta_k$ for every iteration $k$. In view of Theorem \ref{t:t1zhou}, after $k = \Ocal(n^2
\ln (1/\epsilon))$ iterations we have $\theta_k \leq \epsilon$, which implies $\tau_k \leq \epsilon$. Accordingly, the infeasible central path parameter has been sufficiently reduced to obtain an $\epsilon$-optimal solution.

Having established convergence of both the IF- and II-QIPMs, we now turn our attention to their overall running times, this is discussed next. 
\section{Complexity}\label{s:complexity}
In this section, we formally analyze the worst case overall running times of the IF- and II-QIPMs. The per-iteration cost of either QIPM can be generally written as
$$T_{iter}  = T_{NT} + T_{progress},$$
where $T_{NT}$ is the quantum gate complexity of obtaining a classical solution to the Newton linear system \eqref{e:IF-Newton}, and $T_{progress}$ is the number of classical arithmetic operations required to prepare the next iteration. Thus, upon bounding $T_{iter}$ for the IF- and II-QIPMs one can obtain the worst-case overall complexity of each algorithm using the iteration bounds that were established in the previous section. 

We conclude this section by comparing the performance of our methods to known classical and quantum algorithms for SDO from the literature, before discussing the impact of dependence on the condition number bound $\kappa$ on the overall running time of QIPMs.

\subsection{Running time of the IF-QIPM}
The following result establishes per-iteration cost of the IF-QIPM.

\begin{proposition}\label{prop:costPerIterIF}
Each iteration $k$ of the IF-QIPM in Algorithm \ref{alg:IF-QIPM} has a quantum gate complexity of $$\Ocal \left( n^3  \frac{\kappa^2}{\epsilon}  \cdot \operatorname{polylog} \left(n, \kappa, \frac{1}{\epsilon} \right) \right)$$ and requires $\Ocal(n^4 \log (\kappa))$ classical arithmetic operations.
\end{proposition}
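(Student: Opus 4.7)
The plan is to decompose one iteration of Algorithm~\ref{alg:IF-QIPM} into four blocks: (i) the classical preprocessing that assembles the Newton data (Steps~\ref{IF-QIPM-inv}--\ref{IF-QIPM-mult}); (ii) the quantum block-encoded QLSA solve of~\eqref{e:IF-Newton} (Step~\ref{IF-QIPM-solve}); (iii) vector-state tomography on $\ket{\Delta z_k \circ \Delta y_k}$ (Step~\ref{IF-QIPM-tom}); and (iv) classical back-substitution to recover $(\overbar{\Delta X}_k, \overbar{\Delta S}_k)$ together with the iterate update (Steps~\ref{IF-QIPM-base}--\ref{IF-QIPM-update}). The quantum gate count is supplied by blocks~(ii)--(iii), and the classical arithmetic count by blocks~(i) and~(iv); the two are additive and, because they already dominate each other in their respective regimes, the proposition follows by separately bounding each one.

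For the quantum count, I would chain Proposition~\ref{prop:NTcompact} with Theorem~\ref{thm:qlsa} to obtain
\[ T_{LS} \;=\; \widetilde{\Ocal}_{n,\kappa,1/\xi}\bigl(\kappa\,\|M_{NT}\|_F\bigr) \;=\; \widetilde{\Ocal}_{n,\kappa,1/\xi}\bigl(n\kappa\,\|M_{NT}\|\bigr), \]
using $\|M_{NT}\|_F = \Ocal(n\|M_{NT}\|)$ since $M_{NT}$ has dimension $\Theta(n^2)$. Together with Proposition~\ref{prop:solSize} this already gives $\varrho\,T_{LS} = \widetilde{\Ocal}(n\kappa^2)$. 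Feeding $\ket{\Delta z_k \circ \Delta y_k}$ into Theorem~\ref{thm:euclidean_norm_tomo} at relative precision $\xi_k$ then costs
\[ T_{TO} \;=\; \widetilde{\Ocal}\!\left(\tfrac{d}{\xi_k}\,T_{LS}\right), \qquad d = \tfrac{n(n+1)}{2} = \Theta(n^2). \]
The key observation is that prior to termination we have $\|R^c_k\|_F > \sigma\gamma\epsilon$, so the choice~\eqref{e:xik} implies the uniform lower bound $\xi_k \geq \beta\epsilon/(25\varrho)$. Substituting yields
\[ T_{TO} \;\leq\; \widetilde{\Ocal}\!\left(\tfrac{n^2 \varrho}{\epsilon}\,T_{LS}\right) \;=\; \widetilde{\Ocal}\!\left(\tfrac{n^2}{\epsilon}\cdot \varrho\,T_{LS}\right) \;=\; \widetilde{\Ocal}\!\left(\tfrac{n^3\kappa^2}{\epsilon}\right), \]
which is the advertised quantum gate complexity after absorbing the $\operatorname{polylog}(n,\kappa,1/\epsilon)$ factors.

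For the classical arithmetic, Steps~\ref{IF-QIPM-inv}--\ref{IF-QIPM-mult} consist of inverting $X_k$ and $S_k$, computing the Nesterov-Todd scaling $W_k^{-1/2}$ (hence $\Ocal(1)$ many $n \times n$ matrix square roots, each computed in $\Ocal(n^{3}\log\kappa)$ operations by standard iterative methods), and forming the right-hand side $\svec(R^c_k)$ together with the matrix products $PX_k$ and $P^{-\top}S_k$ required by Proposition~\ref{prop:NTcompact}; each such task is at most $\Ocal(n^{3}\log\kappa)$. The dominant classical step is Step~\ref{IF-QIPM-base}: reconstructing $\overbar{\Delta X}_k = \smat(Q_2\,\overbar{\Delta z}_k)$ and $\overbar{\Delta S}_k = -\smat(\Acal_s^{\top}\overbar{\Delta y}_k)$ are two matrix-vector products whose coefficient matrices have dimension $\Theta(n^2) \times \Theta(n^2)$, costing $\Ocal(n^4)$ each; the subsequent iterate update is $\Ocal(n^2)$. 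Summing the contributions gives the advertised per-iteration arithmetic cost of $\Ocal(n^{4}\log\kappa)$.

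I expect the only subtle point to be justifying the uniform lower bound $\xi_k = \Omega(\epsilon/\varrho)$: once the pre-termination inequality $\|R^c_k\|_F > \sigma\gamma\epsilon$ is combined with~\eqref{e:xik}, the bound falls out immediately, and the remaining manipulations are routine multiplicative combinations of the estimates already provided by Propositions~\ref{prop:NTcompact} and~\ref{prop:solSize}. I do not foresee any hidden obstacles beyond careful bookkeeping of $\operatorname{polylog}$ factors.
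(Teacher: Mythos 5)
Your quantum-cost accounting follows the paper's own chain exactly (Proposition~\ref{prop:NTcompact} $\to$ Theorem~\ref{thm:qlsa} $\to$ $\varrho T_{LS}=\widetilde{\Ocal}(n\kappa^2)$ via Proposition~\ref{prop:solSize}, then tomography with $d=\Theta(n^2)$ and $\xi_k=\Omega(\epsilon/\varrho)$ from \eqref{e:xik}), and your classical accounting of steps \ref{IF-QIPM-inv}--\ref{IF-QIPM-mult} and \ref{IF-QIPM-base}--\ref{IF-QIPM-update} matches the paper's: the dominant classical work is the two $\Theta(n^2)\times\Theta(n^2)$ matrix--vector products $Q_2\overbar{\Delta z}_k$ and $\Acal_s^{\top}\overbar{\Delta y}_k$, i.e.\ $\Ocal(n^4)$.

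The genuine gap is how the factor $\log\kappa$ arises, and behind it an algorithmic issue you do not address: the QLSA of Theorem~\ref{thm:qlsa} (and the block-encoding normalization in Proposition~\ref{prop:NTcompact}) takes an upper bound on the condition number of $M_{NT}$ as input, and this bound is not known a priori and is too expensive to compute classically. The paper's proof therefore inserts a guess-and-verify loop: start with $\tilde{\kappa}=1$, run the QLSA and tomography, classically evaluate the residual $\bigl\| M_{NT}\bigl[\overbar{\Delta z}_k;\overbar{\Delta y}_k\bigr]-\svec(R^c_k)\bigr\|_2$ (an $\Ocal(n^4)$ computation), accept if it is at most $\varrho\xi_k$, and otherwise double $\tilde{\kappa}$ and repeat. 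This loop runs at most $\log\kappa$ times, which is precisely where the stated $\Ocal(n^4\log\kappa)$ classical count comes from (and the corresponding $\log\kappa$ multiplying the quantum work is absorbed into the polylog). Your argument instead attributes the $\log\kappa$ to iterative computation of matrix square roots in step \ref{IF-QIPM-mult}; the paper counts that step as plain $\Ocal(n^{\omega})$, and in your own tally the resulting $\Ocal(n^4+n^{3}\log\kappa)$ only matches the claimed bound because $\Ocal(n^4)\subseteq\Ocal(n^4\log\kappa)$, not because you have identified the real source of the factor. More importantly, without the verification scheme (or some substitute) the per-iteration procedure is not well specified -- the QLSA call has an unknown parameter, and there is no guarantee that the returned direction satisfies the residual condition \eqref{e:RcAssumption1} that the convergence analysis of Section~\ref{s:tech} relies on -- so you should add this component rather than treat the bookkeeping as routine.
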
 

\begin{proof}
In every iteration, we need to prepare and solve the NT Newton linear system \eqref{e:IF-Newton}, and obtain a classical estimate of the quantum state encoding its solution. Furthermore, we apply the state tomography algorithm to obtain a classical description of the solution of the Newton linear system; we denote the corresponding running time by $T_{TO}(T_{LS}, \xi_k)$ for precision $\xi_k$. For simplicity and without loss of generality, we consider the tomography precision $\xi_k$ but neglect the QLSA precision, as only the tomography precision appears polynomially in the final expression for the iteration cost. Applying Theorem \ref{thm:euclidean_norm_tomo} with $d = n^2$ and $T_U = T_{LS}$, running the tomography step to precision $\xi_k$ chosen according to \eqref{e:xik} implies
$$
    T_{NT}^{IF} = T_{TO}(T_{LS}, \xi_k) = \widetilde{\Ocal}_{n, \kappa, \frac{1}{\xi_k}} \left(\frac{n^2}{\xi_k} T_{LS}   \right) = \widetilde{\Ocal}_{n, \kappa, \frac{1}{\xi_k}} \left(n^2 \frac{\varrho}{\epsilon} T_{LS}  \right)
    = \widetilde{\Ocal}_{n, \kappa, \frac{1}{\epsilon}}\left(  n^3 \frac{\kappa^2}{\epsilon}  \right).
$$
QRAM accesses are required to prepare the Newton linear system and obtain a classical estimate its the solution at iteration $k$.

To progress to the iteration, we must compute the primal and dual search directions $\Delta X$ and $\Delta S$ from the nullspace representation provided in equations \eqref{e:pBasis}-\eqref{e:dBasis}, and update the solution $(X,y,S)$ to the SDO before proceeding to the next iteration. We must also compute our scaling matrix $P$ and its inverse. Computing $\Delta X$ and $\Delta S$ using \eqref{e:pBasis}-\eqref{e:dBasis} amounts to classically computing matrix-vector products in which the dimension is $n^2$, and hence requires $\Ocal(n^4)$ arithmetic operations, while updating the solution is matrix addition involving $n \times n$ matrices, and can be accomplished using $\Ocal(n^2)$ arithmetic operations. Finally, noting that all of the matrices involved in the computation of $P$ have size $n \times n$, we can compute both $P$ and $P^{-1}$ using $\Ocal(n^{\omega})$ arithmetic operations. Summarizing, we have a total of $$T^{IF}_{progress} = \Ocal(n^4 + n^2 + n^{\omega}) = \Ocal(n^4),$$ 
arithmetic operations at each iterate, as computing $\Delta X$ and $\Delta S$ from our estimate of the solution to \eqref{e:IF-Newton} is the dominant operation. 

Note however, at each iterate the QLSA circuit requires the condition number of the Newton linear system coefficient matrix $M_{NT}$ (or, an upper bound on it) as part of the input. Computing the condition number classically is too costly, and so alternatively we use the following scheme to ensure correctness of steps \ref{IF-QIPM-solve} and \ref{IF-QIPM-tom} in Algorithm \ref{alg:IF-QIPM} (i.e., to ensure that $\tilde{\kappa}$ was chosen large enough). Use $\tilde{\kappa} = 1$ as a starting guess, and then 
\begin{enumerate}
    \item Execute steps \ref{IF-QIPM-solve} and \ref{IF-QIPM-tom} of Algorithm \ref{alg:IF-QIPM} with $\tilde{\kappa}$ as input
    \item Use classical estimate of the solution to compute 
    $$ \varepsilon = \left\| M_{NT} \begin{bmatrix} \overbar{\Delta z}_k \\\overbar{\Delta y}_k\end{bmatrix} - \svec (R^c_k) \right\|_2$$
    \item If $\varepsilon \leq \varrho \xi_k$, then \textbf{accept} $(\overbar{\Delta z}_k,\overbar{\Delta y}_k)$ and terminate loop. Otherwise, set $\tilde{\kappa} = 2\cdot \tilde{\kappa}$ and \textbf{go to Step 1}.
\end{enumerate}

One can observe that $\varepsilon$ can be computed using $\Ocal (n^4)$ arithmetic operations, and at most $\log (\kappa)$ such verification steps will be necessary in the worst case. Summarizing, we arrive at a per-iteration cost of $ \log (\kappa) T_{NT}^{IF} =  \widetilde{\Ocal}_{n, \kappa, \frac{1}{\epsilon}}\left(  n^3 \frac{\kappa^2}{\epsilon}  \right)$ QRAM accesses and 
$\Ocal(n^4 \log (\kappa))$ classical arithmetic operations, which completes the proof.
\end{proof}

We are now in a position to bound the overall running time of our IF-QIPM given in Algorithm \ref{alg:IF-QIPM}, which is formalized in the following result. 
\begin{corollary}\label{corr:runtimeIF}
A quantum implementation of Algorithm \ref{alg:IF-QIPM} with access to QRAM outputs an $\epsilon$-optimal solution $(X^*, y^*, S^*)$ to the primal-dual SDO pair \eqref{e:SDO}-\eqref{e:SDD} using at most
$$ \Ocal \left( n^{3.5} \frac{\kappa^2}{\epsilon} \cdot  \operatorname{polylog} \left(n, \kappa, \frac{1}{\epsilon} \right) \right)  $$
QRAM accesses and $ \Ocal \left( n^{4.5} \cdot  \operatorname{polylog} \left( \kappa, \frac{1}{\epsilon} \right) \right) $ arithmetic operations.
\end{corollary}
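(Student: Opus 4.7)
The plan is to prove Corollary \ref{corr:runtimeIF} by combining the per-iteration cost bound of Proposition \ref{prop:costPerIterIF} with the iteration complexity bound of Corollary \ref{c:Converge}, and showing that the one-time preprocessing cost does not dominate the total running time.

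First, I would recall that by Corollary \ref{c:Converge}, for the suggested parameter choices (e.g.\ $\gamma = \delta = 1/20$ and $\beta = 1/4$) Algorithm \ref{alg:IF-QIPM} terminates in at most $K = \Ocal(\sqrt{n} \log(1/\epsilon))$ iterations, since the duality gap is reduced by a factor $(1 - \delta/\sqrt{n})$ in every iteration, and we terminate once $\nu_k \le \epsilon$. Next, Proposition \ref{prop:costPerIterIF} gives the cost of one iteration: $T^{IF}_{NT} = \widetilde{\Ocal}_{n,\kappa,1/\epsilon}(n^3 \kappa^2 / \epsilon)$ QRAM accesses (for block-encoding preparation, QLSA and tomography, together with the verification loop to estimate $\kappa$) and $T^{IF}_{progress} = \Ocal(n^4 \log \kappa)$ classical arithmetic operations (dominated by recovering $\Delta X$ and $\Delta S$ from the nullspace/rowspace representations via \eqref{e:pBasis}--\eqref{e:dBasis}).

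Then, multiplying the per-iteration cost by $K$ yields
\begin{align*}
K \cdot T^{IF}_{NT} &= \Ocal\!\left(\sqrt{n}\, \log(1/\epsilon)\right) \cdot \widetilde{\Ocal}_{n,\kappa,1/\epsilon}\!\left(n^3 \frac{\kappa^2}{\epsilon}\right) = \Ocal\!\left(n^{3.5} \frac{\kappa^2}{\epsilon} \cdot \operatorname{polylog}(n,\kappa,1/\epsilon)\right), \\
K \cdot T^{IF}_{progress} &= \Ocal\!\left(\sqrt{n}\, \log(1/\epsilon)\right) \cdot \Ocal(n^4 \log\kappa) = \Ocal\!\left(n^{4.5} \cdot \operatorname{polylog}(\kappa, 1/\epsilon)\right),
\end{align*}
which matches the two quantities claimed in the statement.

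Finally, I would address the one-time preprocessing: computing the bases $Q_2$ and $\Acal_s^{\top}$ of $\Null(\Acal_s)$ and $\Rcal(\Acal_s)$ (for instance via QR factorization of $\Acal_s^{\top}$) costs $\Ocal(n^{2\omega})$ classical arithmetic operations, or $\Ocal(n^4)$ when the self-dual embedding is used and the null-space basis is read off directly; storing the relevant matrices in QRAM adds only polylogarithmic overhead per entry. Since $n^{2\omega} < n^{4.5}$ for $\omega \le 2.25$, and even the generic $\Ocal(n^{2\omega})$ bound is absorbed into the $\Ocal(n^{4.5})$ bound by standard fast matrix-multiplication values of $\omega$, this preprocessing does not affect the stated complexity. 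The most delicate step is verifying that the choice of tomography precision $\xi_k$ from \eqref{e:xik} (which drives the $1/\epsilon$ factor in $T^{IF}_{NT}$) yields a residual satisfying \eqref{e:RcAssumption1} at every iteration, but this has already been handled by Lemma \ref{l:symmetrization} and fed into Proposition \ref{prop:costPerIterIF}, so no further argument is needed here.
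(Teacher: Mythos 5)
Your proposal follows essentially the same route as the paper's own proof: multiply the $\Ocal(\sqrt{n}\log(1/\epsilon))$ iteration bound from Corollary \ref{c:Converge} by the per-iteration costs from Proposition \ref{prop:costPerIterIF}, which directly gives the stated QRAM-access and arithmetic-operation bounds. One small correction to your final remark on preprocessing: with the value $\omega \approx 2.37$ used in the paper, the one-time QR/Gaussian-elimination cost is $\Ocal(n^{2\omega}) = \Ocal(n^{4.74})$, which is \emph{not} absorbed into the $\Ocal(n^{4.5})$ term (your own condition $\omega \le 2.25$ fails for standard values); the paper handles this by treating the computation of $\Acal_s$ and $Q_2$ as a one-time preprocessing step outside the accounting of the corollary, not by dominance, so the stated bounds are unaffected but your absorption argument as written is incorrect.
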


\begin{proof}
By Corollary \ref{c:Converge}, the IF-QIPM requires $ \Ocal (\sqrt{n} \log (1/\epsilon))$ iterations to converge to an $\epsilon$-optimal solution. Hence, letting $\chi$ be a nonnegative constant and applying Proposition \ref{prop:costPerIterIF}, it follows that Algorithm \ref{alg:IF-QIPM} requires at most 
$$
\widetilde{\Ocal}_{n, \kappa, \frac{1}{\epsilon}} \left( \sum_{k=1}^{\chi \sqrt{n} \log (1/\epsilon)} \left(n^3 \frac{\kappa^2}{\epsilon} \right) \right)  = \widetilde{\Ocal}_{n, \kappa, \frac{1}{\epsilon}} \left(\sqrt{n} \left(n^3 \frac{\kappa^2}{\epsilon}  \right) \right),
$$
QRAM accesses, and $\Ocal \left( \sum_{k=1}^{\chi \sqrt{n} \log (1/\epsilon)} n^4 \log(\kappa)  \right) =  \widetilde{\Ocal}_{\kappa, \frac{1}{\epsilon}} \left( n^{4.5} \right) $ arithmetic operations, which completes the proof.
\end{proof}

Recall that LOPs are simply a special instance of SDO in which all of the matrices are diagonal. Hence, our algorithm can also be applied to these problems as well. 
\begin{corollary}\label{corr:rtLO}
A quantum implementation of Algorithm \ref{alg:IF-QIPM} with access to QRAM outputs an $\epsilon$-optimal solution $(X^*, y^*, S^*)= (\diag (x^*), y^*, \diag (s^*))$ to a Linear Optimization problem using at most
$$ \Ocal \left( n^{2} \frac{\kappa^2}{\epsilon} \cdot  \operatorname{polylog} \left(n, \kappa, \frac{1}{\epsilon} \right) \right)  $$
QRAM accesses and $ \Ocal \left( n^{2.5}  \cdot \operatorname{polylog} \left( \kappa, \frac{1}{\epsilon} \right) \right) $ arithmetic operations.
\end{corollary}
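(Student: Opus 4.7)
The plan is to recognize that Corollary \ref{corr:rtLO} follows by specializing the analysis behind Proposition \ref{prop:costPerIterIF} and Corollary \ref{corr:runtimeIF} to diagonal matrices, so essentially all steps are identical to the SDO case, with the critical observation that the effective dimension of the vectorized symmetric matrices collapses from $\tfrac{n(n+1)}{2}$ to $n$ when every input matrix is diagonal. Concretely, $C, A^{(1)}, \dots, A^{(m)}, X, S$ are all represented by their $n$-element diagonals, so $\svec(\cdot)$ restricted to diagonal matrices is a length-$n$ vector, and the Newton linear system coefficient matrix $M_{\mathrm{NT}}$ becomes an operator acting on a space of dimension $\Ocal(n)$ rather than $\Ocal(n^2)$.

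First I would rewrite the three quantities appearing in the iteration cost analysis with this reduced dimension: \emph{(i)} the tomography dimension $d$ in Theorem~\ref{thm:euclidean_norm_tomo} drops from $n^2$ to $n$; \emph{(ii)} the bound $\|M_{\mathrm{NT}}\|_F \le \sqrt{d}\,\|M_{\mathrm{NT}}\|$ used in the proof of Proposition~\ref{prop:costPerIterIF} becomes $\|M_{\mathrm{NT}}\|_F = \Ocal(\sqrt{n}\,\|M_{\mathrm{NT}}\|)$; \emph{(iii)} Proposition~\ref{prop:solSize} still gives $\varrho = \Ocal(\|M_{\mathrm{NT}}^{-1}\|)$ via the exact same argument, since the Lemma~\ref{lemmaRbounds} bounds on $R^c$ and $R^r$ are dimension-independent and $\nu = \Ocal(1)$. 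Combining these with $T_{LS} = \widetilde{\Ocal}_{n,\kappa,1/\xi}(\kappa\|M_{\mathrm{NT}}\|_F)$ yields $\varrho T_{LS} = \widetilde{\Ocal}_{n,\kappa,1/\epsilon}(\sqrt{n}\,\kappa^{2})$, and then the tomography running time of step~\ref{IF-QIPM-tom} becomes
\begin{equation*}
T_{\mathrm{NT}}^{\mathrm{LO}} \;=\; \widetilde{\Ocal}_{n,\kappa,1/\epsilon}\!\left(\tfrac{n}{\xi_k}\,T_{LS}\right) \;=\; \widetilde{\Ocal}_{n,\kappa,1/\epsilon}\!\left(n \cdot \tfrac{\varrho}{\epsilon}\,T_{LS}\right) \;=\; \widetilde{\Ocal}_{n,\kappa,1/\epsilon}\!\left(\tfrac{n^{1.5}\kappa^2}{\epsilon}\right),
\end{equation*}
where $\xi_k$ is chosen exactly as in \eqref{e:xik}. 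The self-certification loop based on computing $\|M_{\mathrm{NT}}[\overbar{\Delta z}_k;\overbar{\Delta y}_k]^\top - \svec(R_k^c)\|_2$ still applies verbatim and contributes only an extra $\log\kappa$ factor.

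For the classical arithmetic cost $T_{\mathrm{progress}}^{\mathrm{LO}}$, the diagonality reduces the dimensions of $Q_2$ from $\tfrac{n(n+1)}{2}\times(\tfrac{n(n+1)}{2}-m)$ to $n \times (n-m)$ and of $\Acal_s^\top$ accordingly, so the matrix-vector products \eqref{e:pBasis}--\eqref{e:dBasis} used to reconstruct $\overbar{\Delta X}_k, \overbar{\Delta S}_k$ from $\overbar{\Delta z}_k,\overbar{\Delta y}_k$ now take $\Ocal(n^2)$ operations rather than $\Ocal(n^4)$. The update of $(X,y,S)$ remains $\Ocal(n)$ (vector addition in the diagonal representation), and the scaling matrix $P$ and its inverse are likewise diagonal, so computing them takes only $\Ocal(n)$ operations instead of $\Ocal(n^{\omega})$. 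The dominant classical cost per iteration is therefore $T_{\mathrm{progress}}^{\mathrm{LO}} = \Ocal(n^2 \log \kappa)$.

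Finally, I would invoke the iteration-complexity bound $\Ocal(\sqrt{n}\log(1/\epsilon))$ from Corollary \ref{c:Converge}, which still holds because its proof is independent of the diagonal structure. Summing over iterations gives the two stated bounds
\begin{equation*}
\sqrt{n}\log(1/\epsilon)\cdot \widetilde{\Ocal}\!\left(\tfrac{n^{1.5}\kappa^2}{\epsilon}\right) \;=\; \Ocal\!\left(\tfrac{n^{2}\kappa^2}{\epsilon}\cdot\operatorname{polylog}(n,\kappa,1/\epsilon)\right)
\end{equation*}
QRAM accesses and $\Ocal(n^{2.5}\cdot\operatorname{polylog}(\kappa,1/\epsilon))$ arithmetic operations. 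I do not foresee a real technical obstacle here: the only subtlety is to confirm that every dimensional quantity entering Proposition \ref{prop:costPerIterIF} and Corollary \ref{corr:runtimeIF} (namely the tomography dimension, the Frobenius-to-operator norm ratio of $M_{\mathrm{NT}}$, and the cost of reconstructing $\Delta X,\Delta S$ from the nullspace basis) consistently scales with $n$ rather than $n^2$ once the SDO is restricted to diagonal data.
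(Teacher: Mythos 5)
Your proposal is correct and follows essentially the same route as the paper's own proof: specialize Proposition \ref{prop:costPerIterIF} and Corollary \ref{corr:runtimeIF} by replacing the problem dimension $d = n^2$ with $d = n$ in the tomography and QLSA costs, and observe that diagonality removes the expensive symmetrization/scaling work so the per-iteration classical cost drops to $\Ocal(n^2)$, after which the $\Ocal(\sqrt{n}\log(1/\epsilon))$ iteration bound from Corollary \ref{c:Converge} yields both stated totals. Your accounting of $\varrho T_{LS} = \widetilde{\Ocal}_{n,\kappa,1/\epsilon}(\sqrt{n}\,\kappa^2)$ and of the nullspace reconstruction cost matches the paper's (terser) argument.
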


\begin{proof}
The quantum gate complexity can be obtained upon repeating the analysis we used in the proof of Corollary \ref{corr:runtimeIF}, noting that for LO the dimension of the problem is $d = n$ instead of $d = n^2$. Then, recall that in the case of LO, all of the matrices are diagonal, and therefore no symmetrization is required. Hence, the only classical progression steps in this case are computing the primal and dual search directions from their nullspace representation, and subsequently updating the solution, and each of these steps require $\Ocal (n^2)$ arithmetic operations. 
\end{proof}

Since our IF-QIPM utilizes a hybrid quantum-classical scheme, one can trivially obtain a fully classical IF-IPM by replacing the quantum subroutine for solving the Newton linear system with an inexact classical linear systems algorithm. Note that the system we obtain in \eqref{e:IF-Newton} using the nullspace representation is not guaranteed to be positive semidefinite, so we cannot directly apply the Conjugate Gradient Method (CG) to solve it. For $M \in \R{d \times d}$ and $v \in \R{d}$ with $M \not\in \Scal^n_+$, one could first symmetrize the system $M u = v$ as
$$ M^{\top} M u = M^{\top} v,$$
and subsequently apply CG, in which case the cost of this step would be 
$$\Ocal \left(d^{\omega} + d^2 \sqrt{\kappa_M^2} \log (1/\xi) \right)= \Ocal \left(d^{\omega} + d^2 \kappa_M \log (1/\xi) \right),$$ 
because we need to carry out matrix-matrix multiplication, and the condition number of the resulting coefficient matrix is $\kappa_M^2$. Alternatively, we could employ one of the generalizations of CG such as the Generalized Minimum Residual Method (GMRES), however, not only is the coefficient matrix we employ indefinite, but it is nonsymmetric and dense: applying GMRES\footnote{We do not review CG or GMRES in detail, and instead refer the reader to Chapter 6 of \cite{saad2003iterative}.} to solve a dense, nonsymmetric and indefinite system of dimension $d$ to precision $\xi$ leads to a complexity of $\Ocal \left( d^3 \log(1/\xi) \right)$. In our case $d = n^2$, so classically performing this step will require time $\Ocal(n^{2\omega} + n^4 \kappa \log (1/\xi))$ if we symmetrize and use CG, or time $\Ocal \left( n^6 \log(1/\xi) \right)$ using GMRES.

This outcome is undesirable because it would lead to a slower algorithm than what already exists in the classical literature. Alternatively, one can use a polynomial approximation $q$, of $1/u$ on the interval $[1/\kappa_M^2,1]$. The solution would then be $q(M^2)Mv$, which is approximately $M^{-2}M v = M^{-1}v$. Such a polynomial $q$ with degree roughly $\kappa_M \log(1/\xi)$ exists (see, e.g., \cite[Section 6.11]{saad2003iterative}), and the approximate solution would be $q(M^2)Mv$, which is close to $M^{-1}v$ whenever the linear system $Mu=v$ has a solution. Crucially, $q(M^2)M v$ can be obtained using $2 \cdot \text{deg}(q)+1$ matrix-vector products, and does not require computing (or even writing down) the matrix $M^2$, thus avoiding the $n^{2\omega}$ present term in the complexity associated with symmetrization and subsequently using CG. Rather, in our case the complexity becomes
$$\Ocal(d^2 \sqrt{  \kappa_M  } \log (1/\xi)) = \Ocal \left( \left( n^2 \right)^2 \sqrt{  \kappa^2  } \log (1/\xi) \right)  = \Ocal \left(n^4  \kappa   \log (1/\xi) \right).$$
For an explicit discussion of this method, we refer the reader to Section 16.5 of the monograph \cite{vishnoi2013lx}.

This argument is the basis for the next result, which bounds the running time of the classical IF-IPM for SDO.
\begin{corollary}
A classical implementation of Algorithm \ref{alg:IF-QIPM} obtains an $\epsilon$-optimal solution $(X^*, y^*, S^*)$ to the primal-dual SDO pair \eqref{e:SDO}-\eqref{e:SDD} with overall complexity
\begin{equation*}
 \Ocal \left( n^{4.5} \kappa \cdot \textup{polylog}\left(\kappa, \frac{1}{\epsilon} \right) \right).
\end{equation*}
\end{corollary}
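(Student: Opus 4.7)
My plan is to mimic the structure of Corollary \ref{corr:runtimeIF} exactly, but with the QLSA-plus-tomography subroutine in steps \ref{IF-QIPM-solve}--\ref{IF-QIPM-tom} of Algorithm \ref{alg:IF-QIPM} replaced by a classical iterative solver for the Newton linear system \eqref{e:IF-Newton}. Every other part of the algorithm (initialization, the reconstruction via \eqref{e:pBasis}--\eqref{e:dBasis}, the update in step \ref{IF-QIPM-update}, and computation of the scaling matrix $P$) is already classical, so I only need to rebound the cost of one linear system solve and then reuse Corollary \ref{c:Converge} to multiply through by the iteration count $\Ocal(\sqrt{n}\log(1/\epsilon))$.

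The core step is bounding the cost of one classical solve of \eqref{e:IF-Newton} to the precision required by \eqref{e:RcAssumption1}. The coefficient matrix $M_{NT}$ is dense, nonsymmetric, and indefinite, so neither CG nor standard GMRES gives a competitive bound directly, as was pointed out in the paragraph preceding the corollary. I would use the polynomial-approximation scheme already described there: approximate $u \mapsto 1/u$ on $[1/\kappa_M^2,1]$ by a polynomial $q$ of degree $\Ocal(\kappa_M \log(1/\xi))$, then output $q(M_{NT}^2) M_{NT} v$, which is close to $M_{NT}^{-1} v$ whenever the system is consistent. Since this computation is performed entirely through matrix-vector products of dimension $n^2$, its cost is
\begin{equation*}
\Ocal\!\left(\deg(q)\cdot (n^2)^2\right) \;=\; \Ocal\!\left(n^4\,\kappa\,\log(1/\xi)\right),
\end{equation*}
where we use the identification of $\kappa_M$ with the condition number bound $\kappa$ of the Newton linear system.

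Next I would fix the precision $\xi_k$ for iteration $k$. Choosing $\xi_k = \beta\|R^c_k\|_F/\varrho$ in line with \eqref{e:xik} and \eqref{e:RcAssumption1}, Lemma~\ref{l:symmetrization} applies verbatim in the classical setting to certify that the inexact search direction satisfies the residual assumption used throughout Section~\ref{s:tech}, so Corollary~\ref{c:Converge} delivers the $\Ocal(\sqrt{n}\log(1/\epsilon))$ iteration complexity unchanged. Because $\varrho = \Ocal(\|M_{NT}^{-1}\|)$ by Proposition~\ref{prop:solSize} and the algorithm terminates once $\nu_k\le\epsilon$, we have $\log(1/\xi_k) = \Ocal(\log\kappa + \log(1/\epsilon))$, a polylogarithmic factor. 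The remaining classical work per iteration, namely computing $\overbar{\Delta X},\overbar{\Delta S}$ via matrix-vector products with $Q_2$ and $\Acal_s^{\top}$ of ambient dimension $n^2$, forming $P=W^{-1/2}$, and the rank-update of $(X,y,S)$, costs $\Ocal(n^4)$ arithmetic operations, which is absorbed into the $\Ocal(n^4 \kappa)$ term.

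Multiplying the per-iteration cost $\Ocal(n^4\kappa\cdot\operatorname{polylog}(\kappa,1/\epsilon))$ by the iteration bound $\Ocal(\sqrt{n}\log(1/\epsilon))$ produces the stated total $\Ocal(n^{4.5}\kappa\cdot\operatorname{polylog}(\kappa,1/\epsilon))$. The only delicate point I foresee is justifying that the polynomial-approximation scheme of \cite{saad2003iterative, vishnoi2013lx} really delivers a residual bound of the form $\|M_{NT}(\Delta z,\Delta y)^\top - \svec(R^c_k)\|_2\le\varrho\xi_k$ for our indefinite coefficient matrix, rather than merely a bound on $\|(\Delta z,\Delta y)-\text{true solution}\|$; this is where a verification step analogous to the $\tilde\kappa$-doubling loop inside the proof of Proposition~\ref{prop:costPerIterIF} can be re-used to certify correctness at only a $\log\kappa$ overhead, which again is absorbed into the polylogarithmic factor.
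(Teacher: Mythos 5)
Your proposal is correct and follows essentially the same route as the paper: the paper's proof also invokes the polynomial-approximation solver (citing Theorem 16.6 of \cite{vishnoi2013lx} with $d=n^2$) to get a per-iteration cost of $\Ocal(n^4\kappa\cdot\operatorname{polylog}(\kappa,1/\epsilon))$, keeps $T^{IF}_{progress}=\Ocal(n^4)$, reuses the $\tilde\kappa$-doubling verification loop from Proposition~\ref{prop:costPerIterIF} at a $\log\kappa$ overhead, and multiplies by the $\Ocal(\sqrt{n}\log(1/\epsilon))$ iteration bound of Corollary~\ref{c:Converge}.
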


\begin{proof}
Simply note that applying \cite[Theorem 16.6]{vishnoi2013lx} with $d = n^2$ one has $$T_{NT}^{IF} = \Ocal \left( n^4 \kappa \cdot \textup{polylog}\left(\kappa, \frac{1}{\epsilon} \right) \right),$$ and we still have $T^{IF}_{progress} = \Ocal(n^4)$. Just as in the case of the IF-QIPM, the linear system solver we use for our IF-IPM takes the condition number of the Newton linear system as input. Using the verification scheme described in the proof of Proposition \ref{prop:costPerIterIF} implies that we might need to repeat these steps $\log (\kappa)$ times to ensure the solution we obtain is accepted. The stated result then follows upon accounting for the iteration bound provided by Corollary \ref{c:Converge}. 
\end{proof}

\begin{corollary}
A classical implementation of Algorithm \ref{alg:IF-QIPM} obtains an $\epsilon$-optimal solution $(X^*, y^*, S^*)= (\diag (x^*), y^*, \diag (s^*))$ to a Linear Optimization problem with overall complexity
\begin{equation*}
  \Ocal \left( n^{2.5} \kappa \cdot \textup{polylog}\left(\kappa, \frac{1}{\epsilon} \right) \right).
\end{equation*}
\end{corollary}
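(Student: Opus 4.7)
The plan is to specialize the proof of the preceding SDO corollary to the LO setting, exactly mirroring how Corollary \ref{corr:rtLO} specializes Corollary \ref{corr:runtimeIF} for the quantum version. The central observation, already exploited in the proof of Corollary \ref{corr:rtLO}, is that LO is the special case of SDO in which all input matrices $C, A^{(1)}, \dots, A^{(m)}$ are diagonal; consequently all iterates $X_k$ and $S_k$ remain diagonal, no symmetrization step is needed, and the effective problem dimension is $d = n$ rather than $d = n^2$.

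With this reduction in hand, I would re-run the per-iteration accounting of the previous corollary (the classical SDO version of Algorithm \ref{alg:IF-QIPM}) in dimension $d = n$. For the linear algebra step that replaces the QLSA, I would invoke the same polynomial-approximation argument based on \cite[Section 16.5]{vishnoi2013lx} / \cite[Section 6.11]{saad2003iterative}: the cost of obtaining an $\xi$-accurate classical solution to the nonsymmetric indefinite Newton system $Mu = v$ of dimension $d$ is $\Ocal(d^{2} \kappa \log(1/\xi))$ by applying a degree-$\Ocal(\kappa \log(1/\xi))$ polynomial in $M^{2}$ to $Mv$ via iterated matrix-vector products. With $d = n$ this yields
\begin{equation*}
T_{NT}^{IF} \;=\; \Ocal\!\left(n^{2}\kappa \cdot \textup{polylog}\!\left(\kappa, \tfrac{1}{\epsilon}\right)\right).
\end{equation*}
For $T_{progress}^{IF}$, since the nullspace/rowspace bases act on vectors of length $n$ and the solution updates are additions of $n$-dimensional vectors (diagonals), the cost collapses from $\Ocal(n^{4})$ to $\Ocal(n^{2})$, and the scaling-matrix computation is similarly reduced. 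As in the SDO case, I would re-use the verification scheme for the condition number bound $\tilde{\kappa}$ from the proof of Proposition \ref{prop:costPerIterIF}, contributing only an extra $\log(\kappa)$ factor that is absorbed into the polylogarithm.

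Combining these, the per-iteration cost is $\Ocal(n^{2}\kappa \cdot \textup{polylog}(\kappa, 1/\epsilon))$. By Corollary \ref{c:Converge}, Algorithm \ref{alg:IF-QIPM} converges in $\Ocal(\sqrt{n}\log(1/\epsilon))$ iterations, which applies unchanged to the LO setting since the convergence analysis is dimension-agnostic within the Monteiro-Zhang family. Multiplying these two factors immediately gives the claimed overall complexity
\begin{equation*}
\Ocal\!\left(n^{2.5}\kappa \cdot \textup{polylog}\!\left(\kappa, \tfrac{1}{\epsilon}\right)\right).
\end{equation*}

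I do not anticipate a genuine obstacle here: the result is a direct specialization, and the only point that deserves care is checking that the residual-control assumptions \eqref{e:RcAssumption1}, \eqref{e:scaledBoundr}, \eqref{e:scaledBound} used in the convergence proof are compatible with the classical iterative solver's termination criterion, which is the same consistency check already handled by the verification loop in Proposition \ref{prop:costPerIterIF}. Thus the proof can be written in essentially a single paragraph that invokes the SDO-case argument with $d = n$, notes the absence of symmetrization overhead, and applies the iteration bound.
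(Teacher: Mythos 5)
Your proposal is correct and follows exactly the route the paper intends: the paper leaves this corollary without an explicit proof precisely because it is the $d=n$ specialization of the classical SDO corollary, argued in the same way that Corollary \ref{corr:rtLO} specializes Corollary \ref{corr:runtimeIF} (diagonal iterates, no symmetrization, $T_{NT}^{IF}=\Ocal(n^2\kappa\cdot\operatorname{polylog}(\kappa,1/\epsilon))$ via the polynomial-in-$M^2$ solver, $T_{progress}^{IF}=\Ocal(n^2)$, a $\log\kappa$ verification factor, and the $\Ocal(\sqrt{n}\log(1/\epsilon))$ iteration bound from Corollary \ref{c:Converge}). Your accounting matches the paper's on every term, so no gaps to report.
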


\subsection{Running time of the II-QIPM}
Similar to our analysis for the IF-QIPM, we begin by formally stating the per-iteration cost of the II-QIPM.

\begin{proposition}\label{prop:costPerIterII}
Every iteration $k$ of the II-QIPM in Algorithm \ref{alg:QII-IPM} has a quantum gate complexity of $$\Ocal \left(  n^{3.5} \kappa \kappa_{\cal A} \rho
  \left(  \|{\cal A}\|_F + \|S\|_F \right)   \frac{1}{\epsilon} \cdot \operatorname{polylog} \left(n, \kappa, \frac{1}{\epsilon} \right) \right)$$ and requires $\Ocal(n^{\omega} \log (\kappa))$ classical arithmetic operations. 
\end{proposition}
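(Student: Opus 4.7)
The plan is to mirror the structure of Proposition~\ref{prop:costPerIterIF}, decomposing the per-iteration cost as $T^{II}_{iter}= T^{II}_{NT}+T^{II}_{progress}$, where $T^{II}_{NT}$ collects the quantum cost of preparing, solving, and extracting a classical description of the solution of the inexact symmetrized Newton system \eqref{e:inexactNewt}, and $T^{II}_{progress}$ collects all classical arithmetic needed to prepare the subsequent iterate. The main differences compared to the IF-QIPM analysis are that the coefficient matrix is the full symmetrized Newton matrix (rather than the compact null-space representation), that the right-hand side carries nonzero infeasibility residuals, and that the norm bounds on $X_k$ and $S_k$ come from Lemma~\ref{lemma2} and \eqref{e:II_norm_bound} instead of from a bound on $\nu$.

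For the block-encoding, I would follow the same factorization idea used for the Nesterov--Todd linear system in Propositions~\ref{prop:ntbe1}--\ref{prop:NTcompact}, but applied to the HKM choice \eqref{e:HKM}. With $P=S^{1/2}$ the nontrivial Kronecker factors are $\Ecal_{s_k}=S_k^{1/2}\otimes_s S_k^{1/2}$ and $\Fcal_{s_k}=S_k^{1/2}X_k\otimes_s S_k^{-1/2}$; each can be block-encoded via Proposition~\ref{prop:blockencSymKronecker} after classically precomputing and storing the relevant matrix products in QRAM. Combining the three block rows of \eqref{e:inexactNewt} with Propositions~\ref{prop:lincombblock} and~\ref{prop:blockencrule} yields a block-encoding of the II Newton matrix $M_{II}$ with normalization $\alpha=\Ocal\!\left(\|\Acal\|_F+\|\Ecal_{s_k}\|_F+\|\Fcal_{s_k}\|_F\right)$. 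Using Lemma~\ref{lemma2}, Remark~\ref{r:Rem2}, and \eqref{e:II_norm_bound} to bound $\|X_k\|_F,\|S_k\|_F=\Ocal(\rho n^{1.5})$, and the fact that $\|G\otimes_s K\|_F\le\|G\|_F\|K\|_F$, one obtains $\alpha=\Ocal(\|\Acal\|_F+\|S_k\|_F)$.

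Given this block-encoding, applying Theorem~\ref{thm:qlsa} yields $T_{LS}=\widetilde{\Ocal}_{n,\kappa,1/\xi}\!\left(\kappa(\|\Acal\|_F+\|S\|_F)\right)$, and Theorem~\ref{thm:euclidean_norm_tomo} with $d=\Ocal(n^2)$ gives $T^{II}_{NT}=\widetilde{\Ocal}(\tfrac{d\varrho}{\xi_k}T_{LS})$. Substituting the choice \eqref{e:xikII}, namely $\xi_k=\epsilon/(\varrho\,\log n\log\tfrac{1}{\epsilon})$, absorbs one factor of $\varrho$ and produces the factor $1/\epsilon$. The remaining factor of $\varrho$ is where $\kappa_\Acal$ and $\rho$ enter: using the analogue of Proposition~\ref{prop:solSize} one has $\varrho=\Ocal(\|M_{II}^{-1}\|\cdot\|\text{RHS}\|)$, and the right-hand side of \eqref{e:inexactNewt} is controlled by Lemma~\ref{lemma1} and the bounds in \eqref{e:resbound}, yielding $\|\text{RHS}\|=\Ocal(\rho\sqrt n)$ for the infeasibility components (they scale like $\tau R_0^{p,d}$ with $R_0=\Ocal(\rho)$) and $\Ocal(\rho^2 n^{0.5})$ for the complementarity residual through $(1-\eta_2)\tau\mu_0$ combined with the centrality bound. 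The dependence on $\kappa_\Acal$ arises because the residual bound $\|r_k^p\|\le\hat\vartheta\gamma_1\rho\tau_k\vartheta_k$ with $\hat\vartheta\le\sigma_{\min}(\Acal)$ forces the tomography precision for the primal-residual component to scale with $1/\sigma_{\min}(\Acal)$, which translates into an extra $\kappa_\Acal$ factor in $\varrho$ after dividing by $\|\Acal\|_F$. Collecting all these factors produces $T^{II}_{NT}=\widetilde{\Ocal}(n^{3.5}\kappa\kappa_\Acal\rho(\|\Acal\|_F+\|S\|_F)/\epsilon)$; the extra $n^{1.5}$ (on top of the $n^2$ from tomography) tracks the $\sqrt n$-type factor from the RHS norm together with normalization of the block structure.

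Finally, $T^{II}_{progress}$ comes from classically forming $X_k^{-1}$, $S_k^{-1}$, the scaling matrix $P_k=S_k^{1/2}$, and the products needed to load the block-encoding data into QRAM (all $n\times n$ matrix operations at $\Ocal(n^\omega)$), plus the rank-$n^2$ vector update of $(X,y,S)$ which is $\Ocal(n^2)$. As in Proposition~\ref{prop:costPerIterIF}, we also employ the doubling scheme for the unknown condition number $\kappa$, adding at most a $\log\kappa$ multiplicative factor. The main technical obstacle I anticipate is rigorously tracking the norm bound on the II Newton right-hand side so that the $\rho$, $\kappa_\Acal$, and $\sqrt n$ factors land exactly where the statement claims; this requires carefully combining Lemmas~\ref{lemma1}--\ref{lemma2} with the residual bounds \eqref{e:resbound} and the choice \eqref{eq:vartheta} of the sequence $\{\vartheta_k\}$, rather than any new conceptual step.
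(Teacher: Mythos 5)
Your overall architecture matches the paper's proof: the same split into a quantum cost (HKM block-encoding via the symmetric-Kronecker construction, QLSA giving $T_{LS}^{II}=\widetilde{\Ocal}_{n,\kappa}(\kappa(\|\Acal\|_F+\|S\|_F))$ as in Proposition~\ref{prop:hkmls}, then tomography with $d=\Ocal(n^2)$ and the precision choice \eqref{e:xikII}) and a classical cost ($\Ocal(n^\omega)$ for $P_k=S_k^{1/2}$ and inverses, $\Ocal(n^2)$ for the update, and the $\log\kappa$ doubling scheme borrowed from Proposition~\ref{prop:costPerIterIF}). Those parts are fine, up to a bookkeeping slip: you write the tomography cost as $\widetilde{\Ocal}(\tfrac{d\varrho}{\xi_k}T_{LS})$ and then substitute \eqref{e:xikII}, which already contains a factor $1/\varrho$; taken literally this gives $\varrho^2/\epsilon$, and your remark that one factor is ``absorbed'' is not an argument. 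The paper treats $\xi_k$ in \eqref{e:xikII} as the normalized tomography precision and uses $\widetilde{\Ocal}(\tfrac{n^2}{\xi_k}T_{LS}^{II})$, so exactly one factor of $\varrho$ appears.

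The genuine gap is in how you produce $\varrho$, which is precisely where the claimed factors $\kappa_{\Acal}\,\rho\,n^{1.5}$ must come from. You propose $\varrho=\Ocal(\|M_{II}^{-1}\|\cdot\|\mathrm{RHS}\|)$ in analogy with Proposition~\ref{prop:solSize}; but $\|M_{II}^{-1}\|$ is tied to the Newton-system condition number $\kappa$, not to $\kappa_{\Acal}$, so this route would change the form of the bound (effectively trading $\kappa_{\Acal}\rho n^{1.5}$ for extra $\kappa$-dependence), and your own right-hand-side estimate of order $\rho\sqrt{n}$ cannot generate the $n^{1.5}$ you need. The subsequent claim that the condition $\|r_k^p\|\leq\hat\vartheta\gamma_1\rho\tau_k\vartheta_k$ with $\hat\vartheta\leq\sigma_{\min}(\Acal)$ ``translates into an extra $\kappa_{\Acal}$ factor in $\varrho$ after dividing by $\|\Acal\|_F$'' is not a derivation: $\kappa_{\Acal}=\sigma_{\max}(\Acal)/\sigma_{\min}(\Acal)$ does not arise that way, and in the paper the precision choice \eqref{e:xikII} carries no such factor. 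The paper instead bounds $\varrho$ directly as $\Ocal(\kappa_{\Acal}\rho n^{1.5})$ by invoking the known bounds $\|\Delta X\|,\|\Delta S\|=\Ocal(\rho n)$ from \cite{zhou2004polynomiality}, converting to Frobenius norm (whence the $n^{1.5}$, as in \eqref{e:II_norm_bound}), and using the infeasible-neighborhood definition \eqref{e:neighborhood} (where $\Acal^{+}$ appears) to control $\|\Delta y\|$, which is where $\kappa_{\Acal}$ enters; with that bound and $\|S\|_F=\Ocal(\rho n^{1.5})$, the stated complexity follows immediately. Your proposal would need to replace the $\|M_{II}^{-1}\|\cdot\|\mathrm{RHS}\|$ argument with these solution-norm bounds to be correct.
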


\begin{proof}
In each iteration of the II-QIPM, we prepare and solve the HKM
linear system; and let $T_{LS}^{II}$ denote the time required to do so. As in the IF-QIPM, we the utilize vector state tomography algorithm from \cite{van2021tomography} to obtain a classical description of the solution of the Newton linear system \eqref{e:inexactNewt}; and similar to the proof of Proposition \ref{prop:costPerIterIF}, denote the corresponding running time by
$T_{TO}(T_{LS}^{II} , \xi_k)$ for precision $\xi_k$. Applying Proposition~\ref{prop:hkmls}, we have $T_{LS}^{II} = \widetilde{\Ocal}_{n, \kappa} (\kappa (\|{\cal A}\|_F + \|S\|_F))$, and hence, we have: 
\begin{align*}
    T_{NT}^{II} = T_{TO}(T_{LS}^{II}, \xi_k) = \widetilde{\Ocal}_{n, \kappa, \frac{1}{\xi_k}} \left(\frac{n^2}{\xi_k} T_{LS}^{II} \right) &= \widetilde{\Ocal}_{n, \kappa, \frac{1}{\epsilon}} \left(\frac{n^2}{\epsilon}  \varrho \left( \kappa (\|{\cal A}\|_F + \rho n^{1.5} \right) \right).
\end{align*} 
From here, we point out that for II-(Q)IPMs, $\varrho = \Ocal (\kappa_{\cal A} \rho
n^{1.5})$; this is a consequence of the known bounds $\|\Delta X\|,
\|\Delta S\| = \Ocal(\rho n)$ (see, e.g., \cite{zhou2004polynomiality}),
the relationship between spectral norm and Frobenius norm, and the
definition of the neighborhood of the central path
\eqref{e:neighborhood}. We therefore have
$$
    T_{NT}^{II}  = \widetilde{\Ocal}_{n, \kappa, \frac{1}{\epsilon}} \left(\frac{n^2}{\epsilon}  \kappa_{\cal A} \rho
n^{1.5} \left( \kappa (\|{\cal A}\|_F + \|S\|_F \right)\right) = \widetilde{\Ocal}_{n, \kappa, \frac{1}{\epsilon}} \left( n^{3.5} \kappa \kappa_{\cal A} \rho
  \left(  \|{\cal A}\|_F + \rho n^{1.5} \right)  \frac{1}{\epsilon} \right),
$$
as $\| S \|_F = \Ocal (\rho n^{1.5})$ by \eqref{e:II_norm_bound}.

We must also update the solution to the SDOP, and compute our scaling matrix $P$ and its inverse. Updating the solution is again standard matrix addition involving $n \times n$ matrices, and can be accomplished using $\Ocal(n^2)$ arithmetic operations, and the computation of $P = S^{1/2}$ and its inverse can be performed using $\Ocal(n^{\omega})$ arithmetic operations, as all of the involved matrices are of size $n \times n$. 

Just as in the case of the IF-QIPM, the QLSA we use for our II-QIPM takes the condition number of the Newton linear system as input. Using the scheme described in the proof of Proposition \ref{prop:costPerIterIF} implies that we might need to repeat these steps (both quantum and classical) $\log (\kappa)$ times to ensure the solution we obtain is accepted. Thus, each iteration of Algorithm \ref{alg:QII-IPM} has a quantum gate complexity of $$\Ocal \left( n^{3.5} \kappa \kappa_{\cal A} \rho
  \left(  \|{\cal A}\|_F + \rho n^{1.5} \right)  \frac{1}{\epsilon} \cdot \operatorname{polylog} \left(n, \kappa, \frac{1}{\epsilon} \right) \right)$$ and requires $\Ocal(n^{\omega} \log(\kappa))$ classical arithmetic operations. The proof is complete. 
\end{proof}

Combining the preceding result with the iteration bound in Theorem~\ref{t:t1zhou} immediately gives the overall running time of the II-QIPM, which is formalized next. 
\begin{corollary}
  A quantum implementation of Algorithm \ref{alg:QII-IPM} with access to QRAM converges to an $\epsilon$-optimal solution to the primal-dual SDO pair \eqref{e:SDO}-\eqref{e:SDD} in time
$$\Ocal \left( n^{5.5} \kappa \kappa_{\cal A} \rho
  \left(  \|{\cal A}\|_F + \rho n^{1.5} \right) \frac{1}{\epsilon} \cdot \operatorname{polylog} \left(n, \kappa, \frac{1}{\epsilon} \right) \right).$$
\end{corollary}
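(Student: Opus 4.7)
The plan is to combine the per-iteration cost established in Proposition~\ref{prop:costPerIterII} with the iteration complexity bound from Theorem~\ref{t:t1zhou}, and verify that the classical arithmetic cost is dominated by the quantum gate complexity.

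First, I would invoke Theorem~\ref{t:t1zhou} to assert that under the choice of $\xi_k$ prescribed in \eqref{e:xikII} and the residual bounds \eqref{e:resbound} guaranteed by Lemma~\ref{l:symmetrization}, Algorithm~\ref{alg:QII-IPM} terminates in at most $K = \Ocal(n^{2} \log(1/\epsilon))$ iterations with $\theta_K \le \epsilon$, which together with $\tau_k \le \theta_k$ (built into the definition of $\Ncal_I$) delivers an $\epsilon$-optimal solution. Next, I would recall from Proposition~\ref{prop:costPerIterII} that each such iteration incurs a quantum gate complexity of
\[
  T^{II}_{iter} = \Ocal\!\left( n^{3.5}\,\kappa\,\kappa_{\cal A}\,\rho\,(\|{\cal A}\|_F + \rho n^{1.5})\,\tfrac{1}{\epsilon}\cdot\operatorname{polylog}\!\left(n,\kappa,\tfrac{1}{\epsilon}\right)\right),
\]
plus $\Ocal(n^{\omega}\log \kappa)$ classical arithmetic operations to update $P$, $P^{-1}$, and the iterate $(X,y,S)$.

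The total running time is then simply the product $K \cdot T^{II}_{iter}$. Since $K = \Ocal(n^{2}\log(1/\epsilon))$, the factor $n^{2}$ multiplies $n^{3.5}$ to yield $n^{5.5}$, while the additional $\log(1/\epsilon)$ is absorbed into the polylogarithmic factor $\operatorname{polylog}(n,\kappa,1/\epsilon)$ already present. The aggregate classical cost is $\Ocal(n^{\omega+2}\log(\kappa)\log(1/\epsilon))$, which is dominated by the quantum gate complexity above (since $n^{\omega+2} \le n^{5.5}$ for $\omega \le 3.5$, which holds, and the additional factors $\kappa\kappa_{\cal A}\rho(\|{\cal A}\|_F + \rho n^{1.5})/\epsilon$ on the quantum side are at least $\Omega(1/\epsilon)$). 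Putting it together yields the claimed bound
\[
  \Ocal\!\left(n^{5.5}\,\kappa\,\kappa_{\cal A}\,\rho\,(\|{\cal A}\|_F + \rho n^{1.5})\,\tfrac{1}{\epsilon}\cdot\operatorname{polylog}(n,\kappa,\tfrac{1}{\epsilon})\right).
\]

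The proof is essentially arithmetic once the ingredients are in place, so there is no substantive obstacle. The only point that deserves a sentence of care is verifying that the iteration bound of Theorem~\ref{t:t1zhou} is legitimately applicable here: one must check that the tomography precision \eqref{e:xikII} we use indeed forces the inexact Newton direction to satisfy \eqref{e:resbound}, which is exactly the content of Lemma~\ref{l:symmetrization}, and that the HKM-based block-encoding used in Proposition~\ref{prop:costPerIterII} matches the symmetrization underlying the analysis in \cite{zhou2004polynomiality}. Both are already established in the preceding sections, so the corollary follows directly.
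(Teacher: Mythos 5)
Your proposal is correct and follows essentially the same route as the paper: the paper's proof likewise multiplies the per-iteration cost of Proposition~\ref{prop:costPerIterII} by the $\Ocal(n^{2}\log(1/\epsilon))$ iteration bound of Theorem~\ref{t:t1zhou}, absorbing the extra $\log(1/\epsilon)$ into the polylogarithmic factor. Your added remark that the aggregate classical arithmetic cost is dominated by the quantum gate complexity is a harmless (and correct) elaboration that the paper leaves implicit.
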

 
\begin{proof}
We can begin by expanding the expression for the per-iteration cost of the algorithm provided in Proposition \ref{prop:costPerIterII}. Then, noting that by Theorem~\ref{t:t1zhou}, the II-QIPM requires $ \Ocal(n^2 \ln 1/\epsilon)$ iterations, we can express the total running time of the algorithm as
$$
  \sum_{k=1}^{\chi n^2 \log (1/\epsilon)}   n^{3.5} \kappa \kappa_{\cal A} \rho 
  \left(  \|{\cal A}\|_F + \rho n^{1.5} \right)   \frac{1}{\epsilon} =\Ocal \left(  n^{5.5} \kappa \kappa_{\cal A} \rho
  \left(  \|{\cal A}\|_F + \rho n^{1.5} \right)   \frac{1}{\epsilon} \cdot \operatorname{polylog} \left(n, \kappa, \frac{1}{\epsilon} \right) \right),
$$
and the proof is complete. 
\end{proof}

\subsection{Comparison with existing algorithms for SDO}\label{ss:compare}
Before proceeding further, we remind the reader of a number of key differences between (Q)IPMs and (Q)MWU algorithms. The output of (Q)IPMs is a classical primal-dual pair $(X,y,S)$ such that 
        \begin{align*} 
        A^{(i)} \bullet X &= b_i \quad \forall i \in [m], \quad X \succ 0 \nonumber \\
        \sum_{i \in [m]} y_i A^{(i)} + S &= C   \quad  S \succeq 0 
        \end{align*}
         with 
         $$\frac{X \bullet S}{n} \leq \epsilon.$$
         Conversely, primal-dual (Q)MWUs output a dual solution $y \in \R{m}$ such that setting 
    $$ X = R \cdot \frac{\exp\left( - \sum_{i = 1}^m y_i A^{(i)} \right)}{\trace \left( \exp\left( - \sum_{i = 1}^m y_i A^{(i)} \right) \right)}$$
    the primal-dual pair $(X,y)$ satisfies
    \begin{align*} 
    \left| A^{(i)} \bullet X - b_i \right| &\leq  \epsilon_{\text{abs}} \quad \forall i \in [m], \quad X \succ 0 \nonumber \\
    S = C - \sum_{i \in [m]} y_i A^{(i)}   &\succeq - \epsilon_{\text{abs}} I  \
    \end{align*}
    and the objective value attained by this solution is $\text{OPT} \in [\varsigma - {\cal O}(\epsilon_{\text{abs}}), \varsigma + {\cal O}(\epsilon_{\text{abs}})]$, where $\varsigma$ is a bound on the optimal objective value determined using binary search.
    Observe that this is another distinction between (Q)IPMs and (Q)MWUs; the output of our QIPMs always satisfies primal and dual feasibility exactly, whereas the solution obtained by (Q)MWU methods satisfies primal and dual feasibility approximately.
    
    The QMMWUs (specifically, all papers by Brandao et al. \cite{brandao2017exponential, brandao2019faster, brandao2017quantum}, as well as the papers by van Apeldoorn et al.~\cite{van2018improvements, van2020quantum} also assume that the SDO problem data is normalized: 
        $$\| A^{(1)} \|, \dots, \|A^{(m)} \|, \| C \| \leq 1.$$
    Our QIPMs do not require this assumption. For the QIPMs presented in this work, only the Newton linear system must be normalized in each iteration (so we can apply a QLSA to solve it), and this normalization is accounted for in our accuracy requirements and the subsequent running time analysis. (For classical MMWUs, the choice of the normalization can vary depending on the papers; the analysis of the classical algorithm in \cite{van2020quantum} assumes the above normalization.)

    Assuming the data is normalized with respect to the operator norm means that for more general problems (i.e., without normalized data) the error scales as $\theta \epsilon_{\text{abs}}$, where 
    $$ \theta = \max \left\{ \| A^{(1)} \|, \dots, \|A^{(m)} \|, \| C \| \right\}.$$
    Thus, for MMWUs one would lose a factor $\theta$ in the precision. Note that being able to compute $\theta$ to normalize the data as $\theta^{-1} A^{(1)}, \dots, \theta^{-1} A^{(m)},  \theta^{-1} C$ could require classical access to these matrices, and an additional $\Ocal(mns)$ cost in the running time to compute $\theta$.

Table \ref{tab:runTnsq} presents the running time of our algorithms alongside well-known algorithms for solving SDOPs. For ease of comparison, we assume that $m = \Ocal(n^2)$ and that the problem is fully dense (i.e., $s = n$). It can be readily seen that the II-QIPM is worse in every parameter when compared to the other algorithms. Conversely, the IF-QIPM obtains a speedup with respect to the dimension $n$ over each of the listed classical solution methodologies. Yet, the IF-QIPM exhibits a quadratic dependence on the condition number bound $\kappa$, and linear dependence on the inverse precision, implying an asymptotically worse running time overall. 

\begin{table}[H]
    \centering
    \begin{tabular}{lll}
    \toprule 
    \hline
    \textbf{References}     & \textbf{Method}  & \textbf{Runtime} \\
    \hline  
    \cite{monteiro1998polynomial, nesterov1997self, nesterov1998primal}     &  IPM   & $\widetilde{\Ocal}_{n, \frac{1}{\epsilon}} \left( n^{6.5} \right)$   \\
    \cite{jiang2020faster}     &  IPM   & $\widetilde{\Ocal}_{\frac{1}{\epsilon}} \left( n^{5.246} \right)$  \\
    \cite{jiang2020improved, lee2015faster} & CPM    & $\widetilde{\Ocal}_{n, R, \frac{1}{\epsilon}} \left(  n^6 \right)$ \\
    \cite{van2018improvements} & QMWU   &$ \widetilde{\Ocal}_{n, \frac{1}{\epsilon_{\text{abs}}}} \left( \left(n^2 + n^{1.5}
\frac{Rr}{\epsilon_{\text{abs}}}\right) \left(\frac{Rr}{\epsilon_{\text{abs}}}\right)^4
\right)$     \\
    This work (Quantum) & IF-QIPM  & $\widetilde{\Ocal}_{n, \kappa, \frac{1}{\epsilon}}  \left(\sqrt{n} \left(n^3 \frac{\kappa^2}{\epsilon} + n^4 \right)  \right)$  \\
    This work (Classical) & IF-IPM  & $\widetilde{\Ocal}_{\kappa, \frac{1}{\epsilon}}  \left(n^{4.5} \kappa \right)$  \\
This work (Quantum)  & II-QIPM  &$\widetilde{\Ocal}_{n, \kappa, \frac{1}{\epsilon}}\left( n^{5.5} \kappa \kappa_{\cal A} \rho
  \left(  \|{\cal A}\|_F + \rho n^{1.5} \right) \frac{1}{\epsilon} \right) $   \\
    \bottomrule
    \end{tabular}
    \caption{Total running times for classical and quantum algorithms to solve \eqref{e:SDO}-\eqref{e:SDD} with $m = \Ocal (n^2)$.}
    \label{tab:runTnsq}
\end{table}

When compared to the QMWU algorithm, it can be observed that the QIPM achieves a favorable dependence on the error. Although the QMWU is faster with respect to dimension, the QMWU has polynomial dependence on the trace bound $R$, and for certain applications this may negate any speedup, such as the SDO approximation of Quadratic Unconstrained Binary Optimization (QUBO) problems, i.e., problems of the form 
\begin{equation}\label{e:QUBO}
    \max_{x \in \{-1, 1\}^n} x^{\top} C x.
\end{equation}
QUBOs play an important role in quantum information sciences, and the resulting SDO approximation is given by
\begin{equation}\label{e:QUBO-SDO}
    \max_{X \in \Scal^n} \left\{ C \bullet X : \operatorname{diag} (X) = e, X \succeq 0 \right\},
\end{equation}
where $e$ is the all ones vector of dimension $n$. Hence, any optimal solution of \eqref{e:QUBO-SDO} satisfies $\trace(X) = n$, suggesting that for these instances the worst case running time of the QMWU is
$$ \widetilde{\Ocal}_{n, \frac{1}{\epsilon_{\text{abs}}}}\left( n^{6.5} \left(\frac{r}{\epsilon_{\text{abs}}}\right)^5
\right),$$
as $R = n$ and $m = n$ for \eqref{e:QUBO-SDO}. Thus, the IF-QIPM could outperform the QMWU for this class of SDOPs, provided that $\kappa = \Ocal \left( n r^{2.5} \left(\frac{1}{\epsilon}\right)^2 \right)$. 

For SDOPs with diagonal constraints such as \eqref{e:QUBO-SDO}, the Arora-Kale method can be specialized (see, e.g., \cite{lee2020}) to obtain an $\tilde{\Ocal}(ns \epsilon^{-3.5})$ algorithm. Note that to achieve this running time, the algorithm avoids explicitly computing or reporting the primal solution $X$, and instead makes use of techniques to estimate its diagonal entries and trace inner products of the form $A \bullet X$. As an alternative, these methods report a ``gradient" $G \in \Scal^n$ such that $X = W \exp(G) W$ for a diagonal matrix $W$. While the algorithm of Lee and Padmanabhan \cite{lee2020} offers speedups with respect to the dimension, its running time is exponentially slower than classical IPMs with respect to the inverse precision. 

To obtain an additive $\epsilon$-convergence using the algorithm found in \cite{lee2020}, the error parameter would have to be set to  
    $$\tilde{\epsilon} = \frac{\epsilon}{\| C \|_{\ell_1}}.$$ 
    When $C$ is dense and $\| C \|_{\ell_1} = \Ocal(n)$, the running time of their algorithm therefore scales as $$\Ocal(n^2 (\tilde{\epsilon})^{-3.5})) = \Ocal(n^2 (n/\epsilon)^{3.5})) = \Ocal (n^{5.5} \epsilon^{-{3.5}}).$$
    For our QIPM to reach duality gap $\epsilon$ would require complexity 
    $$\Ocal\left(\sqrt{n} \left(n^3 \kappa^2 \frac{1}{\epsilon/n}  + n^4 \right) \textup{polylog} (n, \epsilon^{-1}, \kappa) \right) = \widetilde{\Ocal}_{n,\kappa, \frac{1}{\epsilon}} \left(n^{4.5}  \frac{\kappa^2}{\epsilon}  \right).$$
    It can then be seen that in this case, our QIPM scales better than the classical MWU in $n$ and $\epsilon^{-1}$ in this case, but depends linearly on $\kappa$. However, one could solve the QUBO problem to $\epsilon$-optimality using the IPM from Jiang et al.~\cite{jiang2020faster} in time $\Ocal (n^{\omega + 0.5} \log(n/\epsilon))$, or even our classical IF-IPM with overall running time $\widetilde{\Ocal}_{\kappa, \frac{n}{\epsilon}}  \left(n^{4.5} \kappa  \right)$.

One potential use case of our quantum and classical IF-IPMs would be as a limited precision oracle in an iterative refinement method. Iterative refinement (IR) methods are used to solve optimization problems to extended precision by using (possibly fixed) low-precision calls to an optimization subroutine (such as an IPM) to solve a sequence of simpler problems \cite{gleixner2020linear, gleixner2012improving, gleixner2016iterative}. In this setting, low-precision solutions are completely acceptable, and since both the classical and quantum IF-IPMs we propose classically report the solution, the IF-QIPM would seem to be the preferable choice of subroutine.

However, the behavior of $\kappa$ as we traverse the central path towards optimality poses a significant hurdle to the performance of QIPMs. This is highlighted in detail next. 

\subsection{The relationship between $\kappa$ and optimality in IPMs}
\label{s:comparingrt}
Noting the presence of the condition number $\kappa$ in our running time results for the QIPMs, it is of crucial importance to determine whether or not bounds on $\kappa$ can be established. This is of particular importance as the coefficient matrices $M_{\text{NT}}$ change at each iteration of the QIPMs. It has been well studied that without any numerical safeguards, in theory the condition number $\kappa$ will go to $\infty$ as we approach the optimal set to the optimization problem at hand. In what follows, we highlight how the condition number $\kappa$ is related to the optimality gap $\epsilon$, and also note that $\xi$, the precision of the quantum linear solver has to be also in the order of $\epsilon$ or less. 

To keep our discussion simple, let us first consider the case of LO. Following Roos, Terlaky and Vial \cite{roos2005interior} for the LO case, let 
\begin{align*}
    B &= \{ i :x_i > 0~\text{for some optimal solutions} \}, \\
    N &= \{ i : s_i > 0 ~\text{for some optimal solutions}\},
\end{align*}
where $x$ and $s$ denote the diagonals of $X$ and $S$. 
Then, $(B, N)$ forms the so called \textit{optimal partition} of the LO case of \eqref{e:SDO}. It is known that 
$$B \cup N = \{1, \dots, n \}.$$
Further, letting $ \Scal \Pcal^*$ denote the primal-dual optimal set,
we define the quantities 
\begin{align*}
     \sigma_{SP}^x &:= \min_{i \in B} \max_{x \in \Scal \Pcal^*} \{x_i\},\\
    \sigma_{SP}^s &:= \min_{i \in N} \max_{x \in \Scal \Pcal^*} \{s_i \}, 
\end{align*}
where $\sigma_{SP}^x  = \infty$ if $B = \emptyset$ and $\sigma_{SP}^s  = \infty$ if $N = \emptyset$. By the boundedness of the optimal set $\Scal \Pcal^*$, it follows that $\sigma_{SP}^x$ is finite if $B$ is nonempty, and similarly $\sigma_{SP}^s$ is finite if $N$ is nonempty. As Roos, Terlaky and Vial \cite{roos2005interior} note, by the definition of $B$ and $N$, at least one must be nonempty. Therefore, taking 
$$\sigma_{\text{SP}} = \min \{ \sigma_{SP}^x, \sigma_{SP}^s  \},$$
it follows that the condition number of the LO problem must be positive and finite.  As we follow the central path, for the case of LO we have $s_i x_i = \nu$ for all $i$. As we approach optimality we have
\begin{lemma}[Lemma I.43 in \cite{roos2005interior}]\label{lem:terlaky}
For any $\nu >0$, 
\begin{align*}
x_i (\nu) &\geq \frac{\sigma_{\text{SP}} }{n},~~~\forall i \in B,  &x_i (\nu) &\leq \frac{n \nu}{\sigma_{\text{SP}} },~\forall i \in N,\\
s_i (\nu) &\leq \frac{n \nu}{\sigma_{\text{SP}}},~\forall i \in B,  &s_i (\nu) &\geq \frac{\sigma_{\text{SP}}}{n},~~\forall i \in N.
\end{align*}
\end{lemma}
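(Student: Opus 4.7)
The plan is to exploit two standard facts about the central path for LO: first, the orthogonality of the primal and dual feasible cones, and second, the existence of a strictly complementary optimal partition (Goldman--Tucker). Throughout, I fix an arbitrary optimal primal-dual pair $(x^*, s^*) \in \Scal\Pcal^*$ and the central path point $(x(\nu), s(\nu))$.

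First I would establish the key identity: since $x(\nu) - x^* \in \Null(A)$ and $s(\nu) - s^* \in \Rcal(A^\top)$, we have
\begin{equation*}
\langle x(\nu) - x^*,\, s(\nu) - s^*\rangle = 0,
\end{equation*}
which expanded using $\langle x(\nu), s(\nu)\rangle = n\nu$ and $\langle x^*, s^*\rangle = 0$ gives
\begin{equation*}
\langle x(\nu), s^*\rangle + \langle x^*, s(\nu)\rangle = n\nu.
\end{equation*}
By strict complementarity, one can choose $(x^*, s^*)$ so that $x_i^* = 0$ for all $i \in N$ and $s_i^* = 0$ for all $i \in B$; more specifically, for any fixed index $i^* \in B$ one selects $x^* \in \Scal\Pcal^*$ achieving $x_{i^*}^* = \max_{x \in \Scal\Pcal^*} x_{i^*} \geq \sigma_{SP}^x \geq \sigma_{SP}$ (such a solution exists by boundedness of $\Scal\Pcal^*$ and by definition of the partition).

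Next I would exploit nonnegativity of every term: since $x(\nu), s(\nu), x^*, s^* \geq 0$, the identity splits as
\begin{equation*}
\sum_{i \in B} x_i^* s_i(\nu) + \sum_{i \in N} x_i(\nu) s_i^* = n\nu,
\end{equation*}
and each summand is nonnegative. Isolating the term at $i = i^*$ yields $x_{i^*}^* s_{i^*}(\nu) \leq n\nu$, hence
\begin{equation*}
s_{i^*}(\nu) \leq \frac{n\nu}{x_{i^*}^*} \leq \frac{n\nu}{\sigma_{SP}},
\end{equation*}
which gives the upper bound on $s_i(\nu)$ for $i \in B$. The matching lower bound on $x_i(\nu)$ for $i \in B$ follows immediately from the central-path condition $x_i(\nu) s_i(\nu) = \nu$, namely $x_{i^*}(\nu) = \nu/s_{i^*}(\nu) \geq \sigma_{SP}/n$. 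The two bounds for $i \in N$ are obtained by the symmetric argument, choosing instead $s^*$ achieving the maximum of $s_{i^*}$ over $\Scal\Pcal^*$ and reading off the bound on $x_{i^*}(\nu)$ from the same identity.

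I don't expect any serious technical obstacle: the main subtlety is simply invoking strict complementarity and the existence of maximizing optimal solutions for each coordinate (which is what allows $\sigma_{SP}^x$ and $\sigma_{SP}^s$ to be well-defined and attained). Cases with $B = \emptyset$ or $N = \emptyset$ are handled by the convention $\sigma_{SP}^x = \infty$ or $\sigma_{SP}^s = \infty$, making the corresponding inequalities vacuous, so no separate treatment is needed.
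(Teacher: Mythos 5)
Your proof is correct and is essentially the standard argument behind Lemma I.43 of \cite{roos2005interior}, which this paper only cites without reproving: use orthogonality of $x(\nu)-x^*$ and $s(\nu)-s^*$ to get $\langle x^*,s(\nu)\rangle+\langle x(\nu),s^*\rangle=n\nu$, pick for each coordinate an optimal solution attaining the maximum (so the relevant entry is at least $\sigma_{\text{SP}}$), drop the remaining nonnegative terms, and convert the resulting upper bound into the matching lower bound via $x_i(\nu)s_i(\nu)=\nu$. The only cosmetic remark is that $x_i^*=0$ on $N$ and $s_i^*=0$ on $B$ holds for \emph{every} optimal pair by complementary slackness and the definition of the partition, so strict complementarity (Goldman--Tucker) is only needed to know $B\cup N=\{1,\dots,n\}$, not for the choice of $(x^*,s^*)$.
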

As in the Newton system of LO, at each iteration we need to calculate the coefficient matrix $A^{\top} XS^{-1}A$ of the Newton system in compact form (see e.g., \cite{roos2005interior}). The condition number of this matrix will be in the order of the largest element of the diagonal matrix $XS^{-1}$ divided by the smallest diagonal element. Thus, considering the bounds of Lemma \ref{lem:terlaky}, we have that this ratio is at least 
\begin{align*}
\left( \frac{\frac{\sigma_{\text{SP}}}{n}}{\frac{n \nu}{\sigma_{\text{SP}} }}  \right) \left/ \left( \frac{\frac{n \nu}{\sigma_{\text{SP}} }}{\frac{\sigma_{\text{SP}}}{n}}  \right)  \right. = \left(\frac{\sigma_{\text{SP}}^2}{n^2 \nu} \right)  \left/ \left(\frac{n^2 \nu} {\sigma_{\text{SP}}^2}\right) \right. = \frac{\sigma_{\text{SP}}^4}{n^4 \nu^2},
\end{align*}
which clearly goes to infinity as $\nu \to 0$, as the QIPM approaches the optimal set. It is also known that the condition number $\sigma$ of the LO problem can be as small as $2^L$ if the problem data is integer, see, e.g., \cite{roos2005interior}.
 
Now, let us consider, as customary in the classical setting \cite{roos2005interior}, that all of the data of the LO problem is integer. Then, the binary encoding of the LO problem requires at most
$$ L = \sum_i \sum_j (1 + \log_2 (|a_{ij}|+ 1)) + \sum_i (1 + \log_2 (|b_{i}| + 1))  +\sum_j (1 + \log_2 (|c_{j}| + 1))   $$
 bits. It is also known that $\delta \geq 2^{-L}$ and by \cite[Theorem I.53]{roos2005interior}, that if $\epsilon < \frac{\sigma_{\text{SP}}^5}{4 n^{1.5}}$ then an exact optimal solution can be identified by solving a linear system of equations. For simplicity, the bound given here is not the tightest possible bound, some quantities in the bound of \cite[Theorem I.53]{roos2005interior} are estimated by $\sigma_{\text{SP}}.$ 
 
Another way to look at it, is to estimate the total running time to reach a near-optimal primal-dual feasible solution pair, with optimality gap less than $\epsilon$. To reach this precision, we need to have 
 $$n \nu < \epsilon < \frac{\sigma_{\text{SP}}^5}{4 n^{1.5}},$$
i.e. $\nu = \Ocal \left( \frac{\sigma_{\text{SP}}^5}{4 n^{1.5}} \right)$. Thus we can identify an exact optimal solution of the LO problem if
$$ \nu = \Ocal \left( \frac{2^{-5L}}{n^{2.5}} \right).$$
Then, we have
$$ \kappa \geq \frac{\sigma_{\text{SP}}^4}{n^4 \nu^2} \geq \frac{\sigma_{\text{SP}}^4}{n^4 \frac{\sigma_{\text{SP}}^{10}}{n^5}} = n \sigma_{\text{SP}}^{-6} = \Omega (n 2^{6L}).$$
We also need 
$$\xi < \epsilon < c_0,$$
where $c_0 = \Ocal \left( \frac{2^{-5L}}{n^{2.5}}  \right).$

By Corollary \ref{corr:rtLO}, when applied to LOPs, the IF-QIPM has a worst case overall running time of 
$$ \Ocal \left(\sqrt{n} \left(n^{1.5} \frac{\kappa^2}{\epsilon} + n^2 \right)   \operatorname{polylog} \left(n, \kappa, \frac{1}{\epsilon} \right) \right). $$
Substituting all of these in the total running time bound of our IF-QIPM yields
$$ \widetilde{\Ocal}_{n, \kappa, \frac{1}{\epsilon}} \left(  n^{1.5} \frac{n^2 2^{12L}}{\frac{2^{-5L}}{n^{2.5}}} + n^{2.5} \right) = \widetilde{\Ocal}_{n, \kappa} \left(   n^{6.5} 2^{17L} L  \right)  $$
as the time complexity of our QIPM to identify an exact solution, compared to the $\Ocal(n^3 L)$ arithmetic operations needed for the classical IPM to do the same. Carrying out similar calculations for the II-QIPM yields a worst case running time bound of 
$$\widetilde{\Ocal}_{n, \kappa, \frac{1}{\epsilon}}\left(  n^{4.5} \kappa \kappa_{\cal A} \rho
  \left(  \|{\cal A}\|_F + \rho n^{1.5} \right) \right) = \widetilde{\Ocal}_{n, \kappa}\left(  n^{5.5}   \kappa_{\cal A} \rho
  \left(  \|{\cal A}\|_F + \rho n^{1.5} \right) 2^{6L} L\right) $$
when applied to LOPs. 

As LO is a special case of SDO, the general bound for the condition number of the Newton linear systems \eqref{e:inexactNewt} and \eqref{e:IF-Newton} are bounded below by the condition number bound of LO problems. For the condition number of SDO problems, we refer to Mohammad-Nezhad and Terlaky \cite{aliMohamad}. Let 
\begin{align*}
    \Bcal &= \Rcal (X^*), \\
    \Ncal &= \Rcal (S^*),
\end{align*}
where $\Rcal (\cdot)$ denotes the range space and $(X^*, y^*, S^*)$ denotes a maximally complementary solution. Define $n_{\Bcal} = \text{dim}(\Bcal)$ and $n_{\Ncal} = \text{dim}(\Ncal)$. Then, letting $Q = [Q_{\Bcal}, Q_{\Tcal}, Q_{\Ncal}]$ be orthonormal bases for $\Bcal$, $\Ncal$ and $\Tcal$, respectively, it follows from \cite{aliMohamad} that for every primal-dual optimal solution $(X, y, S) \in \Pcal^* \times \Dcal^*$, one can write $X$ and $S$ as:
\begin{align*}
    X &= Q_{\Bcal} U_X Q^{\top}_{\Bcal} \\
    S &= Q_{\Ncal} U_S Q^{\top}_{\Ncal},
\end{align*}
where $U_X$ and $U_S$ are symmetric $n_{\Bcal} \times n_{\Bcal}$ and $n_{\Ncal} \times n_{\Ncal}$ matrices, respectively. Then, similar to the result for linear optimization, defining
\begin{subequations}
\begin{align}
    \sigma_{\Bcal} &= \max_{(X \in \Pcal^*} \lambda_{\min} (Q^{\top}_{\Bcal} X Q_{\Bcal}) = \max_{\bar{Q_{\Bcal}} \in \Gamma_{\Bcal}} \max_{X \in \Pcal^*} \lambda_{\min} (\bar{Q}^{\top}_{\Bcal} X \bar{Q}_{\Bcal}) \\
     \sigma_{\Ncal} &= \max_{(y,S) \in \Dcal^*} \lambda_{\min} (Q^{\top}_{\Ncal} S Q_{\Ncal}) = \max_{\bar{Q_{\Ncal}} \in \Gamma_{\Ncal}} \max_{(y,S) \in \Dcal^*} \lambda_{\min} (\bar{Q}^{\top}_{\Ncal} S \bar{Q}_{\Ncal}) \\
     \sigma &= \min \{\sigma_{\Bcal}, \sigma_{\Ncal} \}
\end{align}
\end{subequations}
where $\Gamma_{\Bcal}$ and $\Gamma_{\Ncal}$ denote the sets of all orthonormal bases for $\Bcal$ and $\Ncal$, respectively. Final, we arrive at the following results from \cite{aliMohamad}. 

\begin{lemma}[Lemma 2 in \cite{aliMohamad}]
  The condition number $\sigma$ is positive. 
\end{lemma}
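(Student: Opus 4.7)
The plan is to exhibit an explicit primal and dual feasible solution that witnesses strict positivity of $\sigma_{\Bcal}$ and $\sigma_{\Ncal}$, namely the maximally complementary optimal solution $(X^*, y^*, S^*)$ used to define $\Bcal = \Rcal(X^*)$ and $\Ncal = \Rcal(S^*)$ in the first place. Since both quantities are defined via a maximum over $\Pcal^*$ and $\Dcal^*$ respectively, it suffices to produce a single primal-optimal $X$ and dual-optimal $(y, S)$ for which the corresponding $\lambda_{\min}$ is strictly positive; then taking the minimum over $\{\sigma_{\Bcal}, \sigma_{\Ncal}\}$ yields $\sigma > 0$.

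First, I would unpack the consequence of maximal complementarity. By the decomposition recalled in the excerpt, any $X \in \Pcal^*$ admits the representation $X = Q_{\Bcal} U_X Q^{\top}_{\Bcal}$ with $U_X \in \Scal^{n_{\Bcal}}$ positive semidefinite, and in particular this holds for $X^*$. The key observation is that because $\Bcal = \Rcal(X^*)$ and the columns of $Q_{\Bcal}$ form an orthonormal basis of $\Bcal$, the matrix $U_{X^*} = Q^{\top}_{\Bcal} X^* Q_{\Bcal}$ represents the restriction of $X^*$ to its own range and is therefore positive definite (not merely positive semidefinite). Consequently $\lambda_{\min}(Q^{\top}_{\Bcal} X^* Q_{\Bcal}) > 0$, so
\begin{equation*}
\sigma_{\Bcal} \;=\; \max_{X \in \Pcal^*} \lambda_{\min}\!\bigl(Q^{\top}_{\Bcal} X Q_{\Bcal}\bigr) \;\geq\; \lambda_{\min}\!\bigl(Q^{\top}_{\Bcal} X^* Q_{\Bcal}\bigr) \;>\; 0.
\end{equation*}
The same argument with the dual maximally complementary slack $S^*$ on the orthonormal basis $Q_{\Ncal}$ of $\Ncal = \Rcal(S^*)$ yields $\sigma_{\Ncal} > 0$, and hence $\sigma = \min\{\sigma_{\Bcal}, \sigma_{\Ncal}\} > 0$.

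The step that requires the most care, and which I see as the main obstacle, is justifying cleanly that $Q^{\top}_{\Bcal} X^* Q_{\Bcal}$ is positive definite rather than just positive semidefinite. The argument is that if $v \in \R{n_{\Bcal}}$ satisfied $v^{\top} Q^{\top}_{\Bcal} X^* Q_{\Bcal} v = 0$, then $Q_{\Bcal} v \in \Bcal$ would lie in the nullspace of $X^*$; but $X^*$ restricted to its range $\Bcal$ is injective by the rank-nullity decomposition of symmetric matrices, forcing $Q_{\Bcal} v = 0$ and, by orthonormality of $Q_{\Bcal}$, $v = 0$. Finally, as noted in the excerpt, the alternative characterization using $\max$ over $\Gamma_{\Bcal}$ shows the value does not depend on the particular choice of orthonormal basis, so the argument is well-defined, and an analogous chain of reasoning delivers $\sigma_{\Ncal} > 0$ via $S^*$.
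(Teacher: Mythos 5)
Your proof is correct: since $(X^*,y^*,S^*)$ is maximally complementary, $\Rcal(X^*)=\Bcal$ exactly, so for symmetric PSD $X^*$ the nullspace is $\Bcal^{\perp}$ and $Q_{\Bcal}^{\top}X^*Q_{\Bcal}\succ 0$, giving a strictly positive lower bound on the maximum defining $\sigma_{\Bcal}$ (and likewise for $\sigma_{\Ncal}$ via $S^*$). Note that the paper itself does not prove this statement -- it is quoted from Mohammad-Nezhad and Terlaky without proof -- but your argument is the natural one and is essentially the reasoning behind the cited result; the only cosmetic omission is the degenerate convention (as in the LO case, $\sigma_{\Bcal}$ is taken to be $\infty$ when $\Bcal=\{0\}$), which does not affect positivity.
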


\begin{lemma}[Lemma 8 in \cite{aliMohamad}]
  Let the SDO problems $(P)$ and $(D)$ be given by integer data, $L$ denote the binary input length of the largest absolute value of the entries in $b$, $C$ and $A^{(i)}$ for $i = 1, \dots, m$ and $\| \cdot \|_F$ be the Frobenius norm. Then for the condition number $\sigma$, we have
  $$ \sigma \geq \min \left\{ \frac{1}{r_{\Pcal^*} \sum_{i = 1}^m \| A^{(i)} \|_F}, \frac{1}{r_{\Dcal^*}} \right\}. $$
\end{lemma}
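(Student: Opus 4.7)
The plan is to bound the two quantities $\sigma_{\Bcal}$ and $\sigma_{\Ncal}$ separately, since $\sigma = \min\{\sigma_{\Bcal}, \sigma_{\Ncal}\}$ by definition. This naturally matches the two arguments in the $\min$ on the right-hand side of the claimed bound: the first argument should lower bound $\sigma_{\Bcal}$ via primal-side data, and the second should lower bound $\sigma_{\Ncal}$ via dual-side data. In both cases, the integer data assumption enables the application of Cramer's-rule-type bounds on the components of an extremal optimal solution attaining the max-min in the definitions of $\sigma_{\Bcal}$ and $\sigma_{\Ncal}$.

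For $\sigma_{\Bcal}$, I would first exploit the fact that every $X\in\Pcal^*$ admits a decomposition $X=Q_{\Bcal} U_X Q_{\Bcal}^\top$ with $U_X\in\Scal^{n_{\Bcal}}_+$, so that $\lambda_{\min}(Q_{\Bcal}^\top X Q_{\Bcal})=\lambda_{\min}(U_X)$. The primal feasibility equations become the linear system $(Q_{\Bcal}^\top A^{(i)} Q_{\Bcal})\bullet U_X = b_i$ for $i\in[m]$ in the reduced variable $U_X$. Using that the data is integer (with encoding length $L$) and that $\|X\|_F\le r_{\Pcal^*}$ on $\Pcal^*$, I would apply a Cauchy–Schwarz type estimate of the form $|b_i|\le \|A^{(i)}\|_F\|X\|_F\le r_{\Pcal^*}\|A^{(i)}\|_F$. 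Combining this with a perturbation argument for the maximally complementary solution — pushing the eigenvalues of $U_X$ apart while staying inside $\Pcal^*$ — one expects to recover a positive lower bound of the form $1/(r_{\Pcal^*}\sum_{i=1}^m\|A^{(i)}\|_F)$, with the factor $\sum_i \|A^{(i)}\|_F$ arising from summing contributions across all primal equations.

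For $\sigma_{\Ncal}$, an entirely analogous argument applies on the dual side: every $S$ in the dual optimal set decomposes as $S=Q_{\Ncal}U_S Q_{\Ncal}^\top$, and the dual feasibility equation $C-\sum_i y_i A^{(i)}=S$ (read on the subspace $\Ncal$) together with $\|S\|_F\le r_{\Dcal^*}$ yields the desired $\lambda_{\min}(U_S)\ge 1/r_{\Dcal^*}$. The asymmetry between the two bounds (the absence of a $\sum\|A^{(i)}\|_F$ factor on the dual side) would reflect the fact that $S$ appears directly in the dual equation, whereas $X$ only appears through the inner products $A^{(i)}\bullet X$ on the primal side. Finally, taking the minimum of the two lower bounds gives the stated inequality.

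The hard part will be making the perturbation / extremality argument in SDO as clean as the corresponding Cramer's rule argument is in LO. Unlike LO, where vertex solutions on the optimal face have rational coordinates with controlled denominators, the optimal face of an SDO is a spectrahedron and $U_X$ need not be a vertex of anything combinatorial. The right substitute is to restrict attention to the minimal face of $\Pcal^*$ determined by the optimal partition $(\Bcal,\Tcal,\Ncal)$ and apply integrality to that reduced, lower-dimensional SDO; this is where the bulk of the technical work will be required. Once this reduction is carried out, converting the integrality of the reduced data into explicit magnitude bounds on $\lambda_{\min}(U_X)$ should be straightforward via the inequalities sketched above.
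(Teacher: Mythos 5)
This lemma is not proved in the paper at all: it is quoted from \cite{aliMohamad}, so there is no internal proof to compare your sketch against, and it has to be judged on its own merits. On those merits there are genuine gaps at exactly the points where the work must happen. On the primal side, the only quantitative inequality you write, $|b_i| \le \|A^{(i)}\|_F\|X\|_F \le r_{\Pcal^*}\|A^{(i)}\|_F$, is an upper bound on $|b_i|$; it points in the wrong direction, uses neither the integrality of the data nor the structure of $U_X$, and cannot by itself yield a lower bound on $\max_{X\in\Pcal^*}\lambda_{\min}(Q_{\Bcal}^\top X Q_{\Bcal})$. The step that is supposed to close the argument, ``pushing the eigenvalues of $U_X$ apart while staying inside $\Pcal^*$,'' is not an argument: the optimal face may be a single point, in which case no perturbation is available and the bound must come from an algebraic relation combining feasibility, the radius $r_{\Pcal^*}$, and integrality --- a relation your sketch never writes down. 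Note also that the hypothesis introduces $L$, yet no inequality in your sketch actually invokes the integer-data assumption.

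On the dual side, the claim that $\|S\|_F\le r_{\Dcal^*}$ ``yields $\lambda_{\min}(U_S)\ge 1/r_{\Dcal^*}$'' is a non sequitur: a norm upper bound gives $\lambda_{\min}(U_S)\le r_{\Dcal^*}$ and says nothing from below; to get a lower bound you must exhibit some quantity attached to a suitable dual optimal solution that integrality forces to be at least $1$, and then divide by a norm, and you never identify such a quantity. Your proposed fix --- restrict to the minimal face and apply integrality to the ``reduced'' problem --- does not work as described, because the reduced data $Q_{\Bcal}^\top A^{(i)} Q_{\Bcal}$ (and likewise $Q_{\Ncal}^\top A^{(i)} Q_{\Ncal}$) is generically irrational, since $Q_{\Bcal}$, $Q_{\Ncal}$ are orthonormal bases of eigenspaces of optimal solutions; hence no Cramer's-rule or determinant-denominator argument applies to the reduced system. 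This is precisely the obstruction that separates SDO from LO, and the sketch acknowledges it without overcoming it. Finally, the pairing you assume (that $\sigma_{\Bcal}$ is bounded by the term containing $r_{\Pcal^*}\sum_{i}\|A^{(i)}\|_F$ and $\sigma_{\Ncal}$ by $1/r_{\Dcal^*}$) is asserted on heuristic grounds only; it, too, would have to be established rather than assumed.
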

\noindent This implies that $\sigma$ is positive for SDO as well, therefore $\kappa \to \infty$.

\section{Conclusion}\label{s:con}
In this work we present two provably convergent quantum interior point methods for semidefinite optimization problems. They are building on recent advances in quantum linear system solvers.
The quantization of classical interior point methods is the subject of several recent papers in the literature. We compare the theoretical performance of classical and quantum interior point methods with respect to various input parameters, concluding that our Inexact-Feasible QIPM provides a speedup in terms of the size of the problem. Yet, due to the dependence on the condition number of the QLSA, as well as a $1/\epsilon$ dependence picked up from repeated tomography, it is not faster than the classical small neighborhood IPM in the worst case. The dependence on the condition number in particular presents a significant obstacle toward a theoretical quantum speedup. In practice, the QIPM may still have an advantage if low-precision solutions are acceptable. 
 
\section*{Acknowledgements}
The authors would like to thank the anonymous referees of Quantum for very helpful comments that improved the presentation, and for pointing us to \cite{vishnoi2013lx} to improve the running time of our classical IF-IPM.

This project has been carried out thanks to funding by the Defense 
Advanced Research Projects Agency (DARPA),
ONISQ grant W911NF2010022, titled The Quantum Computing Revolution and 
Optimization: Challenges and Opportunities. This research also used resources of the Oak Ridge Leadership Computing Facility, which is a DOE Office of Science User Facility supported under Contract DE-AC05-00OR22725. G.~Nannicini is supported by ONR award \# N000142312585. 

\appendix
\section{Block-encoding the Inexact-Infeasible Newton Systems}\label{app:II_newton}

\subsection{Implementing the Inexact-Infeasible Newton linear system for Members of the Monteiro-Zhang Family of Search Directions}

For the version of our QIPM that uses the NT direction, we factorize
the corresponding linear system matrix and construct block-encodings
for its factors. Note that in principle we could construct a $(2,
\widetilde{\Ocal}(1), \xi)$-block-encoding for $W = S^{-1/2}(S^{1/2} X
S^{1/2})^{1/2} S^{-1/2}$ in time $\widetilde{\Ocal}(\|S\|^4_F \kappa_S^5
\kappa_X)$, starting from the individual factors $S, X$ and with
repeated applications of Propositions~\ref{prop:ampproduct} and \cite[Lemma 9 and 10]{chakraborty2018power}.  However,
due to the heavy dependence on the condition number, we choose to
classically compute all the factors (including matrix powers), and
store them in QRAM. This is more efficient: all the involved matrices
are $n \times n$, so classical computation of matrix powers has
negligible cost compared to the solution of the Newton linear system,
which is of size $\Ocal(n^2 \times n^2)$, and we can avoid increasing the
dependence on the condition number. The following two propositions
establish how one can construct block-encodings of these factors, and
bound the corresponding running time.

\begin{proposition}
  \label{prop:ntbe1II}
  Let 
  $$ 
        M_1 =   \begin{pmatrix}
         0 & \tran{\cal A}_s & \Ical \\
         {\cal A}_s & 0 & 0 \\
         0 & 0 & 0
         \end{pmatrix}.
     $$
 Let ${\cal A}_s$ be stored in QRAM. Then, a $(\|M_1\|_F,
  \Ocal(\log n), \xi_{M_1})$-block-encoding
  of $M_1$ can be constructed in time $\widetilde{\Ocal}_{ \frac{n}{\xi_{M_1}}}(1)$.
\end{proposition}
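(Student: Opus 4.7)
The plan is to decompose $M_1$ as a sum of three zero-padded matrices, each individually easy to block-encode, and then combine them via the linear-combination framework of Proposition~\ref{prop:lincombblock}. Specifically, let $A^{(1)}, A^{(2)}, A^{(3)}$ have the same dimensions as $M_1$, with $A^{(1)}$ containing $\Acal_s^{\top}$ in the $(1,2)$-block position, $A^{(2)}$ containing $\Ical$ in the $(1,3)$-block position, and $A^{(3)}$ containing $\Acal_s$ in the $(2,1)$-block position (all other blocks zero), so that $M_1 = A^{(1)} + A^{(2)} + A^{(3)}$.

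First, I would construct efficient block-encodings of each nonzero constituent. Since $\Acal_s$ is stored in QRAM, Proposition~\ref{prop:qramblockenc}(ii) yields a $(\|\Acal_s\|_F, \Ocal(\log n), \xi')$-block-encoding of $\Acal_s$ in time $\widetilde{\Ocal}_{n/\xi'}(1)$, and applying the same construction to the row/column-swapped registers of the same QRAM structure yields an analogous block-encoding of $\Acal_s^{\top}$. The identity $\Ical$ is $1$-sparse with analytically known entries, so Lemma~\ref{lem:sparseBE} gives a $(1, \Ocal(\log n), 0)$-block-encoding without any QRAM access. Each of these is then promoted to a block-encoding of the padded matrix $A^{(k)}$ by adjoining $\Ocal(1)$ ancilla qubits that select the correct row/column block within the $(\tfrac{n(n+1)}{2} + m + \tfrac{n(n+1)}{2})$-dimensional ambient space; this padding is syntactic and preserves normalization and error up to constants.

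Second, I would rescale each padded block-encoding to a common normalization $\alpha = \Theta(\|M_1\|_F)$, using the fact that $\|M_1\|_F^2 = 2\|\Acal_s\|_F^2 + \tfrac{n(n+1)}{2}$ dominates the Frobenius norm of every constituent, then build a constant-weight state-preparation pair for the coefficient vector $(1,1,1)$ on two ancilla qubits and invoke Proposition~\ref{prop:lincombblock} to obtain the desired block-encoding of $M_1$. Choosing internal precisions as $\xi' = \Theta(\xi_{M_1}/\|M_1\|_F)$ guarantees the global error bound $\xi_{M_1}$, and since each constituent step runs in $\operatorname{poly}(\log n, \log(1/\xi_{M_1}))$ time while the linear combination adds only constant overhead, the overall construction takes $\widetilde{\Ocal}_{n/\xi_{M_1}}(1)$ time as claimed. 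The only mild obstacle is the bookkeeping of normalization factors and error tolerances so that the constants absorbed into the common normalization $\alpha$ match $\|M_1\|_F$ asymptotically; this is routine given the tools provided by Propositions~\ref{prop:qramblockenc} and \ref{prop:lincombblock} together with Lemma~\ref{lem:sparseBE}.
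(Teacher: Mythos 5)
Your proposal is correct, but it takes a more roundabout route than the paper. The paper's proof is a one-line appeal to Proposition~\ref{prop:qramblockenc}: since every entry of $M_1$ is either an entry of $\Acal_s$, an entry of $\Acal_s^{\top}$, a $0$, or a $1$, the matrix $M_1$ itself can be regarded as stored in a QRAM data structure, and part (ii) of that proposition immediately yields a $(\|M_1\|_F, \Ocal(\log n), \xi_{M_1})$-block-encoding in time $\widetilde{\Ocal}_{\frac{n}{\xi_{M_1}}}(1)$. You instead split $M_1 = A^{(1)}+A^{(2)}+A^{(3)}$, block-encode the padded copies of $\Acal_s^{\top}$, $\Ical$, and $\Acal_s$ (via Proposition~\ref{prop:qramblockenc}, its adjoint, and Lemma~\ref{lem:sparseBE}), and recombine with Proposition~\ref{prop:lincombblock}; this is essentially a hand-instantiation of the paper's Proposition~\ref{prop:blockencrule} for a $3\times 3$ block structure. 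What your route buys is explicitness (no need to argue that the full $M_1$ is QRAM-accessible or to know its row norms), at the cost of a constant-factor inflation of the normalization: the linear combination with weights $(1,1,1)$ and common sub-normalization $\alpha=\Theta(\|M_1\|_F)$ yields roughly $3\|M_1\|_F$ rather than $\|M_1\|_F$, which you correctly note is absorbed asymptotically and is harmless for the downstream use in Proposition~\ref{prop:NTcompactII}, where only $\max_i\|M_i\|_F = \Ocal(\|M_{NT}\|_F)$ enters. The remaining details you invoke (the transpose block-encoding via the adjoint of the unitary, the zero-padding with $\Ocal(1)$ block-selection ancillas despite unequal block sizes, and the choice of internal precisions $\xi'=\Theta(\xi_{M_1}/\|M_1\|_F)$) are all standard and consistent with the level of rigor the paper itself applies elsewhere, so there is no genuine gap.
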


\begin{proof}
The result follows directly from Proposition \ref{prop:qramblockenc}.
\end{proof}

\begin{proposition}
  \label{prop:ntbe2II}
  Let $P = W^{-1/2}$, where $W$ is the Nesterov-Todd scaling matrix and 
  $$M_2 = \begin{pmatrix}
         0 & 0 &0 \\
        0 & 0 & 0 \\
         P \otimes_s P^{-\top} S & 0 & 0
         \end{pmatrix}.$$
    Let $P$  and $P^{-\top} S$, be stored in QRAM. Then, a $(\|M_2\|_F,
  \Ocal(\log n), \xi_{M_2})$-block-encoding
  of $M_2$ can be constructed in time $\widetilde{\Ocal}_{ \frac{n}{\xi_{M_2}}}(1)$.
\end{proposition}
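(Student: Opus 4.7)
The plan is to construct the block-encoding of $M_2$ in two stages: first build a block-encoding of the symmetric Kronecker product $P \otimes_s P^{-\top} S$ appearing in the sole nonzero block, and then embed this block-encoding into the $3 \times 3$ block structure of $M_2$. This follows the same blueprint as the proof of Proposition \ref{prop:ntbe2} (the IF counterpart), but without the extra matrix factor $-\Acal_s^{\top}$ multiplying the Kronecker product.

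For the first stage, I would directly invoke Proposition \ref{prop:blockencSymKronecker}. Since the hypothesis of Proposition \ref{prop:ntbe2II} assumes $P$ and $P^{-\top} S$ are stored in QRAM, this proposition yields a $\bigl(\|P \otimes P^{-\top} S\|_F^2,\, \Ocal(\log n),\, \xi'\bigr)$-block-encoding of $P \otimes_s P^{-\top} S$ in time $\widetilde{\Ocal}_{n/\xi'}(1)$ for any desired $\xi' > 0$.

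For the second stage, I would mimic the technique underlying Proposition \ref{prop:blockencrule}, treating $M_2$ as a linear combination with a single nonzero term corresponding to the block in position $(3,1)$. Concretely, I would introduce ancilla registers indexing the block row and column, prepare them in the state $\ket{3}\ket{1}$ using a trivial $(1, \Ocal(\log 1), 0)$-state-preparation pair, and use the controlled unitary $W$ that applies the block-encoding of $P \otimes_s P^{-\top} S$ when the row/column-index register is $\ket{3}\ket{1}$ and acts as the identity otherwise. Applying Proposition \ref{prop:lincombblock} with this single-term linear combination produces a block-encoding of $M_2$ whose normalization is $\Ocal(\|P \otimes_s P^{-\top} S\|_F)$, matching $\|M_2\|_F$ up to the paper's usual conventions on normalization constants. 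Choosing $\xi'$ proportional to $\xi_{M_2}/\|P \otimes P^{-\top} S\|_F^2$ then gives the stated final precision.

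The main bookkeeping issue will be reconciling the normalization factor: Proposition \ref{prop:blockencSymKronecker} produces a block-encoding with normalization $\|P \otimes P^{-\top} S\|_F^2$, which is generally larger than $\|P \otimes_s P^{-\top} S\|_F = \|M_2\|_F$. This appears to be the same convention the authors adopt in Propositions \ref{prop:ntbe1} and \ref{prop:ntbe2}, where the normalization of the product of block-encodings is likewise written as a single Frobenius norm of the product rather than the formal product of the factor normalizations. Once this bookkeeping is pinned down, the embedding step adds only $\Ocal(\log n)$ qubits and polylogarithmic-time overhead, so the overall construction time is $\widetilde{\Ocal}_{n/\xi_{M_2}}(1)$ as claimed.
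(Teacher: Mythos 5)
Your proposal is correct and follows essentially the same route as the paper: the key step in both is invoking Proposition \ref{prop:blockencSymKronecker} to block-encode $P \otimes_s P^{-\top} S$ from the QRAM-stored factors, after which the paper places this block into position $(3,1)$ via Lemma \ref{lem:sparseBE} and a product of block-encodings, while you do the equivalent embedding via a single-term linear combination in the spirit of Propositions \ref{prop:blockencrule} and \ref{prop:lincombblock} — an interchangeable routine, as the paper's own ``e.g.'' indicates. The normalization mismatch you flag (the $\|P \otimes P^{-\top} S\|_F^2$ factor from Proposition \ref{prop:blockencSymKronecker} versus the stated $\|M_2\|_F$) is present in the paper's proof as well, so your treatment of it is consistent with the authors' convention.
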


\begin{proof}
With $P$ and $P^{-\top} S$ stored in QRAM, it follows from Proposition \ref{prop:blockencSymKronecker} that we can construct a $( \| P \otimes_s P^{-\top} S \|_F , \Ocal (\log n), \xi_{M_2})$-block-encoding of $P \otimes_s P^{-\top}$ in time $\widetilde{\Ocal}_{\frac{n}{\xi_{M_2}}} (1)$. $M_2$ can then be obtained using, e.g., Lemma~\ref{lem:sparseBE} and product of block-encoded matrices to place $P \otimes_s P^{-\top}$ in the right position.
\end{proof}

\begin{proposition}
  \label{prop:ntbe3II}
  Let $P = W^{-1/2}$, where $W$ is the Nesterov-Todd scaling matrix and
  $$M_3 = \begin{pmatrix}
         0 & 0 &0 \\
        0 & 0 & 0 \\
          0 & 0 & PX\otimes_s P^{-\top}  
         \end{pmatrix}.$$
    If $PX$, and
  $P^{-\top}$ are stored in QRAM. Then, a $(\|M_3\|_F,
  \Ocal(\log n), \xi_{M_3})$-block-encoding
  of $M_3$ can be constructed in time $\widetilde{\Ocal}_{ \frac{n}{\xi_{M_3}}}(1)$.
\end{proposition}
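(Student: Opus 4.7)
The plan is to mirror the construction used in Proposition~\ref{prop:ntbe2II}, since $M_3$ differs from $M_2$ only in which block of the $3\times 3$ block matrix contains the nonzero symmetric Kronecker product, and in that its Kronecker factors are $PX$ and $P^{-\top}$ rather than $P$ and $P^{-\top}S$. First I would invoke Proposition~\ref{prop:blockencSymKronecker} directly: given that $PX$ and $P^{-\top}$ are stored in QRAM, it produces a $(\|PX \otimes P^{-\top}\|_F^2, \Ocal(\log n), \xi')$-block-encoding of $PX \otimes_s P^{-\top}$ in time $\widetilde{\Ocal}_{n/\xi'}(1)$ for any chosen precision $\xi'$. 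This is the only ``quantum linear algebra'' work needed in the proof.

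Next I would embed this block-encoding into the lower-right position of the $3\times 3$ block structure of $M_3$. Since the nonzero block lives in a fixed, known location, this is handled by the same sparse-access route suggested in the proof of Proposition~\ref{prop:ntbe2II}: let $J$ be the indicator matrix $J = e_3 e_3^\top$ on the block-row/column indices, which is trivially $1$-row- and $1$-column-sparse with entries in $\{0,1\}$, and whose sparse-access oracles $O_r, O_c, O_J$ are implementable in $\Ocal(\log n)$ gates. By Lemma~\ref{lem:sparseBE} this gives a $(1, \Ocal(\log n), \xi'')$-block-encoding of $J$. Then $M_3 = J \otimes (PX \otimes_s P^{-\top})$, and a standard application of Proposition~\ref{prop:product} with appropriately chosen error parameters (say $\xi' = \xi_{M_3}/(2)$ and $\xi'' = \xi_{M_3}/(2\|PX \otimes P^{-\top}\|_F^2)$) yields a $(\|M_3\|_F, \Ocal(\log n), \xi_{M_3})$-block-encoding of $M_3$, using the identity $\|M_3\|_F = \|PX \otimes P^{-\top}\|_F^2 \cdot \|J\|$ and the fact that $\|PX \otimes_s P^{-\top}\|_F \le \|PX \otimes P^{-\top}\|_F$ for the normalization bookkeeping.

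Propagating the precisions through the two compositions keeps the total error at $\xi_{M_3}$, and each subroutine (state preparation for the sparse oracles of $J$, the symmetric Kronecker block-encoding from Proposition~\ref{prop:blockencSymKronecker}, and the product of block-encodings) runs in time polylogarithmic in $n$ and $1/\xi_{M_3}$. Composing these gives the claimed $\widetilde{\Ocal}_{n/\xi_{M_3}}(1)$ implementation time.

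I do not expect any serious obstacle: the argument is essentially an exact reprise of the one for $M_2$. The only slightly delicate point is the bookkeeping of the normalization factor so that the final block-encoding advertises $\|M_3\|_F$ rather than the naive upper bound $\|PX \otimes P^{-\top}\|_F^2$; this is handled by the fact that within the block-encoding framework one is free to rescale the normalization up to the operator norm of the embedded matrix, and $\|M_3\|_F = \|PX \otimes_s P^{-\top}\|_F$ because $J$ simply embeds without scaling.
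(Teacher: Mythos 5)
Your proposal is correct and follows essentially the same route as the paper: the paper's proof simply says to repeat the argument for $M_2$, i.e., apply Proposition~\ref{prop:blockencSymKronecker} to block-encode $PX \otimes_s P^{-\top}$ from the QRAM-stored factors and then use Lemma~\ref{lem:sparseBE} together with the product of block-encodings to place that block in the correct position, which is exactly your construction with the indicator matrix $J$. The only minor caveat (present in the paper's own treatment as well) is the bookkeeping between the normalization $\|PX \otimes P^{-\top}\|_F^2$ delivered by Proposition~\ref{prop:blockencSymKronecker} and the advertised factor $\|M_3\|_F$, which you already flag and handle adequately.
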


\begin{proof}
The proof follows the same argument as in the proof of Proposition~\ref{prop:ntbe2II}.
\end{proof}

\begin{proposition}\label{prop:NTcompactII}
  Let $P = W^{-1/2}$, where $W$ is the Nesterov-Todd scaling matrix. The Nesterov-Todd linear system matrix \eqref{e:NT2} can be written compactly as:
  \begin{equation}
    \label{eq:ntcompact}
    M_{\text{NT}} = 
    \begin{pmatrix}
         0 & \tran{\cal A}_s & \Ical \\
         {\cal A}_s & 0 & 0 \\
         P \otimes_s P^{-\top} S & 0 & PX\otimes_s P^{-\top}
         \end{pmatrix},
  \end{equation}
  If  ${\cal A}_s$, $P$, $P^{-\top}$, $PX$, and
  $P^{-\top}S$ are stored in QRAM, then a $(\|M_{\text{NT}}\|_F, \Ocal(\log n),
  \xi/(\kappa^2 \log^2 \frac{\kappa}{\xi}))$-block-encoding
  of \eqref{eq:ntcompact}, can be constructed in time $\widetilde{\Ocal}_{n, \kappa, \frac{1}{\xi}}(1)$.
\end{proposition}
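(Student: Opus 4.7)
The plan is to first verify the algebraic identity giving the compact form \eqref{eq:ntcompact}, then construct the block-encoding by summing the three partial block-encodings of $M_1$, $M_2$, $M_3$ given in Propositions~\ref{prop:ntbe1II}--\ref{prop:ntbe3II}, with error parameters chosen so that the total error matches the desired bound $\xi/(\kappa^2 \log^2(\kappa/\xi))$. The mirror result for the IF setting, Proposition~\ref{prop:NTcompact}, follows the same blueprint, so the main work here is bookkeeping the indices and error budgets.

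For the algebraic step, I would start from the symmetrized Newton system \eqref{e:svecTodd2} with the NT choice $P = W^{-1/2}$, and use the fact that with this choice the transpose terms in $\Ecal_s$ and $\Fcal_s$ can be combined into the two symmetric-Kronecker products $P \otimes_s P^{-\top}S$ and $PX \otimes_s P^{-\top}$, exactly as was done when writing \eqref{e:NT2}. Substituting these into the block matrix then produces \eqref{eq:ntcompact}. Once this identity is in place, observe that $M_{\text{NT}} = M_1 + M_2 + M_3$ with $M_1, M_2, M_3$ as in Propositions~\ref{prop:ntbe1II}, \ref{prop:ntbe2II}, \ref{prop:ntbe3II}, since each summand covers a disjoint block of $M_{\text{NT}}$.

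For the block-encoding step, I would construct, using the three cited propositions, $(\|M_i\|_F, \Ocal(\log n), \xi_i)$-block-encodings of $M_i$ for $i \in \{1,2,3\}$, each in time $\widetilde{\Ocal}_{n/\xi_i}(1)$. To combine them, I would invoke Proposition~\ref{prop:lincombblock} (linear combination with unit weights), which requires a state preparation pair of cost $\Ocal(\log 3) = \Ocal(1)$; this can be implemented exactly (so $\xi_1 = 0$ in the language of that proposition). The resulting block-encoding of $M_{\text{NT}}$ has normalization $\max_i \|M_i\|_F = \Ocal(\|M_{\text{NT}}\|_F)$, $\Ocal(\log n)$ ancillas, and total error on the order of $\max_i \xi_i$ multiplied by the common normalization. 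Choosing each $\xi_i = \Theta(\xi / (\|M_{\text{NT}}\|_F \kappa^2 \log^2(\kappa/\xi)))$ then yields the claimed $(\|M_{\text{NT}}\|_F, \Ocal(\log n), \xi/(\kappa^2 \log^2(\kappa/\xi)))$-block-encoding. Since each $\xi_i$ enters the sub-routine complexities only polylogarithmically, the total construction time remains $\widetilde{\Ocal}_{n,\kappa,1/\xi}(1)$.

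The main obstacle, if any, is purely bookkeeping: one has to be careful that the error inflation from Proposition~\ref{prop:lincombblock} (which scales the sub-block errors by the normalization factor) is absorbed correctly so that the final error tolerance $\xi/(\kappa^2 \log^2(\kappa/\xi))$ is met, and that placing the blocks in their proper positions inside the $3\times 3$ partition (rather than in the top-left corner) does not require additional ancillas beyond $\Ocal(\log n)$. Both are routine: positioning can be achieved by tensoring with sparse indicator block-encodings obtained from Lemma~\ref{lem:sparseBE} (as used implicitly in Propositions~\ref{prop:ntbe2II} and \ref{prop:ntbe3II}), which contribute only constant normalization and $\Ocal(\log n)$ ancillas, and the error budget closes because the blocks are disjoint so no cancellation is required.
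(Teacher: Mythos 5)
Your proposal is correct and follows essentially the same route as the paper's own proof: decompose $M_{\text{NT}} = M_1 + M_2 + M_3$, block-encode each piece via Propositions~\ref{prop:ntbe1II}--\ref{prop:ntbe3II} with per-block error of order $\xi/(\|M_i\|_F \kappa^2 \log^2 \frac{\kappa}{\xi})$, and combine them with Proposition~\ref{prop:lincombblock}, concluding via $\max_i \|M_i\|_F = \Ocal(\|M_{\text{NT}}\|_F)$. The only differences are cosmetic (your error budgets are normalized by $\|M_{\text{NT}}\|_F$ rather than by each $\|M_i\|_F$, which is equivalent up to constants).
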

\begin{proof}
  Carrying out the calculations $M_1 + M_2 + M_3$ shows that \eqref{eq:ntcompact}
  corresponds to the Nesterov-Todd linear system.  
  We construct the three following block-encodings:
$$
    M_1 = \begin{pmatrix}
         0 & \tran{\cal A}_s & \Ical \\
         {\cal A}_s & 0 & 0 \\
         0 & 0 & 0
         \end{pmatrix} \quad
    M_2 = 
         \begin{pmatrix}
         0 & 0 &0 \\
        0 & 0 & 0 \\
         P \otimes_s P^{-\top} S & 0 & 0
         \end{pmatrix} \quad
    M_3 = \begin{pmatrix}
         0 & 0 &0 \\
        0 & 0 & 0 \\
          0 & 0 & PX\otimes_s P^{-\top}  
         \end{pmatrix},
$$
  using Propositions~\ref{prop:ntbe1II}-\ref{prop:ntbe3II}. We choose the
  precision of this step so that we obtain $(\|M_1\|_F, \Ocal(\log n),
  \xi/(\|M_1\|_F \kappa^2 \log^2 \frac{\kappa}{\xi}))$,
  $(\|M_2\|_F, \Ocal(\log n), \xi/(\|M_2\|_F \kappa^2 \log^2
  \frac{\kappa}{\xi}))$ and $(\|M_3\|_F, \Ocal(\log n^2),
  \xi/(\|M_3\|_F \kappa^2 \log^2 \frac{\kappa}{\xi}))$-block-encodings, respectively, where $\kappa$
  refers to the condition number of \eqref{eq:ntcompact}, here and
  below. We add these two block-encodings together using
  Proposition~\ref{prop:lincombblock}, obtaining a $(\max\{\|M_1\|_F,
  \|M_2\|_F, \|M_3 \|_F\}, \Ocal(\log n), \xi/(\kappa^2 \log^2
  \frac{\kappa}{\xi}))$-block-encoding of \eqref{eq:ntcompact}, in
  time $\widetilde{\Ocal}_{n, \kappa, \frac{1}{\xi}} (1)$. Since $\max\{\|M_1\|_F, \|M_2\|_F, \|M_3 \|_F\} =
  \Ocal(\|M_{\text{NT}}\|_F)$, we obtain the claimed result.
\end{proof}
The construction described in Proposition~\ref{prop:NTcompactII} is only one
of the possible ways to obtain a block-encoding of the Newton linear
system corresponding to the Nesterov-Todd direction. We choose this
specific decomposition because among the ones that we analyzed, it
minimizes the amount of classical work while keeping the quantum gate complexity unchanged. We can then use this factorization to solve
the Newton system.
\begin{theorem}\label{t:NTtII}
  Let $P = W^{-1/2}$, where $W$ is the Nesterov-Todd scaling matrix. There is a quantum algorithm that given $$\ket{-\eta_1 R_{s_0}^d \circ -\eta_1 R_{s_0}^p  \circ \svec ( ( 1 - \eta_2) \tau \mu_0
    I - H_P(XS))}$$ and access to QRAM data structures encoding
  ${\cal A}_s$, $P$, $P^{-\top}$, $PX$, and
  $P^{-\top}S$, outputs a state $\xi$-close to
  $\ket{\Delta X \circ \Delta y \circ \Delta S}$ in time:
  \begin{equation*}
    \widetilde{\Ocal}_{n, \kappa, \frac{1}{\xi}} \left(\kappa \max\{\|M_1\|_F, \|M_2\|_F, \| M_3 \|_F\} \right),
  \end{equation*}
  using the NT direction. We can also output an estimate of $\|\Delta
  X \circ \Delta y \circ \Delta S\|$ with relative error $\delta$ by
  increasing the running time by a factor $\frac{1}{\delta}$.
\end{theorem}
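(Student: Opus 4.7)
The plan is to invoke the block-encoding of $M_{\text{NT}}$ constructed in Proposition \ref{prop:NTcompactII} and then apply the QLSA of Theorem \ref{thm:qlsa} to the linear system $M_{\text{NT}} u = v$, where $v$ is the right-hand side vector corresponding to the input quantum state. The reason for choosing the specific precision $\xi/(\kappa^2 \log^2 (\kappa/\xi))$ in Proposition \ref{prop:NTcompactII} is precisely to meet the precision requirement of Theorem \ref{thm:qlsa}, which demands $\xi_{\text{BE}} = o\left(\delta/(\kappa^2 \log^3(\kappa^2/\delta))\right)$ where $\delta$ is the desired final accuracy of the output state.

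First I would observe that by Proposition \ref{prop:NTcompactII}, with the stated QRAM inputs, we can construct a $(\|M_{\text{NT}}\|_F, \Ocal(\log n), \xi/(\kappa^2 \log^2 (\kappa/\xi)))$-block-encoding $U$ of $M_{\text{NT}}$ in time $T_U = \widetilde{\Ocal}_{n, \kappa, \frac{1}{\xi}}(1)$. The right-hand side state $\ket{v}$ is given as an input to the algorithm, so $T_v = \widetilde{\Ocal}_{n, \kappa, \frac{1}{\xi}}(1)$ (assuming the given state is produced in polylogarithmic time, which matches the standard assumptions throughout the paper). Next, I would apply Theorem \ref{thm:qlsa} with these parameters: the output state is $\xi$-close to $M_{\text{NT}}^{-1}\ket{v}/\|M_{\text{NT}}^{-1}\ket{v}\|$, which corresponds exactly to $\ket{\Delta X \circ \Delta y \circ \Delta S}$ by construction of the Newton linear system \eqref{eq:ntcompact}.

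The total running time from Theorem \ref{thm:qlsa} is
\begin{equation*}
\Ocal\left(\kappa \left(\|M_{\text{NT}}\|_F (a + T_U) \log^2(\kappa/\xi) + T_v\right) \log \kappa\right),
\end{equation*}
and since $T_U, T_v$ are polylogarithmic in $n, \kappa, 1/\xi$, and $\|M_{\text{NT}}\|_F = \Ocal(\max\{\|M_1\|_F, \|M_2\|_F, \|M_3\|_F\})$, this simplifies to
\begin{equation*}
\widetilde{\Ocal}_{n, \kappa, \frac{1}{\xi}}\left(\kappa \max\{\|M_1\|_F, \|M_2\|_F, \|M_3\|_F\}\right),
\end{equation*}
absorbing all polylogarithmic factors. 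For the norm estimate claim, I would apply Proposition \ref{prop:blocknormest} in place of Theorem \ref{thm:qlsa} to obtain $\tilde{e}$ satisfying $(1-\delta)\|M_{\text{NT}}^{-1}\ket{v}\| \le \tilde{e} \le (1+\delta)\|M_{\text{NT}}^{-1}\ket{v}\|$. The running time picks up an additional factor of $1/\delta$ as claimed, with only polylogarithmic overhead in the other parameters.

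The main obstacle is bookkeeping the error parameters to ensure that the precision of the block-encoding matches exactly what Theorem \ref{thm:qlsa} requires. In particular, one must verify that propagating an error of $\xi/(\kappa^2 \log^2(\kappa/\xi))$ in the block-encoding through the QLSA yields an output state $\xi$-close to the true solution — this is essentially the content of the precision requirement in Theorem \ref{thm:qlsa}, but care must be taken that the polylogarithmic factors introduced by Propositions \ref{prop:ntbe1II}--\ref{prop:ntbe3II} and \ref{prop:lincombblock} can all be absorbed into the $\widetilde{\Ocal}$ notation without affecting the leading-order dependence on $\kappa$ and $\max\{\|M_i\|_F\}$. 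Beyond this, the proof is essentially a direct composition of established results from the preceding sections.
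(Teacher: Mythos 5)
Your proposal is correct and follows essentially the same route as the paper, whose proof is precisely the one-line composition of Proposition~\ref{prop:NTcompactII} with Theorem~\ref{thm:qlsa} (and Proposition~\ref{prop:blocknormest} for the norm estimate); you have simply made the error-parameter and running-time bookkeeping explicit.
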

\begin{proof}
  This is a direct consequence of Proposition~\ref{prop:NTcompactII} and
  Theorem~\ref{thm:qlsa}.
\end{proof}

\subsubsection{Implementing the HKM linear system}
We now analyze the version of our II-QIPM that uses the HKM
direction. As before, we could implement a $(2\kappa_S^{1/2},
\widetilde{\Ocal}(1), \xi_2)$-block-encoding for $S^{1/2} X S^{1/2}$ in
time $\widetilde{\Ocal}(\|S\|_F \kappa_S)$, using
Proposition~\ref{prop:qramblockenc} and \cite[Lemma 9 and 10]{chakraborty2018power}, but it is more efficient to perform these
computations classically --- see the discussion in the previous
subsection.

\begin{proposition}\label{prop:HKMcompactII}
  The HKM linear system matrix can be written compactly as:
  \begin{equation}
    \label{eq:hkmcompact}
    M_{\text{HKM}} = 
    \begin{pmatrix}
         0 & \tran{\cal A}_s & \Ical \\
         {\cal A}_s & 0 & 0 \\
         S^{1/2} \otimes_s S^{1/2} & 0 & S^{1/2} X \otimes_s S^{-1/2}
         \end{pmatrix},
  \end{equation}
  If  ${\cal A}_s, S^{1/2}, S^{-1/2}$, $S^{1/2} X$,  are stored in QRAM, then a $(\|M_{\text{HKM}}\|_F, \Ocal(\log n),
  \xi/(\kappa^2 \log^2 \frac{\kappa}{\xi}))$-block-encoding
  of \eqref{eq:hkmcompact}, can be constructed in time $\widetilde{\Ocal}_{n, \kappa, \frac{1}{\xi}}(1)$.
\end{proposition}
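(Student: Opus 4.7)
The plan is to follow the same decomposition strategy as in Proposition~\ref{prop:NTcompactII}, adapted to the HKM direction. First, I would verify by direct substitution that plugging $P = S^{1/2}$ into \eqref{e:inexactNewt} (with $\Ecal_{s_k}, \Fcal_{s_k}$ defined by \eqref{e:HKM_E}-\eqref{e:HKM_F}) produces exactly \eqref{eq:hkmcompact}; this is a routine calculation that uses the identity $S^{1/2} \otimes_s S^{-1/2} S = S^{1/2} \otimes_s S^{1/2}$ already noted in \eqref{e:HKM}.

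Next, I would split $M_{\text{HKM}} = M_1 + M_2 + M_3$ along the same block-pattern used in Proposition~\ref{prop:NTcompactII}, namely
\begin{equation*}
M_1 = \begin{pmatrix} 0 & \tran{\cal A}_s & \Ical \\ {\cal A}_s & 0 & 0 \\ 0 & 0 & 0 \end{pmatrix},\qquad
M_2 = \begin{pmatrix} 0 & 0 & 0 \\ 0 & 0 & 0 \\ S^{1/2} \otimes_s S^{1/2} & 0 & 0 \end{pmatrix},\qquad
M_3 = \begin{pmatrix} 0 & 0 & 0 \\ 0 & 0 & 0 \\ 0 & 0 & S^{1/2}X \otimes_s S^{-1/2} \end{pmatrix}.
\end{equation*}
Block-encoding $M_1$ follows from Proposition~\ref{prop:qramblockenc}, given that $\mathcal{A}_s$ is stored in QRAM (the identity block and zeros can be embedded using sparse-access constructions as in Lemma~\ref{lem:sparseBE} combined with products of block-encodings to place them in the correct positions, exactly as done for $M_1$ in Proposition~\ref{prop:ntbe1II}). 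For $M_2$, since both $S^{1/2}$ factors are stored in QRAM, Proposition~\ref{prop:blockencSymKronecker} yields a $(\|S^{1/2} \otimes S^{1/2}\|_F^2, \mathcal{O}(\log n), \xi')$-block-encoding of $S^{1/2} \otimes_s S^{1/2}$ in time $\widetilde{\mathcal{O}}_{n/\xi'}(1)$; positioning it in the lower-left block proceeds as in Proposition~\ref{prop:ntbe2II}. The construction for $M_3$ is identical in structure, using the QRAM-stored matrices $S^{1/2}X$ and $S^{-1/2}$ to invoke Proposition~\ref{prop:blockencSymKronecker} again.

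Once the three block-encodings are in hand with respective normalization factors $\|M_1\|_F, \|M_2\|_F, \|M_3\|_F$, I would combine them using Proposition~\ref{prop:lincombblock} with the trivial state-preparation pair $P_L = P_R = H^{\otimes 2}$ and weights $y = (1,1,1,0)$. Choosing the internal error parameters of each factor as $\xi / (3\|M_i\|_F \kappa^2 \log^2 \tfrac{\kappa}{\xi})$ ensures that the combined error meets the stated tolerance $\xi/(\kappa^2 \log^2 \tfrac{\kappa}{\xi})$, and yields a block-encoding with normalization factor $\max_i \|M_i\|_F = \mathcal{O}(\|M_{\text{HKM}}\|_F)$ and $\mathcal{O}(\log n)$ ancilla qubits. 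All subroutines run in time $\widetilde{\mathcal{O}}_{n, \kappa, 1/\xi}(1)$, so the final construction also does.

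The main obstacle — and it is really just a bookkeeping issue rather than a conceptual one — is tracking the normalization factors so that the final block-encoding carries $\|M_{\text{HKM}}\|_F$ (not, say, a larger quantity such as the sum of Frobenius norms of the summands) while simultaneously propagating the error parameter $\xi$ through each multiplicative composition. This is handled exactly as in the NT case, since replacing $P \otimes_s P^{-\top} S$ by $S^{1/2} \otimes_s S^{1/2}$ and $PX \otimes_s P^{-\top}$ by $S^{1/2} X \otimes_s S^{-1/2}$ does not change the structure of the block-encoding pipeline.
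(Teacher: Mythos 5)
Your proposal is correct and follows essentially the same route as the paper: the paper's proof is literally a one-line remark that Proposition~\ref{prop:HKMcompactII} is the special case $P = S^{1/2}$ of Proposition~\ref{prop:NTcompactII}, and your decomposition $M_{\text{HKM}} = M_1 + M_2 + M_3$ with the symmetric-Kronecker block-encodings combined via Proposition~\ref{prop:lincombblock} is exactly that construction spelled out (using $S^{1/2} \otimes_s S^{-1/2} S = S^{1/2} \otimes_s S^{1/2}$). Your error-budget and normalization bookkeeping matches the paper's treatment in the NT case, so nothing further is needed.
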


\begin{proof}
Follows the same proof as Proposition \ref{prop:NTcompactII}, as this is simply the case in which $P = S^{1/2}$.
\end{proof}


Having established a compact factorization of the Newton linear system
corresponding to the HKM direction, similar to our work for the NT
direction, we may now use the factorization to solve the Newton
system. The following two results establish the validity of our
factorization and bound the running time for solving this system in a
quantum setting, respectively.

\begin{proposition}
  \label{prop:hkmls}
Using the HKM direction $P = S^{1/2}$, there is a quantum algorithm, which given $$\ket{-\eta_1 R_{s_0}^d \circ -\eta_1 R_{s_0}^p  \circ \svec ( ( 1 - \eta_2) \tau \mu_0
    I - H_P(XS))}$$ and access to QRAM data structures encoding  ${\cal A}_s$, $P$, $P^{-\top}$, $PX$, and
  $P^{-\top}S$, outputs a state $\xi$-close to
  $\ket{\Delta X \circ \Delta y \circ \Delta S}$ in time
 $$\widetilde{\Ocal}_{n, \kappa, \frac{1}{\xi}}(\kappa (\|{\cal A}\|_F + \|S\|_F)).$$
We can also output an estimate of $\|\Delta
  X \circ \Delta y \circ \Delta S\|$ with relative error $\delta$ by increasing the running time by a factor of $\frac{1}{\delta}$. 
\end{proposition}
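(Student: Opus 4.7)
The plan is to obtain this result as a direct consequence of the block-encoding construction in Proposition \ref{prop:HKMcompactII} combined with the QLSA running time bound of Theorem \ref{thm:qlsa}. First, I would invoke Proposition \ref{prop:HKMcompactII} with the HKM scaling $P = S^{1/2}$, which yields a $(\|M_{\text{HKM}}\|_F, \Ocal(\log n), \xi/(\kappa^2 \log^2(\kappa/\xi)))$-block-encoding of the coefficient matrix $M_{\text{HKM}}$ in \eqref{eq:hkmcompact}, implementable in time $T_U = \widetilde{\Ocal}_{n, \kappa, 1/\xi}(1)$. Preparing the right-hand side state $\ket{-\eta_1 R^d_{s_0} \circ -\eta_1 R^p_{s_0} \circ \svec((1-\eta_2)\tau\mu_0 I - H_P(XS))}$ from QRAM-stored data also takes $T_v = \widetilde{\Ocal}_{n}(1)$ time. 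Feeding this into Theorem \ref{thm:qlsa} with $\alpha = \|M_{\text{HKM}}\|_F$ produces a state $\xi$-close to the normalized solution vector in time $\widetilde{\Ocal}_{n,\kappa,1/\xi}(\kappa \|M_{\text{HKM}}\|_F)$.

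The key step is then to bound the Frobenius norm $\|M_{\text{HKM}}\|_F$ in terms of $\|{\cal A}\|_F + \|S\|_F$. Looking at the block structure of \eqref{eq:hkmcompact}, the blocks involving $\Acal_s$ and $\Acal_s^{\top}$ contribute $\Ocal(\|\Acal_s\|_F) = \Ocal(\|{\cal A}\|_F)$, since $\svec$ preserves the Frobenius norm of symmetric matrices. The identity block $\Ical$ contributes only a $\textup{poly}(n)$ factor that will be dominated by the other terms in the final estimate after we appeal to the $\Ncal_I$ neighborhood bound $\|S\|_F = \Ocal(\rho n^{1.5})$ derived in \eqref{e:II_norm_bound}. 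For the symmetric Kronecker blocks, I would use the submultiplicativity bound $\|G \otimes_s K\|_F \leq \|G\|_F \|K\|_F$ together with the relation $P = S^{1/2}$, $P^{-\top} = S^{-1/2}$, which yields $\|S^{1/2} \otimes_s S^{1/2}\|_F \leq \|S^{1/2}\|_F^2 = \trace(S)$ and $\|S^{1/2} X \otimes_s S^{-1/2}\|_F \leq \|S^{1/2} X\|_F \|S^{-1/2}\|_F$. Using Remark \ref{r:Rem2} and the neighborhood definition, these quantities are all $\Ocal(\|S\|_F)$ up to polylogarithmic overhead in $n, \kappa, 1/\xi$, yielding $\|M_{\text{HKM}}\|_F = \Ocal(\|{\cal A}\|_F + \|S\|_F)$ as required.

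For the norm estimate portion of the statement, I would invoke Proposition \ref{prop:blocknormest} rather than Theorem \ref{thm:qlsa}: using the same block-encoding of $M_{\text{HKM}}$, Proposition \ref{prop:blocknormest} outputs a multiplicative $\delta$-approximation to $\|M_{\text{HKM}}^{-1} \ket{v}\|$ in time $\Ocal((1/\delta) \cdot \kappa \|M_{\text{HKM}}\|_F)$ up to polylogarithmic factors, i.e., an additional $1/\delta$ factor over the state-preparation cost.

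The main obstacle I anticipate lies in the Frobenius norm bookkeeping for the symmetric Kronecker blocks involving $S^{-1/2}$: a naive bound gives $\|S^{-1/2}\|_F = \Ocal(\sqrt{n/\lambda_{\min}(S)})$, which in principle carries a dependence on the smallest eigenvalue of $S$. This is acceptable because such dependence is already subsumed by the $\kappa$ factor in the final running time and the condition-number dependence of the QLSA, but care is needed to ensure that the block structure is handled tightly so that no spurious polynomial-in-$n$ factors slip past the $\widetilde{\Ocal}_{n, \kappa, 1/\xi}$ notation beyond what can be attributed to $\|{\cal A}\|_F + \|S\|_F$.
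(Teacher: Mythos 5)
Your route is essentially the paper's: its proof simply observes that with $P=S^{1/2}$ the HKM matrix is the sum $M_1+M_2+M_3$ and invokes Theorem~\ref{t:NTtII} (itself just Proposition~\ref{prop:NTcompactII} plus Theorem~\ref{thm:qlsa}, with the norm estimate coming from Proposition~\ref{prop:blocknormest}), which is exactly your combination of Proposition~\ref{prop:HKMcompactII} with Theorem~\ref{thm:qlsa} and Proposition~\ref{prop:blocknormest}. Your Frobenius-norm bookkeeping for the $\otimes_s$ blocks is more explicit than anything in the paper, which silently identifies $\max\{\|M_1\|_F,\|M_2\|_F,\|M_3\|_F\}$ with $\Ocal(\|\Acal\|_F+\|S\|_F)$, so no gap relative to the paper's own argument.
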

\begin{proof}
Observe, adding (i.e., a linear combination of) $M_1$, $M_2$ and $M_3$ yields 
$$
\begin{pmatrix}
         0 & \tran{\cal A}_s & \Ical \\
         {\cal A}_s & 0 & 0 \\
          S^{1/2} \otimes_s S^{1/2} & 0 & S^{1/2} X \otimes_s S^{-1/2}
         \end{pmatrix}
.$$
Then, the result follows from Theorem \ref{t:NTtII}.
\end{proof}

\subsubsection{Implementing the AHO linear system}
We now analyze the version of our II-QIPM that uses the AHO
direction.

\begin{proposition}\label{prop:AHOcompactII}
  The AHO linear system matrix can be written compactly as:
  \begin{equation}
    \label{eq:AHOcompact}
    M_{\text{AHO}} = 
    \begin{pmatrix}
         0 & \tran{\cal A}_s & \Ical \\
         {\cal A}_s & 0 & 0 \\
         I \otimes_s S^{1/2} & 0 &  X \otimes_s I
         \end{pmatrix},
  \end{equation}
  If  ${\cal A}, S^{1/2}, S^{-1/2}$, $XS^{1/2}$, $S^{1/2} X$,  are stored in QRAM, then a $(\|M_{\text{HKM}}\|_F, \Ocal(\log n),
  \xi/(\kappa^2 \log^2 \frac{\kappa}{\xi}))$-block-encoding
  of \eqref{eq:AHOcompact}, can be constructed in time $\widetilde{\Ocal}_{n, \kappa, \frac{1}{\xi}}(1)$.
\end{proposition}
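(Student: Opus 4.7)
The plan is to follow the same three-step decomposition scheme used in Propositions~\ref{prop:NTcompactII} and \ref{prop:HKMcompactII}, and simply specialize it to the case $P = I$ (the AHO direction). First I would verify that plugging $P = I$ into the generic symmetrized Newton system~\eqref{e:inexactNewt} reproduces the compact form~\eqref{eq:AHOcompact}: in this case $\Ecal_{s} = I \otimes_s P^{-\top} S$ and $\Fcal_{s} = P X \otimes_s P^{-\top}$ reduce to the two symmetric Kronecker blocks appearing in the third block-row, while the first two block-rows do not depend on $P$ at all and match those of $M_{\text{HKM}}$ and $M_{\text{NT}}$ verbatim. (The factors $S^{1/2}, S^{-1/2}, XS^{1/2}, S^{1/2}X$ are listed in the hypothesis because they are what is needed to construct $P^{-\top}S$ and $PX$ after the symmetrization convention; anything we actually need can be pre-computed classically in $\Ocal(n^{\omega})$ time and written to QRAM, exactly as in the NT and HKM cases.)

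Next I would write $M_{\text{AHO}} = M_1 + M_2 + M_3$ with the same padding-by-zeros pattern as in Proposition~\ref{prop:NTcompactII}:
\begin{align*}
M_1 &= \begin{pmatrix} 0 & \tran{\cal A}_s & \Ical \\ {\cal A}_s & 0 & 0 \\ 0 & 0 & 0 \end{pmatrix}, &
M_2 &= \begin{pmatrix} 0 & 0 & 0 \\ 0 & 0 & 0 \\ I \otimes_s S^{1/2} & 0 & 0 \end{pmatrix}, &
M_3 &= \begin{pmatrix} 0 & 0 & 0 \\ 0 & 0 & 0 \\ 0 & 0 & X \otimes_s I \end{pmatrix}.
\end{align*}
Block-encoding $M_1$ reduces, via Proposition~\ref{prop:qramblockenc}, to block-encoding $\Acal_s$ (and a trivial encoding of the identity); this is already done in Proposition~\ref{prop:ntbe1II}. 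For $M_2$ and $M_3$, since we only have symmetric Kronecker products of matrices stored in QRAM (with $I$ trivially block-encodable), I would invoke Proposition~\ref{prop:blockencSymKronecker} to produce $(\|I \otimes_s S^{1/2}\|_F, \Ocal(\log n), \cdot)$- and $(\|X \otimes_s I\|_F, \Ocal(\log n), \cdot)$-block-encodings, and then use the sparse-access construction of Lemma~\ref{lem:sparseBE} together with Proposition~\ref{prop:product} to shift each symmetric Kronecker block into the correct position, exactly as was implicit in the proofs of Propositions~\ref{prop:ntbe2II}--\ref{prop:ntbe3II}. Each of these constructions runs in $\widetilde{\Ocal}_{n/\xi_{M_i}}(1)$ time.

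Finally, I would combine the three block-encodings using Proposition~\ref{prop:lincombblock} with the trivial state-preparation pair $(P_L, P_R) = (H^{\otimes 2}, H^{\otimes 2})$ for the all-ones weight vector of length $3$ (padded to $4$), obtaining a $(\max\{\|M_1\|_F, \|M_2\|_F, \|M_3\|_F\}, \Ocal(\log n), \xi/(\kappa^2 \log^2 \tfrac{\kappa}{\xi}))$-block-encoding of $M_{\text{AHO}}$. As in Proposition~\ref{prop:NTcompactII}, this maximum is $\Ocal(\|M_{\text{AHO}}\|_F)$, yielding the claimed normalization factor.

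I do not expect any genuine obstacle: the proof is almost literally the proof of Proposition~\ref{prop:NTcompactII} with $P$ replaced by $I$. The only bookkeeping care needed is in propagating the error budget $\xi/(\kappa^2 \log^2 \tfrac{\kappa}{\xi})$ through the two applications of Proposition~\ref{prop:product} (to set the $\xi_{M_i}$ for each summand so that they add up correctly) and through the final linear-combination step; this is routine. A minor cosmetic point is that the statement writes $I \otimes_s S^{1/2}$ rather than $I \otimes_s S$ that one would obtain from the convention $\Ecal_s = I \otimes_s S$ with $P = I$; whichever form is intended, the block-encoding argument is identical because both matrices are symmetric Kronecker products of matrices that are (or can be) stored in QRAM.
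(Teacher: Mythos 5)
Your proposal is correct and follows essentially the same route as the paper, whose proof simply observes that this is Proposition~\ref{prop:NTcompactII} specialized to $P = I$; you spell out the same $M_1 + M_2 + M_3$ decomposition, the same block-encoding tools, and the same final linear combination. Your remark about the cosmetic discrepancy ($I \otimes_s S^{1/2}$ versus $I \otimes_s S$ under the convention $\Ecal_s = I \otimes_s S$) is a fair observation about the statement itself and does not affect the argument.
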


\begin{proof}
The result follows the same proof as Proposition \ref{prop:NTcompactII}, as this is simply the special case in which $P = I$.
\end{proof}

\section{Full proofs of useful results from the literature}\label{app:zhou_toh}
We begin with a proof of Theorem \ref{t:t1zhou}.
\begin{proof}
Let 
\begin{equation}
\bar{\alpha}_k = \min \left(1, \frac{1}{\eta_1 (1 + \bar{\vartheta})}, \frac{0.5 \cdot (1 - \eta_2) \gamma_2 \theta_k \mu_0}{\| H_P  (\Delta X_k \Delta S_k )  \|} \right).
\end{equation}
Then, by Lemma \ref{l:l4zhou}, at the $k$-th iteration, there exists $\alpha_k$ satisfying
\begin{equation}
\alpha_k \geq \bar{\alpha}_k = \min \left(1, \frac{1}{\eta_1 (1 + \bar{\vartheta})}, \frac{c}{n^2} \right),
\end{equation}
where $c > 0$ is a constant which is independent of $\bar{\vartheta}$. Using the fact that
\begin{align*}
\theta_k &= \prod_{i =0}^{k -1} (1 - \alpha_i \eta_2) \leq (1 - \bar{\alpha} \eta_2)^k~, \text{and} \\
\frac{1}{\bar{\alpha}} &= \Ocal(n^2),
\end{align*}
it follows that $\theta_k \leq \epsilon$ for $k = \Ocal(n^2 \ln(1/\epsilon))$. 
\end{proof}

\begin{lemma}[Lemma 4 in \cite{zhou2004polynomiality}]\label{l:l4zhou}
Suppose that the conditions \eqref{e5}, \eqref{e6} and \eqref{e7} hold. Then
\begin{equation}
\| H_P (\Delta X_k \Delta S_k ) \| = \Ocal (n^2 \theta_k \mu_0).
\end{equation}
\end{lemma}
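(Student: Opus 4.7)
The plan is as follows. First I would scale the Newton system \eqref{e:inexactNewt} by $P_k$ (with $P_k=S_k^{1/2}$ for the HKM direction), working with $\widetilde{X}_k = P_k X_k P_k^\top$, $\widetilde{S}_k = P_k^{-\top} S_k P_k^{-1}$, and similarly $\widetilde{\Delta X}_k,\widetilde{\Delta S}_k$. Invariance of the $\Ncal_I$-bounds under this scaling, together with the identity $\|H_P(\Delta X_k\Delta S_k)\|=\|H_I(\widetilde{\Delta X}_k\widetilde{\Delta S}_k)\|$, reduces the goal to bounding $\|\widetilde{\Delta X}_k\|_F$ and $\|\widetilde{\Delta S}_k\|_F$, via
$$\|H_P(\Delta X_k\Delta S_k)\| \;\le\; \|\widetilde{\Delta X}_k\|_F\,\|\widetilde{\Delta S}_k\|_F.$$

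Next I would apply Lemma \ref{lemma1} with right-hand sides equal to the scaled residuals of \eqref{e:inexactNewt}, whose norms are controlled by \eqref{e:resbound}, to produce an auxiliary primal-dual pair $(\bar{X},\bar{y},\bar{S})$ matching the feasibility residuals of the current iterate and satisfying $(1-\gamma_1)\rho I\preceq \bar{X},\bar{S}\preceq (1+\gamma_1)\rho I$. I would then split the search direction as $\Delta X_k=\Delta X_k^{(1)}+\Delta X_k^{(2)}$ and $\Delta S_k=\Delta S_k^{(1)}+\Delta S_k^{(2)}$, where the $(2)$-piece absorbs all feasibility residuals (and is therefore controlled by $\bar X,\bar S$ and $X_k,S_k$), and the $(1)$-piece lies in affine subspaces orthogonal with respect to the trace inner product, so that $\Delta X_k^{(1)}\bullet\Delta S_k^{(1)}=0$.

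Under this decomposition, the third block of \eqref{e:inexactNewt} relates $\widetilde{\Delta X}_k^{(1)}\widetilde{S}_k+\widetilde{X}_k\widetilde{\Delta S}_k^{(1)}$ to $\svec\bigl((1-\eta_2)\tau_k\mu_0 I - H_P(X_kS_k) + r_k^c\bigr)$ minus the (bounded) contribution from the $(2)$-piece. Using the neighborhood bound $\|H_P(X_kS_k)-\theta_k\mu_0 I\|\le \gamma_2\theta_k\mu_0$, the residual bound $\|r_k^c\|\le\tfrac{1}{2}(1-\eta_2)\gamma_2\theta_k\mu_0$, and the trace bounds $\tau_k\textup{tr}(X_k),\tau_k\textup{tr}(S_k)=\Ocal(\theta_k\rho n)$ from Lemma \ref{lemma2} (which in turn control $\|\widetilde X_k\|$ and $\|\widetilde S_k\|$), I would apply the standard orthogonality trick $2\,\|\widetilde{\Delta X}_k^{(1)}\|_F\|\widetilde{\Delta S}_k^{(1)}\|_F\le \|\widetilde{X}_k^{-1/2}\widetilde{\Delta X}_k^{(1)}\widetilde{S}_k^{1/2}+\widetilde{X}_k^{1/2}\widetilde{\Delta S}_k^{(1)}\widetilde{S}_k^{-1/2}\|_F^2$ to extract $\|\widetilde{\Delta X}_k^{(1)}\|_F,\|\widetilde{\Delta S}_k^{(1)}\|_F=\Ocal(n\sqrt{\theta_k\mu_0})$.

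The main obstacle will be the auxiliary piece $\widetilde{\Delta X}_k^{(2)},\widetilde{\Delta S}_k^{(2)}$, because its natural bound involves the potentially unbounded factors $\widetilde{S}_k^{-1/2}$ and $\widetilde{X}_k^{-1/2}$ acting on the feasibility-correction terms. Here I would exploit the $\Ncal_I$ constraint, which forces the eigenvalues of $H_P(X_kS_k)$ to stay within a constant factor of $\theta_k\mu_0$, to obtain a uniform bound on $\|\widetilde{X}_k^{1/2}\widetilde{S}_k^{1/2}\|$ and then combine this with the Lemma \ref{lemma1} bounds on $\bar X,\bar S$ to conclude that the $(2)$-piece is also $\Ocal(n\sqrt{\theta_k\mu_0})$ in Frobenius norm. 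Multiplying the two factors then produces the claimed bound $\|H_P(\Delta X_k\Delta S_k)\|=\Ocal(n^2\theta_k\mu_0)$.
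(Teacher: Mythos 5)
Your overall strategy---scale by $P_k$, invoke Lemma~\ref{lemma1} and Lemma~\ref{lemma2}, split the direction into an orthogonal piece and an infeasibility piece, and use the orthogonality trick on the former---is the right family of argument and close in spirit to the Zhou--Toh proof reproduced in Appendix~B. However, there is a genuine gap in how you define and bound the infeasibility piece. For the $(1)$-pieces to be orthogonal you need $\Acal\,\vec(\Delta X_k^{(1)})=0$ and $\vec(\Delta S_k^{(1)})\in\Rcal(\Acal^{\top})$, so the $(2)$-pieces must reproduce the residuals $-\eta_1(R_k^p+r_k^p)$ and $-\eta_1(R_k^d+r_k^d)$ exactly. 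The Lemma~\ref{lemma1} point $\tilde X_k$ satisfies $\Acal\vec(\tilde X_k)-b=R_0^p+\zeta_k^p$, so by itself it cannot supply such a piece: one has to take $\eta_1\tau_k(\tilde X_k-X_*)$ (plus the $\Acal^+ r_k^p$-type corrections), i.e., the difference with a \emph{feasible} reference point. The paper uses the optimal pair $(X_*,y_*,S_*)$ for this (relations \eqref{e60}--\eqref{e61}), and this is precisely where the hypotheses \eqref{e6}--\eqref{e7} enter: \eqref{e6} makes $\tilde X_k-X_*$ and $\tilde S_k-S_*$ positive semidefinite with operator norm $\Ocal(\rho)$, while $X_*\bullet S_*=0$ and \eqref{e7} drive the key estimate $\tau_k\bigl((\tilde X_k-X_*)\bullet S_k + X_k\bullet(\tilde S_k-S_*)\bigr)=\Ocal(n\theta_k\mu_0)$ of Lemma~\ref{lemma12}, with the inexactness corrections handled separately in Lemma~\ref{lemma10}. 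Your plan never introduces the optimal solution, and your proposed substitute---a uniform bound on $\|\widetilde X_k^{1/2}\widetilde S_k^{1/2}\|$ from the $\Ncal_I$ constraint combined with the Lemma~\ref{lemma1} bounds on $\tilde X_k,\tilde S_k$---does not control the relevant weighted norms: the factors $\widetilde X_k^{-1/2}$, $\widetilde S_k^{-1/2}$ acting on the infeasibility terms are tamed only because those terms carry a factor $\tau_k$ and because $\tau_k\trace(X_k),\,\tau_k\trace(S_k)=\Ocal(\theta_k\rho n)$ (Lemma~\ref{lemma2}), together with estimates of the type $\|S_k^{1/2}(\tilde X_k-X_*)S_k^{1/2}\|_F\le(\tilde X_k-X_*)\bullet S_k$, which themselves require \eqref{e6}; the neighborhood bound on $\|\widetilde X_k^{1/2}\widetilde S_k^{1/2}\|$ alone is not enough.

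Two smaller points. First, your intermediate claim $\|\widetilde{\Delta X}_k\|_F,\|\widetilde{\Delta S}_k\|_F=\Ocal(n\sqrt{\theta_k\mu_0})$ is dimensionally off: for the HKM scaling $\widetilde S_k=I$ and $\widetilde X_k=S_k^{1/2}X_kS_k^{1/2}=\Theta(\theta_k\mu_0)I$ in $\Ncal_I$, so the bounds consistent with the paper's $t=\Ocal(n\sqrt{\theta_k\mu_0})$ are $\|\widetilde{\Delta X}_k\|_F=\Ocal(n\theta_k\mu_0)$ and $\|\widetilde{\Delta S}_k\|_F=\Ocal(n)$; only the balanced quantities appearing inside your orthogonality trick are both $\Ocal(n\sqrt{\theta_k\mu_0})$, and it is those (not the unbalanced Frobenius norms) that you should carry through to the final product. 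Second, the paper's mechanics differ slightly from yours: it does not decompose the direction, but keeps the full cross term $\Delta X_k\bullet\Delta S_k$, bounds it via the orthogonality relation \eqref{e61} involving $(\tilde X_k-X_*,\tilde S_k-S_*)$ and the residual corrections, and then solves the resulting quadratic inequality in $t$ (Lemmas~\ref{lemma6} and \ref{lemma9}). Your decomposition could be made to work, but only after the bounded feasible (optimal) reference pair and the associated estimates above are added.
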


\subsection{Proof of Lemma \ref{l:l4zhou}}\label{ss:proofl}
Following the framework presented in \cite{zhou2004polynomiality}, we make use of the quantities that originally appeared in \cite{zhang1998extending}. Choosing $P = S^{1/2}$ (i.e., the HKM direction), define
\begin{subequations}
\begin{align}
Z_k &:= \Fcal_k \Ecal_k = \frac{1}{2} \left( S_k^{\frac{1}{2}} X_k S_k^{\frac{1}{2}} \otimes I + I \otimes S_k^{\frac{1}{2}} X_k S_k^{\frac{1}{2}} \right), \\
D_k &:= Z_k^{-\frac{1}{2}} \Fcal_k = Z_k^{\frac{1}{2}} \Ecal_k^{-1}.
\end{align}
\end{subequations}
Next, consider the eigenvalue decomposition of 
\begin{equation} \label{e:45}
S_k^{\frac{1}{2}} X_k S_k^{\frac{1}{2}} = Q_k \Lambda_k Q_k^{\top},
\end{equation}
where $Q_k^{\top} Q_k = I$, and $\Lambda_k = \textup{diag}(\lambda_1^k, \lambda_2^k, \dots, \lambda_n^k)$ with $\lambda_1^k \geq \lambda_2^k \geq \dots \geq \lambda_n^k > 0.$ Now, if $(\tau, \theta, X, y, X) \in \Ncal_I$, it can be shown that 
\begin{align}
(1-\gamma_2)\theta \mu_0 I &\preceq H_P (XS) = S_k^{\frac{1}{2}} X_k S_k^{\frac{1}{2}} \preceq (1+ \gamma_2) \theta \mu_0 I,
\end{align} 
and we have 
\begin{equation}\label{e:46}
(1-\gamma_2)\theta_k \mu_0 \leq \lambda_n^k \leq \cdots \leq \lambda_1^k \leq (1+ \gamma_2) \theta_k \mu_0.
\end{equation}
Then, as noted in \cite{zhou2004polynomiality}, from equations \eqref{e:45} and \eqref{e:46} the eigenvalues of $Z_k$ are defined to be
$$\left\{ \frac{1}{2} (\lambda_i^k + \lambda_j^k) : i,j = 1, 2, \dots, n \right\},$$
and thus
\begin{equation}\label{e48}
\| Z_k \|_2 \leq (1+\gamma_2) \theta_k \mu_0,~~~\| Z_k^{-1} \|_2 \leq \frac{1}{(1-\gamma_2) \theta_k \mu_0}.
\end{equation}
Next, define
\begin{equation}
G_k = \frac{1}{2} \left(  X_k^{-\frac{1}{2}} S_k^{-1} X_k^{-\frac{1}{2}} \otimes I + I \otimes X_k^{-\frac{1}{2}} S_k^{-1} X_k^{-\frac{1}{2}} \right).
\end{equation} 
Following \cite{zhou2004polynomiality}, $X_k^{\frac{1}{2}} S_k X_k^{\frac{1}{2}}$ and $S_k^{\frac{1}{2}} X_k S_k^{\frac{1}{2}}$ have the same spectrum, which implies that the eigenvalues of $G_k$ are defined to be 
$$\left\{ \frac{1}{2} \left(\frac{1}{\lambda_i^k} + \frac{1}{\lambda_j^k} \right) : i,j = 1, 2, \dots, n \right\}.$$
Therefore,
\begin{equation}
\frac{2}{(1+\gamma_2) \theta_k \mu_0} \leq \frac{1}{\lambda_i^k} + \frac{1}{\lambda_j^k}  \leq \frac{2}{(1-\gamma_2) \theta_k \mu_0} ,~ i,j = 1, 2, \dots, n .
\end{equation}

In what follows, we present lemmas from \cite{zhou2004polynomiality} which lead to a proof of Lemma \ref{l:l4zhou}. 

\begin{lemma}[Lemma 5 in \cite{zhou2004polynomiality}] \label{lemma5}
For any $\Mcal \in \R{n \times n}$
\begin{align*}
\| D_k^{-\top} \vec (\Mcal) \|^2 &\leq \frac{1}{(1-\gamma_2) \theta_k \mu_0} \|S_k^{\frac{1}{2}} \Mcal S_k^{\frac{1}{2}} \|^2 \\
\| D_k \vec (\Mcal) \|^2 &\leq \frac{1}{(1-\gamma_2) \theta_k \mu_0} \|X_k^{\frac{1}{2}} \Mcal X_k^{\frac{1}{2}} \|^2.
\end{align*}
\end{lemma}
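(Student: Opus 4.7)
The plan is to exploit the two equivalent factorizations of $D_k$, namely $D_k = Z_k^{-1/2}\Fcal_k = Z_k^{1/2}\Ecal_k^{-1}$. The second inverts, via the symmetry of $Z_k$, to $D_k^{-\top} = Z_k^{-1/2}\Ecal_k^{\top}$. With these identities in hand, for any $\Mcal \in \R{n\times n}$, submultiplicativity of the induced $\ell_2$ norm immediately gives
$$ \| D_k^{-\top}\vec(\Mcal)\|^2 \leq \| Z_k^{-1}\| \cdot \|\Ecal_k^{\top} \vec(\Mcal)\|^2, \qquad \|D_k \vec(\Mcal)\|^2 \leq \|Z_k^{-1}\| \cdot \|\Fcal_k \vec(\Mcal)\|^2.$$
Since by hypothesis $(\tau_k,\theta_k,X_k,y_k,S_k)\in\Ncal_I$, the bound on $\|Z_k^{-1}\|$ already established in \eqref{e48} yields $\|Z_k^{-1}\| \leq 1/((1-\gamma_2)\theta_k\mu_0)$, which is precisely the prefactor appearing on the right-hand side of both inequalities in the lemma.

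What remains is to show $\|\Ecal_k^{\top}\vec(\Mcal)\|^2 \le \|S_k^{1/2}\Mcal S_k^{1/2}\|_F^2$ and $\|\Fcal_k \vec(\Mcal)\|^2 \le \|X_k^{1/2}\Mcal X_k^{1/2}\|_F^2$, up to the normalization absorbed into the definitions of $\Ecal_k$ and $\Fcal_k$. For the HKM scaling $P_k = S_k^{1/2}$, the Kronecker identity $(A\otimes B)\vec(\Mcal) = \vec(B\Mcal A^{\top})$, together with the fact that $S_k P_k^{-1} = P_k = P_k^{\top}$, causes both summands $P_k \otimes S_k P_k^{-1}$ and $P_k^{-1} S_k \otimes P_k$ defining $\Ecal_k$ to collapse to a common multiple of $S_k^{1/2}\otimes S_k^{1/2}$. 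This reduces $\Ecal_k^{\top}\vec(\Mcal)$ to a scalar multiple of $\vec(S_k^{1/2}\Mcal S_k^{1/2})$, and the identity $\|\vec(\cdot)\|_2 = \|\cdot\|_F$ delivers the first inequality directly.

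The main obstacle will be the analogous bound for $\Fcal_k = S_k^{1/2}X_k \otimes S_k^{-1/2} + S_k^{-1/2}\otimes X_k S_k^{1/2}$, whose two summands do not coincide and cannot be simplified by a single Kronecker identity. A crude triangle-inequality split would waste a factor of two and produce the wrong constant, so the argument must exploit a cross-term cancellation. To handle this, I would expand the quadratic form $\vec(\Mcal)^{\top}\Fcal_k^{\top}\Fcal_k \vec(\Mcal)$ via the mixed-product rule $(A\otimes B)(C\otimes D) = AC\otimes BD$, then diagonalize $S_k^{1/2}X_k S_k^{1/2} = Q_k\Lambda_k Q_k^{\top}$ as in \eqref{e:45} and change variables to the orthonormal basis furnished by $Q_k$. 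In this basis the quadratic form simplifies and is dominated by $\|X_k^{1/2}\Mcal X_k^{1/2}\|_F^2$, exploiting that $X_k^{1/2} S_k X_k^{1/2}$ and $S_k^{1/2} X_k S_k^{1/2}$ share the spectrum $\Lambda_k$ via the similarity $X_k^{1/2}(X_k^{1/2} S_k X_k^{1/2}) X_k^{-1/2} = X_k S_k$. Combining this identification with the spectral bound on $\|Z_k^{-1}\|$ completes the proof; all remaining work is routine bookkeeping of constants and a single invocation of \eqref{e48}.
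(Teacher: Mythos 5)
Your handling of the first inequality is essentially the paper's own argument: from $D_k = Z_k^{1/2}\Ecal_k^{-1}$ you get $D_k^{-\top} = Z_k^{-1/2}\Ecal_k^{\top}$, the identification $\Ecal_k \vec(\Mcal) = \vec\bigl(S_k^{1/2}\Mcal S_k^{1/2}\bigr)$ (up to the normalization convention), and the bound $\|Z_k^{-1}\|_2 \leq \frac{1}{(1-\gamma_2)\theta_k\mu_0}$ from \eqref{e48}; that part is correct and coincides with the proof in the text.

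The gap is in the second inequality. Once you bound $\|D_k\vec(\Mcal)\|^2 = \|Z_k^{-1/2}\Fcal_k\vec(\Mcal)\|^2 \leq \|Z_k^{-1}\|_2\,\|\Fcal_k\vec(\Mcal)\|^2$, the statement you still need, namely $\|\Fcal_k\vec(\Mcal)\|^2 \leq \|X_k^{1/2}\Mcal X_k^{1/2}\|_F^2$, is false in general. Already in the commuting case where $X_k$ and $S_k$ are diagonal with entries $x_i$, $s_i$ and $\lambda_i^k = x_i s_i$, a direct computation gives $(\Fcal_k\vec(\Mcal))_{ij} = m_{ij}\, x_i^{1/2}x_j^{1/2}\,\frac{\lambda_i^k+\lambda_j^k}{2\sqrt{\lambda_i^k\lambda_j^k}}$, and the factor $\frac{\lambda_i^k+\lambda_j^k}{2\sqrt{\lambda_i^k\lambda_j^k}}$ is $\geq 1$, strictly so whenever $\lambda_i^k \neq \lambda_j^k$ --- which is the generic situation in $\Ncal_I$. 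So the cross-term structure works \emph{against} you after $\|Z_k^{-1}\|$ has been pulled out: the decoupled bound can only yield the lemma with an extra multiplicative factor up to $\max_{i,j}\frac{(\lambda_i^k+\lambda_j^k)^2}{4\lambda_i^k\lambda_j^k} \leq \frac{1}{1-\gamma_2^2}$, not with the stated constant. The correct route, and the one the paper (following Zhou--Toh) takes, is to use the \emph{other} factorization for this half: $D_k = Z_k^{1/2}\Ecal_k^{-1}$ gives $\|D_k\vec(\Mcal)\|^2 = \vec(\Mcal)^{\top}\Ecal_k^{-1}Z_k\Ecal_k^{-1}\vec(\Mcal) = \vec(\Mcal)^{\top}\Ecal_k^{-1}\Fcal_k\vec(\Mcal)$, since $Z_k\Ecal_k^{-1} = \Fcal_k$; after the substitution $N = X_k^{1/2}\Mcal X_k^{1/2}$ this quadratic form equals $\vec(N)^{\top}G_k\vec(N)$, and the eigenvalues of $G_k$, which are $\frac{1}{2}\bigl(1/\lambda_i^k + 1/\lambda_j^k\bigr) \leq \frac{1}{(1-\gamma_2)\theta_k\mu_0}$, give exactly the claimed bound. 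In short, the pairing between the eigenvalues of $Z_k^{-1}$ and the components of $\Fcal_k\vec(\Mcal)$ must be preserved inside a single quadratic form; splitting them with an operator-norm estimate, as your plan does, destroys that pairing and the intermediate inequality you then need is not true.
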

\begin{proof}
From the definition of the matrices $\Ecal_k, \Fcal_k, G_k, Z_k$ and $D_k$, it follows
\begin{align*}
\| D_k^{-\top} \vec (\Mcal) \|^2 &= (\vec(\Mcal))^{\top} \Ecal_k Z^{-1}_k \Ecal_k \vec(\Mcal) \\
&= \vec \left( S_k^{\frac{1}{2}} \Mcal S_k^{\frac{1}{2}}  \right)^{\top}  Z^{-1}_k  \vec \left(S_k^{\frac{1}{2}} \Mcal S_k^{\frac{1}{2}}\right) \\
&\leq \frac{1}{(1-\gamma_2) \theta_k \mu_0} \left\|\vec \left( S_k^{\frac{1}{2}} \Mcal S_k^{\frac{1}{2}} \right) \right\|^2 \\
&= \frac{1}{(1-\gamma_2) \theta_k \mu_0} \left\| S_k^{\frac{1}{2}} \Mcal S_k^{\frac{1}{2}}  \right\|^2.
\end{align*}
Additionally, 
\begin{align*}
\| D_k \vec (\Mcal) \|^2 &= (\vec(\Mcal))^{\top} \Ecal_k^{-1} Z_k \Ecal_k^{-1} \vec(\Mcal) \\
&= (\vec(\Mcal))^{\top} \Ecal_k^{-1} \Fcal_k \vec(\Mcal) \\
&= \vec \left( X_k^{\frac{1}{2}} \Mcal X_k^{\frac{1}{2}} \right)^{\top} G_k \vec \left( X_k^{\frac{1}{2}} \Mcal X_k^{\frac{1}{2}} \right) \\
&\leq \frac{1}{(1-\gamma_2) \theta_k \mu_0} \left\|\vec \left( X_k^{\frac{1}{2}} \Mcal X_k^{\frac{1}{2}} \right) \right\|^2 \\
&= \frac{1}{(1-\gamma_2) \theta_k \mu_0} \left\|  X_k^{\frac{1}{2}} \Mcal X_k^{\frac{1}{2}}   \right\|^2.
\end{align*}
\end{proof}

\begin{lemma}[Lemma 6 in \cite{zhou2004polynomiality}]\label{lemma6}
Let $\kappa = \frac{\lambda_1^k}{\lambda_n^k} \leq \frac{1 + \gamma_2}{1 - \gamma_2}$. Then
\begin{align*}
&\| D_k^{-\top} \vec(\Delta X_k) \|^2 + \| D_k \vec(\Delta S_k) \|^2 + 2 \Delta X_k \cdot \Delta S_k = \| Z_k^{-\frac{1}{2}} (\vec (R_k^c) + \vec( r_k^c )) \|^2, \\
&\| H_P (\Delta X_k \Delta S_k ) \| \leq \frac{\sqrt{\kappa}}{2} \left( \| D_k^{-\top} \vec(\Delta X_k) \|^2 + \| D_k \vec(\Delta S_k) \|^2  \right).
\end{align*}
\end{lemma}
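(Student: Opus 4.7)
The plan is to derive both conclusions directly from the definitions of $Z_k$ and $D_k$ introduced just before the lemma, together with the third block-row of the Newton system \eqref{e:inexactNewt}. For the identity, I would start from the $\vec$-form of the inexact complementarity equation, $\Ecal_k \vec(\Delta X_k) + \Fcal_k \vec(\Delta S_k) = \vec(R^c_k) + \vec(r^c_k)$, and premultiply by $Z_k^{-1/2}$. Using $D_k = Z_k^{-1/2}\Fcal_k$ and, via symmetry and commutativity of $\Ecal_k, \Fcal_k, Z_k$ as symmetric Kronecker products built from $S_k^{1/2}$ in the HKM case, also $D_k^{-\top} = Z_k^{-1/2}\Ecal_k$, one obtains
\begin{equation*}
  D_k^{-\top}\vec(\Delta X_k) + D_k\vec(\Delta S_k) = Z_k^{-1/2}\bigl(\vec(R^c_k) + \vec(r^c_k)\bigr).
\end{equation*}
Squaring the Euclidean norm on both sides and evaluating the cross term via $\langle D_k^{-\top}\vec(\Delta X_k), D_k\vec(\Delta S_k)\rangle = \vec(\Delta X_k)^{\top} D_k^{-1}D_k\vec(\Delta S_k) = \Delta X_k \bullet \Delta S_k$ yields the first identity.

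For the inequality, I would introduce the symmetric auxiliary matrices $U := S_k^{1/2}\Delta X_k S_k^{1/2}$ and $V := S_k^{-1/2}\Delta S_k S_k^{-1/2}$. Expanding $H_P$ with $P = S_k^{1/2}$ and inserting $S_k^{-1/2}S_k^{1/2} = I$ in the middle gives $H_P(\Delta X_k\Delta S_k) = \tfrac{1}{2}(UV + VU)$. The Frobenius norm on the left is then bounded via the triangle inequality together with the submultiplicativity $\|AB\|_F \leq \|A\|\,\|B\|_F$ and $\|A\| \leq \|A\|_F$, producing $\|H_P(\Delta X_k\Delta S_k)\|_F \leq \|U\|_F\|V\|_F$; the AM-GM inequality then converts this product into $\tfrac{1}{2}(\|U\|_F^2 + \|V\|_F^2)$, which is precisely the form required on the right-hand side, modulo relating $\|U\|_F, \|V\|_F$ to $\|D_k^{-\top}\vec(\Delta X_k)\|, \|D_k\vec(\Delta S_k)\|$.

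The main obstacle is this last comparison, which is where the $\sqrt{\kappa}$ prefactor must emerge. Working in the eigenbasis of $S_k^{1/2}X_kS_k^{1/2}$ and writing $u_{ij}$ for the entries of $U$ in that basis, one has $\|U\|_F^2 = \sum_{i,j}|u_{ij}|^2$, whereas $\|D_k^{-\top}\vec(\Delta X_k)\|^2$ is a quadratic form in the same $u_{ij}$ weighted by the eigenvalues of $Z_k^{-1}$, namely $2/(\lambda_i^k + \lambda_j^k)$. A parallel computation expresses $\|V\|_F^2$ and $\|D_k\vec(\Delta S_k)\|^2$ in the eigenbasis of $X_k^{1/2}S_kX_k^{1/2}$ with reciprocal eigenvalue weights. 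Combining the two ratios and using $\tfrac{1}{2}(\sqrt{\lambda_i^k/\lambda_j^k} + \sqrt{\lambda_j^k/\lambda_i^k}) \leq \sqrt{\kappa}$ for $\lambda_i^k, \lambda_j^k \in [\lambda_n^k, \lambda_1^k]$ produces the $\sqrt{\kappa}$ factor. Tracking the symmetric Kronecker products carefully so that the bound attaches exactly to the sum $\|D_k^{-\top}\vec(\Delta X_k)\|^2 + \|D_k\vec(\Delta S_k)\|^2$ (rather than each term separately) is the delicate bookkeeping; once this is in place, AM-GM closes the argument.
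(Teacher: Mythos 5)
Your proposal is correct in substance, but it is worth noting that the paper does not actually prove this lemma at all: its ``proof'' is the single line ``See Lemmas 3.1 and 3.3 in \cite{zhang1998extending}.'' What you have written is essentially a reconstruction of Zhang's argument specialized to the HKM scaling, and it is the right argument. For the identity, multiplying the third block equation $\Ecal_k \vec(\Delta X_k) + \Fcal_k \vec(\Delta S_k) = \vec(R_k^c)+\vec(r_k^c)$ by $Z_k^{-1/2}$ and expanding the squared norm with cross term $\vec(\Delta X_k)^{\top} D_k^{-1} D_k \vec(\Delta S_k) = \Delta X_k \bullet \Delta S_k$ is exactly how the cited Lemma 3.1 goes. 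One small correction: you invoke ``symmetry and commutativity'' of $\Ecal_k,\Fcal_k,Z_k$ to get $D_k^{-\top}=Z_k^{-1/2}\Ecal_k$; in fact $\Ecal_k$ and $Z_k$ do not commute in general, but commutativity is not needed — the identity follows from the definition $D_k = Z_k^{1/2}\Ecal_k^{-1}$ (equivalently $\Fcal_k\Ecal_k = Z_k$) together with symmetry of $\Ecal_k$ and $Z_k$, so the step you use is valid even though the stated justification is slightly off. For the inequality, your decomposition $H_P(\Delta X_k\Delta S_k)=\tfrac12(UV+VU)$ with $U=S_k^{1/2}\Delta X_k S_k^{1/2}$, $V=S_k^{-1/2}\Delta S_k S_k^{-1/2}$ and the bound $\|H_P(\Delta X_k\Delta S_k)\|_F\le\|U\|_F\|V\|_F$ is the standard opening of Zhang's Lemma 3.3, and the eigenbasis computation with weights $2/(\lambda_i^k+\lambda_j^k)$ (and their reciprocals) is where $\sqrt{\kappa}$ comes from; note that $D_k^{-\top}\vec(\Delta X_k)=Z_k^{-1/2}\vec(U)$ and $D_k\vec(\Delta S_k)=Z_k^{1/2}\vec(V)$, so a single eigenbasis of $S_k^{1/2}X_kS_k^{1/2}$ suffices — there is no need to pass to the eigenbasis of $X_k^{1/2}S_kX_k^{1/2}$.

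The one place where the order of operations genuinely matters is the step you yourself flag as ``delicate bookkeeping.'' If you applied AM--GM first, as the phrase ``converts this product into $\tfrac12(\|U\|_F^2+\|V\|_F^2)$'' suggests, and only then bounded $\|U\|_F^2\le\lambda_1^k\,\|D_k^{-\top}\vec(\Delta X_k)\|^2$ and $\|V\|_F^2\le(\lambda_n^k)^{-1}\|D_k\vec(\Delta S_k)\|^2$ separately, you would end up with the factor $\max\{\lambda_1^k,(\lambda_n^k)^{-1}\}$, which is not scale invariant and is not $\sqrt{\kappa}$. The correct order is the one you gesture at in your closing sentence: first bound the product, $\|U\|_F\|V\|_F \le \sqrt{\lambda_1^k/\lambda_n^k}\;\|D_k^{-\top}\vec(\Delta X_k)\|\,\|D_k\vec(\Delta S_k)\|$, and only then apply AM--GM to the two scaled norms to obtain the sum with prefactor $\sqrt{\kappa}/2$. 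With that ordering made explicit, your sketch is a complete and correct proof of the lemma — arguably more useful to a reader than the paper's bare citation, at the cost of a page of computation the authors chose to outsource to \cite{zhang1998extending}.
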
 
\begin{proof}
See Lemmas 3.1 and 3.3 in \cite{zhang1998extending}.
\end{proof}

\begin{lemma}[Lemma 7 in \cite{zhou2004polynomiality}]\label{lemma7}
We have
$$ \| Z_k^{-\frac{1}{2}} (\vec (R_k^c )+ \vec( r_k^c ) ) \|^2 = \Ocal (n \theta_k \mu_0 ). $$
\end{lemma}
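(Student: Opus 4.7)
The plan is to use the spectral decomposition of $Z_k$ to reduce the norm of interest to a weighted Frobenius norm, then exploit the fact that $R_k^c$ becomes diagonal in the right basis while $r_k^c$ is a small perturbation. Concretely, from $S_k^{1/2} X_k S_k^{1/2} = Q_k \Lambda_k Q_k^{\top}$ and the definition of $Z_k$, the matrix $Z_k$ is diagonalized by $Q_k \otimes Q_k$ with eigenvalues $\frac{1}{2}(\lambda_i^k + \lambda_j^k)$. Therefore, for any $M \in \R{n \times n}$, setting $\widetilde{M} = Q_k^{\top} M Q_k$,
\begin{equation*}
    \| Z_k^{-1/2} \vec(M) \|^2 = \sum_{i,j=1}^n \frac{2}{\lambda_i^k + \lambda_j^k} \widetilde{M}_{ij}^2.
\end{equation*}

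Next, I would apply this identity to $M = R_k^c + r_k^c$ and split via the triangle inequality $\| Z_k^{-1/2}(\vec(R_k^c) + \vec(r_k^c))\|^2 \leq 2 \| Z_k^{-1/2} \vec(R_k^c)\|^2 + 2 \| Z_k^{-1/2} \vec(r_k^c)\|^2$. For the first term, recall that the HKM choice $P = S^{1/2}$ gives $H_P(X_k S_k) = S_k^{1/2} X_k S_k^{1/2}$, so $R_k^c = (1-\eta_2)\tau_k \mu_0 I - S_k^{1/2} X_k S_k^{1/2}$; conjugating by $Q_k$ yields a diagonal matrix with entries $(1-\eta_2)\tau_k \mu_0 - \lambda_i^k$. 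Using $\tau_k \le \theta_k$ and the central-path-neighborhood bounds \eqref{e:46}, each such entry is bounded in absolute value by $(1+\gamma_2)\theta_k \mu_0$, and each denominator $\lambda_i^k + \lambda_i^k = 2\lambda_i^k \ge 2(1-\gamma_2)\theta_k \mu_0$. Summing the $n$ surviving diagonal contributions yields
\begin{equation*}
    \| Z_k^{-1/2} \vec(R_k^c) \|^2 \leq n \cdot \frac{(1+\gamma_2)^2}{1-\gamma_2}\, \theta_k \mu_0 = \Ocal(n \theta_k \mu_0).
\end{equation*}

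For the second term, I would simply use the operator-norm bound on $Z_k^{-1/2}$ coming from \eqref{e48}, namely $\| Z_k^{-1/2}\|_2 \le 1/\sqrt{(1-\gamma_2)\theta_k \mu_0}$, together with the residual bound $\| r_k^c \| \le \tfrac{1}{2}(1-\eta_2)\gamma_2 \theta_k \mu_0$ from \eqref{e:resbound}, to obtain
\begin{equation*}
    \| Z_k^{-1/2} \vec(r_k^c) \|^2 \le \frac{\| r_k^c \|_F^2}{(1-\gamma_2)\theta_k \mu_0} = \Ocal(\theta_k \mu_0).
\end{equation*}
Combining the two bounds gives the claimed $\Ocal(n \theta_k \mu_0)$ estimate. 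The only mildly subtle step is the diagonalization argument in the first paragraph; beyond that, everything reduces to bookkeeping with the neighborhood inequalities \eqref{e:46}--\eqref{e48} and the prescribed residual bound, so I do not anticipate significant obstacles.
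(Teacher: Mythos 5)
Your proposal is correct and follows essentially the same route as the paper: split with $\|a+b\|^2 \le 2\|a\|^2+2\|b\|^2$, use the eigendecomposition $S_k^{1/2}X_kS_k^{1/2}=Q_k\Lambda_kQ_k^{\top}$ so that $R_k^c$ is diagonal in the $Q_k\otimes Q_k$ basis and only $n$ entries of size $\Ocal(\theta_k\mu_0)$ contribute, and control the $r_k^c$ term via $\|Z_k^{-1}\|_2$ and the residual bound \eqref{e:resbound}. The only cosmetic difference is that you evaluate the $R_k^c$ term with the exact weighted eigenvalue sum, whereas the paper bounds it by $\|Z_k^{-1}\|_2\,\|\vec(R_k^c)\|^2$ (and uses the slightly sharper entrywise bound $(\gamma_2+\eta_2)\theta_k\mu_0$); both yield the same $\Ocal(n\theta_k\mu_0)$ estimate.
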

\begin{proof}
From \eqref{e:resbound} and \eqref{e48} it follows
\begin{align}
\| Z_k^{-\frac{1}{2}} ( \vec (r_k^c)) \|^2 &= ( \vec (r_k^c))^{\top} Z_k^{-1} \vec (r_k^c)\nonumber \\
&\leq \| Z_k^{-1} \|_2 \|r_k^c \|^2 \nonumber\\
&\leq \frac{0.25 [ (1 - \eta_2 ) \gamma_2 \theta_k \mu_0]^2}{(1 - \gamma_2) \theta_k \mu_0}\nonumber\\
 &= \frac{ [ (1 - \eta_2 ) \gamma_2]^2 \theta_k \mu_0}{4 (1 - \gamma_2)}. \label{e51}
\end{align}
Then, from equation \eqref{e:45} it follows
$$ \vec (R_k^c) = (Q_k \otimes Q_k) \vec((1-\eta_2)\theta_k \mu_0 I - \Lambda_k)$$
and hence,
\begin{align}
\| Z_k^{-\frac{1}{2}} ( \vec (R_k^c)) \|^2 &= ( \vec (R_k^c))^{\top} Z_k^{-1} \vec (R_k^c) \nonumber\\
&\leq \| Z_k^{-1} \|_2 \|\vec (R_k^c) \|^2 \nonumber\\
&\leq \frac{1}{  (1 -\gamma_2 ) \gamma_2 \theta_k \mu_0}  \sum_{i=1}^n ((1-\eta_2)\theta_k \mu_0 - \lambda^k_i)^2 \nonumber\\
&\leq \frac{1}{  (1 -\gamma_2 ) \gamma_2 \theta_k \mu_0}  \sum_{i=1}^n (|\theta_k \mu_0 - \lambda_i^k | + \eta_2 \theta_k \mu_0 )^2 \nonumber\\
&\leq \frac{n \theta_k \mu_0}{1 - \gamma_2} (\gamma_2 + \eta_2)^2. \label{e52}
\end{align}
Finally, from \eqref{e51} and \eqref{e52} it follows
\begin{align*}
\| Z_k^{-\frac{1}{2}} (\vec (R_k^c )+ \vec( r_k^c )) \|^2 &= 2 \| Z_k^{-\frac{1}{2}} ( \vec (R_k^c)) \|^2 +  2 \| Z_k^{-\frac{1}{2}} ( \vec (r_k^c)) \|^2 \\
&\leq \frac{2 n \theta_k \mu_0}{1 - \gamma_2} (\gamma_2 + \eta_2)^2 + \frac{ [ (1 - \eta_2 ) \gamma_2]^2 \theta_k \mu_0}{2 (1 - \gamma_2)}\\
&= \Ocal (n \theta_k \mu_0 ).
\end{align*}
The proof is complete.
\end{proof}

To complete the analysis, as in \cite{zhou2004polynomiality}, we consider a point $(\tilde{X}, \tilde{y}, \tilde{S})$. Note that it is guaranteed that such a point exists by Lemma \ref{lemma1}, and at iteration $k$, we face the following linear system:
\begin{subequations}
\begin{align}
\Acal^{\top} \tilde{y} + \vec( \tilde{S}) - \vec(X) &= R_0^d + r^d, \| \zeta^d_k \| \leq \gamma_1 \rho \label{e53}\\
 \Acal \vec (\tilde{X}) - b &= R_0^p + r^p,~~~~\| \Acal^{+} \zeta^p_k \| \leq \gamma_1 \rho.
\end{align}
\end{subequations}

Therefore, following \cite{zhou2004polynomiality}, by Lemma \ref{lemma1} there exists $(\tilde{X}_k, \tilde{y}_k, \tilde{S}_k)$ satisfying
\begin{subequations}
\begin{align}
\Acal^{\top} \tilde{y_k}- \vec( \tilde{S}_k) - \vec(C) &= R_0^d + \zeta^d_k \\
 \Acal  \vec( \tilde{X}_k) - b   &= R_0^p + \zeta^p_k \label{e56}\\
 (1-\gamma_1) \rho I &\preceq \tilde{X}_k \preceq (1+ \gamma_1) \rho I \label{e57}\\
  (1-\gamma_1) \rho I &\preceq \tilde{S}_k \preceq (1+ \gamma_1) \rho I.\label{e58}
\end{align}
\end{subequations}

\begin{lemma}[Lemma 8 in \cite{zhou2004polynomiality}]
Let $\Rcal_k^p, \Rcal_k^d \in \R{n \times n}$ be the unique matrices such that
\begin{subequations} \label{e59}
\begin{align}
\vec( \Rcal_k^p ) &= \Acal^+ r^p_k \\
\vec( \Rcal_k^d) &=  r^d_k.
\end{align}
\end{subequations}
Then, the following conditions hold
\begin{subequations}
\begin{align}
\left(X_k - X_* - \tau_k  (\tilde{X}_k - X_*) \right) \bullet \left(S_k - S_* - \tau_k  (\tilde{S}_k - S_*) \right) &= 0 \label{e60} \\
\left( \Delta X_k  + \eta_1 \tau_k  (\tilde{X}_k - X_*)  + \eta_1 \Rcal_k^p \right) \bullet \left( \Delta S_k  +\eta_1 \tau_k  (\tilde{S}_k - S_*) + \eta_1 \Rcal_k^d \right) &= 0. \label{e61} 
\end{align}
\end{subequations}
\end{lemma}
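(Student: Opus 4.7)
The plan is to establish each of \eqref{e60} and \eqref{e61} by showing that each identity has the structure $U \bullet V = 0$ where $U \in \operatorname{Null}(\Acal)$ (viewed under the $\vec$ map) and $V \in \Rcal(\Acal^{\top})$, which are orthogonal complements. In both cases, the right factor will be shown to be in the range of $\Acal^{\top}$ (i.e., expressible as $\Acal^{\top}$ applied to some vector of dual variables), and the left factor annihilated by $\Acal$; the inner product then vanishes automatically by $\vec(U)^{\top} \Acal^{\top} w = (\Acal \vec(U))^{\top} w = 0$.

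For \eqref{e60}, I would apply $\Acal$ to $\vec(X_k - X_* - \tau_k(\tilde{X}_k - X_*))$. Using $\Acal \vec(X_k) - b = \tau_k(R_0^p + \zeta_k^p)$ (from $(X_k,y_k,S_k) \in \Ncal_I$), $\Acal \vec(\tilde{X}_k) - b = R_0^p + \zeta_k^p$ (from \eqref{e56}), and $\Acal \vec(X_*) = b$, a direct substitution cancels all the residual terms, yielding $\Acal \vec(X_k - X_* - \tau_k(\tilde{X}_k - X_*)) = 0$. For the other factor, I would use the dual feasibility relations $\vec(S_k) = \vec(C) - \Acal^{\top} y_k + \tau_k(R_0^d + \zeta_k^d)$, $\vec(\tilde{S}_k) = \vec(C) - \Acal^{\top}\tilde{y}_k + R_0^d + \zeta_k^d$, and $\vec(S_*) = \vec(C) - \Acal^{\top} y_*$ to rewrite $\vec(S_k - S_* - \tau_k(\tilde{S}_k - S_*))$. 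The constant $\vec(C)$ and residual contributions cancel, leaving $\Acal^{\top}[-y_k + (1-\tau_k)y_* + \tau_k\tilde{y}_k]$, which lies in $\Rcal(\Acal^{\top})$. Taking the inner product then gives \eqref{e60}.

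For \eqref{e61}, the structure is the same but now using the inexact Newton system \eqref{e:inexactNewtII} and the definition \eqref{e59}. Applying $\Acal$ to $\vec(\Delta X_k + \eta_1\tau_k(\tilde{X}_k - X_*) + \eta_1 \Rcal_k^p)$, I would use $\Acal \vec(\Delta X_k) = -\eta_1(R_k^p + r_k^p) = -\eta_1\tau_k(R_0^p + \zeta_k^p) - \eta_1 r_k^p$ (noting $R_k^p = \tau_k(R_0^p + \zeta_k^p)$ on the central path neighborhood), $\Acal \vec(\tilde{X}_k - X_*) = R_0^p + \zeta_k^p$, and crucially $\Acal \vec(\Rcal_k^p) = \Acal\Acal^{+} r_k^p = r_k^p$ from the definition of the right pseudoinverse. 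The three contributions telescope to zero. For the dual factor, substituting $\vec(\Delta S_k) = -\eta_1\tau_k(R_0^d + \zeta_k^d) - \eta_1 r_k^d - \Acal^{\top}\Delta y_k$ and $\vec(\tilde{S}_k - S_*) = \Acal^{\top}(y_* - \tilde{y}_k) + R_0^d + \zeta_k^d$, the residual terms cancel and what remains is $\Acal^{\top}[-\Delta y_k + \eta_1\tau_k(y_* - \tilde{y}_k)] \in \Rcal(\Acal^{\top})$.

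There is no real obstacle here; the proof is essentially an algebraic bookkeeping exercise. The one subtlety to be careful about is the identity $\Acal \Rcal_k^p = r_k^p$: this requires $\Acal$ to have full row rank so that $\Acal^{+} = \Acal^{\top}(\Acal\Acal^{\top})^{-1}$ is a genuine right inverse, which is guaranteed by the standing assumption that $A^{(1)},\dots,A^{(m)}$ are linearly independent. The only care needed is in matching the $\tau_k$ factors correctly between the two sources of residuals (the neighborhood definition, which scales by $\tau_k$, and Lemma~\ref{lemma1} applied at $r^p = \zeta_k^p$, $r^d = \zeta_k^d$, which does not).
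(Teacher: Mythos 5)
Your proposal is correct and follows essentially the same route as the paper: derive from the neighborhood relations, \eqref{e53}--\eqref{e58}, and the inexact Newton system that the primal factor lies in $\Null(\Acal)$ while the dual factor is $\Acal^{\top}$ applied to a combination of the $y$-variables (using $\Acal\Acal^{+}r_k^p = r_k^p$ for \eqref{e61}), so each inner product vanishes by orthogonality of nullspace and rowspace. The paper states these two identities directly with the same cancellations, so your write-up is just a more explicit version of the same argument.
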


\begin{proof}
Given 
\begin{align*}
\Acal^{\top} y_* + \vec( S_*) &= \vec(C)\\
 \Acal \vec( X_*) &= b,
\end{align*}
it follows from \eqref{e53}-\eqref{e56} that 
\begin{align} \label{e60p}
\Acal^{\top} \left( \vec (X_k - X_* ) - \tau_k  \vec( \tilde{X}_k - X_*)   \right) &= 0 \\
\Acal \left( y_k - y_* - \tau_k  (\tilde{y}_k - y_*) \right)  + ( \vec (S_k - S_* ) - \tau_k  \vec( \tilde{S}_k - S_*)) &= 0.
\end{align}
As a consequence of \eqref{e60p} we have
\begin{equation}
\left( \vec (X_k - X_*) - \tau_k  \vec (\tilde{X}_k - X_*) \right)^{\top} \left( \vec (S_k - S_*) - \tau_k  \vec (\tilde{S}_k - S_*) \right) = 0
\end{equation}
and thus \eqref{e60} is proved. \\

Next, by \eqref{e:inexactNewt}, and \eqref{e53}-\eqref{e56} we have 
\begin{align}  
\Acal \left( \vec (\Delta X_k) + \eta_1  \tau_k \vec (\tilde{X}_k - X_*)  + \eta_1\Acal^+ r_k^p  \right) &= 0 \\ 
\Acal^{\top} \left( \Delta y_k  +\eta_1 \tau_k  (\tilde{y}_k - y_*) \right) + \left( \vec (\Delta S_k)  +\eta_1 \tau_k \vec  (\tilde{S}_k - S_*) +  \eta_1 r_k^d  \right) &= 0
\end{align}
from which \eqref{e61} follows. 
\end{proof}

Define
\begin{align}
t &= \left(  \| D_k^{-\top} \vec(\Delta X_k ) \|^2 + \| D_k \vec(\Delta S_k ) \|^2   \right)^{\frac{1}{2}} \\
\beta &= \left(  \| D_k^{-\top} \vec(\tilde{X}_k - X_*) \|^2 + \| D_k \vec(\tilde{S}_k - S_*) \|^2   \right)^{\frac{1}{2}} \\
\delta &= \left(  \| D_k^{-\top} \Acal^+ r^p_k\|^2 + \| D_k r^d_k ) \|^2   \right)^{\frac{1}{2}},
\end{align}
then from \cite{zhou2004polynomiality}, we have the subsequent lemma.

\begin{lemma}[Lemma 9 in \cite{zhou2004polynomiality}]\label{lemma9}
$$t \leq 2 \eta_1 (\tau_k \beta + \delta) + \sqrt{\varrho} $$
where $$\varrho = \| Z_k^{-\frac{1}{2}} \vec (R_k^c) + \vec( r_k^c )) \|^2 + 2 (\eta_1 \tau_k)^2  (\tilde{X}_k - X_*) \bullet (\tilde{S}_k - S_*) + 2 \eta_1^2 (\tau_k \beta + \delta) \delta. $$
\end{lemma}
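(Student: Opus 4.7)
My plan is to use Lemma~\ref{lemma6} to rewrite $t^2$ in terms of $\Delta X_k \bullet \Delta S_k$ and the norm $\|Z_k^{-1/2}(\vec(R_k^c)+\vec(r_k^c))\|^2$, then use the orthogonality identity \eqref{e61} to substitute for $\Delta X_k \bullet \Delta S_k$, then control every cross term via Cauchy--Schwarz in the inner product induced by $D_k$, and finally solve the resulting quadratic inequality in $t$.

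More concretely, the first step rewrites
\[
t^2 \;=\; \|Z_k^{-1/2}(\vec(R_k^c)+\vec(r_k^c))\|^2 \;-\; 2\,\Delta X_k \bullet \Delta S_k,
\]
straight from Lemma~\ref{lemma6}. Next, expanding \eqref{e61} with $A=\Delta X_k$, $B=\eta_1\tau_k(\tilde X_k-X_*)$, $C=\eta_1 \Rcal_k^p$, $D=\Delta S_k$, $E=\eta_1\tau_k(\tilde S_k-S_*)$, $F=\eta_1 \Rcal_k^d$, and isolating $A\bullet D = \Delta X_k \bullet \Delta S_k$, produces a sum of eight cross terms together with $(\eta_1\tau_k)^2(\tilde X_k-X_*)\bullet(\tilde S_k-S_*)$.

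The key identity underlying all of the bounds is that for any matrices $U,V$ of appropriate size,
\[
U \bullet V \;=\; \vec(U)^{\top}\vec(V) \;=\; (D_k^{-\top}\vec(U))^{\top}(D_k \vec(V)),
\]
so Cauchy--Schwarz gives $|U\bullet V| \le \|D_k^{-\top}\vec(U)\|\cdot \|D_k \vec(V)\|$. Writing $u_X,u_S,\tilde u_X,\tilde u_S,r_X,r_S$ for the six scaled norms appearing in the definitions of $t,\beta,\delta$, I will bound each cross term:
\[
|\Delta X_k\bullet(\tilde S_k-S_*)|\le u_X\tilde u_S,\quad |(\tilde X_k-X_*)\bullet \Delta S_k|\le \tilde u_X u_S,
\]
and similarly for the four remaining products. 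A two-dimensional Cauchy--Schwarz then collapses $u_X\tilde u_S+\tilde u_X u_S\le t\beta$ and $u_X r_S+r_X u_S\le t\delta$, while $\tilde u_X r_S+\tilde u_S r_X\le \beta\delta$ and $r_X r_S\le\delta^2$. Substituting these bounds and grouping yields
\[
t^2 \;\le\; 2\eta_1 t(\tau_k\beta+\delta) \;+\; 2(\eta_1\tau_k)^2(\tilde X_k-X_*)\bullet(\tilde S_k-S_*) \;+\; 2\eta_1^2(\tau_k\beta+\delta)\delta \;+\; \|Z_k^{-1/2}(\vec(R_k^c)+\vec(r_k^c))\|^2,
\]
i.e., $t^2-2\eta_1(\tau_k\beta+\delta)t-\varrho \le 0$.

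Solving this quadratic gives $t\le \eta_1(\tau_k\beta+\delta)+\sqrt{\eta_1^2(\tau_k\beta+\delta)^2+\varrho}$, and the elementary inequality $\sqrt{a+b}\le\sqrt{a}+\sqrt{b}$ for $a,b\ge 0$ produces the asserted bound $t\le 2\eta_1(\tau_k\beta+\delta)+\sqrt{\varrho}$. The only non-automatic step in this outline is ensuring the cross-term accounting is consistent with the final form of $\varrho$: in particular, one must notice that $(\tilde X_k-X_*)\bullet(\tilde S_k-S_*)$ is \emph{not} bounded by $\tilde u_X\tilde u_S$ but is left intact (it has no definite sign and appears directly in $\varrho$), and that the extra term $2\eta_1^2 r_X r_S$ can be absorbed using $r_X r_S\le\delta^2$ so that it combines with $2\eta_1^2\tau_k\beta\delta$ into $2\eta_1^2(\tau_k\beta+\delta)\delta$. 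Once this bookkeeping is done, the rest is algebra.
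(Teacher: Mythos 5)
Your proposal is correct and follows essentially the same route as the paper: rewrite $t^2$ via Lemma \ref{lemma6}, expand the orthogonality identity \eqref{e61}, bound the cross terms by Cauchy--Schwarz in the $D_k$-scaled norms (leaving $(\tilde{X}_k - X_*)\bullet(\tilde{S}_k - S_*)$ untouched inside $\varrho$), and solve the resulting quadratic inequality with $\sqrt{a+b}\leq\sqrt{a}+\sqrt{b}$. The only cosmetic difference is that the paper bounds $\Rcal_k^p \bullet \Rcal_k^d$ by $\tfrac{1}{2}\delta^2$ rather than $\delta^2$, which changes nothing in the final estimate.
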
 

\begin{proof}
It follows from \eqref{e61}
\begin{align*}
\Delta X_k \bullet \Delta S_k &= -\eta_1 \tau_k \left( ( \Delta X_k \bullet (\tilde{S}_k - S_*) + (\tilde{X}_k - X_*) \bullet \Delta S_k \right)  \\
&= - (\eta_1 \tau_k)^2 (\tilde{X}_k - X_*)  \bullet  (\tilde{S}_k - S_*)  - \eta_1 (\Delta X_k \bullet \Rcal_k^p + \Rcal^d \Delta S_k ) \\
&= -\eta^2_1 \tau_k \left( (\tilde{X}_k - X_*)\bullet \Rcal_k^d + \Rcal^p \bullet  (\tilde{S}_k - S_*) \right) - \eta_1^2 \Rcal_k^p \bullet \Rcal_k^d \\
&\geq - \eta_1 \tau_k \beta t - \eta_1 \delta t - (\eta_1 \tau_k)^2  (\tilde{X}_k - X_*) \bullet (\tilde{S}_k - S_*) - \eta_1^2 \tau_k \beta \delta - \frac{1}{2} \eta_1^2 \delta^2 \\
&\geq - \eta_1 (\tau_k \beta  + \delta ) t - (\eta_1 \tau_k)^2  (\tilde{X}_k - X_*) \bullet (\tilde{S}_k - S_*) - \eta_1^2 \tau_k \beta \delta - \eta_1^2 (\tau_k \beta  + \delta ) \delta.
\end{align*}
Applying Lemma \ref{lemma6} in addition to the above inequality yields
$$t^2 - 2 \eta_1 (\tau_k \beta  + \delta )t - \varrho \leq 0.$$
Therefore, 
\begin{align*}
t &\leq \eta_1 (\tau_k \beta  + \delta ) + \sqrt{\eta_1^2 (\tau_k \beta  + \delta )^2 + \varrho} \\
&\leq 2 \eta_1 (\tau_k \beta  + \delta ) + \sqrt{\varrho}.
\end{align*}
\end{proof}

\begin{lemma}[Lemma 10 in \cite{zhou2004polynomiality}]\label{lemma10}
We have 
$$ \delta^2 = \Ocal (n^2 \theta_k \mu_0).$$
\end{lemma}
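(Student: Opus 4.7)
The plan is to bound each of the two summands comprising $\delta^2$ separately using Lemma \ref{lemma5} to move from the $D_k$-weighted norms to Frobenius norms involving $S_k^{1/2}$ and $X_k^{1/2}$, and then exploit the residual bounds from \eqref{e:resbound} together with the operator-norm bounds on $X_k, S_k$ coming from Remark \ref{r:Rem2}.

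First, I would introduce $\Rcal_k^p \in \Scal^n$ via $\vec(\Rcal_k^p) = \Acal^+ r_k^p$ and $\Rcal_k^d \in \Scal^n$ via $\vec(\Rcal_k^d) = r_k^d$, as in \eqref{e59}. Applying Lemma \ref{lemma5} directly yields
\begin{align*}
\|D_k^{-\top} \Acal^+ r_k^p\|^2 &\le \frac{1}{(1-\gamma_2)\theta_k \mu_0}\, \|S_k^{1/2}\Rcal_k^p S_k^{1/2}\|_F^2,\\
\|D_k r_k^d\|^2 &\le \frac{1}{(1-\gamma_2)\theta_k \mu_0}\, \|X_k^{1/2}\Rcal_k^d X_k^{1/2}\|_F^2.
\end{align*}
Using $\|S_k^{1/2} M S_k^{1/2}\|_F \le \|S_k\|_2\, \|M\|_F$ (and the analogous bound with $X_k$), the right-hand sides reduce to bounding $\|S_k\|_2 \|\Rcal_k^p\|_F$ and $\|X_k\|_2 \|\Rcal_k^d\|_F$.

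Next, the operator norm bounds $\|X_k\|_2, \|S_k\|_2 = \Ocal(\rho n)$ from Remark \ref{r:Rem2} handle the matrix factors. For the residual factors, the definition of $\hat{\vartheta}$ in \eqref{e:resbound} gives $\hat{\vartheta} \le \sigma_{\min}(\Acal)$, so $\|\Acal^+\|_2 \le 1/\hat{\vartheta}$, and consequently
\[
\|\Rcal_k^p\|_F = \|\Acal^+ r_k^p\|_2 \le \frac{1}{\hat{\vartheta}}\, \|r_k^p\|_2 \le \gamma_1 \rho \tau_k \vartheta_k,
\qquad \|\Rcal_k^d\|_F = \|r_k^d\|_2 \le \gamma_1 \rho \tau_k \vartheta_k.
\]

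Combining these ingredients, together with $\tau_k \le \theta_k$ (from the definition of $\Ncal_I$), $\vartheta_k \le 1$, and $\mu_0 = \rho^2$, gives for each summand
\[
\|D_k^{-\top} \Acal^+ r_k^p\|^2 \;\le\; \frac{\Ocal(\rho^2 n^2)\cdot \gamma_1^2 \rho^2 \tau_k^2 \vartheta_k^2}{(1-\gamma_2)\theta_k \mu_0}
\;=\; \Ocal\!\left( n^2 \rho^2 \theta_k \vartheta_k^2 \right) \;=\; \Ocal(n^2 \theta_k \mu_0),
\]
and identically for the dual term. Summing the two summands yields $\delta^2 = \Ocal(n^2 \theta_k \mu_0)$. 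There is no real obstacle here beyond careful bookkeeping: the proof is essentially a routine application of Lemma \ref{lemma5}, the residual inequalities in \eqref{e:resbound} (in particular the role of the prefactor $\hat{\vartheta}$ when bounding $\|\Acal^+ r_k^p\|$), and the trace/operator-norm boundedness of the iterates on the infeasible central-path neighborhood.
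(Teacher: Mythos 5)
Your proposal is correct and follows essentially the same route as the paper's proof: split $\delta^2$ into its two summands, apply Lemma~\ref{lemma5} together with the definitions in \eqref{e59}, bound $\|S_k^{1/2}\Rcal_k^p S_k^{1/2}\|_F$ and $\|X_k^{1/2}\Rcal_k^d X_k^{1/2}\|_F$ by operator norms of $S_k,X_k$ times the residual norms from \eqref{e:resbound} (including the role of $\hat{\vartheta}$ in controlling $\|\Acal^+ r_k^p\|$), and finish with the central-path boundedness and $\mu_0=\rho^2$. The only cosmetic difference is that you invoke Remark~\ref{r:Rem2}'s bound $\|X_k\|,\|S_k\|=\Ocal(\rho n)$ (which is justified here because the algorithm keeps $\tau_k=\theta_k$) together with $\tau_k\le\theta_k$ and $\vartheta_k\le 1$, whereas the paper uses $\|S_k\|\le\trace(S_k)$ and Lemma~\ref{lemma2}'s bound $\tau_k\trace(S_k)=\Ocal(\theta_k\rho n)$, which yields the same $\Ocal(n^2\theta_k\mu_0)$ while also covering the general case $\tau_k<\theta_k$.
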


\begin{proof}
Recall, from \eqref{e:resbound}, we have
\begin{equation}
\| \Acal^+ r_k^d \| \leq \tau_k \gamma_1 \rho,~\text{and}~\|r_k^p\| \leq \tau_k \gamma_1 \rho.
\end{equation}
Applying Lemma \ref{lemma5}, \eqref{e59} and using the fact that $\| \Mcal \|_F \leq \trace{(\Mcal)}$ for $\Mcal \in \Scal_+^n$ it follows
\begin{align*}
 \| D_k^{-\top} \Acal^+ r^p_k\|^2 &\leq \frac{1}{(1 - \gamma_2) \theta_k \mu_0} \|S_k \|^2 \|\Acal^+ r^p_k \|^2 \\
 &\leq \frac{1}{(1 - \gamma_2) \theta_k \mu_0} (\trace{(S_k)})^2 \|\Acal^+ r^p_k \|^2 \\
  &\leq \frac{\gamma_1^2 \rho^2}{(1 - \gamma_2) \theta_k \mu_0} \tau_k^2(\trace{(S_k)})^2  \\
  &\leq \frac{1}{(1 - \gamma_2) \theta_k \mu_0} \Ocal(n^2 \theta_k^2 \rho^4) \\
  &= \Ocal(n^2 \theta_k \mu_0).
\end{align*}
Additionally, for $ \| D_k  r^d_k \|^2$, we have
\begin{align*}
 \| D_k  r^d_k \|^2 &\leq \frac{1}{(1 - \gamma_2) \theta_k \mu_0} \|X_k \|^2 \| r^d_k \|^2 \\
 &\leq \frac{1}{(1 - \gamma_2) \theta_k \mu_0} (\trace{(X_k)})^2 \| r^d_k \|^2 \\
  &\leq \frac{\gamma_1^2 \rho^2}{(1 - \gamma_2) \theta_k \mu_0} \tau_k^2(\trace{(X_k)})^2  \\
  &\leq \frac{1}{(1 - \gamma_2) \theta_k \mu_0} \Ocal(n^2 \theta_k^2 \rho^4) \\
  &= \Ocal(n^2 \theta_k \mu_0).
\end{align*}
Thus, $\delta^2 = \Ocal(n^2 \theta_k \mu_0)$.
\end{proof}

\begin{lemma}[Lemma 11 in \cite{zhou2004polynomiality}]\label{lemma11}
Under the conditions \eqref{e5}, \eqref{e6} and \eqref{e7},
$$ (\tilde{X}_k - X_*) \bullet (\tilde{S}_k - S_*) = \Ocal (n \mu_0).$$
\end{lemma}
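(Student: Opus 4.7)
The approach is to expand the bilinear form $(\tilde{X}_k - X_*) \bullet (\tilde{S}_k - S_*)$ into four pieces and bound each using only elementary properties of trace inner products between positive semidefinite matrices, together with the bounds on $\tilde{X}_k, \tilde{S}_k$ guaranteed by Lemma \ref{lemma1} and the size assumption \eqref{e7} on the primal-dual optimal solution. Specifically, I would write
\begin{equation*}
(\tilde{X}_k - X_*) \bullet (\tilde{S}_k - S_*) = \tilde{X}_k \bullet \tilde{S}_k \; - \; \tilde{X}_k \bullet S_* \; - \; X_* \bullet \tilde{S}_k \; + \; X_* \bullet S_*,
\end{equation*}
and immediately drop the last term, since strong duality under the IPC yields the complementarity relation $X_* S_* = 0$, hence $X_* \bullet S_* = 0$.

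Next I would bound the remaining three terms. All four matrices involved are positive semidefinite, so for any PSD pair $A, B$ we have the standard inequality $A \bullet B = \trace(AB) \leq \|A\| \cdot \trace(B)$. From Lemma \ref{lemma1} I have the operator-norm bounds $\|\tilde{X}_k\|, \|\tilde{S}_k\| \leq (1+\gamma_1)\rho$, and from them the trace bounds $\trace(\tilde{X}_k), \trace(\tilde{S}_k) \leq (1+\gamma_1)\rho n$. Using assumption \eqref{e7} I have $\trace(X_*) + \trace(S_*) \leq n\rho$, so in particular $\trace(X_*), \trace(S_*) \leq n\rho$. Combining these:
\begin{align*}
\tilde{X}_k \bullet \tilde{S}_k &\leq \|\tilde{X}_k\| \trace(\tilde{S}_k) \leq (1+\gamma_1)^2 n\rho^2, \\
\tilde{X}_k \bullet S_* &\leq \|\tilde{X}_k\| \trace(S_*) \leq (1+\gamma_1) n\rho^2, \\
X_* \bullet \tilde{S}_k &\leq \|\tilde{S}_k\| \trace(X_*) \leq (1+\gamma_1) n\rho^2.
\end{align*}

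Finally, putting the pieces together and using $\mu_0 = \rho^2$, the absolute value of the quantity of interest is at most a constant multiple of $n\rho^2 = n\mu_0$, which proves the claim. I do not anticipate a substantive obstacle here: the only mildly delicate point is that $(\tilde{X}_k - X_*) \bullet (\tilde{S}_k - S_*)$ is not manifestly of one sign, so the bound must be stated in absolute value; however, since each of the three remaining terms is a non-negative inner product of PSD matrices bounded individually, the triangle inequality suffices to conclude the $\Ocal(n\mu_0)$ estimate regardless of the sign of the combination.
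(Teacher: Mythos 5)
Your proof is correct, and it rests on the same backbone as the paper's: expand $(\tilde{X}_k - X_*) \bullet (\tilde{S}_k - S_*)$ into the four trace inner products and eliminate $X_* \bullet S_* = 0$ via complementarity. The only divergence is the treatment of the two cross terms. The paper notes that $\tilde{X}_k \bullet S_*$ and $X_* \bullet \tilde{S}_k$ are nonnegative (inner products of PSD matrices) and enter with a minus sign, so it simply discards them and bounds the single remaining term $\tilde{X}_k \bullet \tilde{S}_k \leq (1+\gamma_1)^2 n \rho^2$ using \eqref{e57} and \eqref{e58}; this gives a one-sided bound, which is all that is needed where the lemma is used (the quantity appears with a positive coefficient inside the upper bound for $\varrho$ in Lemma \ref{lemma9} and in the proof of Lemma \ref{l:l4zhou}). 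You instead retain the cross terms and bound each via $A \bullet B \leq \|A\| \, \trace(B)$ together with the trace bound \eqref{e7} on the optimal pair, which yields a genuinely two-sided $\Ocal(n\mu_0)$ estimate at the price of invoking \eqref{e7} in a place where the paper's argument does not need it. Both routes are valid; your closing remark that the indeterminate sign forces an absolute-value formulation is slightly off the mark, since the sign structure of the expansion is precisely what the paper exploits to avoid bounding the cross terms at all.
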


\begin{proof}
Applying \eqref{e57} and \eqref{e58} and using the fact that $\tilde{X}_k, \tilde{S}_k, X_k, S_k \in \Scal_+^n$ we have
\begin{align*}
(\tilde{X}_k - X_*) \bullet (\tilde{S}_k - S_*)  &= \tilde{X}_k \bullet \tilde{S}_k - \tilde{X}_k \bullet S_* - X_* \bullet \tilde{S}_k + X_* \bullet S_* \\
&\leq \tilde{X}_k \bullet \tilde{S}_k \\
&\leq (1 + \gamma_2 )^2 n \rho^2 \\
&= (1 + \gamma)^2 n \mu_0 = \Ocal (n \mu_0).
\end{align*}
\end{proof}

\begin{lemma}[Lemma 12 in \cite{zhou2004polynomiality}]\label{lemma12}
Under the conditions \eqref{e5}, \eqref{e6} and \eqref{e7},
$$\tau_k^2 \beta^2 = \Ocal (n^2 \theta_k \mu_0).$$
\end{lemma}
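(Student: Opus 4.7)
The plan is to combine Lemma \ref{lemma5}, which reduces the $D_k$-weighted norms to Frobenius norms of the form $\|S_k^{1/2} M S_k^{1/2}\|_F$ (and its $X_k$-analogue), with the operator-norm bounds on $\tilde{X}_k - X_*$ and $\tilde{S}_k - S_*$ inherited from \eqref{e6}, \eqref{e57}, \eqref{e58}, and the trace bounds on $X_k, S_k$ coming from Lemma \ref{lemma2} (Remark \ref{r:Rem2}). The key technical trick, which distinguishes this proof from that of Lemma \ref{lemma10}, is to choose the ``right'' factoring of $\|S_k^{1/2} M S_k^{1/2}\|_F$ so that we keep an operator norm on $M$ (where $M = \tilde X_k - X_*$) rather than a Frobenius norm; otherwise we pick up an unwanted factor of $\sqrt{n}$.

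First I would record the elementary inequality
$$ \bigl\| S^{1/2} M S^{1/2} \bigr\|_F \;\leq\; \|M\|_{\mathrm{op}} \cdot \trace(S) \qquad \text{for } S \succeq 0,\ M \in \Scal^n, $$
proved by diagonalizing $S = Q \Lambda Q^\top$ and writing
$\|S^{1/2} M S^{1/2}\|_F^2 = \sum_{i,j} \sigma_i \sigma_j (Q^\top M Q)_{ij}^2 \leq \|M\|_{\mathrm{op}}^2 \bigl(\sum_i \sigma_i\bigr)^2$. Applied with $M = \tilde{X}_k - X_*$, this gives a bound linear in $n$ (from the trace) rather than of order $n^{3/2}$ (which would arise from the naive $\|S_k\|_{\mathrm{op}} \|\tilde{X}_k - X_*\|_F$ factoring).

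Next, from \eqref{e57} we have $\tilde{X}_k \preceq (1+\gamma_1)\rho I$, and from \eqref{e5}--\eqref{e6} we have $X_* \preceq (1-\gamma_1)\rho I$, hence $\|\tilde{X}_k - X_*\|_{\mathrm{op}} \leq 2\rho$; analogously $\|\tilde{S}_k - S_*\|_{\mathrm{op}} \leq 2\rho$ using \eqref{e58}. Lemma \ref{lemma2} (via Remark \ref{r:Rem2}) yields $\tau_k \trace(S_k), \tau_k \trace(X_k) = \Ocal(\theta_k \rho n)$.

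Chaining these pieces through Lemma \ref{lemma5}, and recalling $\mu_0 = \rho^2$, we get
\begin{align*}
\tau_k^2\,\bigl\|D_k^{-\top}\vec(\tilde{X}_k - X_*)\bigr\|^2
&\leq \frac{\tau_k^2 \,\bigl\|S_k^{1/2}(\tilde{X}_k - X_*)S_k^{1/2}\bigr\|_F^2}{(1-\gamma_2)\theta_k\mu_0}
\leq \frac{\|\tilde{X}_k - X_*\|_{\mathrm{op}}^2\,(\tau_k \trace(S_k))^2}{(1-\gamma_2)\theta_k\mu_0} \\
&\leq \frac{4\rho^2 \cdot \Ocal(\theta_k^2 \rho^2 n^2)}{(1-\gamma_2)\theta_k \rho^2}
= \Ocal(n^2\,\theta_k\,\mu_0),
\end{align*}
and an identical computation, swapping the roles of $X_k$ and $S_k$, bounds $\tau_k^2 \|D_k \vec(\tilde{S}_k - S_*)\|^2$ by $\Ocal(n^2 \theta_k \mu_0)$. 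Summing the two pieces gives the claim. The main (modest) obstacle is recognizing the tighter factoring of $\|S^{1/2} M S^{1/2}\|_F$ noted above; once one commits to keeping $\|M\|_{\mathrm{op}}$ instead of $\|M\|_F$, the rest is bookkeeping with the standard trace bounds from Lemma \ref{lemma2}.
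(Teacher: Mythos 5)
Your proof is correct, but it takes a genuinely different route from the paper's (which follows Zhou and Toh). The paper exploits the positive semidefiniteness of $\tilde{X}_k - X_*$ and $\tilde{S}_k - S_*$ to pass from $\|S_k^{1/2}(\tilde{X}_k - X_*)S_k^{1/2}\|_F$ to the trace inner product $(\tilde{X}_k - X_*)\bullet S_k$ (via $\|M\|_F \le \trace(M)$ for $M \succeq 0$), and then bounds the combined quantity $\tau_k\left[(\tilde{X}_k - X_*)\bullet S_k + X_k\bullet(\tilde{S}_k - S_*)\right]$ by $8n\theta_k\mu_0$ using the orthogonality identity \eqref{e60}, the duality-gap bound $X_k \bullet S_k \le (1+\gamma_2)n\theta_k\mu_0$ coming from membership in $\Ncal_I$, and the trace condition \eqref{e7}; squaring then yields $\tau_k^2\beta^2 \le 64n^2\theta_k\mu_0/(1-\gamma_2)$. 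You instead bound the two $D_k$-weighted norms term by term, replacing the trace-inner-product step by the elementary inequality $\|S^{1/2}MS^{1/2}\|_F \le \|M\|_{\mathrm{op}}\,\trace(S)$ (your diagonalization argument for it is sound, since $|(Q^\top M Q)_{ij}| \le \|M\|_{\mathrm{op}}$ entrywise), the operator-norm bounds $\|\tilde{X}_k - X_*\|_{\mathrm{op}},\ \|\tilde{S}_k - S_*\|_{\mathrm{op}} \le 2\rho$ from \eqref{e6}, \eqref{e57}, \eqref{e58}, and the bounds $\tau_k\trace(X_k),\ \tau_k\trace(S_k) = \Ocal(\theta_k\rho n)$ of Lemma \ref{lemma2}; both routes land on the same order $\Ocal(n^2\theta_k\mu_0)$. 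What your version buys: it avoids \eqref{e60} (and hence the construction behind it) entirely, does not even need $\tilde{X}_k - X_* \succeq 0$, and stylistically mirrors the direct estimate used for Lemma \ref{lemma10}, so the two residual-type bounds are proved uniformly. What the paper's version buys: it works directly with neighborhood quantities and \eqref{e7} rather than routing through Lemma \ref{lemma2}, and it keeps explicit constants in the standard form of the classical inexact-infeasible IPM analysis.
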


\begin{proof}
Applying Lemma \ref{lemma5} and using the fact that $\tilde{X}_k - X_* \succeq 0$ it follows
\begin{align*}
 \| D_k^{-\top}  \vec(\tilde{X}_k - X_*) \|^2 &\leq \frac{1}{(1 - \gamma_2) \theta_k \mu_0} \| S_k^{\frac{1}{2}} (\tilde{X}_k - X_*) S_k^{\frac{1}{2}} \|^2 \\
 &\leq  \frac{1}{(1 - \gamma_2) \theta_k \mu_0} \trace{\left( S^{\frac{1}{2}} ( \tilde{X}_k - X_*) S_k^{\frac{1}{2}}  \right)^2} \\
 &= \frac{1}{(1 - \gamma_2) \theta_k \mu_0}  \left(  ( \tilde{X}_k - X_*) \bullet S_k \right)^2,
\end{align*}
and hence
\begin{align*}
 \| D_k \vec(\tilde{S}_k - S_*) \|^2 &\leq \frac{1}{(1 - \gamma_2) \theta_k \mu_0}  \left( X_k \bullet (\tilde{S}_k - S_*)  \right)^2.
\end{align*}
Therefore
$$ \tau_k \beta \leq \frac{1}{\sqrt{(1-\gamma_2) \theta_k \mu_0}} \tau_k \left( (\tilde{X}_k - X_*) \bullet S_k + X_k \bullet (\tilde{S}_k - S_* ) \right).$$
Next, following \cite{zhou2004polynomiality} we use \eqref{e60} and that $X_* \bullet S_* = 0$, so we have
\begin{align*}
\tau_k &\left( (\tilde{X}_k - X_*) \bullet S_k + X_k \bullet (\tilde{S}_k - S_* ) \right) \\
&= X_k \bullet S_k - X_k \bullet S_* - X_* \bullet S_k + \tau_k (\tilde{Y}_k \bullet S_* + X_* \bullet \tilde{S}_k)\\ 
&~~~+ \tau_k^2 (\tilde{X}_k \bullet \tilde{S}_k - \tilde{X}_k \bullet S_* - \tilde{X}_* \bullet \tilde{S}_k ) \\
&\leq X_k \bullet S_k + nu_k (\tilde{X}_k \bullet S_* + X_* \bullet \tilde{S}_k) + \tau_k^2 \tilde{X}_k \bullet \tilde{S}_k \\
&\leq (1 + \gamma_2) n \theta_k \mu_0 + \theta_k (1 + \gamma_1) \rho (I \bullet S_* + X_* \bullet I) + \theta_k^2 (1 + \gamma_1)^2 n \mu_0 \\
&\leq 6n \theta_k \mu_0 + 2 \tau_k \rho \left( \trace{(X_*)} + \trace{(S_*)} \right) \\
&\leq 6n \theta_k \mu_0 + 2 \tau_k \rho^2 \\
&= 8 n \theta_k \mu_0. 
\end{align*}
Thus, 
$$\theta_k^2 \beta^2 \leq \frac{64 n^2 \theta_k \mu_0}{1 - \gamma_2} = \Ocal(n^2 \theta_k \mu_0),$$
which completes the proof. 
\end{proof}

\begin{proof}{Proof of Lemma \ref{l:l4zhou}}
Applying Lemma \ref{lemma9} and using the fact that $(a + b)^2 \leq 2a^2 + 2b^2$, we have
\begin{align*}
t^2 &\leq  (2 (\tau_k \beta + \delta) + \sqrt{\varrho})^2 \leq 8 (\tau_k \beta + \delta)^2 + 2 \varrho \\
\varrho &\leq \| Z_k^{-\frac{1}{2}} (\vec (R_k^c) + \vec (r_k^c) ) \|^2 + 2 \tau_k^2 (\tilde{X}_k - X_*) \bullet (\tilde{S}_k - S_*) + 2  (\tau_k \beta + \delta) \delta.
\end{align*}
Hence,
\begin{align*}
t^2 &\leq 8 (\tau_k \beta + \delta)(\tau_k \beta +  2 \delta) + 2 \| Z_k^{-\frac{1}{2}} (\vec (R_k^c) + \vec (r_k^c) ) \|^2 + 4 \tau_k^2 (\tilde{X}_k - X_* ) \bullet (\tilde{S}_k - S_* ) \\
&\leq \Ocal(n^2 \theta_k \mu_0) + \Ocal(n \theta_k \mu_0)  + \Ocal(n \theta_k^2 \mu_0). 
\end{align*}
Note that, as in \cite{zhou2004polynomiality}, the last inequality results from applying Lemmas \ref{lemma7}, \ref{lemma10}, \ref{lemma11} and \ref{lemma12}. Finally, from Lemma \ref{lemma6}, it follows that
$$ \| H_P (\Delta X_k \Delta S_k ) \| \leq \frac{1}{2} \sqrt{\frac{1 + \gamma_2}{1 - \gamma_2}} t^2 = \Ocal(n^2 \theta_k \mu_0),$$
and the proof of Lemma \ref{l:l4zhou} is complete. 
\end{proof}

\bibliographystyle{quantum}
\bibliography{sdpbib}

\begin{thebibliography}{10}

\bibitem{khachiyan1980polynomial}
Leonid~G. Khachiyan.
\newblock ``Polynomial algorithms in linear programming''.
\newblock \href{https://dx.doi.org/https://doi.org/10.1145/800057.808695}{USSR
  Computational Mathematics and Mathematical Physics {\bf 20}, 53--72}~(1980).

\bibitem{karmarkar1984new}
Narendra Karmarkar.
\newblock ``A new polynomial-time algorithm for linear programming''.
\newblock
  \href{https://dx.doi.org/https://doi.org/10.1145/800057.808695}{CombinatoricaPages
  373--395}~(1984).

\bibitem{nesterov1988general}
Yurii~E. Nesterov and Arkadi Nemirovskii.
\newblock ``A general approach to polynomial-time algorithms design for convex
  programming''.
\newblock Report, Central Economical and Mathematical Institute, USSR Academy
  of Sciences, Moscow~(1988).

\bibitem{nesterov1994interior}
Yurii~E. Nesterov and Arkadi Nemirovskii.
\newblock ``Interior-{Point} {Polynomial} {Algorithms} in {Convex}
  {Programming}''.
\newblock
  \href{https://dx.doi.org/https://doi.org/10.1137/1.9781611970791}{Volume~13}.
\newblock SIAM. ~(1995).

\bibitem{boyd1994linear}
Stephen Boyd, Laurent El~Ghaoui, Eric Feron, and Venkataramanan Balakrishnan.
\newblock ``{Linear} {Matrix} {Inequalities} in {System} and {Control}
  {Theory}''.
\newblock
  \href{https://dx.doi.org/https://doi.org/10.1137/1.9781611970777}{SIAM}.
  ~(1994).

\bibitem{rains2001semidefinite}
Eric~M. Rains.
\newblock ``A semidefinite program for distillable entanglement''.
\newblock \href{https://dx.doi.org/https://doi.org/10.1109/18.959270}{IEEE
  Transactions on Information Theory {\bf 47}, 2921--2933}~(2001).

\bibitem{lanckriet2004learning}
Gert~R.G. Lanckriet, Nello Cristianini, Peter Bartlett, Laurent~El Ghaoui, and
  Michael~I. Jordan.
\newblock ``Learning the kernel matrix with semidefinite programming''.
\newblock Journal of Machine Learning Research {\bf 5}, 27--72~(2004).

\bibitem{weinberger2006unsupervised}
Kilian~Q. Weinberger and Lawrence~K. Saul.
\newblock ``Unsupervised learning of image manifolds by semidefinite
  programming''.
\newblock
  \href{https://dx.doi.org/https://doi.org/10.1007/s11263-005-4939-z}{International
  Journal of Computer Vision {\bf 70}, 77--90}~(2006).

\bibitem{d2007direct}
Alexandre d'Aspremont, Laurent El~Ghaoui, Michael~I. Jordan, and Gert~R.G.
  Lanckriet.
\newblock ``A direct formulation for sparse {PCA} using semidefinite
  programming''.
\newblock SIAM Review {\bf 49}, 434--448~(2007).
\newblock
  \href{http://arxiv.org/abs/https://doi.org/10.48550/arXiv.cs/0406021}{arXiv:https://doi.org/10.48550/arXiv.cs/0406021}.

\bibitem{wolkowicz2012handbook}
Henry Wolkowicz, Romesh Saigal, and Lieven Vandenberghe.
\newblock ``Handbook of semidefinite programming: Theory, algorithms, and
  applications''.
\newblock
  \href{https://dx.doi.org/https://doi.org/10.1007/978-1-4615-4381-7}{Springer
  Science \& Business Media}. ~(2012).

\bibitem{eldar2003semidefinite}
Yonina~C. Eldar.
\newblock ``A semidefinite programming approach to optimal unambiguous
  discrimination of quantum states''.
\newblock
  \href{https://dx.doi.org/https://doi.org/10.1109/TIT.2002.807291}{IEEE
  Transactions on Information Theory {\bf 49}, 446--456}~(2003).

\bibitem{harrow2017improved}
Aram~W. Harrow, Anand Natarajan, and Xiaodi Wu.
\newblock ``An improved semidefinite programming hierarchy for testing
  entanglement''.
\newblock
  \href{https://dx.doi.org/https://doi.org/10.1007/s00220-017-2859-0}{Communications
  in Mathematical Physics {\bf 352}, 881--904}~(2017).

\bibitem{watrous2009semidefinite}
John Watrous.
\newblock ``Semidefinite programs for completely bounded norms''~(2009).

\bibitem{goemans1995improved}
Michel~X. Goemans and David~P. Williamson.
\newblock ``Improved approximation algorithms for maximum cut and
  satisfiability problems using semidefinite programming''.
\newblock
  \href{https://dx.doi.org/https://doi.org/10.1145/227683.227684}{Journal of
  the ACM (JACM) {\bf 42}, 1115--1145}~(1995).

\bibitem{lovasz1979shannon}
L{\'a}szl{\'o} Lov{\'a}sz.
\newblock ``On the {Shannon} capacity of a graph''.
\newblock
  \href{https://dx.doi.org/https://doi.org/10.1109/TIT.1979.1055985}{IEEE
  Transactions on Information Theory {\bf 25}, 1--7}~(1979).

\bibitem{andersen2000mosek}
Erling~D. Andersen and Knud~D. Andersen.
\newblock ``The {MOSEK} interior point optimizer for linear programming: an
  implementation of the homogeneous algorithm''.
\newblock In Hans Frenk, Kees Roos, Tam\'as Terlaky, and Shuzhong Zhang,
  editors, High Performance Optimization.
\newblock \href{https://dx.doi.org/10.1007/978-1-4757-3216-0_8}{Pages
  197--232}.
\newblock Springer~(2000).

\bibitem{sturm1999using}
Jos~F. Sturm.
\newblock ``Using {SeDuMi} 1.02, a {MATLAB} toolbox for optimization over
  symmetric cones''.
\newblock
  \href{https://dx.doi.org/https://doi.org/10.1080/10556789908805766}{Optimization
  Methods and Software {\bf 11}, 625--653}~(1999).

\bibitem{toh1999sdpt3}
Kim-Chuan Toh, Michael~J. Todd, and Reha~H. T{\"u}t{\"u}nc{\"u}.
\newblock ``{SDPT3}—a {MATLAB} software package for semidefinite programming,
  version 1.3''.
\newblock
  \href{https://dx.doi.org/https://doi.org/10.1080/10556789908805762}{Optimization
  Methods and Software {\bf 11}, 545--581}~(1999).

\bibitem{alizadeh1998primal}
Farid Alizadeh, Jean-Pierre~A. Haeberly, and Michael~L. Overton.
\newblock ``Primal-dual interior-point methods for semidefinite programming:
  convergence rates, stability and numerical results''.
\newblock
  \href{https://dx.doi.org/https://doi.org/10.1137/S1052623496304700}{SIAM
  Journal on Optimization {\bf 8}, 746--768}~(1998).

\bibitem{jiang2020improved}
Haotian Jiang, Yin~Tat Lee, Zhao Song, and Sam Chiu-wai Wong.
\newblock ``An improved cutting plane method for convex optimization,
  convex-concave games, and its applications''.
\newblock In Konstantin Makarychev, Yury Makarychev, Madhur Tulsiani, Gautam
  Kamath, and Julia Chuzhoy, editors, Proceedings of the 52nd Annual ACM SIGACT
  Symposium on Theory of Computing.
\newblock
  \href{https://dx.doi.org/https://doi.org/10.1145/3357713.3384284}{Pages
  944--953}.
\newblock ~(2020).

\bibitem{lee2015faster}
Yin~Tat Lee, Aaron Sidford, and Sam Chiu-wai Wong.
\newblock ``A faster cutting plane method and its implications for
  combinatorial and convex optimization''.
\newblock In Rafail Ostrovsky and Venkatesan Guruswami, editors, 2015 IEEE 56th
  Annual Symposium on Foundations of Computer Science (FOCS).
\newblock \href{https://dx.doi.org/https://doi.org/10.1109/FOCS.2015.68}{Pages
  1049--1065}.
\newblock IEEE~(2015).

\bibitem{jiang2020faster}
Haotian Jiang, Tarun Kathuria, Yin~Tat Lee, Swati Padmanabhan, and Zhao Song.
\newblock ``A faster interior point method for semidefinite programming''.
\newblock In Sandy Irani, Lisa O’Conner, and Patrick Kellenberger, editors,
  2020 IEEE 61st Annual Symposium on Foundations of Computer Science (FOCS).
\newblock
  \href{https://dx.doi.org/https://doi.org/10.1109/FOCS46700.2020.00089}{Pages
  910--918}.
\newblock IEEE~(2020).

\bibitem{monteiro1998polynomial}
Renato~D.C. Monteiro.
\newblock ``Polynomial convergence of primal-dual algorithms for semidefinite
  programming based on the {Monteiro} and {Zhang} family of directions''.
\newblock
  \href{https://dx.doi.org/https://doi.org/10.1137/S1052623496308618}{SIAM
  Journal on Optimization {\bf 8}, 797--812}~(1998).

\bibitem{nesterov1997self}
Yurii~E. Nesterov and Michael~J. Todd.
\newblock ``Self-scaled barriers and interior-point methods for convex
  programming''.
\newblock
  \href{https://dx.doi.org/https://doi.org/10.1287/moor.22.1.1}{Mathematics of
  Operations ResearchPages 1--42}~(1997).

\bibitem{nesterov1998primal}
Yurii~E. Nesterov and Michael~J. Todd.
\newblock ``Primal-dual interior-point methods for self-scaled cones''.
\newblock
  \href{https://dx.doi.org/https://doi.org/10.1137/S1052623495290209}{SIAM
  Journal on Optimization {\bf 8}, 324--364}~(1998).

\bibitem{arora2006multiplicative}
Sanjeev Arora, Elad Hazan, and Satyen Kale.
\newblock ``The multiplicative weights method: a meta-algorithm and its
  applications''.
\newblock \href{https://dx.doi.org/10.4086/toc.2012.v008a006}{Theory of
  Computing, 8(6) 121-164}~(2012).

\bibitem{brandao2017quantum}
Fernando~G.S.L. Brand{\~a}o and Krysta~M. Svore.
\newblock ``Quantum speed-ups for solving semidefinite programs''.
\newblock In Rafail Ostrovsky and Chris Umans, editors, 2017 IEEE 58th Annual
  Symposium on Foundations of Computer Science (FOCS).
\newblock \href{https://dx.doi.org/10.1109/FOCS.2017.45}{Pages 415--426}.
\newblock IEEE~(2017).

\bibitem{van2020quantum}
Joran van Apeldoorn, Andr{\'a}s Gily{\'e}n, Sander Gribling, and Ronald
  de~Wolf.
\newblock ``Quantum {SDP}-solvers: Better upper and lower bounds''.
\newblock
  \href{https://dx.doi.org/https://doi.org/10.22331/q-2020-02-14-230}{Quantum
  {\bf 4}, 230}~(2020).

\bibitem{gribling2019applications}
Sander Gribling.
\newblock ``{Applications} of {Optimization} to {Factorization} {Ranks} and
  {Quantum} {Information} {Theory}''.
\newblock PhD. Thesis, CentER, Tilburg University. ~(2019).

\bibitem{van2018improvements}
Joran van Apeldoorn and Andr{\'a}s Gily{\'e}n.
\newblock ``{Improvements in Quantum SDP-Solving with Applications}''.
\newblock In Christel Baier, Ioannis Chatzigiannakis, Paola Flocchini, and
  Stefano Leonardi, editors, 46th International Colloquium on Automata,
  Languages, and Programming (ICALP 2019).
\newblock \href{https://dx.doi.org/10.4230/LIPIcs.ICALP.2019.99}{Volume 132 of
  Leibniz International Proceedings in Informatics (LIPIcs), pages
  99:1--99:15}.
\newblock Dagstuhl, Germany~(2019). Schloss Dagstuhl--Leibniz-Zentrum fuer
  Informatik.

\bibitem{kerenidis2020quantum}
Iordanis Kerenidis and Anupam Prakash.
\newblock ``A quantum interior point method for {LPs} and {SDPs}''.
\newblock \href{https://dx.doi.org/https://doi.org/10.1145/3406306}{ACM
  Transactions on Quantum Computing {\bf 1}, 1--32}~(2020).

\bibitem{casares2020quantum}
Pablo~A.M. Casares and Miguel~Angel Martin-Delgado.
\newblock ``A quantum interior-point predictor--corrector algorithm for linear
  programming''.
\newblock \href{https://dx.doi.org/10.1088/1751-8121/abb439}{Journal of Physics
  A: Mathematical and Theoretical {\bf 53}, 445305}~(2020).

\bibitem{kerenidis2019quantum}
Iordanis Kerenidis, Anupam Prakash, and D{\'a}niel Szil{\'a}gyi.
\newblock ``Quantum algorithms for second-order cone programming and support
  vector machines''.
\newblock
  \href{https://dx.doi.org/https://doi.org/10.22331/q-2021-04-08-427}{Quantum
  {\bf 5}, 427}~(2021).

\bibitem{ben2001lectures}
Aharon Ben-Tal and Arkadi Nemirovskii.
\newblock ``Lectures on modern convex optimization: Analysis, algorithms, and
  engineering applications''.
\newblock
  \href{https://dx.doi.org/https://doi.org/10.1137/1.9780898718829}{SIAM}.
  ~(2001).

\bibitem{mj1999study}
Michael~J. Todd.
\newblock ``A study of search directions in primal-dual interior-point methods
  for semidefinite programming''.
\newblock
  \href{https://dx.doi.org/https://doi.org/10.1080/10556789908805745}{Optimization
  Methods and Software {\bf 11}, 1--46}~(1999).

\bibitem{kojima1997interior}
Masakazu Kojima, Susumu Shindoh, and Shinji Hara.
\newblock ``Interior-point methods for the monotone semidefinite linear
  complementarity problem in symmetric matrices''.
\newblock
  \href{https://dx.doi.org/https://doi.org/10.1137/S1052623494269035}{SIAM
  Journal on Optimization {\bf 7}, 86--125}~(1997).

\bibitem{monteiro1997primal}
Renato~D.C. Monteiro.
\newblock ``Primal--dual path-following algorithms for semidefinite
  programming''.
\newblock
  \href{https://dx.doi.org/https://doi.org/10.1137/S1052623495293056}{SIAM
  Journal on Optimization {\bf 7}, 663--678}~(1997).

\bibitem{monteiro1998unified}
Renato~D.C. Monteiro and Yin Zhang.
\newblock ``A unified analysis for a class of long-step primal-dual
  path-following interior-point algorithms for semidefinite programming''.
\newblock
  \href{https://dx.doi.org/https://doi.org/10.1007/BF01580085}{Mathematical
  Programming {\bf 81}, 281--299}~(1998).

\bibitem{zhang1998extending}
Yin Zhang.
\newblock ``On extending some primal--dual interior-point algorithms from
  linear programming to semidefinite programming''.
\newblock
  \href{https://dx.doi.org/https://doi.org/10.1137/S1052623495296115}{SIAM
  Journal on Optimization {\bf 8}, 365--386}~(1998).

\bibitem{monteiro1999polynomial}
Renato~D.C. Monteiro and Takashi Tsuchiya.
\newblock ``Polynomial convergence of a new family of primal-dual algorithms
  for semidefinite programming''.
\newblock
  \href{https://dx.doi.org/https://doi.org/10.1137/S1052623496312836}{SIAM
  Journal on Optimization {\bf 9}, 551--577}~(1999).

\bibitem{tseng1998search}
Paul Tseng.
\newblock ``Search directions and convergence analysis of some infeasible
  path-following methods for the monotone semi-definite {LCP}''.
\newblock
  \href{https://dx.doi.org/https://doi.org/10.1080/10556789808805695}{Optimization
  Methods and Software {\bf 9}, 245--268}~(1998).

\bibitem{potra1998superlinearly}
Florian~A. Potra and Rongqin Sheng.
\newblock ``A superlinearly convergent primal-dual infeasible-interior-point
  algorithm for semidefinite programming''.
\newblock
  \href{https://dx.doi.org/https://doi.org/10.1137/S1052623495294955}{SIAM
  Journal on Optimization {\bf 8}, 1007--1028}~(1998).

\bibitem{zhang1994convergence}
Yin Zhang.
\newblock ``On the convergence of a class of infeasible interior-point methods
  for the horizontal linear complementarity problem''.
\newblock \href{https://dx.doi.org/https://doi.org/10.1137/0804012}{SIAM
  Journal on Optimization {\bf 4}, 208--227}~(1994).

\bibitem{korzak2000convergence}
Janos Korzak.
\newblock ``Convergence analysis of inexact infeasible-interior-point
  algorithms for solving linear programming problems''.
\newblock
  \href{https://dx.doi.org/https://doi.org/10.1137/S1052623497329993}{SIAM
  Journal on Optimization {\bf 11}, 133--148}~(2000).

\bibitem{mizuno1999global}
Shinji Mizuno and Florian Jarre.
\newblock ``Global and polynomial-time convergence of an
  infeasible-interior-point algorithm using inexact computation''.
\newblock
  \href{https://dx.doi.org/https://doi.org/10.1007/s10107980020a}{Mathematical
  Programming{\bf 84}}~(1999).

\bibitem{gondzio2013convergence}
Jacek Gondzio.
\newblock ``Convergence analysis of an inexact feasible interior point method
  for convex quadratic programming''.
\newblock \href{https://dx.doi.org/https://doi.org/10.1137/120886017}{SIAM
  Journal on Optimization {\bf 23}, 1510--1527}~(2013).

\bibitem{zhou2004polynomiality}
Guanglu Zhou and Kim-Chuan Toh.
\newblock ``Polynomiality of an inexact infeasible interior point algorithm for
  semidefinite programming''.
\newblock
  \href{https://dx.doi.org/https://doi.org/10.1007/s10107-003-0431-5}{Mathematical
  programming {\bf 99}, 261--282}~(2004).

\bibitem{helmberg1996interior}
Christoph Helmberg, Franz Rendl, Robert~J. Vanderbei, and Henry Wolkowicz.
\newblock ``An interior-point method for semidefinite programming''.
\newblock \href{https://dx.doi.org/https://doi.org/10.1137/0806020}{SIAM
  Journal on Optimization {\bf 6}, 342--361}~(1996).

\bibitem{cohen2021solving}
Michael~B. Cohen, Yin~Tat Lee, and Zhao Song.
\newblock ``Solving linear programs in the current matrix multiplication
  time''.
\newblock \href{https://dx.doi.org/https://doi.org/10.1145/3424305}{Journal of
  the ACM (JACM) {\bf 68}, 1--39}~(2021).

\bibitem{brandao2019faster}
Fernando~G.S.L. Brand{\~a}o, Richard Kueng, and Daniel~Stilck Fran{\c{c}}a.
\newblock ``Faster quantum and classical {SDP} approximations for quadratic
  binary optimization''.
\newblock
  \href{https://dx.doi.org/https://doi.org/10.22331/q-2022-01-20-625}{Quantum
  {\bf 6}, 625}~(2022).

\bibitem{harrow2009quantum}
Aram~W. Harrow, Avinatan Hassidim, and Seth Lloyd.
\newblock ``Quantum algorithm for linear systems of equations''.
\newblock
  \href{https://dx.doi.org/https://doi.org/10.1103/PhysRevLett.103.150502}{Physical
  Review Letters {\bf 103}, 150502}~(2009).

\bibitem{gleixner2020linear}
Ambros~M. Gleixner and Daniel~E. Steffy.
\newblock ``Linear programming using limited-precision oracles''.
\newblock
  \href{https://dx.doi.org/https://doi.org/10.1007/s10107-019-01444-6}{Mathematical
  Programming {\bf 183}, 525--554}~(2020).

\bibitem{gleixner2012improving}
Ambros~M. Gleixner, Daniel~E. Steffy, and Kati Wolter.
\newblock ``Improving the accuracy of linear programming solvers with iterative
  refinement''.
\newblock In Joris van~der Hoeven and Mark van Hoeij, editors, Proceedings of
  the 37th International Symposium on Symbolic and Algebraic Computation.
\newblock
  \href{https://dx.doi.org/https://doi.org/10.1145/2442829.2442858}{Pages
  187--194}.
\newblock ~(2012).

\bibitem{gleixner2016iterative}
Ambros~M. Gleixner, Daniel~E. Steffy, and Kati Wolter.
\newblock ``Iterative refinement for linear programming''.
\newblock
  \href{https://dx.doi.org/https://doi.org/10.1287/ijoc.2016.0692}{INFORMS
  Journal on Computing {\bf 28}, 449--464}~(2016).

\bibitem{somma2021complexity}
Rolando~D. Somma and Yi{\u{g}}it Suba{\c{s}}{\i}.
\newblock ``Complexity of quantum state verification in the quantum linear
  systems problem''.
\newblock
  \href{https://dx.doi.org/https://doi.org/10.1103/PRXQuantum.2.010315}{PRX
  Quantum {\bf 2}, 010315}~(2021).

\bibitem{chakraborty2018power}
Shantanav Chakraborty, Andr{\'a}s Gily{\'e}n, and Stacey Jeffery.
\newblock ``The power of block-encoded matrix powers: improved regression
  techniques via faster {Hamiltonian} simulation''.
\newblock In Christel Baier, Ioannis Chatzigiannakis, Paola Flocchini, and
  Stefano Leonardi, editors, 46th International Colloquium on Automata,
  Languages, and Programming (ICALP 2019).
\newblock
  \href{https://dx.doi.org/https://doi.org/10.48550/arXiv.1804.01973}{Volume
  132, pages 33:1--33:14}.
\newblock Dagstuhl, Germany~(2019). Schloss Dagstuhl--Leibniz-Zentrum fuer
  Informatik.

\bibitem{gilyen2019quantum}
Andr{\'a}s Gily{\'e}n, Yuan Su, Guang~Hao Low, and Nathan Wiebe.
\newblock ``Quantum singular value transformation and beyond: exponential
  improvements for quantum matrix arithmetics''.
\newblock In Moses Charikar and Edith Cohen, editors, Proceedings of the 51st
  Annual ACM SIGACT Symposium on Theory of Computing.
\newblock
  \href{https://dx.doi.org/https://doi.org/10.1145/3313276.3316366}{Pages
  193--204}.
\newblock ~(2019).

\bibitem{childs2017quantum}
Andrew~M. Childs, Robin Kothari, and Rolando~D. Somma.
\newblock ``Quantum algorithm for systems of linear equations with
  exponentially improved dependence on precision''.
\newblock \href{https://dx.doi.org/https://doi.org/10.1137/16M1087072}{SIAM
  Journal on Computing {\bf 46}, 1920--1950}~(2017).

\bibitem{grover2002creating}
Lov Grover and Terry Rudolph.
\newblock ``Creating superpositions that correspond to efficiently integrable
  probability distributions''~(2002).
\newblock
  \href{http://arxiv.org/abs/https://doi.org/10.48550/arXiv.quant-ph/0208112}{arXiv:https://doi.org/10.48550/arXiv.quant-ph/0208112}.

\bibitem{kerenidis2016quantum}
Iordanis Kerenidis and Anupam Prakash.
\newblock ``Quantum recommendation systems''~(2016).
\newblock
  \href{http://arxiv.org/abs/https://doi.org/10.48550/arXiv.1603.08675}{arXiv:https://doi.org/10.48550/arXiv.1603.08675}.

\bibitem{keyl2006quantum}
Michael Keyl.
\newblock ``Quantum state estimation and large deviations''.
\newblock
  \href{https://dx.doi.org/https://doi.org/10.1142/S0129055X06002565}{Reviews
  in Mathematical Physics {\bf 18}, 19--60}~(2006).

\bibitem{o2016efficient}
Ryan O'Donnell and John Wright.
\newblock ``Efficient quantum tomography''.
\newblock In Proceedings of the {Forty}-{Eighth} {Annual} ACM {Symposium} on
  Theory of Computing.
\newblock
  \href{https://dx.doi.org/https://doi.org/10.1145/2897518.2897544}{Pages
  899--912}.
\newblock ~(2016).

\bibitem{van2021tomography}
Joran van Apeldoorn, Arjan Cornelissen, Andr{\'a}s Gily{\'e}n, and Giacomo
  Nannicini.
\newblock ``Quantum tomography using state-preparation unitaries''.
\newblock In Proceedings of the 2023 Annual ACM-SIAM Symposium on Discrete
  Algorithms (SODA).
\newblock
  \href{https://dx.doi.org/https://doi.org/10.1137/1.9781611977554.ch47}{Pages
  1265--1318}.
\newblock SIAM~(2023).

\bibitem{de1997initialization}
Etienne de~Klerk, Cornelis Roos, and Tam{\'a}s Terlaky.
\newblock ``Initialization in semidefinite programming via a self-dual
  skew-symmetric embedding''.
\newblock
  \href{https://dx.doi.org/https://doi.org/10.1016/S0167-6377(97)00011-4}{Operations
  Research Letters {\bf 20}, 213--221}~(1997).

\bibitem{todd1998nesterov}
Michael~J. Todd, Kim-Chuan Toh, and Reha~H. T{\"u}t{\"u}nc{\"u}.
\newblock ``On the {Nesterov}--{Todd} direction in semidefinite programming''.
\newblock
  \href{https://dx.doi.org/https://doi.org/10.1137/S105262349630060X}{SIAM
  Journal on Optimization {\bf 8}, 769--796}~(1998).

\bibitem{dembo1982inexact}
Ron~S. Dembo, Stanley~C. Eisenstat, and Trond Steihaug.
\newblock ``{Inexact Newton Methods}''.
\newblock \href{https://dx.doi.org/https://doi.org/10.1137/0719025}{SIAM
  Journal on Numerical Analysis {\bf 19}, 400--408}~(1982).

\bibitem{kelley1995iterative}
Carl~T. Kelley.
\newblock ``Iterative methods for linear and nonlinear equations''.
\newblock
  \href{https://dx.doi.org/https://doi.org/10.1137/1.9781611970944}{SIAM}.
  ~(1995).

\bibitem{burgisser2013algebraic}
Peter B{\"u}rgisser, Michael Clausen, and Mohammad~A. Shokrollahi.
\newblock ``Algebraic complexity theory''.
\newblock
  \href{https://dx.doi.org/https://doi.org/10.1007/978-3-662-03338-8}{Springer
  Science \& Business Media}. ~(2013).

\bibitem{strassen1969gaussian}
Volker Strassen.
\newblock ``Gaussian elimination is not optimal''.
\newblock
  \href{https://dx.doi.org/https://doi.org/10.1007/BF02165411}{Numerische
  Mathematik {\bf 13}, 354--356}~(1969).

\bibitem{yuster2005fast}
Raphael Yuster and Uri Zwick.
\newblock ``Fast sparse matrix multiplication''.
\newblock \href{https://dx.doi.org/https://doi.org/10.1145/1077464.1077466}{ACM
  Transactions On Algorithms (TALG) {\bf 1}, 2--13}~(2005).

\bibitem{kerenidis2018quantum}
Iordanis Kerenidis and Anupam Prakash.
\newblock ``A quantum interior point method for {LP}s and {SDP}s''~(2018).
\newblock
  \href{http://arxiv.org/abs/https://doi.org/10.48550/arXiv.1808.09266}{arXiv:https://doi.org/10.48550/arXiv.1808.09266}.

\bibitem{saad2003iterative}
Yousef Saad.
\newblock ``Iterative methods for sparse linear systems''.
\newblock
  \href{https://dx.doi.org/https://doi.org/10.1137/1.9780898718003}{SIAM}.
  ~(2003).

\bibitem{vishnoi2013lx}
Nisheeth~K. Vishnoi.
\newblock ``{$Lx= b$: Laplacian Solvers and Their Algorithmic Applications}''.
\newblock
  \href{https://dx.doi.org/https://doi.org/10.1007/978-3-642-13562-0_2}{Foundations
  and Trends{\textregistered} in Theoretical Computer Science {\bf 8},
  1--141}~(2013).

\bibitem{brandao2017exponential}
Fernando~G.S.L. Brand{\~a}o, Amir Kalev, Tongyang Li, Cedric Yen-Yu Lin,
  Krysta~M. Svore, and Xiaodi Wu.
\newblock ``{Quantum SDP Solvers: Large Speed-Ups, Optimality, and Applications
  to Quantum Learning}''.
\newblock In Christel Baier, Ioannis Chatzigiannakis, Paola Flocchini, and
  Stefano Leonardi, editors, 46th International Colloquium on Automata,
  Languages, and Programming (ICALP 2019).
\newblock
  \href{https://dx.doi.org/https://doi.org/10.48550/arXiv.1710.02581}{Volume
  132, pages 27:1--27:14}.
\newblock Dagstuhl, Germany~(2019). Schloss Dagstuhl--Leibniz-Zentrum fuer
  Informatik.

\bibitem{lee2020}
Yin~Tat Lee and Swati Padmanabhan.
\newblock ``An {$\widetilde{\mathcal{O}}(m/\varepsilon^{3.5})$}-cost algorithm
  for semidefinite programs with diagonal constraints''.
\newblock In Jacob Abernethy and Shivani Agarwal, editors, Conference on
  Learning Theory.
\newblock
  \href{https://dx.doi.org/https://doi.org/10.48550/arXiv.1903.01859}{Pages
  3069--3119}.
\newblock PMLR~(2020).

\bibitem{roos2005interior}
Cornelis Roos, Tam{\'a}s Terlaky, and Jean-Phillipe Vial.
\newblock ``Interior {Point} {Methods} for {Linear} {Optimization}''.
\newblock \href{https://dx.doi.org/https://doi.org/10.1007/b100325}{Springer
  Science \& Business Media}. ~(2005).

\bibitem{aliMohamad}
Ali Mohammad-Nezhad and Tamás Terlaky.
\newblock ``On the identification of the optimal partition for semidefinite
  optimization''.
\newblock \href{https://dx.doi.org/10.1080/03155986.2019.1572853}{INFOR:
  Information Systems and Operational Research {\bf 58}, 1--39}~(2019).
\newblock
  \href{http://arxiv.org/abs/https://doi.org/10.1080/03155986.2019.1572853}{arXiv:https://doi.org/10.1080/03155986.2019.1572853}.

\end{thebibliography}

\end{document}